\documentclass{LMCS}

\def\dOi{11(1:7)2015}
\lmcsheading%
{\dOi}
{1--50}
{}
{}
{Oct.~26, 2011}
{Mar.~13, 2015}
{}

\ACMCCS{[{\bf Theory of computation}]: Semantics and
  reasoning---Program constructs / Program semantics; Logic; [{\bf
      Software and its engineering}]: Software notations and
  tools---Formal language definitions---Semantics}
\subjclass{F.3.2 Semantics of Programming Languages, 
           F.3.3 Studies of Program Constructs, 
           F.4.1 Mathematical Logic.}

\usepackage{hyperref}
\usepackage[all]{xy}
\usepackage{amsfonts,amssymb}
\usepackage{oldlfont}
\usepackage{tikz}
\usetikzlibrary{shapes}
\theoremstyle{plain}
\newtheorem{maintheorem}[thm]{Main Theorem}
\newtheorem{convention}[thm]{Convention}
\newtheorem{theorem}[thm]{Theorem}
\newtheorem{lemma}[thm]{Lemma}
\newtheorem{proposition}[thm]{Proposition}
\newtheorem{corollary}[thm]{Corollary}
\theoremstyle{definition}
\newtheorem{definition}[thm]{Definition}
\newtheorem{example}[thm]{Example}
\newtheorem{remark}[thm]{Remark}
\newenvironment{proof*}{\proof}{}
%
\newcommand{\eg}{{\em e.g.}}
\newcommand{\Eg}{{\em E.g.}}

\newcommand{\ie}{{\em i.e.}}
\newcommand{\Ie}{{\em I.e.}}
\newcommand{\wrt}{{\em w.r.t.}}
\newcommand{\st}{{\em s.t.}}
\newcommand{\FIGURE}{Fig.}
\newcommand{\lam}{\lambda}
\newcommand{\lr}{\ensuremath{\mbox{\tt letrec}}\;}
\newcommand{\inn}{\ensuremath{\; \mbox{\tt in}}\;}
\newcommand{\annRed}[1]{\xrightarrow{#1}}
\newcommand{\annRedM}[1]{\xrightarrow{#1,*}}
\newcommand{\ITT}{\mathit{IT}}
\newcommand{\ANSWERS}{{\mathbb{A}}}
\newcommand{\EXPRESSIONS}{{\mathbb{E}}}
\newcommand{\CONTEXTS}{{\mathbb{C}}}
\newcommand{\LLAZYCC}{L_{\LCC}}
\newcommand{\LLCC}{L_{\LCC}}
\newcommand{\LCC}{{\mathit{lcc}}}
\newcommand{\tBot}{{\texttt{Bot}}}
\newcommand{\Ctxt}{C}
\newcommand{\italt}{\mathit{alt}}
\newcommand{\LABELCOMP}{\|}
\mathchardef\mhyphen="2D
\newcommand{\impl}{\Rightarrow}
\newcommand{\vect}[1]{{\overrightarrow{#1}}}
\newcommand{\tbot}{{\tt Bot}}
\newcommand{\leb}{\preccurlyeq}
\newcommand{\geb}{\succcurlyeq}
\newcommand{\simb}{\simeq}
\newcommand{\lec}{\le}
\newcommand{\gec}{\geq}
\newcommand{\simc}{\sim}
\newcommand{\UptoSim}{F_{\LCC,\sim}}
\newcommand{\UptoSimLR}{F_{\LR,\sim}}
\newcommand{\cBot}{\mathit{cBot}}
\newcommand{\redrule}[1]{{\ensuremath{\mathrm{{#1}}}}}
\newcommand{\rlapp}{\redrule{lapp}}
\newcommand{\rlcase}{\redrule{lcase}}
\newcommand{\rlseq}{\redrule{lseq}}
\newcommand{\rcp}{\redrule{cp}}
\newcommand{\rcpin}{\redrule{cp\mhyphen{}in}}
\newcommand{\rcpe}{\redrule{cp\mhyphen{}e}}

\newcommand{\rbeta}{\redrule{beta}}
\newcommand{\rlbeta}{\redrule{lbeta}}
\newcommand{\rnbeta}{\redrule{nbeta}}
\newcommand{\rnbet}{\redrule{nbeta}}
\newcommand{\rlletin}{\redrule{llet\mhyphen{}in}}
\newcommand{\rllet}{\redrule{llet}}
\newcommand{\rllete}{\redrule{llet\mhyphen{}e}}
\newcommand{\rlll}{\redrule{lll}}
\newcommand{\rcase}{\redrule{case}}
\newcommand{\rgcp}{\redrule{gcp}}

\newcommand{\rcasec}{\redrule{case\mhyphen{}c}}
\newcommand{\rcasein}{\redrule{case\mhyphen{}in}}
\newcommand{\rcasee}{\redrule{case\mhyphen{}e}}
\newcommand{\rncase}{\redrule{ncase}}
\newcommand{\rseq}{\redrule{seq}}
\newcommand{\rnseq}{\redrule{nseq}}
\newcommand{\rseqc}{\redrule{seq\mhyphen{}c}}
\newcommand{\rseqin}{\redrule{seq\mhyphen{}in}}

\newcommand{\rseqe}{\redrule{seq\mhyphen{}e}}
\newcommand{\rbetaTr}{\redrule{betaTr}}
\newcommand{\rseqTr}{\redrule{seqTr}}
\newcommand{\rcaseTr}{\redrule{caseTr}}
\newcommand{\DIV}{{\Uparrow}}
\newcommand{\IEXPR}{\ensuremath{\mathcal{E}_I}}
\newcommand{\IECtxts}{\ensuremath{{\mathcal{R}}_{\TREE}}}
\newcommand{\ECtxt}{R}
\newcommand{\OR}{\syxor}
\newcommand{\NL}{\mathit{NL}}
\newcommand{\cand}{\leb_{\mathit{cand}}}
\newcommand{\candc}{(\leb_{\mathit{cand}})^c}
\newcommand{\Fcand}{F_{\mathit{cand}}}
\newcommand{\tauop}{\tau} 
\newcommand{\closed}[1]{(#1)^c}
\newcommand{\open}[1]{(#1)^o}
\newcommand{\Y}{{\boldsymbol{Y}}}
\newcommand{\RRAP}[1]{\xrightarrow{#1}}
\newcommand{\tnil}{{\tt Nil}}
\newcommand{\tif}{{\tt if}}
\newcommand{\tthen}{{\tt then}}
\newcommand{\telse}{{\tt else}}

\newcommand{\ttrue}{{\tt True}}
\newcommand{\tfalse}{{\tt False}}
\newcommand{\tlet}{{\tt let}}
\newcommand{\tletrec}{{\tt letrec}}
\newcommand{\tin}{{\tt in}}
\newcommand{\tcase}{{\tt case}}
\newcommand{\tof}{{\tt of}}
\newcommand{\tletrx}[2]{(\tletrec~#1 ~{\tt in}~#2)}
\newcommand{\tletrxx}[2]{\tletrec~#1 ~{\tt in}~#2}
\newcommand{\ari}{{\mathrm{ar}}}
\newcommand{\tseq}{{\tt seq}}

\newcommand{\iEnv}{{\mathit{Env}}}
\newcommand{\ialts}{{\mathit{alts}}}
\newcommand{\FV}{{\mathit{FV}}}
\newcommand{\dotcup}{\ensuremath{\mathaccent\cdot\cup}}

\newcommand{\syxor}{\mathrel{|}}
\newcommand{\bchainGen}[6]{\{{#1_{#2}=#3_{#4}}\}_{i=#5}^{#6}}

\newcommand{\bchainXInd}[2]{\bchainGen{x}{i}{x}{i-1}{#1}{#2}}
\newcommand{\bchainN}[3]{\bchainGen{#1}{i}{#2}{i}{1}{#3}}
\newcommand{\tCons}{{\tt Cons}}
\newcommand{\maycon}{{\downarrow}}

\newcommand{\chole}{[\cdot]}
\newcommand{\LAMBDAEXPR}{\ensuremath{{\EXPRESSIONS}_{\lambda}}}
\newcommand{\LAMBDACTXT}{\ensuremath{{\CONTEXTS}_{\lambda}}}
\newcommand{\LETRECEXPR}{\ensuremath{{\EXPRESSIONS}_{\cal L}}}
\newcommand{\LETRECCTXT}{\ensuremath{{\CONTEXTS}_{\cal L}}}
\newcommand{\INFCTXT}{\ensuremath{{\CONTEXTS}_{\cal I}}}
\newcommand{\LLR}{L_{\mathit{LR}}} 
\newcommand{\LNAME}{L_{\mathit{name}}}

\newcommand{\LTREE}{\ensuremath{L_{\mathit{tree}}}}

\newcommand{\LR}{{\mathit{LR}}}
\newcommand{\NAME}{\mathit{name}}

\newcommand{\TREE}{\ensuremath{{\mathit{tree}}}}
\newcommand{\NTREE}{\ensuremath{\neg{\mathit{tree}}}}
\newcommand{\transComp}{N}
\newcommand{\transN}{N'}
\setcounter{secnumdepth}{5}
\begin{document}
\title[Simulation in the Call-by-Need Lambda Calculus]{Simulation in the Call-by-Need Lambda-Calculus with Letrec, Case, Constructors, and Seq\rsuper*}
\author[M.~Schmidt-Schau{\ss}]{Manfred Schmidt-Schau{\ss}\rsuper a}
\address{{\lsuper{a,b}}Dept. Informatik und Mathematik, Inst. Informatik, J.W. Goethe-University, PoBox 11 19 32, D-60054 Frankfurt, Germany}
\email{\{schauss,sabel\}@ki.informatik.uni-frankfurt.de}
\thanks{{\lsuper a}The first author is supported by the DFG under grant SCHM 986/9-1.}
\author[D.~Sabel]{David Sabel\rsuper b}
\address{\vspace{-18 pt}}
\author[E.~Machkasova]{Elena Machkasova\rsuper c} 
\address{{\lsuper c}Division of Science and Mathematics, University of Minnesota, Morris, MN 56267-2134, U.S.A} 
\email{elenam@morris.umn.edu}

\keywords{semantics, contextual equivalence, bisimulation, lambda calculus, call-by-need, Haskell}
\titlecomment{{\lsuper*}This paper is an extended version of \cite{schmidt-schauss-sabel-machkasova-rta:10} for more expressive calculi, and also of \cite{schmidt-schauss-copy-rta:07} \wrt~infinite trees, with fully worked out proofs.}
\begin{abstract}
This paper shows equivalence of several versions of applicative similarity and
contextual approximation, and hence also of applicative bisimilarity and 
contextual equivalence, in LR, the deterministic call-by-need lambda calculus
with letrec extended by data constructors, case-expressions and Haskell's 
seq-operator. LR models an untyped version of the core language of Haskell. 
The use of bisimilarities simplifies equivalence proofs in calculi and opens a
way for more convenient correctness proofs for program transformations.
 
The proof is by a fully abstract and surjective transfer into a 
call-by-name calculus, which is an extension of Abramsky's lazy lambda calculus.    
In the latter calculus equivalence of our similarities and contextual 
approximation can be shown by Howe's method. Similarity is transferred back to
LR on the basis of an inductively defined similarity.
 
The translation from the call-by-need letrec calculus into the extended 
call-by-name lambda calculus is the composition of two translations. The first
translation replaces the call-by-need strategy by a call-by-name strategy and
its correctness is shown by exploiting infinite trees which emerge by unfolding
the letrec expressions. The second translation encodes letrec-expressions by 
using multi-fixpoint combinators and its correctness is shown syntactically 
by comparing reductions of both calculi.

A further result of this paper is an isomorphism  between the mentioned calculi, 
which is also an identity on letrec-free expressions.
\end{abstract}
\maketitle\vfill
\section{Introduction}
\subsection*{Motivation}
Non-strict functional programming languages, such as the core-language of 
Haskell \cite{peyton-jones-haskell-98:03}, can be modeled using extended 
call-by-need lambda calculi. 

The operational semantics of such a programming language defines how programs
are evaluated and how the value of a program is obtained. 
Based on the operational semantics, the notion of {\em contextual equivalence} 
(see \eg\ \cite{morris:68,plotkin:75}) is a natural notion of program 
equivalence which follows Leibniz's law to identify the indiscernibles, that is
two programs are equal iff their observable (termination) behavior is 
indistinguishable even if the programs are used as a subprogram of any other 
program (\ie\ if the programs are plugged into any arbitrary {\em context}). 
For pure functional programs it suffices to observe whether or not the 
evaluation of a program terminates with a value (\ie\ whether the program
{\em converges}). 
Contextual equivalence has several advantages: 
Any reasonable notion of program equivalence should be a congruence which
distinguishes obvious different values, \eg\ different constants are 
distinguished, and functions (abstractions) are distinguished from constants. 
Contextual equivalence satisfies these requirements and is usually the 
coarsest of such congruences. 
Another (general) advantage is that once expressions, contexts, an evaluation,
and a set of values are defined in a calculus, its definition of contextual 
equivalence can be derived, and thus this approach can be used for a broad 
class of program calculi.
 
On the other hand, due to the quantification over all program contexts, 
verifying equivalence of two programs \wrt\ contextual equivalence is often a
difficult task.
Nevertheless such proofs are required to ensure the 
{\em correctness of program transformations} where the correctness notion 
means that contextual equivalence is preserved by the transformation.
Correctness of program transformations is indispensable for the correctness of 
compilers, but program transformations also play an important role in several 
other fields, \eg\ in code refactoring to improve the design of programs, or in
software verification to simplify expressions and thus to provide proofs 
or tests.

Bisimulation is another notion of program equivalence which was first invented
in the field of process calculi 
(\eg\ \cite{Milner:80,milner-pi-calc:99,sangiorgi-walker:01}), but has also 
been applied to functional programming and several extended lambda calculi 
(\eg\ \cite{howe:89,abramsky-lazy:90,howe:96}).
Finding adequate notions of bisimilarity is still an active research topic 
(see~\eg\ \cite{koutavas-wand:2006,Sangiorgi-Kobayashi-Sumii:2011}).
Briefly explained, bisimilarity equates two programs $s_1,s_2$ if all 
experiments passed for $s_1$ are also passed by $s_2$ and vice versa. For 
applicative similarity (and also bisimilarity) the experiments are evaluation
and then recursively testing the obtained values: 
Abstractions are applied to all possible arguments, data objects are decomposed
and the components are tested recursively.
Applicative similarity is usually defined co-inductively, \ie\ as a greatest
fixpoint of an operator. 
Applicative similarity allows convenient and automatable proofs of correctness
of program transformations, \eg\ in mechanizing 
proofs~\cite{Dennis-Bundy-Green:1997}.

Abramsky and Ong showed that applicative bisimilarity is the same as contextual
equivalence in a specific simple lazy lambda calculus  
\cite{abramsky-lazy:90,abramsky-ong:93}, and Howe \cite{howe:89,howe:96} proved
that in classes of lambda-calculi applicative bisimulation is the same as 
contextual equivalence. 
This leads to the expectation that some form of applicative bisimilarity may be
used for calculi with  Haskell's cyclic letrec. 
However, Howe's proof technique appears to be not adaptable to lambda calculi
with cyclic let, since there are several deviations from the requirements for
the applicability of Howe's framework.
(i) Howe's technique is for call-by-name calculi and it is not obvious how to
adapt it to call-by-need evaluation. 
(ii) Howe's technique requires that the values (results of reduction) are 
recognizable by their top operator. 
This does not apply to calculi with $\tletrec$, since $\tletrec$-expressions
may be values as well as non-values. 
(iii) Call-by-need calculi with letrec usually require reduction rules to shift
and join \tletrec-bindings. These modifications of the syntactic structure of
expressions do not fit well into the proof structure of Howe's method.

Nevertheless, Howe's method is also applicable to calculi with non-recursive 
let even in the presence of nondeterminism \cite{mannmss:10}, where for the
nondeterministic case applicative bisimilarity is only sound (but not complete)
\wrt\ contextual equivalence.
However, in the case of (cyclic) letrec and {\em nondeterminism} applicative
bisimilarity is unsound \wrt\ contextual equivalence
\cite{schmidt-schauss-sabel-machkasova-IPL:11}.
This raises a question: 
which call-by-need calculi with letrec permit applicative bisimilarity as a 
tool for proving contextual equality?

\subsection*{Our Contribution}
In \cite{schmidt-schauss-sabel-machkasova-rta:10} we have already shown that
for the minimal extension of Abramsky's lazy lambda calculus with letrec which
implements sharing and explicit recursion, the equivalence of contextual
equivalence and applicative bisimilarity indeed holds. 
However, the full (untyped) core language of Haskell has data constructors, 
case-expressions and the seq-operator for strict evaluation. 
Moreover, in ~\cite{schmidt-schauss-machkasova-sabel:rta:2013} it is shown that
the extension of Abramsky's lazy lambda calculus with \tcase{}, constructors, 
and \tseq{} is not conservative, \ie\ it does not preserve contextual 
equivalence of expressions. 
Thus our results obtained in \cite{schmidt-schauss-sabel-machkasova-rta:10} for
the lazy lambda calculus extended by letrec only are not transferable to the
language extended by \tcase{}, constructors, and \tseq.
For this reason we provide a new proof for the untyped core language of Haskell.

As a model of  Haskell's core language we use the call-by-need lambda calculus
$\LLR$ which was introduced and motivated in 
\cite{schmidt-schauss-schuetz-sabel:08}. 
The calculus $\LLR$ extends the lazy lambda calculus with letrec-expressions, 
data constructors, \tcase-expressions for deconstructing the data, and 
Haskell's \tseq-operator for strict evaluation.

We define the operational semantics of $\LLR$ in terms of a small-step 
reduction, which we call normal order reduction.
As it is usual for lazy functional programming languages, evaluation of 
$\LLR$-expressions successfully halts if a {\em weak head normal form} is 
obtained, \ie\ normal order reduction does not reduce inside the body of
abstractions nor inside the arguments of constructor applications.
The $\LLR$ calculus has been studied in detail in 
\cite{schmidt-schauss-schuetz-sabel:08} and correctness of several important
program transformations has been established for it.

Our main result in this paper is that several variants of applicative 
bisimilarities are sound and complete for contextual equivalence in $\LLR$,
\ie\ coincide with contextual equivalence. 
Like context lemmas, an applicative bisimilarity can be used as a proof tool
for showing contextual equivalence of expressions and for proving correctness
of program transformations in the calculus $\LLR$. 
Since we have completeness of our applicative bisimilarities in addition to
soundness, our results can also be used to disprove contextual equivalence of
expressions in $\LLR$.
Additionally, our result shows that the untyped applicative bisimilarity is
sound for a polymorphic variant of $\LLR$, and hence for the typed core 
language of Haskell. 

Having the  proof tool of applicative bisimilarity in $\LLR$ is also very 
helpful for more complex calculi if their pure core can be conservatively
embedded in the full calculus.
An example is our work on Concurrent Haskell 
\cite{sabel-schmidt-schauss-PPDP:2011,sabel-schmidt-schauss:12:LICS},
where our calculus CHF that models Concurrent Haskell has top-level processes
with embedded lazy functional evaluation.
We have shown in the calculus CHF that Haskell's deterministic core language
can be conservatively embedded in the calculus CHF.

\begin{figure}
\begin{tikzpicture}
\node (LR)    at (-3,0)   [] [line width=1pt,shape=ellipse,draw,fill=blue!5!white] {\rule{0mm}{6mm}\rule{6mm}{0mm}};
\node (txt)   at (-3,0)   [] {$\LLR$};
\node (IT)    at (0,-1.5) [] [line width=1pt,dotted,shape=ellipse,draw,fill=blue!5!white] {\rule{0mm}{6mm}\rule{6mm}{0mm}};
\node (txt)   at (0,-1.5) [] {$L_{\TREE}$};
\node (LName) at (3,0)    [] [line width=1pt,shape=ellipse,draw,fill=blue!5!white] {\rule{0mm}{6mm}\rule{6mm}{0mm}};
\node (txt)   at (3,0)    [] {$\LNAME$};
\node (Lcc)   at (6,0)    [] [line width=1pt,shape=ellipse,draw,fill=blue!5!white] {\rule{0mm}{6mm}\rule{6mm}{0mm}};
\node (txt)   at (6,0)    [] {$\LLAZYCC$};
\draw[->,line width=1pt]               (LR)    to node [above] {W} (LName);
\draw[->,line width=1pt]               (LName) to node [above] {N} (Lcc);
\draw[<->,line width=1pt,dotted]       (LR)    to node [] {}       (IT);
\draw[<->,line width=1pt,dotted]       (LName) to node [] {}       (IT);
\draw[->,line width=1pt,bend left =20] (LR)    to node [above] {$N \circ W$} node [] {} (Lcc);
\end{tikzpicture}

\caption{Overall structure. Solid lines are fully-abstract translations, which
are also isomorphisms and identities on letrec-free expressions; dotted lines
are convergence preservation to/from the system $\LTREE$ of infinite trees.
\label{figure:results}}
\end{figure}

We prove the equivalence between the applicative similarities and contextual 
equivalence in $\LLR$, by lifting the equivalence from a \tletrec-free call-by-name calculus $\LLAZYCC$. 
The calculus $\LLAZYCC$ minimally extends Abramsky's lazy calculus by Haskell's
primitives. 
As shown in~\cite{schmidt-schauss-machkasova-sabel:rta:2013}, data constructors
and $\tseq$ are explicitly needed in $\LLAZYCC$.
The structure of the proof, with its intermediate steps, is shown in Figure~\ref{figure:results}.
We prove the equivalence between the applicative similarities and contextual
equivalence in $\LLAZYCC$, by extending Howe's method. 
We bridge $\LLR$ and $\LLAZYCC$ in two steps, using intermediate calculi 
$\LNAME$ and $\LTREE$. $\LNAME$ is the call-by-name variant of $\LLR$, and
$\LLAZYCC$ is obtained from $\LNAME$ by encoding letrec using multi-fixpoint
combinators. 
The calculi $\LLR$ and $\LNAME$ are related to each other via their infinite 
unfoldings, thus we introduce a calculus $\LTREE$ of infinite trees 
(similar infinitary rewriting, see \cite{kennaway-klop:97,schmidt-schauss-copy-rta:07}).
Convergence of expressions in $\LLR$ and $\LNAME$ is shown to be equivalent to
their translation as an infinite tree in the calculus $\LTREE$ (dotted lines in
the picture).
We establish full abstractness of translation $N$ and $W$ between calculi 
$\LLR$, $\LNAME$, and $\LLAZYCC$ with respect to contextual equivalence. 
Correctness of similarity is transferred back from $\LLAZYCC$ to $\LLR$ on the
basis of an inductively defined similarity (for more details see 
\FIGURE~\ref{subsec:sim-lr-def}).
 
A consequence of our result is that the three calculi $\LLR$, $\LNAME$, and
$\LLAZYCC$ are isomorphic, modulo the equivalence 
(see Corollaries~\ref{cor:N-iso} and~\ref{cor:W-iso}), and also that the 
embedding of the calculus $\LLAZYCC$ into the call-by-need calculus $\LLR$ is
an isomorphism of the respective term models. 

\subsection*{Related Work}
In \cite {Gordon:99} Gordon shows that bisimilarity and contextual equivalence
coincide in an extended call-by-name PCF language.
Gordon provides a bisimilarity in terms of a labeled transition system.  
A similar result is obtained in \cite{pitts:97} for PCF extended by product 
types and lazy lists where the proof uses Howe's method 
(\cite{howe:89,howe:96}; see also \cite{mannmss:10,Pitts:2011}), and where the
operational semantics is a big-step one for an extended PCF-language.
The observation of convergence in the definition of contextual equivalence is 
restricted to programs (and contexts) of ground type (\ie\ of type integer or
{\tt Bool}). 
Therefore $\Omega$ and $\lambda x.\Omega$ are equal in the calculi considered
by Gordon and Pitts. 
This does not hold in our setting for two reasons: 
first, we observe termination for functions and thus the empty context already
distinguishes $\Omega$ and $\lambda x.\Omega$, and second, our languages employ 
Haskell's seq-operator which permits to test convergence of any expression and
thus the context $\tseq~[\cdot]~\ttrue$ distinguishes $\Omega$ and 
$\lambda x.\Omega$.

\cite{jeffrey-short:94} presents an investigation into the semantics of a 
lambda-calculus that permits cyclic graphs, where a fully abstract denotational
semantics is described.
However, the calculus is different from our calculi in its expressiveness since
it permits a parallel convergence test, which is required for the full 
abstraction property of the denotational model. 
Expressiveness of programming languages was investigated 
\eg\ in \cite{Felleisen:91} and the usage of syntactic methods was formulated as
a research program there, with non-recursive \tlet{} as the paradigmatic example. 
Our isomorphism-theorem \ref{thm:isomorphism} shows that this approach is 
extensible to a cyclic \tlet{}.

Related work on calculi with recursive bindings includes the following 
foundational papers.
An early paper that proposes cyclic let-bindings (as graphs) 
is \cite{ariola-klop-short:94}, where reduction and confluence properties are
discussed.
\cite{ariola:95,ariola:97} study equational theory for call-by-need lambda
calculus extended with non-recursive \tlet{}, 
which is finer than contextual equivalence, and in \cite{maraistoderskywadler:98}
it is shown that call-by-name and call-by-need
evaluation induce the same observational equivalences for a call-by-need lambda
calculus with non-recursive \tlet{}.
Additionally, the extension of the corresponding calculi by recursive \tlet{} is
discussed in \cite{ariola:95,ariola:97}, and further call-by-need lambda calculi
with a recursive \tlet{} are presented 
in \cite{ariola-blom:97,ariola-blom:02,Nakata-hasegawa:2009:jfp} where 
\cite{Nakata-hasegawa:2009:jfp} study the equivalence between a natural 
semantics and a reductions semantics.
In \cite{ariola-blom:02} it is shown that there exist infinite normal forms and
that the calculus satisfies a form of confluence.
All these calculi correspond to our calculus $\LLR$.
A difference is that the let-shifting in the standard reduction in the mentioned works is different from $\LLR$.
However, this difference is not substantial, since it does not influence the contextual semantics.
A more substantial difference is that $\LLR$ combines 
recursive $\tlet{}$ with data constructors, case-expressions and \tseq, which none of the related works do.

In \cite{moran-sands:99} a call-by-need calculus is analyzed which is closer to
our calculus $\LLR$, since \tletrec, \tcase, and constructors are present (but not
\tseq).
Another difference is that \cite{moran-sands:99} uses an abstract machine semantics
as operational semantics, while their approach to program equivalence is based on
contextual equivalence, as is ours.

The operational semantics of call-by-need lambda calculi with \tletrec{} are 
investigated in \cite{launch:93} and \cite{sestoft:97}, where the former proposed
a natural semantics, and proved it correct and adequate with respect to a 
denotational semantics, and the latter derived an efficient abstract machine 
from the natural semantics.

Investigations of the semantics of lazy functional programming languages including
the $\tseq$-operator can be found in 
\cite{johann-voigtlaender:06,voigtlaender-johann:07}.

\subsection*{Outline}
In Sect.~\ref{sec:common} we introduce some common notions of program calculi,
contextual equivalence, similarity and also of translations between those 
calculi.
In Sect.~\ref{sec:calculi} we introduce the extension $\LLAZYCC$ of Abramsky's
lazy lambda calculus with \tcase, constructors, and \tseq, and two 
letrec-calculi $\LLR$, $\LNAME$ as further syntactic extensions.
In Sect.~\ref{sec:simulation-lazy} we show that for so-called 
``convergence admissible'' calculi an alternative inductive characterization of
similarity is possible. 
We then use Howe's method in $\LLCC$ to show that contextual approximation and
a standard version of applicative similarity coincide. 
Proving that $\LLCC$ is convergence admissible then implies that the 
alternative inductive characterization of similarity can be used for $\LLCC$.
In Sect.~\ref{sec-translation-NEED-NAME} and~\ref{sec:NAME-to-LAZY} the 
translations $W$ and $N$ are introduced and the full-abstraction results are
obtained.
In Sect.~\ref{sec-simulations} we show soundness and completeness of our 
variants of applicative similarity w.r.t. contextual equivalence in $\LLR$.
We conclude in Sect.~\ref{sec:conclusion}.

\section{Common Notions and Notations for Calculi}\label{sec:common}
Before we explain the specific calculi, some common notions are introduced.
A calculus definition consists of its syntax  together with its operational 
semantics which defines the evaluation of programs and the implied equivalence
of expressions: 

\begin{definition}\label{def:calculus}
An untyped deterministic {\em calculus} $D$ is a four-tuple 
\mbox{$({\EXPRESSIONS},{\CONTEXTS},\to,\ANSWERS)$}, where ${\EXPRESSIONS}$ 
are expressions (up to $\alpha$-equivalence), 
${\CONTEXTS}: {\EXPRESSIONS} \to {\EXPRESSIONS}$ is  a set of functions 
(which usually represents contexts), $\to$ is a small-step reduction relation
(usually the normal-order reduction), which is a partial function on expressions
(\ie, deterministic), and $\ANSWERS \subset {\EXPRESSIONS}$ is a set of 
{\em answers} of the calculus.

For $C \in {\CONTEXTS}$ and an expression $s$, the functional application is
denoted as  $C[s]$. 
For contexts, this is the replacement of the hole of $C$ by $s$. 
We also assume that the identity function $\mathit{Id}$ is contained in 
${\CONTEXTS}$ with $\mathit{Id}[s] = s$ for all expressions $s$, and that 
$\CONTEXTS$ is closed under composition, 
\ie\ $C_1,C_2 \in {\CONTEXTS}\implies C_1 \circ C_2 \in {\CONTEXTS}$. 

The {\em transitive  closure} of $\to$ is denoted as $\xrightarrow{+}$ and the
{\em transitive and reflexive closure} of $\to$ is denoted as $\xrightarrow{*}$. 
The notation $\xrightarrow{0 \vee 1}$ means equality or one reduction, 
and $\xrightarrow{k}$ means $k$ reductions.
Given an expression $s$, a sequence $s \to s_1 \to \ldots \to s_n$ is called a
{\em reduction sequence}; it is called an {\em evaluation} if $s_n$ is an answer,
\ie\ $s_n \in \ANSWERS$; in this case we say  $s$ {\em converges} and denote 
this as $s \maycon_D s_n$ or as $s \maycon_D$ if $s_n$ is not important.  
If there is no $s_n$ \st\ $s \maycon_D s_n$ then $s$ {\em diverges},
denoted as $s \DIV_D$. 
When dealing with multiple calculi, we often use the calculus name to mark its
expressions and relations, \eg\ $\xrightarrow{D}$ denotes a reduction 
relation in ${D}$. 
\end{definition}
We will have to deal with several calculi and preorders. Throughout this paper
we will use the symbol $\leb$ for co-inductively defined preorders 
(\ie\ similarities), and $\lec$ for (inductively defined or otherwise defined)
contextual preorders. 
For the corresponding symmetrizations we use $\simb$ for $\leb \cap \geb$ and
$\simc$ for $\lec \cap \gec$. 
All the symbols are always indexed by the corresponding calculus and sometimes
more restrictions like specific sets of contexts are attached to the indices
of the symbols. 

Contextual approximation and equivalence can be defined in a general way:
\begin{definition}[Contextual Approximation and Equivalence, $\lec_{D}$ and $\simc_D$]
Let $D = ({\EXPRESSIONS},{\CONTEXTS},\to,\ANSWERS)$ be a calculus and 
$s_1,s_2$ be $D$-expressions.
{\em Contextual approximation} (or {\em contextual preorder}) $\lec_{D}$  and 
{\em contextual equivalence} $\simc_{D}$ are defined as: 
\[\begin{array}{lcll}
 s_1 \lec_{D} s_2  & \text{iff}  & ~\forall C \in{\CONTEXTS}: ~ ~   C[s_1]\maycon_D \impl  C[s_2]\maycon_D\\
 s_1 \simc_{D} s_2 &  \text{iff} & ~s_1 \lec_{D} s_2 \wedge s_2 \lec_{D} s_1
 \end{array}
\]

A {\em program transformation} is a binary relation 
$\eta\subseteq ({\EXPRESSIONS}\times {\EXPRESSIONS}$). 
A program transformation $\eta$ is called {\em correct} iff 
$\eta\ \subseteq\ \sim_D$.
\end{definition}
Note that $\lec_{D}$ is a precongruence, \ie, $\lec_{D}$ is reflexive, 
transitive, and $s  \lec_{D} t$ implies $C[s] \lec_{D} C[t]$ for all 
$C \in {\CONTEXTS}$, and that $\simc_{D}$ is a congruence, \ie\ a 
precongruence and an equivalence relation.

We also define a general notion of similarity coinductively for untyped
deterministic calculi.
We first define the operator $F_{D,{\cal Q}}$ on binary relations of 
expressions:
\begin{definition}
Let $D = ({\EXPRESSIONS},{\CONTEXTS},\to,\ANSWERS)$ be an untyped deterministic
calculus and let  ${\cal Q} \subseteq {\CONTEXTS}$ be a set of functions on 
expressions (\ie\ $\forall Q \in {\cal Q}: Q : {\EXPRESSIONS} \to {\EXPRESSIONS}$).
Then the {\em ${\cal Q}$-experiment operator} 
$F_{D,\cal Q} : 2^{({\EXPRESSIONS} \times {\EXPRESSIONS})} \to 2^{({\EXPRESSIONS} \times {\EXPRESSIONS})}$  
is defined as follows for $\eta \subseteq {\EXPRESSIONS} \times {\EXPRESSIONS}$:
$${s_1}\,{F_{D,\cal Q}(\eta)}\,{s_2} 
\text{ iff } 
{s_1 \maycon_D v_1} \implies \exists v_2.\left(s_2 \maycon v_2 \wedge \forall Q\in{\cal Q}: {Q(v_1)}\,\eta\,{Q(v_2)}\right)$$
\end{definition}

\begin{lemma}
The operator $F_{D,{\cal Q}}$ is monotonous \wrt\ set inclusion, \ie\ for all
binary relations $\eta_1, \eta_2$ on expressions 
${\eta_1 \subseteq \eta_2} \implies {F_{D,{\cal Q}}(\eta_1) \subseteq F_{D,{\cal Q}}(\eta_2)}$.
\end{lemma}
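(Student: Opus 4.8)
The plan is to unfold the definition of $F_{D,\cal Q}$ and check that the implication defining membership in the relation becomes easier to satisfy as the parameter relation grows. First I would fix binary relations $\eta_1 \subseteq \eta_2$ on expressions and a pair $(s_1,s_2)$ with $s_1 \, F_{D,\cal Q}(\eta_1) \, s_2$, and aim to show $s_1 \, F_{D,\cal Q}(\eta_2) \, s_2$. By definition, the goal is an implication whose hypothesis is $s_1 \maycon_D v_1$; so I would assume $s_1 \maycon_D v_1$ for some answer $v_1$ and then invoke the hypothesis $s_1 \, F_{D,\cal Q}(\eta_1) \, s_2$ on this same $v_1$, obtaining an answer $v_2$ with $s_2 \maycon_D v_2$ and $Q(v_1) \, \eta_1 \, Q(v_2)$ for all $Q \in \cal Q$.

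The same $v_2$ then witnesses the required existential for $F_{D,\cal Q}(\eta_2)$: the convergence $s_2 \maycon_D v_2$ is already established, and for every $Q \in \cal Q$ we have $Q(v_1) \, \eta_1 \, Q(v_2)$, hence $Q(v_1) \, \eta_2 \, Q(v_2)$ by the assumed inclusion $\eta_1 \subseteq \eta_2$. This gives $s_1 \, F_{D,\cal Q}(\eta_2) \, s_2$, completing the argument. Since $(s_1,s_2)$ was arbitrary, $F_{D,\cal Q}(\eta_1) \subseteq F_{D,\cal Q}(\eta_2)$.

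There is essentially no obstacle here: the proof is a direct diagram chase through one level of the defining implication, and the only ingredient is that $\eta$ occurs only positively (on the right-hand side of the inner implication, never negated and never in the hypothesis) in the definition of $F_{D,\cal Q}$. The determinism of $\to$ and the specific structure of $\cal Q$ play no role. If anything deserves a remark, it is that the answer $v_1$ produced by $s_1 \maycon_D v_1$ is reused verbatim when applying the $\eta_1$-hypothesis, which is legitimate precisely because the hypothesis quantifies universally over all answers reachable from $s_1$; determinism of normal-order reduction would in fact make $v_1$ unique, but monotonicity does not need this.
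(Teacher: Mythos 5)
Your proof is correct and follows essentially the same route as the paper's: unfold the definition of $F_{D,{\cal Q}}$, reuse the witness $v_2$, and replace $\eta_1$ by $\eta_2$ via the inclusion in the inner conjunct. Your closing remark about positivity of $\eta$ in the definition is accurate but not needed for the argument.
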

\begin{proof}
Let $\eta_1 \subseteq \eta_2$ and $s_1\,{F_{D,{\cal Q}}(\eta_1)}\,s_2$. 
From the assumption $s_1\,{F_{D,{\cal Q}}(\eta_1)}\,s_2$ the implication 
$s_1 \maycon_D v_1$ $\implies \left(s_2 \maycon_D v_2 \wedge \forall Q\in{\cal Q}: Q(v_1)\,\eta_1\,Q(v_2)\right)$
follows.
From $\eta_1 \subseteq \eta_2$ the implication
$s_1 \maycon v_1 \implies \left(s_2 \maycon_D v_2 \wedge \forall Q\in{\cal Q}: Q(v_1)\,\eta_2\,Q(v_2)\right)$
follows.
Thus, $s_1\,F_{\cal Q}(\eta_2)\,s_2$.
\end{proof}
\noindent Since $F_{D,{\cal Q}}$ is monotonous, its greatest fixpoint exists:
\begin{definition}[${\cal Q}$-Similarity, $\leb_{D,{\cal Q}}$]\label{def:Q-gfp-preorder}
The behavioral preorder $\leb_{D,{\cal Q}}$, called {\em ${\cal Q}$-similarity},
is defined as the greatest fixed point of $F_{D,\cal Q}$.
\end{definition}

We also provide an inductive definition of behavioral equivalence, which is
defined as a contextual preorder where the contexts are restricted to the set
${\cal Q}$ (and the empty context).

\begin{definition}\label{def:Q-simpl-preorder}
Let $D = ({\EXPRESSIONS},{\CONTEXTS}, \to, \ANSWERS)$ be an untyped 
deterministic calculus, and ${\cal Q} \subseteq {\CONTEXTS}$.
Then the relation $\lec_{D,{\cal Q}}$ is defined as follows:
$$s_1 \lec_{D,{\cal Q}} s_2 \text{~iff~} \forall n \geq 0: \forall Q_i \in {\cal Q}: Q_1(Q_2(\ldots(Q_n(s_1))))\maycon_D
\implies Q_1(Q_2(\ldots(Q_n(s_2))))\maycon_D$$
\end{definition}
Note that contextual approximation is a special case of this definition,
\ie\ ${\lec_{D}} = {\lec_{D,\CONTEXTS}}$.

Later in Section~\ref{sec:conv-admissible} we will provide a sufficient
criterion on untyped deterministic calculi that ensures that 
$\leb_{D,{\cal Q}}$ and $\lec_{D,\cal Q}$ coincide.

We are interested in translations between calculi that are faithful 
\wrt\ the corresponding contextual preorders. 
\begin{definition}[\cite{schmidt-schauss-niehren-schwinghammer-sabel-ifip-tcs:08,schmidt-schauss-niehren-schwinghammer-sabel-frank-33:09}]\label{def:translation-compo-etal}
For $i=1,2$ let $({\EXPRESSIONS}_i,{\CONTEXTS}_i,\to_i,{\ANSWERS}_i)$ be
untyped deterministic calculi.
A {\em translation} 
$\tau:  ({\EXPRESSIONS}_1,{\CONTEXTS}_1,\to_1,{\ANSWERS}_1)  \to ({\EXPRESSIONS}_2,{\CONTEXTS}_2,\to_2,{\ANSWERS}_2)$
is a mapping 
$\tau_E: {\EXPRESSIONS}_1 \to {\EXPRESSIONS}_2$ 
and a mapping $\tau_C:{\CONTEXTS}_1 \to {\CONTEXTS}_2$ such that
$\tau_C(\mathit{Id}_1) = \mathit{Id}_2$. 
The following properties of translations are defined:
\begin{itemize}
\item $\tau$ is {\em compositional}\ iff $\tau(C[s]) = \tau(C)[\tau(s)]$ for all $C,s$. 
\item $\tau$ is {\em convergence equivalent}\ iff $s\maycon_1 \iff  \tau(s)\maycon_2$ for all $s$. 
\item $\tau$ is {\em adequate}\ iff for all $s,t \in {\EXPRESSIONS}_1$: $\tau(s) \lec_2 \tau(t) \implies s \lec_1 t$.
\item $\tau$ is {\em fully abstract}\ iff for all $s,t \in {\EXPRESSIONS}_1$: $s \lec_1 t \iff \tau(s) \lec_2 \tau(t)$. 
\item $\tau$ is an {\em isomorphism}\ iff it is fully abstract and a bijection on the quotients \\
\mbox{\hspace*{5mm} $\tau/{\sim}: {\EXPRESSIONS}_1/{\sim}  ~\to {\EXPRESSIONS}_2/{\sim}$}. 
\end{itemize}
\end{definition}
Note that isomorphism means an order-isomorphism between the term-models, where
the orders are $\lec_1 / {\sim}$ and $\lec_2 / {\sim}$ (which are the relations
in the quotient).  

\begin{proposition}[\cite{schmidt-schauss-niehren-schwinghammer-sabel-ifip-tcs:08,schmidt-schauss-niehren-schwinghammer-sabel-frank-33:09}]\label{prop:adequate}
Let $({\EXPRESSIONS}_i,{\CONTEXTS}_i,\to_i,{\ANSWERS}_1)$ for $i=1,2$ be untyped deterministic calculi.
If a translation 
$\tau:  ({\EXPRESSIONS}_1,{\CONTEXTS}_1,\to_1,{\ANSWERS}_1)  \to ({\EXPRESSIONS}_2,{\CONTEXTS}_2,\to_2,{\ANSWERS}_2)$
is compositional and convergence equivalent, then it is also adequate.
\end{proposition}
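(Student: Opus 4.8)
The statement is an unfolding of the definitions, so the plan is to take an arbitrary pair $s,t \in {\EXPRESSIONS}_1$ with $\tau(s) \lec_2 \tau(t)$, fix an arbitrary context $C \in {\CONTEXTS}_1$ with $C[s] \maycon_1$, and chase the convergence statement across the translation, into $D_2$, through the hypothesis $\tau(s)\lec_2\tau(t)$, and back. Concretely, I would proceed in the following order.

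\emph{First}, from $C[s]\maycon_1$ and convergence equivalence of $\tau$ conclude $\tau(C[s])\maycon_2$. \emph{Second}, apply compositionality of $\tau$ to rewrite this as $\tau(C)[\tau(s)]\maycon_2$; here one uses that $\tau_C(C) \in {\CONTEXTS}_2$, so that $\tau(C)[\cdot]$ is a legitimate $D_2$-context into which $\tau(s)$ may be plugged. \emph{Third}, invoke the assumption $\tau(s) \lec_2 \tau(t)$ with the context $\tau(C)$ to obtain $\tau(C)[\tau(t)]\maycon_2$. \emph{Fourth}, use compositionality again, now in the direction $\tau(C)[\tau(t)] = \tau(C[t])$, to get $\tau(C[t])\maycon_2$. \emph{Fifth}, apply convergence equivalence once more, in the other direction, to conclude $C[t]\maycon_1$. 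Since $C$ was arbitrary, this establishes $s \lec_1 t$, which is exactly adequacy.

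\textbf{Expected obstacles.} There is essentially no hard step: the argument is a four-link chain alternating applications of convergence equivalence (which is a biconditional, so usable in both directions) and compositionality (likewise an equality, usable in both directions). The only points needing a word of care are bookkeeping ones: that $\tau$ being a translation guarantees $\tau(C) \in {\CONTEXTS}_2$ for $C \in {\CONTEXTS}_1$, so the intermediate expressions $\tau(C)[\tau(s)]$ and $\tau(C)[\tau(t)]$ are well-formed and the definition of $\lec_2$ genuinely applies to them; and that compositionality $\tau(C[s]) = \tau(C)[\tau(s)]$ is being used symmetrically for both $s$ and $t$. Neither of these is a real difficulty, so I expect the proof to be a few lines long, and the proposition is really the observation that adequacy factors through $D_2$ as soon as the translation transports convergence faithfully and commutes with context formation.
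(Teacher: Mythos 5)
Your proposal is correct and follows exactly the same five-step chain as the paper's proof: convergence equivalence to get $\tau(C[s])\maycon_2$, compositionality to rewrite as $\tau(C)[\tau(s)]\maycon_2$, the hypothesis $\tau(s)\lec_2\tau(t)$ applied with the context $\tau(C)$, compositionality back, and convergence equivalence back to conclude $C[t]\maycon_1$. Nothing is missing; the bookkeeping remarks about $\tau(C)\in{\CONTEXTS}_2$ are implicit in the paper's definition of a translation.
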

\begin{proof}
Let $s,t \in {\EXPRESSIONS}_1$ with $\tau(s) \lec_2 \tau(t)$ and let 
$C[s]\maycon_1$ for some $C\in {\CONTEXTS}$. 
It is sufficient to show that this implies $C[t]\maycon_1$: 
Convergence equivalence shows that $\tau(C[s])\maycon_2$. 
Compositionality implies $\tau(C)[\tau(s)]\maycon_2$, and then
$\tau(s) \lec_2 \tau(t)$ implies $\tau(C)[\tau(t)]\maycon_2$. 
Compositionality applied once more implies $\tau(C[t])\maycon_2$, and then
convergence equivalence finally implies $C[t]\maycon_1$. 
\end{proof}

\section{Three Calculi}\label{sec:calculi}
In this section we introduce the calculi $\LLR$, $\LNAME$, and $\LLAZYCC$.
$\LLR$ is a call-by-need calculus with recursive $\tlet{}$, data constructors,
\tcase-expressions, and the $\tseq$-operator. 
The calculus $\LNAME$ has the same syntactic constructs as $\LLR$, but uses a
call-by-name, rather than a call-by-need, evaluation. The calculus $\LLAZYCC$
does not have $\tletrec$, and also uses a call-by-name evaluation.
 
For all three calculi we assume that there is a (common) set of 
{\em data constructors} $c$ which is partitioned into {\em types}, such that 
every constructor $c$ belongs to exactly one type. 
We assume that for every type $T$ the set of its corresponding data constructors
can be enumerated as $c_{T,1}, \ldots, c_{T,|T|}$ where $|T|$ is the number of
data constructors of type $T$.
We also assume that every constructor has a fixed arity denoted as $\ari(c)$ 
which is a non-negative integer.
We assume that there is a type $\mathit{Bool}$ among the types, with the data
constructors $\tfalse$ and $\ttrue$ both of arity 0.
We require that data constructors occur only fully saturated, \ie\ a constructor
$c$ is only allowed to occur together with $\ari(c)$ arguments, written as 
$(c~s_1~\ldots~s_{\ari(c)})$ where $s_i$ are expressions of the corresponding 
calculus\footnote{%
Partial applications of constructors  of the form $c~s_1~\ldots~s_n$ 
(as \eg\ available in Haskell) thus have to be represented by 
$\lambda x_{n+1}\ldots\lambda x_{\ari(c)}.c~s_1~\ldots~s_n~x_{n+1}\ldots x_{\ari(c)}$.
}. 
We also write $(c~\vect{s})$ as an abbreviation for the constructor application
$(c~s_1~\ldots~s_{\ari(c)})$. 
All three calculi allow deconstruction via \tcase-expressions:
\[
 \tcase_T~s~\tof~(c_{T,1}~x_{1,1}~\ldots~x_{1,\ari(c_{T,1})} \to s_1)\ldots(c_{T,|T|}~x_{|T|,1}~\ldots~x_{|T|,\ari(c_{T,|T|})} \to s_{|T|})
\]
where $s,s_i$ are expressions and $x_{i,j}$ are variables of the corresponding calculus.
Thus there is a $\tcase_T$-construct for every type $T$ and we require that 
there is exactly one case-alternative 
$(c_{T,i}~x_{i,1}~\ldots~x_{i,\ari(c_{T,i})} \to s_i)$ for every constructor
$c_{T,i}$ of type $T$.
In a case-alternative $(c_{T,i}~x_{i,1}~\ldots~x_{i,\ari(c_{T,i})} \to s_i)$ we 
call $c_{T,i}~x_{i,1}~\ldots~x_{i,\ari(c_{T,i})}$ a {\em pattern} and $s_i$ the
right hand side of the alternative. 
All variables in a $\tcase$-pattern must be pairwise distinct.
We will sometimes abbreviate the case-alternatives by $\ialts$ if the exact 
terms of the alternatives are not of interest.
As a further abbreviation we sometimes write $\tif~s_1~\tthen~s_2~\telse~s_3$
for the case-expression 
$(\tcase_{\mathit{Bool}}~s_1~\tof~(\ttrue \to s_2)~(\tfalse \to s_3))$.

We now define the syntax of expressions with $\tletrec$, \ie\ the set
$\LETRECEXPR$ of expressions which are used in both of the calculi
$\LLR$ and $\LNAME$.

\begin{definition}[Expressions $\LETRECEXPR$]
The set $\LETRECEXPR$ of expressions is defined by the following grammar,
where $x, x_i$ are variables:
\begin{eqnarray*}
r,s,t,r_i,s_i,t_i\in \LETRECEXPR &::=& 
           x 
    \syxor (s~t) 
    \syxor (\lambda x. s)
    \syxor \tletrx{x_1 = s_1, \ldots , x_n = s_n}{t} 
\\
    &&\syxor (c~s_1 \ldots s_{\ari(c)})
    \syxor (\tseq~s~t)
    \syxor (\tcase_T~ s ~\tof~alts)
\end{eqnarray*}
We assign the names {\em application}, {\em abstraction}, $\tseq$-expression,
or {\em \tletrec-expression} to the expressions $(s~t)$, $(\lambda x. s)$,
$(\tseq~s~t)$, or $\tletrx{x_1 = s_1, \ldots, x_n = s_n}{t}$, respectively. 
A {\em value} $v$ is defined as an abstraction or a constructor application.
A group of $\tletrec$ bindings is sometimes abbreviated as $\iEnv$. 
We use the notation $\bchainGen{x}{g(i)}{s}{h(i)}{m}{n}$ for the chain 
$x_{g(m)} = s_{h(m)}, x_{g(m+1)} = s_{h(m+1)}, \ldots, x_{g(n)} = s_{h(n)}$
of bindings where $g,h: \mathbb{N} \to \mathbb{N}$ are injective, 
\eg, $\bchainGen{x}{i}{s}{i-1}{m}{n}$ means the bindings 
$x_{m} = s_{m-1}, x_{m+1} = s_{m}, \ldots x_{n} = s_{n-1}$. 
We assume that variables $x_i$ in \tletrec-bindings are all distinct, that 
\tletrec-expressions are identified up to reordering of binding-components,
and that, for convenience, there is at least one binding. 
$\tletrec$-bindings are recursive, \ie, the scope of $x_j$ in 
$\tletrx{x_1 = s_1,\ldots, x_{n-1} = s_{n-1}}{s_{n}}$ are all expressions 
$s_i$ with $1 \leq i \leq n$. 

$\LETRECCTXT$ denotes the set of all contexts for the expressions $\LETRECEXPR$.
\end{definition}

Free and bound variables in expressions and $\alpha$-renamings are defined as usual.
The set of free variables in $s$ is denoted as $\FV(s)$.

\begin{convention}[Distinct Variable Convention]
We use the {\em distinct variable convention}, \ie, all bound variables in expressions
are assumed to be distinct, and free variables are distinct from bound variables.
All reduction rules are assumed to implicitly $\alpha$-rename bound variables 
in the result if necessary. 
\end{convention}

In all three calculi we will use the symbol $\Omega$ for the specific 
(\tletrec-free) expression $(\lambda z. (z~z))~(\lambda x. (x~x))$. 
In all of our calculi $\Omega$ is divergent and the least element of the 
corresponding contextual preorder.
This is proven in \cite{schmidt-schauss-schuetz-sabel:08} for $\LLR$ and
can easily be proven for the other two calculi using standard methods, 
such as context lemmas. 
Note that this property also follows from the Main Theorem~\ref{thm:maintheorem}
for all three calculi.

\subsection{\texorpdfstring{The Call-by-Need Calculus $\LLR$}{The Call-by-Need-Calculus LR}}\label{sec:LR-calc}
We begin with the call-by-need lambda calculus $\LLR$ which is exactly the 
call-by-need calculus of \cite{schmidt-schauss-schuetz-sabel:08}.
It has a rather complex form of reduction rules using variable chains. 
The justification is that this formulation permits direct syntactic proofs
of correctness \wrt\ contextual equivalence for a large class of transformations. 
Several modifications of the reduction strategy, \ removing indirections, do 
not change the semantics of the calculus, however, they appear to be not 
treatable by syntactic proof methods using diagrams 
(see \cite{schmidt-schauss-schuetz-sabel:08}). 
$\LLR$-expressions are exactly the expressions $\LETRECEXPR$.

\begin{definition}\label{def-red-rules} 
The {\em reduction rules} for the calculus and language $\LLR$ are defined in 
\FIGURE~\ref{figure-reductions-LLR}, where the labels $S,V$ are used for the 
exact definition of the normal-order reduction below. 
Several reduction rules are denoted by their name prefix: 
the union of (\rlletin) and (\rllete) is called (\rllet).  
The union of (\rllet), (\rlapp), (\rlcase), and (\rlseq) is called (\rlll).
\end{definition}

\begin{figure*}[htpb] 
\[\begin{array}{|l@{\,}l|}
\hline
&\\[-1.8ex]
(\rlbeta)&C[((\lambda x. s)^S~t)]   \to    C[\tletrxx{x = t}{s}]
\\[1.1ex]
(\rcpin) & \tletrxx{x_1 = (\lambda x.s)^S, \bchainXInd{2}{m}, \iEnv}{C[x_m^V]}\\
& \quad \to \tletrxx{x_1 = (\lambda x.s), \bchainXInd{2}{m},  \iEnv}{C[(\lambda x.s)]}  
\\[1.1ex]
(\rcpe) & \tletrxx{x_1 = (\lambda x.s)^S, \bchainXInd{2}{m},  \iEnv, y = C[x_m^V]}{t}  \\
& \quad \to \tletrxx{x_1 = (\lambda x.s), \bchainXInd{2}{m},  \iEnv, y = C[(\lambda x.s)]}{t}   
\\[1.1ex]
(\rlapp) & C[(\tletrx{\iEnv}{s}^S~t)]  \to   C[\tletrx{\iEnv}{(s~t)}]
\\[1.1ex]
(\rlcase)& C[(\tcase_T~\tletrx{\iEnv}{s}^S~\tof~alts)]  \\
         &\quad\to 
         C[\tletrx{\iEnv}{(\tcase_T~s~\tof~alts)}]
\\[1.1ex]
(\rlseq) & C[(\tseq~\tletrx{\iEnv}{s}^S~t)]   \to   C[\tletrx{\iEnv}{(\tseq~s~t)}]
\\[1.1ex]
(\rlletin)& \tletrxx{\iEnv_1}{\tletrx{\iEnv_2}{s}^S} \to  \tletrxx{\iEnv_1,\iEnv_2}{s}
\\[1.1ex]
(\rllete)& \tletrxx{\iEnv_1, x =  \tletrx{\iEnv_2}{s}^S}{t} \to  \tletrxx{\iEnv_1, \iEnv_2, x = s}{t}
\\[1.1ex]
(\rseqc)& 
C[(\tseq~v^S~s)]   \to  C[s]  \hspace*{1cm}  \mbox{if } v  \mbox{ is a  value}
\\[1.1ex]
(\rseqin)& 
\tletrx{x_1 = v^S, \bchainXInd{2}{m}, \iEnv}{C[(\tseq~x_m^V~s)]} \\
&\quad\to 
 \tletrx{x_1 = v, \bchainXInd{2}{m}, \iEnv}{C[s]} \\
 &  \mbox{if } v  \mbox{ is a constructor application}
\\[1.1ex]
(\rseqe)& 
\tletrx{x_1 = v^S, \bchainXInd{2}{m}, \iEnv, y = C[(\tseq~x_m^V~s)]}{t} \\
& \quad \to  \tletrx{x_1 = v, \bchainXInd{2}{m}, \iEnv, y =C[s]}{t} \\
&  \mbox{if } v  \mbox{ is a {constructor application}}

\\[1.1ex]
(\rcasec)& C[(\tcase_T~(c_{i}~\vect{s})^S \tof\ldots ((c_{i}~\vect{y}) \to t_i) \ldots)] \to C[\tletrx{\bchainN{y}{s}{\ari(c_{i})}}{t_i}]\\ 
   & \mbox{if } \ari(c_{i}) \ge 1\\
(\rcasec)&  C[(\tcase_T~ c_{i}^S \tof\ldots~  (c_{i}  \to t_i) \ldots)]   \to C[t_i] \quad \mbox{if~}  \ari(c_{i}) = 0
\\[1.1ex]
(\rcasein)&\tletrec~x_1 = (c_{i}~\vect{s})^S, \bchainXInd{2}{m}, \iEnv~\\
          &\tin~C[\tcase_T~x_{m}^V~\tof\ldots((c_{i}~\vect{z}) \to t)\ldots]\\
                   &\quad\to\tletrec~  x_1 = (c_{i}~\vect{y}), \bchainN{y}{s}{n}, \bchainXInd{2}{m}, \iEnv \\ 
                   &\quad\phantom{\to\,\,}\tin~     C[\tletrx{\bchainN{z}{y}{\ari(c_{i})}}{t}]~\mbox{if } \ari(c_{i}) \ge  1 \mbox{ and where } y_i  \mbox{ are fresh}
\\[1.1ex]
(\rcasein)  & \tletrec~ x_1 =  c_{i}^S,  \bchainXInd{2}{m}, \iEnv~\tin~C[\tcase_T~ x_{m}^V ~ \ldots~ (c_{i} \to t) \ldots] \\
                   & \quad\to \tletrec~ x_1 =  c_{i}, \bchainXInd{2}{m}, \iEnv~\tin~~C[t]~\mbox{if  }  \ari(c_{i})  = 0
\\[1.1ex]
(\rcasee)&\begin{array}[t]{@{}l@{}l}\tletrec~&x_1 = (c_{i}~\vect{s})^S, \bchainXInd{2}{m},\\
                                                   &u = C[\tcase_T~ x_{m}^V ~\tof\ldots ((c_{i}~\vect{z}) \to t)\ldots], \iEnv\\
                                          \multicolumn{2}{@{}l}{\tin~r}
                    \end{array}\\
     &\quad\to\begin{array}[t]{@{}l@{}l}
                 \tletrec~&x_1 = (c_{i}~\vect{y}), \bchainN{y}{s}{\ari(c_i)}, \bchainXInd{2}{m},\\
                          &u = C[\tletrx{\bchainN{z}{y}{\ari(c_i)}}{t}], \iEnv\\
                 \multicolumn{2}{@{}l}{\tin~r}
               \end{array}
\\
     & \mbox{if } \ari(c_{i}) \ge 1 \mbox{ and where } y_i  \mbox{ are fresh }
\\
(\rcasee)& \tletrec~ x_1 = c_{i}^S,  \bchainXInd{2}{m}, 
                u = C[\tcase_T~ x_{m}^V ~ \ldots~ (c_{i} \to t) \ldots],  \iEnv 
                ~\tin~ r\\
                &\quad\to \tletrec~x_1 = c_{i}, \bchainXInd{2}{m}\ldots, u = C[t], \iEnv
                ~\tin~ r~\mbox{ if }  \ari(c_{i}) = 0 \\     
 \hline
\end{array}\]
\caption{Reduction rules of $\LLR$} \label{figure-reductions-LLR}
\end{figure*}

For the definition of the normal order reduction strategy of the calculus 
$\LLR$ we use the labeling algorithm in \FIGURE~\ref{figure:label-LLR} which
detects the position where a reduction rule is applied according to the
normal order.  
It uses the following labels: 
$S$ (subterm), $T$ (top term), $V$ (visited), and $W$ (visited, but not target).
We use $\vee$ when a rule allows two options for a label, \eg\ $s^{S \vee T}$
stands for $s$ labeled with $S$ or $T$.

A labeling rule $l \leadsto r$ is applicable to a (labeled) expression $s$ if
$s$ matches $l$ with the labels given by $l$, where $s$ may have more labels
than $l$ if not otherwise stated. 
The labeling algorithm takes an expression $s$ as its input and exhaustively
applies the rules in \FIGURE~\ref{figure:label-LLR} to $s^T$, where no other
subexpression in $s$ is labeled. 
The label $T$ is used to prevent the labeling algorithm from descending into 
\tletrec-environments that are not at the top of the expression.
The labels $V$ and $W$ mark the visited bindings of a chain of bindings, where
$W$ is used for variable-to-variable bindings.
The labeling algorithm either terminates  with \textit{fail} or with success,  
where in general the direct super\-term of the $S$-marked subexpression 
indicates a potential normal-order  redex.
The use of such a labeling algorithm corresponds to the search of a redex in
term graphs where it is usually called unwinding. 

\begin{figure*}[t]
\[\begin{array}{|lll|}
\hline
&&\\[-1.8ex]
\tletrx{\iEnv}{s}^{T} & \leadsto & \tletrx{\iEnv}{s^{S}}^V\\
(s~t)^{S \vee T} & \leadsto & (s^{S}~t)^V\\
(\tseq~s~t)^{S \vee T} & \leadsto & (\tseq~s^{S}~t)^V\\
(\tcase_T~s~\tof~alts)^{S \vee T} & \leadsto & (\tcase_T~s^{S}~\tof~alts)^V\\
 \tletrx{x=s, \iEnv}{C[x^{S}]}  & \leadsto & \tletrx{x=s^{S}, \iEnv}{C[x^V]}\\
 \tletrx{x=s^{V \vee W}, y = C[x^{S}], \iEnv}{t}  & \leadsto & \textit{fail}\\
 \tletrx{x = C[x^{S}], \iEnv}{s}  & \leadsto & \textit{fail}\\
 \tletrx{x=s, y = C[x^{S}], \iEnv}{t}  & \leadsto & \tletrx{x=s^{S}, y = C[x^V], \iEnv}{t}\\
       & & \mbox{if } C[x] \not= x \\
\tletrx{x=s, y = x^{S}, \iEnv}{t} & \leadsto & \tletrx{x=s^{S}, y = x^W, \iEnv}{t}\\
\hline
\end{array}\]
\caption{Labeling algorithm for $\LLR$\label{figure:label-LLR}}
\end{figure*}

\begin{definition}[Normal Order Reduction of $\LLR$] \label{def-no-reduction}
Let $s$ be an expression. 
Then a single normal order reduction step $\RRAP{\LR}$ is defined as follows:
first the labeling algorithm in \FIGURE~\ref{figure:label-LLR} is applied to $s$.  
If the labeling algorithm terminates successfully, then one of the rules 
in \FIGURE~\ref{figure-reductions-LLR} is applied, if possible, where the
labels $S,V$  must match the labels in the expression $s$ 
(again $s$ may have more labels).   
The {\em normal order redex} is defined as the left-hand side of the applied
reduction rule.
The notation for a normal-order reduction that applies the rule $a$ is 
$\xrightarrow{\LR,a}$, \eg\ $\xrightarrow{\LR,\rlapp}$ applies the rule $(\rlapp)$. 
\end{definition}

The normal order reduction of $\LLR$ implements a call-by-need reduction with sharing
which avoids substitution of arbitrary expressions.
We describe the rules: 
The rule (\rlbeta) is a sharing variant of classical $\beta$-reduction,
where the argument of an abstraction is shared by a new \tletrec-binding 
instead of substituting the argument in the body of an abstraction. 
The rules (\rcpin) and (\rcpe) allow to copy abstractions into needed positions. 
The rules (\rlapp), (\rlcase), and (\rlseq) allow moving \tletrec-expressions 
to the top of the term if they are inside a reduction position of an application,
a \tcase-expression, or a \tseq-expression. 
To flatten nested \tletrec-expressions, the rules (\rlletin) and (\rllete) are
added to the reduction.
Evaluation of $\tseq$-expressions is performed by the rules (\rseqc), (\rseqin),
and (\rseqe), where the first argument of $\tseq$ must be a value (rule \rseqc)
or it must be a variable which is bound in the outer $\tletrec$-environment to
a constructor application.
Since normal order reduction avoids copying constructor applications, the rules
(\rseqin) and (\rseqe) are required.
Correspondingly, the evaluation of $\tcase$-expressions requires several variants: 
there are again three rules for the cases where the argument of $\tcase$ is 
already a constructor application (rule (\rcasec)) or where the argument is a
variable which is bound to a constructor application (perhaps by several 
indirections in the \tletrec-environment) which are covered by the rule 
(\rcasein) and (\rcasee). 
All three rules have two variants: one variant for the case when a constant
is scrutinized (and thus no arguments need to be shared by new $\tletrec$-bindings)
and another variant for the case when arguments are present (and thus the arity of
the scrutinized constructor is strictly greater than 0). 
For the latter case the arguments of the constructor application are shared by
new \tletrec-bindings, such that the newly created variables can be used as 
references in the right hand side of the matching alternative.

\begin{definition}
A {\em reduction context} $R_{\LR}$ is any context, such that its hole is 
labeled with $S$ or $T$ by the $\LLR$-labeling algorithm. 
\end{definition}
Of course, reduction contexts could also be defined recursively, as 
in \cite[Definition 1.5]{schmidt-schauss-schuetz-sabel:08}, but such a 
definition is very cumbersome due to a large number of special cases. 
The labeling algorithm provides a definition that, in our experience,
is easier to work with.
 
By induction on the term structure one can easily verify that the normal
order redex, as well as the normal order reduction, is unique. 
A {\em weak head normal form in $\LLR$ ($\LLR$-WHNF)} is either an
abstraction $\lambda x.s$, a constructor application $(c~s_1~\ldots~s_{\ari(c_i)})$,
or an expression $\tletrx{\iEnv}{v}$ where $v$ is a constructor application
or an abstraction, or an expression of the form 
$\tletrx{x_1 = v, \bchainXInd{2}{m},\iEnv}{x_{m}}$, where $v =   (c~\vect{s})$. 
We distinguish abstraction-WHNF (AWHNF) and  constructor WHNF (CWHNF) based
on whether the value $v$ is an abstraction or a constructor application, 
respectively.
The notions of convergence, divergence and contextual approximation are as
defined in Sect.~\ref{sec:common}.
If there is no normal order reduction originating at an expression $s$ then
$s \DIV_\LR$. 
This, in particular, means that expressions for which the labeling algorithm
fails to find a redex, or for which there is no matching constructor for a
subexpression (that is a WHNF) in a \tcase{} redex position, or expressions
with cyclic dependencies like $\tletrec~{x=x}~\tin~x$, are diverging. 

\begin{example}
We consider the expression
$
 s_1 := \tletrec~x = (y~\lambda u.u), y = \lambda z.z~\tin~x
$.
The labeling algorithm applied to $s_1$ yields
$
 (\tletrec~x = (y^V~\lambda u.u)^V, y = (\lambda z.z)^S~\tin~x^V)^V
$.
The reduction rule that matches this labeling is the reduction rule \mbox{(\rcpe)}, \ie\
$
 s_1\xrightarrow{\LR}(\tletrec~x = ((\lambda z'.z')~\lambda u.u), y = (\lambda z.z)~\tin~x) = s_2
$.
The labeling of $s_2$ is
$
(\tletrec~x = ((\lambda z'.z')^S~\lambda u.u)^V, y = (\lambda z.z)~\tin~x^V)^V
$, which makes the rule  \mbox{(\rlbeta)} applicable, \ie\
$s_2\xrightarrow{\LR} (\tletrec~x = (\tletrec~z'=\lambda u.u~\tin~z'), y = (\lambda z.z)~\tin~x) = s_3$.
The labeling of  $s_3$ is
$
  (\tletrec~x = (\tletrec~z'=\lambda u.u~\tin~z')^S, y = (\lambda z.z)~\tin~x^V)^V
$.
Thus an \mbox{(\rllete)}-reduction is applicable to $s_3$, \ie\
$s_3 \xrightarrow{\LR} (\tletrec~x = z', z'= \lambda u.u, y = (\lambda z.z)~\tin~x) = s_4$.
Now $s_4$ gets labeled as 
$
(\tletrec~x = z'^W, z'= (\lambda u.u)^S, y = (\lambda z.z)~\tin~x^V)^V
$,
and a \mbox{(\rcpin)}-reduction is applicable, \ie\
$
s_4 \xrightarrow{\LR} 
(\tletrec~x = z', z'= (\lambda u.u), y = (\lambda z.z)~\tin~(\lambda u.u)) = s_5$.
The labeling algorithm applied to $s_5$ yields
$(\tletrec~x = z', z'= (\lambda u.u), y = (\lambda z.z)~\tin~(\lambda u.u)^S)^V$,
but no reduction is applicable to $s_5$, since $s_5$ is a WHNF.
\end{example}

Concluding, the calculus $\LLR$ is defined by the tuple 
$(\LETRECEXPR,\LETRECCTXT,\xrightarrow{\LR},{\ANSWERS}_{\LR})$ where
${\ANSWERS}_{\LR}$ are the $\LLR$-WHNFs, where we equate alpha-equivalent
expressions, contexts and answers. 

\vspace{1mm} 

In \cite{schmidt-schauss-schuetz-sabel:08} correctness of several program
transformations was shown: 
\begin{figure*}[t] 
\[\begin{array}{|l@{\,}l|}
\hline
&\\[-1.8ex]
\mbox{(gc)} & C[\tletrec~\{x_i = s_i\}_{i=1}^n~\tin~t] \to C[t],~~~\text{if $\FV(t) \cap \{x_1,\ldots,x_n\} = \emptyset$}
\\[1.1ex]
\mbox{(gc)} & C[\tletrec~\{x_i = s_i\}_{i=1}^n,\{y_i=t_i\}_{i=1}^m~\tin~t] \to C[\tletrec~\{y_i=t_i\}_{i=1}^m~\tin~t],
\\&\multicolumn{1}{r|}{\text{if $(\FV(t) \cup\bigcup_{i=1}^m{\FV(t_i)}) \cap \{x_1,\ldots,x_n\} = \emptyset$}}
\\[1.1ex]
\mbox{(lwas)} &C[(s~(\tletrec~\iEnv~\tin~t))] \to C[\tletrec~\iEnv~\tin~(s~t)]
\\[1.1ex]
\mbox{(lwas)} &C[(c~s_1~\ldots(\tletrec~\iEnv~\tin~s_i)\ldots s_n)] \to C[\tletrec~\iEnv~\tin~(c~s_1~\ldots~s_i~\ldots s_n)]
\\[1.1ex]
\mbox{(lwas)} &C[(\tseq~s~(\tletrec~\iEnv~\tin~t))] \to C[\tletrec~\iEnv~\tin~\tseq~s~t]
\\\hline
\end{array}\]
\caption{Transformations for garbage collection and \tletrec-shifting\label{figure-gc-lwas}}
\end{figure*}
\begin{theorem}[{\cite[Theorems 2.4 and 2.9]{schmidt-schauss-schuetz-sabel:08}}]\label{theo:lr-trans-corr}
All reduction rules shown in \FIGURE~\ref{figure-reductions-LLR} are correct
program transformations, even if they are used with an arbitrary context $C$
in the rules without requiring the labels.
The transformations for garbage collection (gc) and for shifting of \tletrec-expressions
(lwas) shown in \FIGURE~\ref{figure-gc-lwas} are also correct program transformations.\qed
\end{theorem}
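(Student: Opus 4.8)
The plan is to show that each rule of \FIGURE~\ref{figure-reductions-LLR}, read as a transformation $\to_a$ that may be applied in an \emph{arbitrary} context $C$ and \emph{without} the normal-order labels, is contained in $\simc_{\LR}$, and likewise for the rules $(\mathrm{gc})$ and $(\mathrm{lwas})$ of \FIGURE~\ref{figure-gc-lwas}. Each such $\to_a$ is by construction closed under contexts, so it suffices to establish the implication $u \to_a v \implies (u \maycon_{\LR} \iff v \maycon_{\LR})$: instantiating it at $C[u] \to_a C[v]$ for every context $C$ then gives $u \lec_{\LR} v$ and $v \lec_{\LR} u$, hence $u \simc_{\LR} v$. (Alternatively one may first invoke the context lemma of \cite{schmidt-schauss-schuetz-sabel:08} to restrict the quantification over contexts to reduction contexts; this is occasionally convenient for the rules that reshape $\tletrec$-environments.)

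The core of the argument is the \emph{diagram method}. For each rule $a$ I would compute a \emph{complete set of forking diagrams} and a \emph{complete set of commuting diagrams} relating $\to_a$ to the normal-order reduction $\xrightarrow{\LR}$. A forking diagram analyzes an overlap formed by a transformation step $u \to_a v$ and a normal-order step $u \xrightarrow{\LR} u'$ and exhibits a closing, generically $u' \xrightarrow{a,*} w$ together with $v \xrightarrow{\LR,*} w$, with degenerate cases (the two redexes coincide, or $u' = v$, or one side takes zero steps). A commuting diagram treats $u \to_a v \xrightarrow{\LR} v'$ symmetrically. Together with the observation that $\to_a$ maps an $\LLR$-WHNF to an expression that is a WHNF or normal-order reduces to one, and conversely, these ingredients feed the standard inductive argument: the forking diagrams yield $u \maycon_{\LR} \implies v \maycon_{\LR}$ and the commuting diagrams yield the converse, by induction on the length of a normal-order evaluation, the measure strictly decreasing at each inductive step.

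The bulk of the work --- and the expected main obstacle --- is proving that the computed diagram sets are genuinely \emph{complete}, \ie\ that they cover every overlap. The delicate rules are precisely those that move or merge $\tletrec$-bindings, namely $(\rllet)$ and $(\rlll)$, and those that operate along chains of variable-to-variable bindings, namely $(\rcpin)$, $(\rcpe)$, $(\rseqin)$, $(\rseqe)$, $(\rcasein)$, $(\rcasee)$: a transformation step and a normal-order step can independently reshape overlapping environments, so closing a diagram may require several normal-order steps or several $a$-steps, and one must work modulo the implicit $\alpha$-renaming and reordering of bindings. Ruling out missed configurations requires an exhaustive inspection of where the $S/V/W$-labeling can place the normal-order redex relative to the subterm rewritten by $a$; this is routine but error-prone.

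For $(\mathrm{gc})$ the key point is that normal-order reduction never inspects a binding that is unreachable from the in-expression and the reachable bindings, so unreferenced binding groups can be carried along or deleted freely; the only care needed is that steps such as $(\rcpin)$ or $(\rcasein)$ can change reachability, which is absorbed by allowing the deleted environment along a diagram to grow or shrink. For $(\mathrm{lwas})$ --- shifting a $\tletrec$ out of an application argument, a constructor argument, or the second argument of $\tseq$ --- normal order never reduces inside those positions before the enclosing redex is contracted, so the shifted binding group simply commutes with the surrounding normal-order steps, realigned by $(\rlapp)$, $(\rlseq)$ and $(\rlletin)$/$(\rllete)$. Once all diagram sets are shown complete, the two inductions and the final appeal to context-closedness (or to the context lemma) are immediate.
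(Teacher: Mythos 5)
This theorem is not proved in the paper at all: it is imported verbatim from \cite{schmidt-schauss-schuetz-sabel:08} (Theorems 2.4 and 2.9), and your outline --- reduce correctness to convergence equivalence via closure under contexts (or a context lemma), then establish complete sets of forking and commuting diagrams between each transformation and $\xrightarrow{\LR}$, check WHNF preservation, and induct on the length of the normal-order evaluation --- is precisely the method used in that cited work. So the proposal is correct and takes essentially the same route as the source the paper relies on; the only caveat is that the real content lies in the exhaustive completeness check of the diagram sets, which you rightly flag but do not carry out.
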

 
\subsection{\texorpdfstring{The Call-by-Name Calculus $\LNAME$}{The Call-by-Name Calculus LNAME}}\label{subsec:name-calc}
Now we define a call-by-name calculus on $\LETRECEXPR$-expressions.
The calculus $\LNAME$ has $\LETRECEXPR$ as expressions, but the reduction rules
are different from $\LLR$.
The calculus $\LNAME$ does not implement a sharing strategy but instead performs
the usual call-by-name beta-reduction and copies arbitrary expressions directly
into needed positions.

\begin{figure*}[t]
\[\begin{array} {|lll|}
\hline
&&\\[-1.8ex]
(\tletrec~\iEnv~\tin~s)^{X} &\leadsto& (\tletrec~\iEnv~\tin~s^{X})\text{ if $X$ is $S$ or $T$}\\
(s~t)^{S \vee T} &\leadsto& (s^S~t)\\
(\tseq~s~t)^{S \vee T} &\leadsto& (\tseq~s^S~t)\\
(\tcase_T~s~\tof~\ialts)^{S \vee T} &\leadsto& (\tcase_T~s^S~\tof~\ialts)\\
\hline
\end{array}
\]
\caption{Labeling algorithm for $\LNAME$\label{fig:labeling-LNAME}}
\end{figure*}  

In \FIGURE~\ref{fig:labeling-LNAME} the rules of the labeling algorithm for $\LNAME$ are given.
The algorithm uses the labels $S$ and $T$. For an expression $s$ the labeling starts with $s^T$.

An $\LNAME$ reduction context $R_{\NAME}$ is any context where the hole is labeled $T$ or $S$
by the labeling algorithm, or more formally they can be defined as follows:
\begin{definition}
{\em Reduction contexts} $R_{\NAME}$ are contexts of the form $L[A]$ where the context classes
${\cal A}$ and ${\cal L}$ are defined by the following grammar, where $s$ is any expression:

$$
\begin{array}{rcl}
 L\in {\cal L}  &::=& [\cdot] \syxor \tletrec~\iEnv~\tin~L\\ 
 A \in {\cal A} &::=&  [\cdot] \syxor (A~s) \syxor (\tcase_T~A~\tof~alts) \syxor (\tseq~A~s)
\end{array}
$$
\end{definition}

Normal order reduction $\xrightarrow{\NAME}$ of $\LNAME$ is defined by the rules
shown in \FIGURE~\ref{figure-reductions-LNAME} where the labeling algorithm 
according to \FIGURE~\ref{fig:labeling-LNAME} must be applied first.
Note that the rules (\rseqc), (\rlapp), (\rlcase), and (\rlseq) are identical
to the rules for $\LLR$ (in \FIGURE~\ref{figure-reductions-LLR}), but the 
labeling algorithm is different.

\begin{figure}[t]
\[
\begin{array}{|ll|}
\hline
&\\[-1.8ex]
(\rbeta) & C[(\lambda x.s)^S~t] \to C[s[t/x]]
\\[1.1ex]
(\rgcp)   &
C_1[\tletrec~\iEnv,~x=s~\tin~C_2[x^{S\vee T}]] \to  C_1[\tletrec~\iEnv,~x=s~\tin~C_2[s]] 
\\[1.1ex]
(\rlapp) & C[(\tletrx{\iEnv}{s}^S~t)]  \to   C[\tletrx{\iEnv}{(s~t)}]
\\[1.1ex]
(\rlcase)& C[(\tcase_T~\tletrx{\iEnv}{s}^S~\tof~alts)]  \\
         &\quad\to 
         C[\tletrx{\iEnv}{(\tcase_T~s~\tof~alts)}]
\\[1.1ex]
(\rlseq) & C[(\tseq~\tletrx{\iEnv}{s}^S~t)]   \to   C[\tletrx{\iEnv}{(\tseq~s~t)}]
\\[1.1ex]
(\rseqc)& 
C[(\tseq~v^S~s)]   \to  C[s]  \hspace*{1cm}  \mbox{if } v  \mbox{ is a  value}
\\[1.1ex]
(\rcase) &C[(\tcase_T~(c~s_1 \dots s_{ar(c)})^S~\tof \ldots ((c~x_1 \dots x_{ar(c)}) \rightarrow t)\ldots)]\\
&\quad \to C[t[{s_1}/{x_1}, \dots, {s_{ar(c)}}/{x_{ar(c)}}]] 
\\[1.1ex]
\hline
\end{array}
\]
\caption{Normal order reduction rules $\xrightarrow{\NAME}$ of $\LNAME$ \label{figure-reductions-LNAME}}
\end{figure}

Unlike $\LLR$, the normal order reduction of $\LNAME$ allows substitution of arbitrary
expressions in (\rbeta), (\rcase), and (\rgcp) rules.
An additional simplification (compared to $\LLR$) is that nested $\tletrec$-expressions
are not flattened by reduction (\ie\ there is no (\rllet)-reduction in $\LNAME$).
As in $\LLR$ the normal order reduction of $\LNAME$ has reduction rules (\rlapp),
(\rlcase), and (\rlseq) to move \tletrec-expressions out of an application, a
\tseq-expression, or a \tcase-expression.

\vspace*{1mm} 
Note that $\xrightarrow{\NAME}$ is unique. 
An $\LNAME$-WHNF is defined as an expression either of the form $L[\lambda x.s]$
or of the form $L[(c~s_1~\ldots~s_{ar(c)})]$ where $L$ is an ${\cal L}$ context. 
Let ${\ANSWERS}_{\NAME}$ be the set of $\LNAME$-WHNFs, then the calculus $\LNAME$
is defined by the tuple 
$(\LETRECEXPR,\LETRECCTXT,\xrightarrow{\NAME},{\ANSWERS}_{\NAME})$ 
(modulo $\alpha$-equivalence). 

\subsection{\texorpdfstring{The Extended Lazy Lambda Calculus $\LLAZYCC$}{The Extended Lazy Lambda Calculus Llcc}}
In this subsection we give a short description of the  lazy lambda 
calculus \cite{abramsky-lazy:90} extended by data constructors, 
\tcase-expressions and \tseq-expressions, denoted with $\LLAZYCC$. 
Unlike the calculi $\LNAME$ and $\LLR$, this calculus has no 
$\tletrec$-expressions.
The set $\LAMBDAEXPR$ of $\LLAZYCC$-expressions is that of the usual (untyped)
lambda calculus extended by data constructors, $\tcase$, and $\tseq$: 
\[
r,s,t,r_i,s_i,t_i \in \LAMBDAEXPR ::=  x \syxor(s ~t) \syxor  (\lambda x. s) \syxor (c~s_1\ldots s_{\ari(c)}) \syxor (\tcase_T~s~\tof~\ialts) \syxor (\tseq~s~t)
\]

Contexts $\LAMBDACTXT$ are $\LAMBDAEXPR$-expressions where a subexpression is
replaced by the hole $[\cdot]$.
The set ${\ANSWERS}_{\LCC}$ of {\em answers} (or also {\em values}) are the 
$\LLAZYCC$-abstractions and constructor applications. 
Reduction contexts ${\cal R}_{\LCC}$ are defined by the following grammar, where
$s$ is any $\LAMBDAEXPR$-expression:
\[
R_{\LCC} \in {\cal R}_{\LCC}  :=   [\cdot] \syxor (R_\LCC~s) \syxor \tcase_T~R_\LCC~\tof~\ialts \syxor \tseq~R_\LCC~s
\]

An $\xrightarrow{\LCC}$-reduction is defined by the three rules shown 
in \FIGURE~\ref{figure-reductions-LLAZY}, and thus the calculus $\LLAZYCC$ is defined 
by the tuple $(\LAMBDAEXPR,\LAMBDACTXT,\xrightarrow{\LCC},{\ANSWERS}_{\LCC})$
(modulo $\alpha$-equivalence).

\begin{figure}[htpb]
\centering
\begin{tabular}{|ll|}
\hline
&\\[-1.8ex]
$(\rnbeta)$&  $R_\LCC[((\lambda x.s)~t)] \xrightarrow{\LCC} R_\LCC[s[t/x]]$
\\
$(\rncase)$ & $R_{\LCC}[(\tcase_T~(c~s_1 \ldots s_{\ari(c)})~\tof~\ldots ((c~x_1 \ldots x_{\ari(c)}) \rightarrow t) \ldots)]$\\
&$\quad \xrightarrow{\LCC} t[s_1/x_1,\ldots,s_{\ari(c)}/x_{\ari(c)}]$
\\
$(\rnseq)$ & $R_{\LCC}[\tseq~v~s] \xrightarrow{\LCC} R_{\LCC}[s]$, if $v$ is an abstraction or a constructor application
\\
\hline
\end{tabular}
\caption{Normal order reduction $\xrightarrow{\LCC}$ of $\LLAZYCC$\label{figure-reductions-LLAZY}}
\end{figure}

\section{\texorpdfstring{Properties of Similarity and Equivalences in $\LLAZYCC$}{Properties of Similarity and Equivalences in LCC}\label{sec:simulation-lazy}}
An applicative bisimilarity for $\LLAZYCC$ and other alternative definitions
are presented in subsection \ref{subsec:bisimulation-llc}. 
As a preparation, we first analyze similarity for deterministic calculi in general. 

\subsection{Characterizations of Similarity in Deterministic Calculi}\label{sec:conv-admissible}
In this section we prove that for deterministic calculi (see Def.~\ref{def:calculus}),
the applicative similarity and its generalization to extended calculi, defined as the
greatest fixpoint of an operator on relations, is equivalent to the inductive definition
using Kleene's fixpoint theorem.

This implies that for deterministic calculi employing only beta-reduction,
applicative similarity can be equivalently defined as $s \leb t$, iff for all
$n \geq 0$ and closed expressions $r_i, i=1,\ldots,n$, the implication 
$(s~r_1 \ldots r_n)\maycon_D \implies (t~r_1 \ldots r_n)\maycon_D$
holds, provided the calculus is convergence-admissible, which means that for all 
$r$: $(s~r) \maycon_D  v \iff \exists v': s \maycon_D v' \wedge  (v'~r)\maycon_D v$ 
(see Def. \ref{def:convergence-admissible}).

This approach has a straightforward extension to calculi with other types of reductions, 
such as case- and seq-reductions.
The calculi may also consist of a set of open expressions, contexts, and answers, 
as well as a subcalculus consisting of closed expressions,  closed contexts 
and closed answers. 
We will use convergence-admissibility only for closed variants of the calculi.

In the following we assume $D = ({\EXPRESSIONS}, {\CONTEXTS}, \to, {\ANSWERS})$ 
to be an untyped deterministic calculus
and ${\cal Q} \subseteq {\CONTEXTS}$ be a set of functions on expressions.
Note that the relations $\leb_{D,{\cal Q}}$ and $\lec_{D,{\cal Q}}$ are defined in  Definitions \ref{def:Q-gfp-preorder}
and \ref{def:Q-simpl-preorder}, respectively.

\begin{lemma}\label{lemma-leqb-fixpointeq}
For all expressions $s_1,s_2 \in{\EXPRESSIONS}$ the following holds: 
$s_1 \leb_{D,{\cal Q}} s_2$ if, and only if, $s_1 \maycon_D v_1 \implies \left(s_2 \maycon_D v_2 \wedge \forall Q\in{\cal Q}: Q(v_1)\leb_{D,{\cal Q}} Q(v_2)\right)$.
\end{lemma}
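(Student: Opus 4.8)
The statement is exactly the fixed-point equation for $F_{D,{\cal Q}}$ unfolded at $\leb_{D,{\cal Q}}$, so the plan is simply to invoke that $\leb_{D,{\cal Q}}$ is a fixed point of $F_{D,{\cal Q}}$, and then to expand the definition of $F_{D,{\cal Q}}$. Concretely: by Definition~\ref{def:Q-gfp-preorder}, $\leb_{D,{\cal Q}}$ is the greatest fixed point of the monotone operator $F_{D,{\cal Q}}$ (monotonicity being the lemma just proved), so by the Knaster–Tarski theorem $\leb_{D,{\cal Q}} = F_{D,{\cal Q}}(\leb_{D,{\cal Q}})$. Therefore, for any $s_1,s_2$, we have $s_1 \leb_{D,{\cal Q}} s_2$ iff $s_1\, F_{D,{\cal Q}}(\leb_{D,{\cal Q}})\, s_2$.

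The remaining step is purely a matter of unwinding the definition of the experiment operator. By definition of $F_{D,{\cal Q}}$ instantiated at the relation $\eta = \leb_{D,{\cal Q}}$, the statement $s_1\, F_{D,{\cal Q}}(\leb_{D,{\cal Q}})\, s_2$ holds exactly when $s_1 \maycon_D v_1$ implies there exists $v_2$ with $s_2 \maycon_D v_2$ and $Q(v_1) \leb_{D,{\cal Q}} Q(v_2)$ for all $Q \in {\cal Q}$. Combining this with the previous paragraph gives precisely the biconditional asserted in the lemma. (One should note the harmless abbreviation of ``$\exists v_2.(s_2\maycon_D v_2 \wedge \dots)$'' as a conjunction ``$s_2 \maycon_D v_2 \wedge \dots$'' in the lemma statement, reading $v_2$ as existentially quantified on the right-hand side of the implication, as is done throughout the surrounding text.)

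There is essentially no obstacle here: the content of the lemma is entirely in the preceding monotonicity lemma (which justifies the existence of the greatest fixed point) and in the Knaster–Tarski characterization of the greatest fixed point as an actual fixed point. The only thing to be careful about is that the ``if'' direction also needs the fixed-point equality, not merely the post-fixed-point inclusion $\leb_{D,{\cal Q}} \subseteq F_{D,{\cal Q}}(\leb_{D,{\cal Q}})$: we need both inclusions, i.e. genuine equality $\leb_{D,{\cal Q}} = F_{D,{\cal Q}}(\leb_{D,{\cal Q}})$, which is exactly what Knaster–Tarski delivers for a monotone operator on the complete lattice of binary relations ordered by inclusion. So the proof is two lines: cite that $\leb_{D,{\cal Q}}$ is a fixed point of $F_{D,{\cal Q}}$, then expand $F_{D,{\cal Q}}$.
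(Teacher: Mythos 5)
Your proposal is correct and is essentially the paper's own proof: the paper also simply notes that $\leb_{D,{\cal Q}}$ is a fixpoint of $F_{D,{\cal Q}}$, so $\leb_{D,{\cal Q}} = F_{D,{\cal Q}}(\leb_{D,{\cal Q}})$, and that this equation is exactly the claim once $F_{D,{\cal Q}}$ is unfolded. Your extra remarks (Knaster–Tarski and the implicit existential on $v_2$) only make explicit what the paper leaves tacit.
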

\begin{proof}
Since $\leb_{D,{\cal Q}}$ is a fixpoint of $F_{D,{\cal Q}}$, we have $\leb_{D,{\cal Q}}\ = F_{D,{\cal Q}}(\leb_{D,{\cal Q}})$.
This equation is equivalent to the claim of the lemma.
\end{proof}

Now we show that the operator $F_{D,{\cal Q}}$ is 
lower-continuous, and thus we can apply Kleene's fixpoint theorem to 
derive an alternative characterization of $\leb_{D,{\cal Q}}$.

For infinite chains of sets $S_1, S_2\ldots,$ we 
define the greatest lower bound \wrt\ set-inclusion ordering as ${{\mathrm{glb}}}(S_1,S_2,\ldots) = \bigcap\limits_{i=1}^{\infty}{S_i}$.

\begin{proposition}
$F_{\cal Q}$ is lower-continuous \wrt\ countably infinite descending chains $C = \eta_1 \supseteq \eta_2 \supseteq \ldots$,
\ie\ ${\mathrm{glb}}(F_{\cal Q}(C)) = F_{\cal Q}({\mathrm{glb}}(C))$ where $F_{\cal Q}(C)$ is the infinite descending chain 
$F_{\cal Q}(\eta_1) \supseteq F_{\cal Q}(\eta_2) \supseteq \ldots$.
\end{proposition}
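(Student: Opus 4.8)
The plan is to prove the two inclusions $\mathrm{glb}(F_{\cal Q}(C)) \supseteq F_{\cal Q}(\mathrm{glb}(C))$ and $\mathrm{glb}(F_{\cal Q}(C)) \subseteq F_{\cal Q}(\mathrm{glb}(C))$ separately, noting at the outset that $\mathrm{glb}(F_{\cal Q}(C)) = \bigcap_{i=1}^{\infty} F_{\cal Q}(\eta_i)$ and $F_{\cal Q}(\mathrm{glb}(C)) = F_{\cal Q}(\bigcap_{i=1}^\infty \eta_i)$, and that $F_{\cal Q}(C)$ really is a descending chain, which is immediate from monotonicity of $F_{\cal Q}$ (the preceding Lemma). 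The first inclusion is the "easy half'' and in fact holds for arbitrary monotone operators: since $\mathrm{glb}(C) = \bigcap_i \eta_i \subseteq \eta_j$ for every $j$, monotonicity gives $F_{\cal Q}(\mathrm{glb}(C)) \subseteq F_{\cal Q}(\eta_j)$ for every $j$, hence $F_{\cal Q}(\mathrm{glb}(C)) \subseteq \bigcap_j F_{\cal Q}(\eta_j) = \mathrm{glb}(F_{\cal Q}(C))$.

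The substantive direction is $\bigcap_i F_{\cal Q}(\eta_i) \subseteq F_{\cal Q}(\bigcap_i \eta_i)$, and here I would unfold the definition of $F_{\cal Q}$. Suppose $s_1 \mathrel{\bigl(\bigcap_i F_{\cal Q}(\eta_i)\bigr)} s_2$ and assume $s_1 \maycon_D v_1$; I must produce $v_2$ with $s_2 \maycon_D v_2$ and $Q(v_1) \mathrel{(\bigcap_i \eta_i)} Q(v_2)$ for all $Q \in {\cal Q}$. For each $i$, the hypothesis $s_1 \mathrel{F_{\cal Q}(\eta_i)} s_2$ together with $s_1 \maycon_D v_1$ yields some $v_2^{(i)}$ with $s_2 \maycon_D v_2^{(i)}$ and $\forall Q: Q(v_1) \mathrel{\eta_i} Q(v_2^{(i)})$. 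The key point — and this is where determinism is essential — is that $\to$ is a partial function, so evaluation is deterministic: there is a \emph{unique} answer reachable from $s_2$, hence all the $v_2^{(i)}$ coincide with a single $v_2$ (note $s_2\maycon_D$ does hold, since $i=1$ already gives a converging run). Now fix any $Q \in {\cal Q}$; for every $i$ we have $Q(v_1) \mathrel{\eta_i} Q(v_2)$, so $Q(v_1) \mathrel{\bigl(\bigcap_i \eta_i\bigr)} Q(v_2)$, which is exactly $Q(v_1) \mathrel{\mathrm{glb}(C)} Q(v_2)$. Therefore $s_1 \mathrel{F_{\cal Q}(\mathrm{glb}(C))} s_2$, completing the inclusion.

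The main obstacle — really the only place any care is needed — is the step that collapses the family $\{v_2^{(i)}\}$ to a single witness $v_2$. Without determinism one only knows $s_2 \maycon_D v_2^{(i)}$ for each $i$ with possibly different $v_2^{(i)}$, and then the intersection argument breaks down because the relation $\eta_i$ connects $Q(v_1)$ to $Q(v_2^{(i)})$ for varying second arguments; one cannot intersect over $i$. This is precisely why the proposition is stated for untyped \emph{deterministic} calculi: since $\to$ is a partial function on expressions, the evaluation $s_2 \to \cdots$ is a unique (possibly infinite) sequence, so if it reaches an answer at all that answer is unique, giving $v_2^{(1)} = v_2^{(2)} = \cdots =: v_2$. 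Everything else is bookkeeping: rewriting $\mathrm{glb}$ as a countable intersection, quantifier juggling over $Q \in {\cal Q}$ and $i \in \mathbb{N}$, and the observation that the $\maycon_D$ in the definition of $F_{\cal Q}$ is implicitly the ambient $D$'s convergence. I would keep the write-up to these points and not belabor the routine set-theoretic manipulations.
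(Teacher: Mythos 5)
Your proof is correct and follows essentially the same route as the paper: the easy inclusion via monotonicity, and the converse by unfolding $F_{\cal Q}$ and pushing the quantifier over $i$ inside, which is legitimate precisely because determinism gives a single witness $v_2$. The paper performs this quantifier exchange without comment, so your explicit justification of why the witnesses $v_2^{(i)}$ collapse to one value is a welcome (but not essentially different) refinement of the same argument.
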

\begin{proof}
``$\supseteq$'':
Since ${\mathrm{glb}}(C) = \bigcap\limits_{i=1}^{\infty}{\eta_i}$, we have for all $i$: ${\mathrm{glb}}(C)\subseteq \eta_i$. Applying monotonicity of $F_{\cal Q}$ yields
$F_{\cal Q}({\mathrm{glb}}(C)) \subseteq F_{\cal Q}(\eta_i)$ for all $i$. This implies $F_{\cal Q}({\mathrm{glb}}(C))\subseteq \bigcap\limits_{i=1}^{\infty}{F_{\cal Q}(\eta_i)}$, \ie\ 
$F_{\cal Q}({\mathrm{glb}}(C))\subseteq {\mathrm{glb}}(F_{\cal Q}(C))$.

``$\subseteq$'':
Let $(s_1,s_2) \in {\mathrm{glb}}(F_{\cal Q}(C))$, \ie\ for all $i$: $(s_1, s_2) \in F_{\cal Q}(\eta_i)$. Unfolding the definition of $F_{\cal Q}$ gives:
$\forall i: s_1 \maycon_D v_1 \implies \left(s_2 \maycon_D v_2 \wedge \forall Q \in {\cal Q}: Q(v_1)\,\eta_i\,Q(v_2)\right)$. 
Now we can move the universal quantifier for $i$ inside the formula:
 $s_1 \maycon_D v_1 \implies (s_2 \maycon_D v_2 \wedge \forall Q \in {\cal Q}: \forall $i$: Q(v_1)\,\eta_i\,Q(v_2))$.
This is equivalent to 
$s_1 \maycon_D v_1 \implies (s_2 \maycon_D v_2 \wedge \forall Q \in {\cal Q}: {Q(v_1)\,\big(\bigcap\limits_{i=1}^{\infty}\eta_i\big)\,Q(v_2)})$
or 
$s_1 \maycon_D v_1 \implies (s_2 \maycon_D v_2 \wedge \forall Q \in {\cal Q}: (Q(v_1),Q(v_2))\in {\mathrm{glb}}(C))$
and thus $(s_1,s_2) \in F_{\cal Q}({\mathrm{glb}}(C))$.
\end{proof}

\begin{definition}
Let $\leb_{D,{\cal Q},i}$ for $i \in {\mathbb{N}}_0$ be defined as follows:  
\[
 \begin{array}{rcl@{~~~\mbox{and}~~~~}rcl}
   \leb_{D,{\cal Q},0} &= &{\EXPRESSIONS}\times{\EXPRESSIONS} &\leb_{D,{\cal Q},i} &=&F_{D,{\cal Q}}(\leb_{D,{\cal Q},{i-1}}) \text{if $i > 0$}
  \end{array}
\]
\end{definition}

\begin{theorem}\label{theo:inductive-character-gfp}
 $\leb_{D,{\cal Q}}\ = \bigcap\limits_{i=1}^{\infty}{\leb_{D,{\cal Q},i}}$
\end{theorem}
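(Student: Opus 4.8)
The plan is to apply Kleene's fixpoint theorem, using the two facts just established: that $F_{D,\mathcal{Q}}$ is monotonous (the Lemma after Definition of the experiment operator) and that it is lower-continuous with respect to countably infinite descending chains (the Proposition). Recall Kleene's theorem in the form for greatest fixpoints: if an operator on a complete lattice is lower-continuous, then its greatest fixpoint is obtained as the greatest lower bound of the descending chain obtained by iterating the operator from the top element. Here the lattice is $2^{\mathbb{E} \times \mathbb{E}}$ ordered by inclusion, the top element is $\mathbb{E} \times \mathbb{E} = {\leb_{D,\mathcal{Q},0}}$, and the iterates are precisely the $\leb_{D,\mathcal{Q},i}$ by their defining recursion $\leb_{D,\mathcal{Q},i} = F_{D,\mathcal{Q}}(\leb_{D,\mathcal{Q},i-1})$.

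First I would verify that $(\leb_{D,\mathcal{Q},i})_{i \in \mathbb{N}_0}$ is indeed a descending chain, i.e.\ $\leb_{D,\mathcal{Q},i+1} \subseteq \leb_{D,\mathcal{Q},i}$ for all $i$. This goes by induction on $i$: the base case $\leb_{D,\mathcal{Q},1} \subseteq \leb_{D,\mathcal{Q},0} = \mathbb{E}\times\mathbb{E}$ is trivial since the top element contains everything, and the inductive step follows from monotonicity of $F_{D,\mathcal{Q}}$ applied to $\leb_{D,\mathcal{Q},i} \subseteq \leb_{D,\mathcal{Q},i-1}$. Having this, the chain $C = (\leb_{D,\mathcal{Q},i})_{i\geq 1}$ has a greatest lower bound $\mathrm{glb}(C) = \bigcap_{i=1}^\infty \leb_{D,\mathcal{Q},i}$, and I want to show this equals $\leb_{D,\mathcal{Q}}$.

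For the inclusion $\leb_{D,\mathcal{Q}} \subseteq \bigcap_i \leb_{D,\mathcal{Q},i}$: since $\leb_{D,\mathcal{Q}} \subseteq \leb_{D,\mathcal{Q},0}$ trivially, monotonicity and the fixpoint property $\leb_{D,\mathcal{Q}} = F_{D,\mathcal{Q}}(\leb_{D,\mathcal{Q}})$ give by induction $\leb_{D,\mathcal{Q}} = F_{D,\mathcal{Q}}^i(\leb_{D,\mathcal{Q}}) \subseteq F_{D,\mathcal{Q}}^i(\leb_{D,\mathcal{Q},0}) = \leb_{D,\mathcal{Q},i}$ for every $i$, hence $\leb_{D,\mathcal{Q}}$ is contained in the intersection. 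For the reverse inclusion, I use lower-continuity: $F_{D,\mathcal{Q}}(\mathrm{glb}(C)) = \mathrm{glb}(F_{D,\mathcal{Q}}(C)) = \mathrm{glb}((\leb_{D,\mathcal{Q},i+1})_{i\geq 1}) = \bigcap_{i\geq 2}\leb_{D,\mathcal{Q},i} = \bigcap_{i\geq 1}\leb_{D,\mathcal{Q},i} = \mathrm{glb}(C)$, where the last-but-one step uses that dropping the single term $\leb_{D,\mathcal{Q},1}$ from the intersection of a descending chain does not change it. Thus $\mathrm{glb}(C)$ is a fixpoint of $F_{D,\mathcal{Q}}$, and since $\leb_{D,\mathcal{Q}}$ is the \emph{greatest} fixpoint, $\mathrm{glb}(C) \subseteq \leb_{D,\mathcal{Q}}$. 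Combining the two inclusions yields the claim.

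I do not expect a serious obstacle here; this is a textbook application of Kleene's theorem and all the required ingredients (monotonicity, lower-continuity, the recursive characterisation of the iterates) are already in place in the excerpt. The only point deserving a moment's care is the reindexing/offset bookkeeping between the chain $(\leb_{D,\mathcal{Q},i})_{i\geq 0}$ and its tail $(\leb_{D,\mathcal{Q},i})_{i\geq 1}$ — one must make sure that the intersection appearing in the statement (which starts at $i=1$) is the right one and that discarding finitely many leading terms of a descending chain is harmless, which it is. A pedantically complete write-up might also note explicitly that $2^{\mathbb{E}\times\mathbb{E}}$ is a complete lattice so that all the glb's exist, but this is immediate.
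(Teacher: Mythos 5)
Your proposal is correct and follows essentially the same route as the paper, which simply cites Kleene's fixpoint theorem together with monotonicity, lower-continuity, and the descending-chain property $\leb_{D,{\cal Q},i+1}\subseteq\leb_{D,{\cal Q},i}$; you have merely unfolded that citation into its standard two-inclusion argument. No gaps.
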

\begin{proof}
The claim follows from Kleene's fixpoint theorem, since $F_{\cal Q}$ is monotonous and lower-continuous, and since
   $\leb_{D,{\cal Q},i+1}\ \subseteq\  \leb_{D,{\cal Q},i}$ for all $i \ge 0$.  
\end{proof}
This representation of $\leb_{D,\cal Q}$ allows {\em inductive} proofs to show similarity. 
Now we show that ${\cal Q}$-similarity is identical to $\lec_{D,{\cal Q}}$ under moderate conditions,
\ie\ our characterization result will only apply if the underlying calculus is convergence-admissible \wrt\ ${\cal Q}$:
\begin{definition}\label{def:convergence-admissible}
An untyped deterministic calculus $({\EXPRESSIONS},{\CONTEXTS}, \to, {\ANSWERS})$ is {\em convergence-admissible \wrt\ ${\cal Q}$} if, and only if
$\forall Q\in{\cal Q}, s \in {\EXPRESSIONS}, v \in \ANSWERS: 
Q(s)\maycon_D v \iff \exists v': s \maycon_D v' \wedge Q(v')\maycon_D v$
\end{definition}
Convergence-admissibility can be seen as a restriction on choosing the set ${\cal Q}$: In
most calculi (subsets of) reduction contexts satisfy the property for convergence-admissibility, 
while including non-reduction contexts into ${\cal Q}$ usually breaks convergence-admissibility. 

\begin{lemma}\label{lemma:ca-implies-leqQ-over-O}
 Let  $({\EXPRESSIONS},{\CONTEXTS}, \to, {\ANSWERS})$ be convergence-admissible \wrt\ ${\cal Q}$. Then the following holds:
\begin{itemize}
 \item $s_1 \lec_{D,{\cal Q}} s_2 \implies {Q(s_1) \lec_{D,{\cal Q}} Q(s_2)}$ for all $Q \in {\cal Q}$
 \item $s_1 \lec_{D,{\cal Q}} s_2, s_1 \maycon_D v_1,\text{ and } s_2\maycon_D v_2 \implies v_1 \lec_{D,\cal Q} v_2$
 \end{itemize}
\end{lemma}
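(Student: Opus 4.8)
The plan is to prove both items more or less directly from the definition of $\lec_{D,{\cal Q}}$ (Definition~\ref{def:Q-simpl-preorder}), using convergence-admissibility to rearrange the test contexts. For the first item, suppose $s_1 \lec_{D,{\cal Q}} s_2$ and fix some $Q \in {\cal Q}$; I must show $Q(s_1) \lec_{D,{\cal Q}} Q(s_2)$, i.e.\ that for all $n \ge 0$ and all $Q_1,\ldots,Q_n \in {\cal Q}$, $Q_1(\ldots(Q_n(Q(s_1))))\maycon_D$ implies $Q_1(\ldots(Q_n(Q(s_2))))\maycon_D$. But $Q_1(\ldots(Q_n(Q(s_1)))) = Q_1(\ldots(Q_n(Q_{n+1}(s_1))))$ with $Q_{n+1} := Q \in {\cal Q}$, so this is an instance (with $n+1$ contexts) of the defining property of $s_1 \lec_{D,{\cal Q}} s_2$. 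Hence the conclusion is immediate — in fact this item does not even need convergence-admissibility, only the fact that ${\cal Q}$ is closed under the operation of prepending one more ${\cal Q}$-context to the stack, which is built into Definition~\ref{def:Q-simpl-preorder}.

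For the second item, assume $s_1 \lec_{D,{\cal Q}} s_2$, $s_1 \maycon_D v_1$, and $s_2 \maycon_D v_2$; I want $v_1 \lec_{D,{\cal Q}} v_2$. Unfolding the goal, I fix $n \ge 0$ and $Q_1,\ldots,Q_n \in {\cal Q}$ and assume $Q_1(\ldots(Q_n(v_1)))\maycon_D$; I must produce $Q_1(\ldots(Q_n(v_2)))\maycon_D$. The idea is to replace $v_1$ by $s_1$ at the bottom of the context stack, invoke $\lec_{D,{\cal Q}}$, and then pass back to $v_2$. To climb down from $v_1$ to $s_1$ I apply convergence-admissibility repeatedly: since $s_1 \maycon_D v_1$ and $Q_n(v_1)\maycon_D$ (which follows from $Q_1(\ldots(Q_n(v_1)))\maycon_D$ by determinism of reduction, because convergence of an outer context forces convergence of any inner subterm sitting in reduction position — more precisely one uses the ``only if'' direction of convergence-admissibility applied to each $Q_i$ in turn, from the outside in, to peel off the contexts and conclude each intermediate term converges), the ``if'' direction of convergence-admissibility gives $Q_n(s_1)\maycon_D$, and then, applying the ``if'' direction once for each remaining $Q_i$ working outward, $Q_1(\ldots(Q_n(s_1)))\maycon_D$. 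Now $s_1 \lec_{D,{\cal Q}} s_2$ yields $Q_1(\ldots(Q_n(s_2)))\maycon_D$. Finally I climb back up: using the ``only if'' direction of convergence-admissibility together with $s_2 \maycon_D v_2$ and determinism of $\to$ (the reduction of $Q_n(s_2)$ must first reduce $s_2$ to its unique answer $v_2$, then continue), I get $Q_n(v_2)\maycon_D$ and, reapplying outward, $Q_1(\ldots(Q_n(v_2)))\maycon_D$, as required.

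The one point that needs a little care — and I expect it to be the main obstacle — is the bookkeeping in the ``climbing'' steps: I am tacitly using that $Q_1(\ldots(Q_n(\cdot)))$ is itself a ${\cal Q}$-context (since ${\cal Q} \subseteq {\CONTEXTS}$ and ${\CONTEXTS}$ is closed under composition), so that convergence-admissibility applies to the whole composite, and I am using determinism of $\to$ to know that evaluating $Q(s)$ necessarily routes through the unique answer of $s$ before proceeding. Convergence-admissibility is stated for a single $Q \in {\cal Q}$, so strictly speaking one either proves a small auxiliary claim that the composite $Q_1 \circ \cdots \circ Q_n$ again satisfies the convergence-admissibility equivalence (an easy induction on $n$ using the single-step version and $s \maycon_D v \iff$ the unique reduction sequence from $s$ ends at $v$), or one peels the contexts off one at a time. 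Either way the argument is routine; the substance of the lemma is entirely carried by Definition~\ref{def:convergence-admissible}. I would write it out via the one-at-a-time peeling, since that avoids having to state the composite version explicitly.
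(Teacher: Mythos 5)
Your proposal is correct and takes essentially the same route as the paper's proof: the first item read off directly from Definition~\ref{def:Q-simpl-preorder}, and the second by using convergence-admissibility to pass from $Q_1(\ldots(Q_n(v_1)))\maycon_D$ to $Q_1(\ldots(Q_n(s_1)))\maycon_D$, applying $s_1 \lec_{D,{\cal Q}} s_2$, and then using convergence-admissibility plus determinism of $\to$ to return to $Q_1(\ldots(Q_n(v_2)))\maycon_D$. Your explicit one-context-at-a-time peeling and reassembly merely spells out what the paper compresses into ``convergence-admissibility (applied multiple times)''.
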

\begin{proof}
The first part is easy to verify.
For the second part it is important that $D$ is deterministic. Let $s_1 \lec_{D,{\cal Q}} s_2$, and  $s_1 \maycon_D v_1$, $s_2 \maycon_D v_2$ hold.
Assume that $Q_1(\ldots(Q_n(v_1))) \maycon_D v_1'$ for some $n \geq 0$ where all $Q_i \in {\cal Q}$.
Convergence-admissibility implies $Q_1(\ldots((Q_n(s_1)))) \maycon_D v_1'$. 
Now $s_1 \lec_{D,{\cal Q}} s_2$ implies $Q_1(\ldots(Q_n(s_2)))\maycon_D v_2'$. Finally, 
convergence-admissibility (applied multiple times) shows that $s_2 \maycon_D v_2$ and 
$Q_1(\ldots(Q_n(v_2)))\maycon_D v_2'$ holds.
\end{proof}

We prove that $\leb_{D,{\cal Q}}$ respects functions $Q \in {\cal Q}$ provided the underlying
deterministic calculus is convergence-admissible \wrt\ ${\cal Q}$:

\begin{lemma}\label{lemma:simb-implies-Qsimb}
Let $({\EXPRESSIONS},{\CONTEXTS}, \to, {\ANSWERS})$ be convergence-admissible \wrt\ ${\cal Q}$.
Then for all $s_1,s_2 \in E:$ $s_1 \leb_{D,{\cal Q}} s_2 \implies Q(s_1) \leb_{D,{\cal Q}} Q(s_2)$ for all $Q\in{\cal Q}$
\end{lemma}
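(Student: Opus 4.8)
The plan is to prove the claim by coinduction, exploiting that $\leb_{D,{\cal Q}}$ is the \emph{greatest} fixpoint of $F_{D,{\cal Q}}$: it is enough to exhibit a relation $\eta$ with $(Q(s_1),Q(s_2)) \in \eta$ for the given $Q$ and $\eta \subseteq F_{D,{\cal Q}}(\eta)$, since then $\eta \subseteq \leb_{D,{\cal Q}}$. The candidate I would use is
\[
\eta \;=\; {\leb_{D,{\cal Q}}} \;\cup\; \{(Q'(t_1),Q'(t_2)) \mid Q' \in {\cal Q},\ t_1 \leb_{D,{\cal Q}} t_2\}.
\]
From $s_1 \leb_{D,{\cal Q}} s_2$ we get $(Q(s_1),Q(s_2)) \in \eta$ immediately, so the whole proof reduces to showing that $\eta$ is a post-fixpoint of $F_{D,{\cal Q}}$.

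For pairs already in $\leb_{D,{\cal Q}}$ this is for free, since $\leb_{D,{\cal Q}} = F_{D,{\cal Q}}(\leb_{D,{\cal Q}}) \subseteq F_{D,{\cal Q}}(\eta)$ by Lemma~\ref{lemma-leqb-fixpointeq} together with monotonicity of $F_{D,{\cal Q}}$. So the real work is a pair $(Q'(t_1),Q'(t_2))$ with $Q' \in {\cal Q}$ and $t_1 \leb_{D,{\cal Q}} t_2$: assuming $Q'(t_1) \maycon_D v_1$, I must produce $v_2$ with $Q'(t_2) \maycon_D v_2$ and $Q''(v_1)\,\eta\,Q''(v_2)$ for all $Q'' \in {\cal Q}$. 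Convergence-admissibility w.r.t.\ ${\cal Q}$ is the engine here, used in both directions. First, $Q'(t_1) \maycon_D v_1$ yields an answer $v_1'$ with $t_1 \maycon_D v_1'$ and $Q'(v_1') \maycon_D v_1$. Applying $t_1 \leb_{D,{\cal Q}} t_2$ through Lemma~\ref{lemma-leqb-fixpointeq} gives $v_2'$ with $t_2 \maycon_D v_2'$ and $Q''(v_1') \leb_{D,{\cal Q}} Q''(v_2')$ for all $Q'' \in {\cal Q}$; in particular $Q'(v_1') \leb_{D,{\cal Q}} Q'(v_2')$. Since $Q'(v_1') \maycon_D v_1$, a second use of Lemma~\ref{lemma-leqb-fixpointeq} produces $v_2$ with $Q'(v_2') \maycon_D v_2$ and $Q''(v_1) \leb_{D,{\cal Q}} Q''(v_2)$ for all $Q'' \in {\cal Q}$. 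Finally, convergence-admissibility used backwards turns $t_2 \maycon_D v_2'$ and $Q'(v_2') \maycon_D v_2$ into $Q'(t_2) \maycon_D v_2$. As ${\leb_{D,{\cal Q}}} \subseteq \eta$, the witnesses $Q''(v_1) \leb_{D,{\cal Q}} Q''(v_2)$ lie in $\eta$, hence $(Q'(t_1),Q'(t_2)) \in F_{D,{\cal Q}}(\eta)$, which finishes $\eta \subseteq F_{D,{\cal Q}}(\eta)$ and thus the proof.

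I expect no genuine obstacle: the argument is essentially bookkeeping. The two points needing care are the reliance on determinism of $D$ (it is implicit in convergence-admissibility, which presupposes $\to$ is a partial function, so that the answers of $Q'(v_1')$ and $Q'(v_2')$ are unambiguous), and chaining the four invocations in the right order — convergence-admissibility forward, Lemma~\ref{lemma-leqb-fixpointeq} twice, convergence-admissibility backward. The one mildly subtle observation is that the witness pairs produced land directly in $\leb_{D,{\cal Q}}$ rather than merely in the larger $\eta$, which is precisely why taking $\leb_{D,{\cal Q}}$ as one summand of $\eta$ makes $\eta$ closed under $F_{D,{\cal Q}}$.
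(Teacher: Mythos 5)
Your proposal is correct, and its core chain (convergence-admissibility forward, Lemma~\ref{lemma-leqb-fixpointeq} applied to $t_1 \leb_{D,{\cal Q}} t_2$ and then to $Q'(v_1') \leb_{D,{\cal Q}} Q'(v_2')$, convergence-admissibility backward) is exactly the paper's argument. The only difference is your coinductive wrapper with the auxiliary relation $\eta$, which is superfluous: as you yourself observe, the witnesses land in $\leb_{D,{\cal Q}}$ itself, so one can conclude $(Q(s_1),Q(s_2)) \in F_{D,{\cal Q}}(\leb_{D,{\cal Q}}) = {\leb_{D,{\cal Q}}}$ directly from the fixpoint equation, which is what the paper does.
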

\begin{proof}
Let $s_1 \leb_{D,{\cal Q}} s_2$, $Q_0 \in{\cal Q}$,  and $Q_0(s_1)\maycon_D v_1$.
By convergence admissibility $s_1\maycon_D v_1'$ holds and $Q_0(v_1')\maycon_D v_1$.
Since $s_1 \leb_{D,{\cal Q}} s_2$ this implies $s_2\maycon_D v_2'$ and
for all $Q\in{\cal Q}: Q(v_1') \leb_{D,\cal Q} Q(v_2')$.  Hence, from $Q_0(v_1') \maycon_D v_1$
we derive $Q_0(v_2')\maycon_D v_2$. Convergence admissibility now implies $Q_0(s_2)\maycon_D v_2$.

It remains to show  for all $Q\in{\cal Q}$: $Q(v_1) \leb_{D,\cal Q} Q(v_2)$: 
Since $Q_0(v_1')\maycon_D v_1$ and $Q_0(v_2')\maycon_D v_2$, applying Lemma~\ref{lemma-leqb-fixpointeq} to $Q_0(v_1') \leb_{D,\cal Q} Q_0(v_2')$ implies $Q(v_1) \leb_{D,\cal Q} Q(v_2)$ for all $Q \in {\cal Q}$.
\end{proof}

We now prove that $\lec_{D,{\cal Q}}$ and ${\cal Q}$-similarity coincide for convergence-admissible deterministic calculi:

\begin{theorem}\label{thm:conv-admissibile-implies}
Let $({\EXPRESSIONS},{\CONTEXTS}, \to, {\ANSWERS})$ be convergence-admissible \wrt\ ${\cal Q}$.
Then \mbox{${\lec_{D,\cal Q}}\,=\,{\leb_{D,{\cal Q}}}$}.
\end{theorem}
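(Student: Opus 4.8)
The plan is to prove the two inclusions $\lec_{D,{\cal Q}} \subseteq \leb_{D,{\cal Q}}$ and $\leb_{D,{\cal Q}} \subseteq \lec_{D,{\cal Q}}$ separately, exploiting the machinery already assembled: the fixpoint characterization of $\leb_{D,{\cal Q}}$ (Lemma~\ref{lemma-leqb-fixpointeq}), the inductive characterization $\leb_{D,{\cal Q}} = \bigcap_i \leb_{D,{\cal Q},i}$ (Theorem~\ref{theo:inductive-character-gfp}), and the consequences of convergence-admissibility collected in Lemmas~\ref{lemma:ca-implies-leqQ-over-O} and~\ref{lemma:simb-implies-Qsimb}.

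\textbf{Inclusion $\leb_{D,{\cal Q}} \subseteq \lec_{D,{\cal Q}}$.}
First I would show that $\leb_{D,{\cal Q}}$ satisfies the defining implication of $\lec_{D,{\cal Q}}$ from Definition~\ref{def:Q-simpl-preorder}. Assume $s_1 \leb_{D,{\cal Q}} s_2$ and fix $n \geq 0$ and $Q_1,\ldots,Q_n \in {\cal Q}$ with $Q_1(Q_2(\ldots(Q_n(s_1))))\maycon_D$. By $n$-fold application of Lemma~\ref{lemma:simb-implies-Qsimb} (using convergence-admissibility), $Q_1(\ldots(Q_n(s_1))) \leb_{D,{\cal Q}} Q_1(\ldots(Q_n(s_2)))$. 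Then one application of Lemma~\ref{lemma-leqb-fixpointeq} to this pair turns the convergence of the left side into convergence of the right side. This direction is the easy one and uses determinism only implicitly through the earlier lemmas.

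\textbf{Inclusion $\lec_{D,{\cal Q}} \subseteq \leb_{D,{\cal Q}}$.}
This is the harder direction and I expect it to be the main obstacle, since it requires showing membership in a greatest fixpoint. I would use Theorem~\ref{theo:inductive-character-gfp} and prove by induction on $i$ that $\lec_{D,{\cal Q}} \subseteq \leb_{D,{\cal Q},i}$ for all $i \geq 0$; the base case $i=0$ is trivial as $\leb_{D,{\cal Q},0} = {\EXPRESSIONS}\times{\EXPRESSIONS}$. For the step, assume $\lec_{D,{\cal Q}} \subseteq \leb_{D,{\cal Q},i}$ and let $s_1 \lec_{D,{\cal Q}} s_2$ with $s_1 \maycon_D v_1$. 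Taking $n=0$ in the definition of $\lec_{D,{\cal Q}}$ gives $s_2 \maycon_D v_2$ for some $v_2$; the second bullet of Lemma~\ref{lemma:ca-implies-leqQ-over-O} then yields $v_1 \lec_{D,{\cal Q}} v_2$, and the first bullet (applied to each $Q\in{\cal Q}$) gives $Q(v_1) \lec_{D,{\cal Q}} Q(v_2)$. By the induction hypothesis $Q(v_1) \leb_{D,{\cal Q},i} Q(v_2)$, so by definition of $\leb_{D,{\cal Q},i+1} = F_{D,{\cal Q}}(\leb_{D,{\cal Q},i})$ we get $s_1 \leb_{D,{\cal Q},i+1} s_2$. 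Taking the intersection over all $i$ and invoking Theorem~\ref{theo:inductive-character-gfp} gives $s_1 \leb_{D,{\cal Q}} s_2$.

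The crux is really the correct bookkeeping in Lemma~\ref{lemma:ca-implies-leqQ-over-O}: it is convergence-admissibility that lets us pass between convergence of $Q(s)$ and convergence of $Q(v)$ for the answer $v$ reached from $s$, and determinism that guarantees this answer is unique so that the two bullets compose coherently. An alternative to the inductive argument would be a direct coinductive proof that $\lec_{D,{\cal Q}}$ is $F_{D,{\cal Q}}$-dense (i.e.\ $\lec_{D,{\cal Q}} \subseteq F_{D,{\cal Q}}(\lec_{D,{\cal Q}})$) and hence contained in the greatest fixpoint; this collapses the induction but relies on the same two bullets of Lemma~\ref{lemma:ca-implies-leqQ-over-O}. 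I would present whichever is shorter, probably the coinductive version, but the inductive route via Theorem~\ref{theo:inductive-character-gfp} is more in the spirit of the surrounding development.
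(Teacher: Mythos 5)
Your proposal is correct and follows essentially the same route as the paper's proof: the hard inclusion $\lec_{D,{\cal Q}} \subseteq \leb_{D,{\cal Q}}$ via induction on $i$ using Theorem~\ref{theo:inductive-character-gfp} and both parts of Lemma~\ref{lemma:ca-implies-leqQ-over-O}, and the easy inclusion via Lemma~\ref{lemma:simb-implies-Qsimb}. The only difference is cosmetic: the paper phrases the easy direction as an induction on the number $n$ of contexts, while you iterate Lemma~\ref{lemma:simb-implies-Qsimb} $n$ times and finish with Lemma~\ref{lemma-leqb-fixpointeq}, which is the same argument.
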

\begin{proof}
``$\subseteq$'':
Let $s_1 \lec_{D,{\cal Q}} s_2$. We use Theorem~\ref{theo:inductive-character-gfp} and show
$s_1 \leb_{D,{\cal Q},i} s_2$ for all $i$. We use induction on $i$.
The base case ($i=0$) obviously holds.
Let $i > 0$ and let $s_1 \maycon_D v_1$. Then $s_1 \lec_{D,{\cal Q}} s_2$ implies
$s_2\maycon_D v_2$. Thus, it is sufficient to show that $Q(v_1) \leb_{D,{\cal Q},i-1} Q(v_2)$
for all $Q \in {\cal Q}$: As induction hypothesis we use that $s_1 \lec_{D,{\cal Q}}~s_2 \implies s_1 \leb_{D,{\cal Q},i-1} s_2$ holds.
Using Lemma~\ref{lemma:ca-implies-leqQ-over-O} twice and $s_1 \lec_{D,{\cal Q}} s_2$, we have
$Q(v_1) \lec_{D,{\cal Q}} Q(v_2)$. The induction hypothesis shows that $Q(v_1) \leb_{D,{\cal Q},i-1} Q(v_2)$.
Now the definition of $\leb_{D,{\cal Q},i}$ is satisfied, which shows $s_1 \leb_{D,{\cal Q},i} s_2$.

``$\supseteq$'':
 Let $s_1 \leb_{D,{\cal Q}} s_2$. By induction on the number $n$ of $Q$-contexts we show  
$\forall n, Q_i \in{\cal Q}: Q_1(\ldots(Q_n(s_1)))\maycon_D \implies Q_1(\ldots(Q_n(s_2)))\maycon_D$.
The base case follows from $s_1 \leb_{D,{\cal Q}} s_2$. For the induction step we use the following induction hypothesis:
$t_1 \leb_{D,{\cal Q}} t_2 \implies \forall j < n, Q_i \in{\cal Q}: Q_1(\ldots(Q_j(t_1)))\maycon_D \implies Q_1(\ldots(Q_j(t_2)))\maycon_D$
for all $t_1, t_2$.
Let $Q_1(\ldots(Q_n(s_1)))\maycon_D$.
From Lemma~\ref{lemma:simb-implies-Qsimb} we have $r_1 \leb_{D,{\cal Q}} r_2$, where $r_i = Q_n(s_i)$.
Now the induction hypothesis shows that $Q_1(\ldots(Q_{n-1}(r_1)))\maycon_D \implies Q_1(\ldots(Q_{n-1}(r_2)))\maycon_D$
and thus $Q_1(\ldots(Q_n(s_2)))\maycon_D$.
\end{proof}

\subsection{\texorpdfstring{Applicative Simulation in $\LLAZYCC$}{Applicative simulation in LCC}}\label{subsec:bisimulation-llc}
In this section we will show that applicative similarity and contextual preorder coincide in $\LLAZYCC$. 
\begin{figure*}[t]
\begin{minipage}{.53\textwidth}
\begin{tikzpicture}
\node (lec) at (0,0) {$\lec_\LCC$};
\node (leb) at (2,-1) {$\leb_\LCC^o$};
\node (lebq) at (4,0) {$\leb_{\LCC,Q_\LCC}^o$};
\node (lecq) at (6.5,0) {$\lec_{\LCC,Q_\LCC}^o$};
\node (cand) at (2,-2) {$\cand$};
\draw[double,double distance=1.2pt,dotted,line width=.8pt] (lec) to node [above] {!} (lebq);
\draw[double,double distance=1.2pt,line width=.5pt] (lec) to node [left] {\rule{0px}{15px}\footnotesize Thm~\ref{thm:sim_c-equiv-b}} (leb);
\draw[double,double distance=1.2pt,line width=.5pt] (leb) to node [left] {\footnotesize Thm~\ref{theorem:cand-eq-le-b}} (cand);
\draw[double,double distance=1.2pt,line width=.5pt] (leb) to node [right] {\rule{0px}{15px}\footnotesize Thm~\ref{thm:llc-Q-equivalence}} (lebq);
\draw[double,double distance=1.2pt,line width=.5pt] (lebq) to node [above] {\footnotesize Thm~\ref{thm:llc-Q-equivalence}} (lecq);
\end{tikzpicture}
\end{minipage}~%
\begin{minipage}{.46\textwidth}
\begin{tabular}{@{}l@{~}p{.8\textwidth}@{}} 
$\lec_\LCC$             & \footnotesize contextual preorder in $\LCC$
\\
$\leb_\LCC^o$           & \footnotesize open extension of similarity in $\LCC$
\\
$\cand$                 & \footnotesize co-inductively defined candidate relation for Howe's technique
\\
$\leb_{\LCC,Q_\LCC}^o$  & \footnotesize open extension of ${\cal Q}$-similarity in $\LLCC$ with ${\cal Q} = Q_\LCC$
\\
$\lec_{\LCC,Q_\LCC}^o$  & \footnotesize open extension of contextual preorder in $\LLCC$ restricted to contexts $Q_\LCC$
\end{tabular}
\end{minipage}
\caption{Structure of soundness and completeness proofs for similarities in $\LLCC$. The ${=}!{=}$ indicates a required equality which can only be proved via Howe's technique.\label{fig-bc-equiv-lcc}
}
\end{figure*}

\vspace{1mm}

{\bf Notation.}~In abuse of notation we use higher order abstract syntax as 
\eg\ in \cite{howe:89} for the proof and write $\tauop(..)$ for an expression
with top operator $\tauop$,  which may be all possible term constructors, like
\tcase, application, a constructor, \tseq, or $\lambda$, and  $\theta$ for an
operator that may be the head of a value, \ie\ a constructor or $\lambda$. 
\begin{definition} 
For a relation $\eta$ on closed $\LAMBDAEXPR$-expressions $\eta^o$ is the open
extension on $\LLAZYCC$:
For (open) $\LAMBDAEXPR$-expressions $s_1,s_2$, the relation $s_1~\eta^o ~s_2$ 
holds, if for all substitutions $\sigma$ such that $\sigma(s_1), \sigma(s_2)$
are closed, the relation  $\sigma(s_1)~ \eta ~\sigma(s_2)$ holds.
Conversely, for binary relations $\mu$ on open expressions,
$\closed{\mu}$ is the restriction to closed expressions.

We say a binary relation $\mu$ is {\em operator-respecting}, iff
$s_i\ \mu\ t_i$ for $i = 1,\ldots,n$ implies 
$\tauop(s_1,\ldots,s_n)\ \mu\ \tauop(t_1,\ldots,t_n)$.
\end{definition}
Note that $\tauop$ and $\theta$ may also represent the binding $\lambda$ using $\lambda(x.s)$ as  representing $\lambda x.s$. 
For consistency of terminology and treatment with that in other papers such as \cite{howe:89}, we assume
that removing the top constructor $\lambda x$ in relations is done after a renaming. 
For example, $\lambda x.s\ \mu\ \lambda y.t$ is renamed before further treatment
to $\lambda z.s[z/x]\ \mu\ \lambda z.t[z/y]$ for a fresh variable $z$. 

\vspace{1mm}

{\bf Plan of Subsection \ref{subsec:bisimulation-llc}.}~We start by explaining the subgoals
 of the soundness and completeness proofs for similarities in $\LLCC$ and its structure, 
illustrated in  \FIGURE~\ref{fig-bc-equiv-lcc}.
The main result we want to show is that contextual preorder $\lec_\LCC$ and $\leb_{\LCC,Q_\LCC}^o$ coincide, where 
$\leb_{\LCC,Q_\LCC}^o$ is the open extension of $\leb_{\LCC,Q_\LCC}$,
and $\leb_{\LCC,Q_\LCC}$ is ${\cal Q}$-similarity introduced in 
 Definition~\ref{def:Q-gfp-preorder} instantiated with the subcalculus of $\LLCC$ which consists of
{\em closed} expressions, {\em closed} contexts, and {\em closed} answers, and $Q_\LCC$ is a specific set of small closed $\LLCC$-contexts.
${\cal Q}$-similarity does not allow a direct proof of soundness and completeness for contextual equivalence
using Howe's method \cite{howe:89,howe:96}, since it is not stated in terms of the syntactic form of values derived by
evaluation. We overcome this obstacle by defining another similarity $\leb_\LCC$ in $\LLAZYCC$ for which we will 
perform the proof of soundness and completeness \wrt\ contextual preorder. 
Since the definition of $\leb_\LCC$ does not obviously imply that $\leb_\LCC$ is a precongruence, a candidate relation $\cand$ is defined,
which is trivially compatible with contexts, but needs to be shown to be transitive.  
After proving $\cand = {\leb_\LCC^o}$, \ie\ that ${\leb_\LCC^o}$ is a precongruence, soundness of $\leb_\LCC^o$ 
\wrt\ contextual preorder $\lec_\LCC$ follows.
Completeness can then also be proven. 
In a second step we prove that $\leb_{\LCC,Q_\LCC}^o$ is sound  and complete  for contextual equivalence,  \ie ${\lec_\LCC} = {\leb_{\LCC,Q_\LCC}^o}$.
After showing that $\LLCC$ is convergence-admissible we are also able to show that the inductive description $\lec_{\LCC,Q_\LCC}$ of 
${\cal Q}$-similarity coincides with $\leb_{\LCC,Q_\LCC}$.

Another obstacle is that the contextual preorder contains the irregularity $\lambda x.\Omega \lec_\LCC c~s_1 \ldots s_n$ for any constructor $c$.
This requires an adapted definition of the similarity relation, and a slightly modified proof route.

In the following let $\cBot$ be the set of $\LAMBDAEXPR$-expressions $s$ with the property that for all 
$\LAMBDAEXPR$-substitutions $\sigma$:
if $\sigma(s)$ is closed, then $\sigma(s) \DIV_\LCC$. 
That $\lambda x.s \lec_\LCC (c~s_1 \ldots s_n)$  indeed holds is shown in Proposition \ref{prop-classification-lcc}. 
Now we define an applicative similarity $\leb_{\LCC}$ in $\LLAZYCC$ analogous to \cite{howe:89,howe:96},  
where this irregularity is taken into account.

\begin{definition}[Similarity  in $\LLAZYCC$]\label{def:lazycc-le-b} 
Let $\eta$  be a binary relation on closed $\LAMBDAEXPR$-expressions. 
Let $F_{\LCC}$ be the following operator on relations on closed $\LAMBDAEXPR$-expressions:
\\
$s~F_{\LCC}(\eta)~t$ holds iff 
\begin{itemize}
  \item $s \maycon_{\LCC} \lambda x.s' \implies$  $\big($~$t \maycon_{\LCC} \lambda x.t'$ and $s'\ \eta^o~t'$, or \\
      \hspace*{2.8cm} $t \maycon_{\LCC} (c~t_1' \ldots t_n')$  and  $s' \in \cBot \big)$  
  \item  $s \maycon_{\LCC} (c~s_1' \ldots s_n')  \implies$    $\big($~$t \maycon_{\LCC} (c~t_1' \ldots t_n')$  
     and the relation  $s_i'~\eta~t_i'$  holds for all $i\big)$ 
\end{itemize}

{\em Similarity} $\leb_{\LCC}$ is defined as the greatest fixpoint of the operator $F_{\LCC}$.
Bisimilarity $\simb_{\LCC}$ is defined as $s \simb_{\LCC} t$ iff $s \leb_{\LCC} t \wedge t \leb_{\LCC} s$. 
\end{definition} 

Note that the operator $F_{\LCC}$ is monotone, hence the greatest fixpoint  $\leb_{\LCC}$ exists. 

\subsubsection{\texorpdfstring{Similarity and Contextual Preorder Coincide in $\LLCC$}{Similarity and Contextual Preorder Coincide in LCC}}
Although applying Howe's proof technique is standard,  for the sake of completeness, and to demonstrate that the irregularity 
$\lambda x.\Omega \le_\LCC (c~s_1 \ldots s_n)$
can  also be treated, we will explicitly show in this section that $\leb_{\LCC}^o ~=~\lec_{\LCC}$
using Howe's method \cite{howe:89,howe:96}. 
 
\begin{lemma}\label{lem:open-closed-rel-general}
For a relation $\eta$ on closed expressions it holds
$\closed{\open{\eta}} = \eta$, and also $s\  \eta^o\  t$ implies $\sigma(s)~ \eta^o\  \sigma(t)$ for any substitution $\sigma$. 
For a relation $\mu$ on open expressions, $\mu\ \subseteq  \open{\closed{\mu}}$ is equivalent to
$s~ \mu ~t \implies \sigma(s)~\closed{\mu}~\sigma(t)$ for all closing   substitutions $\sigma$.\qed
\end{lemma}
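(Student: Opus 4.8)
The plan is to prove all three assertions by directly unfolding the definitions of $\open{(\cdot)}$ and $\closed{(\cdot)}$; the only points needing (minor) care are that a substitution acts as the identity on a closed expression, and that the composition of two substitutions is again a substitution (this is where the distinct variable convention is silently used to avoid capture).

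First I would establish $\closed{\open{\eta}} = \eta$. By definition $\closed{\open{\eta}}$ is the restriction of $\eta^o$ to pairs of closed expressions, so it suffices to show that for closed $s,t$ one has $s\ \eta^o\ t$ iff $s\ \eta\ t$. If $s\ \eta^o\ t$, then since $s,t$ are closed, any substitution $\sigma$ (e.g.\ the identity) keeps them closed and leaves them unchanged, so $\sigma(s)\ \eta\ \sigma(t)$ reads $s\ \eta\ t$. Conversely, if $s\ \eta\ t$ and $\sigma$ is any substitution with $\sigma(s),\sigma(t)$ closed, then $\sigma(s) = s$ and $\sigma(t) = t$ because $s,t$ have no free variables; hence $\sigma(s)\ \eta\ \sigma(t)$, and as $\sigma$ was arbitrary, $s\ \eta^o\ t$.

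For the second claim, assume $s\ \eta^o\ t$ and let $\sigma$ be an arbitrary substitution. To show $\sigma(s)\ \eta^o\ \sigma(t)$, take any substitution $\rho$ with $\rho(\sigma(s))$ and $\rho(\sigma(t))$ closed. Then $\rho \circ \sigma$ is again a substitution under which $s$ and $t$ both become closed, so $s\ \eta^o\ t$ yields $(\rho\circ\sigma)(s)\ \eta\ (\rho\circ\sigma)(t)$, i.e.\ $\rho(\sigma(s))\ \eta\ \rho(\sigma(t))$. Since $\rho$ was arbitrary, this is exactly $\sigma(s)\ \eta^o\ \sigma(t)$.

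For the equivalence, I would simply spell out both sides. The inclusion $\mu \subseteq \open{\closed{\mu}}$ says: whenever $s\ \mu\ t$, then for every substitution $\sigma$ making $\sigma(s),\sigma(t)$ closed one has $\sigma(s)\ \closed{\mu}\ \sigma(t)$; this is verbatim the condition ``$s\ \mu\ t \implies \sigma(s)\ \closed{\mu}\ \sigma(t)$ for all closing substitutions $\sigma$''. So after unfolding $\open{(\cdot)}$ (and noting that $\closed{\mu}$ here is just $\mu$ restricted to closed pairs) the two formulations coincide. I expect no genuine obstacle: this is a bookkeeping lemma recording how the open/closed operators interact, and the only thing to keep straight is which substitutions are allowed to act where, together with the fact that a closing substitution pre-composed with an arbitrary substitution is again closing.
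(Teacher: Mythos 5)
Your proof is correct and proceeds exactly as intended: the paper omits the proof because the lemma is an immediate unfolding of the definitions of $\open{(\cdot)}$ and $\closed{(\cdot)}$, which is precisely what you carry out (including the two small observations that substitutions act as the identity on closed expressions and that a closing substitution composed with an arbitrary one is again closing). Nothing is missing.
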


\begin{proposition}[Co-Induction]\label{prop-coinduction}
The principle of co-induction for the greatest fixpoint of $F_{\LCC}$  shows that for every relation $\eta$ on 
closed expressions with $\eta \subseteq F_{\LCC} (\eta)$, 
we derive $\eta\ \subseteq\ \leb_{\LCC}$.  This obviously also implies $\open{\eta}\ \subseteq\ \open{\leb_{\LCC}}$.\qed
\end{proposition}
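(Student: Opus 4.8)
The plan is to read the statement as nothing more than the standard co-induction principle for the greatest fixpoint of a monotone operator, so essentially nothing needs to be invented. First I would recall that the binary relations on closed $\LAMBDAEXPR$-expressions, ordered by set inclusion, form a complete lattice, and that $F_{\LCC}$ is monotone on this lattice (as already observed right after Definition~\ref{def:lazycc-le-b}). By the Knaster--Tarski theorem the greatest fixpoint $\leb_{\LCC}$ of $F_{\LCC}$ exists and coincides with the union $\bigcup\{\eta \mid \eta \subseteq F_{\LCC}(\eta)\}$ of all post-fixed points of $F_{\LCC}$. Hence, given any $\eta$ with $\eta \subseteq F_{\LCC}(\eta)$, it is one of the relations in that union, and therefore $\eta \subseteq \leb_{\LCC}$.

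If one prefers not to quote the explicit description of the greatest fixpoint, the same conclusion follows by an elementary argument: from $\eta \subseteq F_{\LCC}(\eta)$ and monotonicity one gets $F_{\LCC}(\eta) \subseteq F_{\LCC}(F_{\LCC}(\eta))$, so each iterate $F_{\LCC}^{n}(\eta)$ is again a post-fixed point containing $\eta$; the (possibly transfinite) increasing union of these iterates is a fixpoint of $F_{\LCC}$ that contains $\eta$, and it is contained in the greatest fixpoint $\leb_{\LCC}$, whence $\eta \subseteq \leb_{\LCC}$.

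For the addendum $\open{\eta} \subseteq \open{\leb_{\LCC}}$, I would simply invoke monotonicity of the open-extension operation $\mu \mapsto \open{\mu}$ with respect to inclusion: if $\eta_1 \subseteq \eta_2$ and $s_1\ \open{\eta_1}\ s_2$, then for every closing substitution $\sigma$ we have $\sigma(s_1)\ \eta_1\ \sigma(s_2)$, hence $\sigma(s_1)\ \eta_2\ \sigma(s_2)$, so $s_1\ \open{\eta_2}\ s_2$; applying this to $\eta \subseteq \leb_{\LCC}$ yields the claim. I do not expect any real obstacle here; the only point requiring a little care is to keep straight that $F_{\LCC}$ acts on relations over \emph{closed} expressions, so the ambient lattice and the notion of fixpoint are those of closed relations, with the open/closed bookkeeping handled separately (and already recorded in Lemma~\ref{lem:open-closed-rel-general}).
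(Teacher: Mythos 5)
Your proposal is correct and matches the paper's treatment: the paper states this proposition with no proof, taking it as the standard Knaster--Tarski co-induction principle for the monotone operator $F_{\LCC}$, which is precisely what you spell out (together with the routine monotonicity of the open-extension operation for the addendum).
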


The fixpoint property of $\leb_{\LCC}$  implies:
\begin{lemma}\label{lem:fixpoint-constructors} 
For a closed value  $\theta_1(s_1, \ldots, s_n)$, and a closed term $t$ with
    $\theta_1(s_1, \ldots, s_n) \leb_{\LCC}  t$, we have  $t \maycon_\LCC~\theta_2(t_1, \ldots, t_n)$,
and there are two cases: 
\begin{enumerate}
 \item $\theta_1 = \theta_2$ are constructors or $\lambda$  and  $s_i \leb_{\LCC}^o t_i$ for all $i$.  
 \item $\theta_1(s_1, \ldots, s_n) = \lambda(x.s)$ with $s \in \cBot$ and $\theta_2$ is a constructor.  
 \end{enumerate}
\end{lemma}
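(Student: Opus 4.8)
The plan is to read the claim off directly from the fixpoint equation defining $\leb_{\LCC}$. Since $\leb_{\LCC}$ is the greatest fixpoint of the monotone operator $F_{\LCC}$, in particular $\leb_{\LCC} = F_{\LCC}(\leb_{\LCC})$, so the hypothesis $\theta_1(s_1,\ldots,s_n)\leb_{\LCC} t$ says precisely $\theta_1(s_1,\ldots,s_n)\;F_{\LCC}(\leb_{\LCC})\;t$. The single observation that makes everything go through is that a value is its own WHNF, i.e.\ $\theta_1(s_1,\ldots,s_n)\maycon_\LCC \theta_1(s_1,\ldots,s_n)$, so the antecedent of exactly one of the two defining clauses of $F_{\LCC}$ is met, according to whether $\theta_1$ is $\lambda$ or a constructor.

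First I would treat the case $\theta_1 = \lambda$, say the value is $\lambda(x.s)$. Here the first clause of $F_{\LCC}$ applies with $s' := s$, and it produces exactly the two alternatives of the lemma: either $t\maycon_\LCC \lambda x.t'$ with $s\,\leb_{\LCC}^o\,t'$ (this is case~1, with $\theta_2 = \lambda$ and $t_1 = t'$ after the renaming convention), or $t\maycon_\LCC (c~t_1'\ldots t_m')$ with $s\in\cBot$ (this is case~2). Next I would treat $\theta_1 = c$ a constructor, so the value is $(c~s_1\ldots s_n)$; now the second clause of $F_{\LCC}$ applies and yields $t\maycon_\LCC (c~t_1\ldots t_n)$ with the \emph{same} constructor $c$ (hence $\theta_2 = \theta_1$ and the same arity $n$), together with $s_i\,\leb_{\LCC}\,t_i$ for every $i$. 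Since $(c~s_1\ldots s_n)$ and $(c~t_1\ldots t_n)$ are closed, all the $s_i$ and $t_i$ are closed, so by Lemma~\ref{lem:open-closed-rel-general} (using $\closed{\open{\eta}} = \eta$) we may rewrite $\leb_{\LCC}$ as $\leb_{\LCC}^o$ on the arguments, which gives case~1.

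I do not expect a real obstacle here: the statement is essentially a reformulation of the two clauses of $F_{\LCC}$ under the fixpoint identity $\leb_{\LCC} = F_{\LCC}(\leb_{\LCC})$. The only bookkeeping points are (i) noting that a value converges to itself so that the antecedents of the $F_{\LCC}$-clauses are triggered, and (ii) the harmless passage between $\leb_{\LCC}$ and $\leb_{\LCC}^o$ on the closed constructor arguments via Lemma~\ref{lem:open-closed-rel-general}.
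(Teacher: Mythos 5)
Your proposal is correct and matches the paper's (implicit) argument: the paper states this lemma as an immediate consequence of the fixpoint property $\leb_{\LCC}=F_{\LCC}(\leb_{\LCC})$, which is exactly what you unfold, using that a closed value is itself an answer (and, by determinism, the only one it converges to) to trigger the appropriate clause of $F_{\LCC}$. The bookkeeping step identifying $\leb_{\LCC}$ with $\leb_{\LCC}^o$ on the closed constructor arguments via Lemma~\ref{lem:open-closed-rel-general} is also fine.
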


\begin{lemma}\label{lem:fixpoint-cBot}
For  two expressions $s,t$: $s \in \cBot$ implies $s  \leb_{\LCC}^o  t$. 
Thus any two expressions $s,t \in \cBot$ are bisimilar: 
$s   \simb_{\LCC}^o t$. \qed
\end{lemma}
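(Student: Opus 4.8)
The plan is to reduce the statement to a short co-induction argument via Proposition~\ref{prop-coinduction}. First I would unfold the open extension: to prove $s \leb_{\LCC}^o t$ it suffices to show $\sigma(s) \leb_{\LCC} \sigma(t)$ for every substitution $\sigma$ with $\sigma(s),\sigma(t)$ closed. Since $s \in \cBot$, every such $\sigma(s)$ satisfies $\sigma(s) \DIV_{\LCC}$. Hence it is enough to establish the auxiliary fact that a divergent closed $\LAMBDAEXPR$-expression is $\leb_{\LCC}$-related to an arbitrary closed $\LAMBDAEXPR$-expression.

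To this end I would consider the relation $\eta$ consisting of all pairs $(a,b)$ of closed $\LAMBDAEXPR$-expressions with $a \DIV_{\LCC}$, and check $\eta \subseteq F_{\LCC}(\eta)$. Given $(a,b) \in \eta$, both clauses defining $a~F_{\LCC}(\eta)~b$ in Definition~\ref{def:lazycc-le-b} are implications whose premises have the form ``$a \maycon_{\LCC} \lambda x.s'$'' resp.\ ``$a \maycon_{\LCC} (c~s_1' \ldots s_n')$''; since $a \DIV_{\LCC}$, neither premise can hold, so both implications are vacuously true and $a~F_{\LCC}(\eta)~b$ follows. By the co-induction principle of Proposition~\ref{prop-coinduction} we obtain $\eta \subseteq \leb_{\LCC}$.

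Instantiating this with $a = \sigma(s)$ and $b = \sigma(t)$ yields $\sigma(s) \leb_{\LCC} \sigma(t)$ for every closing $\sigma$, \ie\ $s \leb_{\LCC}^o t$. For the remaining claim, if in addition $t \in \cBot$ then the same argument gives $t \leb_{\LCC}^o s$, whence $s \simb_{\LCC}^o t$ by the definition of $\simb_{\LCC}^o$. There is no real obstacle here; the only thing to notice is that the premises of the clauses of $F_{\LCC}$ inspect a weak head normal form of the left component, and a divergent term has none, so the fixpoint conditions hold trivially.
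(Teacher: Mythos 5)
Your proof is correct and is exactly the argument the paper intends (the lemma is stated with \qed as immediate from the fixpoint property): since $\sigma(s)\DIV_{\LCC}$ for every closing $\sigma$, both clauses of $F_{\LCC}$ hold vacuously, so the relation of pairs with divergent left component is $F_{\LCC}$-dense and co-induction gives $\leb_{\LCC}$, hence $\leb_{\LCC}^o$ after passing to the open extension. No gaps; the symmetric application for $t\in\cBot$ yielding $\simb_{\LCC}^o$ is also as intended.
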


Particular expressions in $\cBot$ are \mbox{$(\tcase~(\lambda x.s)~alts)$}  and \mbox{$(c(s_1,\ldots,s_n)~a_1 \ldots a_m)$} for \mbox{$m \ge 1$};
also $s \in \cBot$ implies that $(s~t)$, $(\tseq~s~t)$, $(\tcase~s~alts)$ and $\sigma(s)$  are also in $\cBot$. 

\begin{lemma}\label{lem:leq-b-refl-trans} The relations $\leb_{\LCC}$  and $\leb_{\LCC}^o$  are reflexive and transitive.
\end{lemma}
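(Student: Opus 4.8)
The plan is to prove reflexivity and transitivity of $\leb_{\LCC}$ first, and then derive the corresponding properties of $\leb_{\LCC}^o$ immediately from the definition of the open extension (since $s_i\ \leb_{\LCC}\ t_i$ for all closing substitutions, together with reflexivity/transitivity of $\leb_{\LCC}$ applied substitution-wise, gives the open statements). So the real work is the closed case, and both parts will be done by co-induction using Proposition~\ref{prop-coinduction}: to show a relation $\eta$ is contained in $\leb_{\LCC}$ it suffices to check $\eta \subseteq F_{\LCC}(\eta)$.

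For \emph{reflexivity}, I would take $\eta = \{(s,s) : s \text{ closed}\}$ and verify $\eta \subseteq F_{\LCC}(\eta)$. Given $s\maycon_\LCC \lambda x.s'$, we must exhibit the disjunction in the definition of $F_{\LCC}$: here $s\maycon_\LCC \lambda x.s'$ already, and $s'\ \eta^o\ s'$ holds since $\eta^o$ is reflexive on all (open) expressions — note $\eta^o$ is the open extension of the closed identity, which is just syntactic equality, hence reflexive. Similarly if $s\maycon_\LCC (c~s_1'\ldots s_n')$ then the same $s$ works and $s_i'\ \eta\ s_i'$ because each $s_i'$ is closed (being an immediate subterm of a closed value after $\alpha$-renaming) and $\eta$ is the closed identity. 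This is routine.

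For \emph{transitivity}, I would take $\eta = \leb_{\LCC}\circ\leb_{\LCC}$ (closed pairs $(s,t)$ with $s\leb_{\LCC} r$ and $r\leb_{\LCC} t$ for some closed $r$) and show $\eta \subseteq F_{\LCC}(\eta)$; co-induction then gives $\eta\subseteq \leb_{\LCC}$, which is transitivity. The case analysis uses the fixpoint characterization of $\leb_{\LCC}$ (unfolding $\leb_{\LCC} = F_{\LCC}(\leb_{\LCC})$, as in Lemma~\ref{lem:fixpoint-constructors}). Suppose $s\maycon_\LCC \lambda x.s'$. Since $s\leb_{\LCC} r$, either $r\maycon_\LCC\lambda x.r'$ with $s'\ \leb_{\LCC}^o\ r'$, or $r\maycon_\LCC (c~r_1'\ldots r_n')$ with $s'\in\cBot$. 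In the second sub-case we are immediately done: from $r\leb_{\LCC} t$ we get $t\maycon_\LCC(c~t_1'\ldots t_n')$, and $s'\in\cBot$ is exactly the right disjunct for the pair $(s,t)$. In the first sub-case, from $r\maycon_\LCC\lambda x.r'$ and $r\leb_{\LCC} t$ we again split: if $t\maycon_\LCC\lambda x.t'$ with $r'\ \leb_{\LCC}^o\ t'$, then composing $s'\ \leb_{\LCC}^o\ r'$ and $r'\ \leb_{\LCC}^o\ t'$ gives $s'\ \eta^o\ t'$ (this is where I need that the open extension of $\leb_{\LCC}\circ\leb_{\LCC}$ contains $\leb_{\LCC}^o\circ\leb_{\LCC}^o$, which is a direct unwinding of the definition of $\eta^o$); if instead $t\maycon_\LCC(c~t_1'\ldots t_n')$ with $r'\in\cBot$, then I use Lemma~\ref{lem:fixpoint-cBot}: $s'\leb_{\LCC}^o r'$ and $r'\in\cBot$ force $s'\in\cBot$ as well (since $r'\in\cBot$ means $r'$ diverges under all closing substitutions, and $\leb_{\LCC}^o$ preserves this — concretely, if $\sigma(s')\maycon_\LCC$ then by the similarity clause $\sigma(r')\maycon_\LCC$, contradicting $r'\in\cBot$), giving the right disjunct for $(s,t)$. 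The constructor case $s\maycon_\LCC(c~s_1'\ldots s_n')$ is easier: $s\leb_{\LCC} r$ forces $r\maycon_\LCC(c~r_1'\ldots r_n')$ with $s_i'\leb_{\LCC} r_i'$, then $r\leb_{\LCC} t$ forces $t\maycon_\LCC(c~t_1'\ldots t_n')$ with $r_i'\leb_{\LCC} t_i'$, so $s_i'\ \eta\ t_i'$ for all $i$.

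The main obstacle is bookkeeping around the $\cBot$/constructor irregularity in the transitivity argument: one must be careful that every combination of the two disjuncts for $(s,r)$ and $(r,t)$ lands in one of the two disjuncts for $(s,t)$, and the non-obvious closure steps are precisely "$s'\leb_{\LCC}^o r'$ and $r'\in\cBot$ imply $s'\in\cBot$" (handled via Lemma~\ref{lem:fixpoint-cBot} and the fixpoint unfolding) and the compatibility of $\circ$ with the open extension. Everything else is a straightforward co-induction.
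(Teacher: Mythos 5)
Your proposal is correct and follows essentially the same route as the paper: co-induction via Proposition~\ref{prop-coinduction}, using the closed identity for reflexivity and $\leb_{\LCC}\circ\leb_{\LCC}$ for transitivity (the paper's candidate relations additionally take the union with $\leb_{\LCC}$, an inessential difference), with the same case analysis around the $\cBot$/constructor irregularity. The details you fill in — closing the composition under open extension and the fact that $s'\leb_{\LCC}^o r'$ with $r'\in\cBot$ forces $s'\in\cBot$ — are exactly the checks the paper leaves implicit, and your direct divergence argument for the latter is sound (note it is this argument, not Lemma~\ref{lem:fixpoint-cBot}, that does the work there).
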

\begin{proof} 
Reflexivity follows by showing that $\eta := ~\leb_{\LCC} \cup\ \{(s,s)~|~s \in \LAMBDAEXPR, s~\text{closed}\}$  
satisfies $\eta \subseteq~F_{\LCC}(\eta)$.
Transitivity follows by showing that $\eta := ~\leb_{\LCC} \cup\ (\leb_{\LCC} \circ \leb_{\LCC})$  
satisfies $\eta \subseteq~F_{\LCC}(\eta)$ and then using the co-induction principle.
\end{proof}

The goal in the following is to show that $\leb_{\LCC}$ is a precongruence. 
This proof proceeds by defining a precongruence candidate $\cand$  as a closure of $\leb_{\LCC}$ 
within contexts, which obviously is operator-respecting, 
but transitivity needs to be shown.
By proving that $\leb_{\LCC}^o$ and $\cand$ coincide, on the one hand transitivity of $\cand$ follows (since $\leb_{\LCC}$ is transitive)
and on the other hand (and more importantly) it follows that $\leb_{\LCC}^o$ is operator-respecting (since $\cand$ is operator-respecting) 
and thus a precongruence.

\begin{definition}\label{def:cand} 
The precongruence candidate $\cand$ is a binary relation on 
open expressions and is defined as the greatest fixpoint of the monotone operator $\Fcand$ on relations on all expressions:
\begin{enumerate}
\item $x~\Fcand(\eta)~s$ iff  $x \leb_{\LCC}^o s$.
\item $\tauop(s_1, \ldots, s_n)~  \Fcand(\eta)~s$ iff there is some expression 
 $\tauop(s_1', \ldots, s_n') \leb_{\LCC}^o s$ with $s_i~ \eta~ s_i'$ for $i = 1,\ldots,n$.
\end{enumerate}
\end{definition}

\begin{lemma} If some relation $\eta$ satisfies ${\eta} \subseteq {\Fcand(\eta)}$, 
then ${\eta} \subseteq {\cand}$.\qed
\end{lemma}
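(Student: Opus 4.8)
The plan is to recognize this lemma as nothing more than the co-induction principle for the greatest fixpoint of the monotone operator $\Fcand$, exactly parallel to Proposition~\ref{prop-coinduction} for $\leb_{\LCC}$. So the proof splits into two routine parts: confirming that $\Fcand$ is monotone (as already asserted in Definition~\ref{def:cand}), and then invoking the Knaster--Tarski characterization of the greatest fixpoint.

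First I would verify monotonicity of $\Fcand$ with respect to inclusion of binary relations on all expressions. Clause~(1) of Definition~\ref{def:cand} does not mention $\eta$ at all, so it is unaffected. For clause~(2), suppose $\eta_1 \subseteq \eta_2$ and $\tauop(s_1,\ldots,s_n)\,\Fcand(\eta_1)\,s$; then there is a witness $\tauop(s_1',\ldots,s_n') \leb_{\LCC}^o s$ with $s_i\,\eta_1\,s_i'$ for all $i$, and since $s_i\,\eta_1\,s_i'$ implies $s_i\,\eta_2\,s_i'$, the very same witness establishes $\tauop(s_1,\ldots,s_n)\,\Fcand(\eta_2)\,s$. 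Hence $\Fcand(\eta_1) \subseteq \Fcand(\eta_2)$, which is the monotonicity used implicitly in Definition~\ref{def:cand}.

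Given monotonicity, the greatest fixpoint $\cand$ of $\Fcand$ equals $\bigcup\{\eta \mid \eta \subseteq \Fcand(\eta)\}$, the union of all post-fixpoints; in particular $\cand$ contains every post-fixpoint. Therefore, if $\eta \subseteq \Fcand(\eta)$, then $\eta \subseteq \cand$, which is exactly the claim. If one prefers a self-contained argument not citing Knaster--Tarski by name, set $\eta^\ast = \bigcup\{\eta' \mid \eta' \subseteq \Fcand(\eta')\}$; for each such $\eta'$ we get $\eta' \subseteq \eta^\ast$, hence by monotonicity $\eta' \subseteq \Fcand(\eta') \subseteq \Fcand(\eta^\ast)$, and unioning over $\eta'$ yields $\eta^\ast \subseteq \Fcand(\eta^\ast)$; applying $\Fcand$ once more shows $\Fcand(\eta^\ast)$ is itself a post-fixpoint, so $\Fcand(\eta^\ast) \subseteq \eta^\ast$, making $\eta^\ast$ a fixpoint and in fact the greatest one, i.e. $\eta^\ast = \cand$, so any post-fixpoint $\eta$ satisfies $\eta \subseteq \eta^\ast = \cand$. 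There is no genuine obstacle in this lemma; the only point worth a line of justification is the monotonicity of $\Fcand$ sketched above, and the statement is the direct analogue for $\cand$ of the co-induction principle recorded for $\leb_{\LCC}$ in Proposition~\ref{prop-coinduction}.
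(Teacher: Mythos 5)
Your proof is correct and matches the paper's approach: the paper states this lemma with no proof at all, treating it as the standard co-induction principle that follows immediately from $\cand$ being defined as the greatest fixpoint of the monotone operator $\Fcand$ (just as in Proposition~\ref{prop-coinduction}). Your verification of monotonicity and the Knaster--Tarski argument simply spell out the details the paper leaves implicit.
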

\noindent Since $\cand$ is a fixpoint of $\Fcand$, we have:
\begin{lemma}\label{lemma:cand-fixpoint}\quad
\begin{enumerate}
\item $x \cand s$ iff  $x \leb_{\LCC}^o s$.
\item $\tauop(s_1, \ldots, s_n)  \cand s$ iff there is some expression $\tauop(s_1', \ldots, s_n') \leb_{\LCC}^o s$ with $s_i \cand s_i'$ 
  for $i = 1,\ldots,n$.\qed
\end{enumerate}
\end{lemma}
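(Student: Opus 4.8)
The plan is to read both equivalences off directly from the fixpoint equation for $\cand$. By Definition~\ref{def:cand}, $\cand$ is the greatest fixpoint of the monotone operator $\Fcand$; in particular it is \emph{a} fixpoint, so $\cand = \Fcand(\cand)$. Hence $x \cand s$ holds exactly when $x\ \Fcand(\cand)\ s$, and $\tauop(s_1,\ldots,s_n) \cand s$ holds exactly when $\tauop(s_1,\ldots,s_n)\ \Fcand(\cand)\ s$. So it suffices to substitute $\eta := \cand$ into the two defining clauses of $\Fcand$.

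First I would handle clause~(1): by the definition of $\Fcand$, $x\ \Fcand(\cand)\ s$ iff $x \leb_{\LCC}^o s$ (the right-hand side does not mention $\eta$ at all), and combining this with $\cand = \Fcand(\cand)$ gives assertion~(1). For clause~(2), by the definition of $\Fcand$ with $\eta = \cand$, $\tauop(s_1,\ldots,s_n)\ \Fcand(\cand)\ s$ iff there is an expression $\tauop(s_1',\ldots,s_n') \leb_{\LCC}^o s$ with $s_i \cand s_i'$ for $i = 1,\ldots,n$; using the fixpoint equation once more turns this into assertion~(2).

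There is essentially no obstacle here: the only prerequisite is that $\Fcand$ be monotone, so that its greatest fixpoint exists and the equation $\cand = \Fcand(\cand)$ is legitimate, and this was precisely what was recorded when $\cand$ was introduced in Definition~\ref{def:cand}. Thus the lemma is an immediate consequence of unfolding that definition once, and I would present it as such.
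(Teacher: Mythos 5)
Your argument is correct and matches the paper exactly: the paper derives this lemma by the same one-line observation, namely that $\cand$ is a fixpoint of $\Fcand$ (``Since $\cand$ is a fixpoint of $\Fcand$, we have:''), so both clauses follow by unfolding $\cand = \Fcand(\cand)$ against Definition~\ref{def:cand}. Nothing further is needed.
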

\noindent Some technical facts  about the precongruence candidate are now proved:
\begin{lemma}
  \label{lem:howe-standard} The following properties hold:
  \begin{enumerate}
       \setlength{\itemsep}{1mm}%
  \item \label{lem:howe-standard-reflexive}
    $\cand$ is reflexive.   
  \item \label{lem:howe-standard-operator_respecting}
    $\cand$ and $\candc$ are operator-respecting.  
  \item \label{lem:howe-standard-subsumes_preorder}
    ${\leb_{\LCC}^o} \subseteq {\cand}$ and ${\leb_{\LCC}} \subseteq {\candc}$.
  \item \label{lem:howe-standard-precon_cand:composition}
    ${{\cand} \circ {\leb_{\LCC}^o}} \subseteq {\cand}$. 
  \item\label{lem:howe-standard-precon_cand:subst}
    $({s \cand s'} \wedge {t \cand t'}) \implies   t[s/x] \cand t'[s'/x]$.  
  \item\label{lem:howe-standard-cand-subs} $s \cand t$ implies that $\sigma(s) \cand \sigma(t)$ for every substitution $\sigma$.
  \item\label{lem:howe-standard-subset-open-closed} ${\cand} \subseteq {\open{\candc}}$
  \end{enumerate}
\end{lemma}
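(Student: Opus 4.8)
These are the standard "Howe's method" lemmas, so the overall strategy is the classical one: build up properties of $\cand$ by co-induction on $\Fcand$ and by induction on term structure, using the fixpoint characterization in Lemma~\ref{lemma:cand-fixpoint} together with the already-established reflexivity and transitivity of $\leb_{\LCC}$ (Lemma~\ref{lem:leq-b-refl-trans}) and the open-closed bookkeeping of Lemma~\ref{lem:open-closed-rel-general}. I would prove the seven items roughly in the listed order, since later items lean on earlier ones.

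For \textit{reflexivity} (item~\ref{lem:howe-standard-reflexive}) I would show that the identity relation $\mathit{Id}$ satisfies $\mathit{Id} \subseteq \Fcand(\mathit{Id})$ and invoke the co-induction lemma preceding this statement: for a variable $x$ this needs $x \leb_{\LCC}^o x$, which is reflexivity of $\leb_{\LCC}^o$; for $\tauop(s_1,\dots,s_n)$ one takes the witness $\tauop(s_1,\dots,s_n)$ itself (again using reflexivity of $\leb_{\LCC}^o$) with $s_i \mathbin{\mathit{Id}} s_i$. \textit{Operator-respecting} (item~\ref{lem:howe-standard-operator_respecting}) for $\cand$ is immediate from clause~2 of Lemma~\ref{lemma:cand-fixpoint}: given $s_i \cand t_i$, pick the witness $\tauop(t_1,\dots,t_n) \leb_{\LCC}^o \tauop(t_1,\dots,t_n)$ (reflexivity) and conclude $\tauop(\vec s)\cand\tauop(\vec t)$; for $\candc$ one restricts to closed instances. \textit{Subsumption} (item~\ref{lem:howe-standard-subsumes_preorder}): if $s \leb_{\LCC}^o t$, then when $s=x$ apply clause~1; when $s = \tauop(\vec s)$, use the witness $\tauop(\vec s)$ with $s_i \cand s_i$ (which holds by item~\ref{lem:howe-standard-reflexive}), and $\tauop(\vec s)\leb_{\LCC}^o t$ is the hypothesis. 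The closed version ${\leb_{\LCC}}\subseteq{\candc}$ follows by restriction. \textit{Right-composition with $\leb_{\LCC}^o$} (item~\ref{lem:howe-standard-precon_cand:composition}): if $s \cand s'$ and $s' \leb_{\LCC}^o t$, do a case split on the head of $s$ using Lemma~\ref{lemma:cand-fixpoint}; in the variable case chain $x \leb_{\LCC}^o s' \leb_{\LCC}^o t$ by transitivity of $\leb_{\LCC}^o$; in the $\tauop$ case the witness $\tauop(\vec{s'}) \leb_{\LCC}^o s'$ composes with $s' \leb_{\LCC}^o t$, again by transitivity, giving a witness for $s \cand t$.

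\textbf{The substitution items are the technical heart.} For item~\ref{lem:howe-standard-precon_cand:subst} I would fix $s \cand s'$ and prove $t[s/x]\cand t'[s'/x]$ by structural induction on $t$, using that $t \cand t'$ was derived from one application of $\Fcand$. If $t = x$, then $t \cand t'$ means $x \leb_{\LCC}^o t'$, so $t[s/x] = s \cand s'$ and $s' \leb_{\LCC}^o t'[s'/x]$ (substitution instances of $\leb_{\LCC}^o$-related terms stay related, by Lemma~\ref{lem:open-closed-rel-general}), hence $t[s/x]\cand t'[s'/x]$ by item~\ref{lem:howe-standard-precon_cand:composition}. If $t = y \ne x$, then from $y \leb_{\LCC}^o t'$ we get $y = y[s/x] \cand y[s'/x]$... more precisely $y \leb_{\LCC}^o t'$ gives $y \leb_{\LCC}^o t'[s'/x]$ by the substitution-closure of $\leb_{\LCC}^o$, so $y \cand t'[s'/x]$. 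If $t = \tauop(t_1,\dots,t_n)$, then $t \cand t'$ gives a witness $\tauop(t_1',\dots,t_n')\leb_{\LCC}^o t'$ with $t_i \cand t_i'$; by the induction hypothesis $t_i[s/x] \cand t_i'[s'/x]$, and since the witness satisfies $\tauop(\vec{t'})\leb_{\LCC}^o t'$ we get $\tauop(\vec{t'})[s'/x] = \tauop(t_1'[s'/x],\dots)\leb_{\LCC}^o t'[s'/x]$ (substitution closure again), which is exactly a $\Fcand$-witness for $\tauop(\vec t)[s/x] \cand t'[s'/x]$. The binder case ($\tauop = \lambda$) needs the usual care with the distinct-variable convention, but causes no real trouble because $x$ is distinct from the bound variable. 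Item~\ref{lem:howe-standard-cand-subs} is then the iterated/closing-substitution consequence of items~\ref{lem:howe-standard-reflexive} and~\ref{lem:howe-standard-precon_cand:subst}: write $\sigma$ as a finite composition of single-variable substitutions $[s_i/x_i]$ and apply item~\ref{lem:howe-standard-precon_cand:subst} repeatedly, pairing each $s_i$ with itself via reflexivity ($s_i \cand s_i$). Finally, item~\ref{lem:howe-standard-subset-open-closed}, ${\cand}\subseteq{\open{\candc}}$, follows from item~\ref{lem:howe-standard-cand-subs} together with the characterization in Lemma~\ref{lem:open-closed-rel-general}: $s\cand t$ implies $\sigma(s)\cand\sigma(t)$ for every closing $\sigma$, hence $\sigma(s)\candc\sigma(t)$, which is precisely $s\mathbin{\open{\candc}}t$.

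\textbf{Main obstacle.} The genuinely delicate step is item~\ref{lem:howe-standard-precon_cand:subst}, the substitutivity of $\cand$: it is the one place where the induction on term structure must interact correctly with the single $\Fcand$-unfolding defining $\cand$, with the substitution-closure property of $\leb_{\LCC}^o$, and with right-composition (item~\ref{lem:howe-standard-precon_cand:composition}) to close the variable case; getting the $\lambda$-binder case right under the distinct-variable convention, and making sure that in the $\tauop$ case the witness term is substituted consistently on both sides, is where all the care goes. Everything else is bookkeeping once the fixpoint characterizations and the earlier reflexivity/transitivity facts are in hand.
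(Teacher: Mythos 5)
Your proposal is correct, and for items (1)--(4), (6), (7) it follows the paper's proof essentially verbatim (structural induction plus reflexivity of $\leb_{\LCC}^o$ for the first three, the fixpoint characterization plus transitivity for (4), iteration of (5) for (6), and the open/closed lemma for (7)). The one methodological difference is item (5), the substitutivity property: the paper defines the auxiliary relation $\eta := {\cand} \cup \{(r[s/x],r'[s'/x]) \mid r \cand r'\}$, shows $\eta \subseteq \Fcand(\eta)$, and concludes $\eta \subseteq {\cand}$ by co-induction, whereas you argue by structural induction on the left-hand term $t$, unfolding $t \cand t'$ once and folding back with the fixpoint equation of Lemma~\ref{lemma:cand-fixpoint}. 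Your route is legitimate here: $\Fcand$ recurses only on strict subterms of the finite left expression, so both the unfolding and the fold-back need nothing beyond the fixpoint equation, and the induction is well-founded; the case analysis you perform (variable $x$, other variable $y$, operator $\tauop$ with the substitution-closure of $\leb_{\LCC}^o$ and composition from item (4)) is exactly the paper's. What the paper's co-inductive phrasing buys is robustness -- it is the formulation of Howe's method that survives when the candidate is not structurally decreasing (e.g.\ for infinitary or recursively defined syntax) -- while your version is arguably more elementary for this finite-term calculus. One small bookkeeping point in your item (6): a simultaneous substitution is not literally a composition of its single-variable components when the images mention the substituted variables, so you should factor $\sigma$ through fresh intermediate variables (rename $x_i$ to fresh $y_i$, then substitute the images for the $y_i$) before iterating item (5); the paper elides this as well, and it causes no difficulty.
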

\begin{proof} 
Parts \eqref{lem:howe-standard-reflexive} -- \eqref{lem:howe-standard-subsumes_preorder} can be shown by
structural induction and using reflexivity of  $\leb_{\LCC}^o$.
Part~\eqref{lem:howe-standard-precon_cand:composition} follows from the definition, Lemma \ref{lemma:cand-fixpoint},  and transitivity of $\leb_{\LCC}^o$.

For part~\eqref{lem:howe-standard-precon_cand:subst} let $\eta := {{\cand} \cup {\{(r[s/x],r'[s'/x])~|~r \cand r'\}}}$. 
Using co-induction it suffices to show that ${\eta} \subseteq {\Fcand(\eta)}$: In the case $x \cand r'$, we obtain $x \leb_{\LCC}^o r'$ from the definition, and $s' \leb_{\LCC}^o r'[s'/x]$
and thus $x[s/x] \cand r'[s'/x]$. In the case $y \cand r$, we obtain $y \leb_{\LCC}^o r'$ from the definition, and $y[s/x] = y \leb_{\LCC}^o r'[s'/x]$
and thus $y = y[s/x] \cand r'[s'/x]$. 
If $r = \tauop(r_1,\ldots,r_n)$, $r \cand r'$ 
and $r[s/x]~\eta ~r'[s'/x]$, then there is some $\tauop(r_1',\ldots,r_n') \leb_{\LCC}^o r'$ with  $r_i \cand r_i'$.
W.l.o.g. bound variables have fresh names.  We have  $r_i[s/x] ~\eta ~r_i'[s'/x]$ and $\tauop(r_1',\ldots,r_n')[s'/x] \leb_{\LCC}^o r'[s'/x]$.
Thus $r[s/x]~\Fcand(\eta) ~r'[s'/x]$.

Part~\eqref{lem:howe-standard-cand-subs} follows from item \eqref{lem:howe-standard-precon_cand:subst}.
Part~\eqref{lem:howe-standard-subset-open-closed} follows from item \eqref{lem:howe-standard-cand-subs} and Lemma \ref{lem:open-closed-rel-general}.\qedhere
\end{proof}

\begin{lemma} \label{lemma:lcs:middle-terms-closed} The middle expression in the definition of $\cand$ can  be chosen to be closed if $s,t$ are closed:
  Let  $s = \tauop(s_1,\ldots,s_{\ari(\tauop)})$, such that  $s \cand t$ holds. 
  Then there are operands  $s_i'$, such that $\tauop(s'_1,\ldots,s'_{\ari(\tauop)})$ is closed,  $\forall i: s_i ~\cand ~ s_i'$
  and $\tauop(s'_1,\ldots,s'_{\ari(\tauop)}) \leb_{\LCC}^o\ t$.
\end{lemma}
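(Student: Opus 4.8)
The plan is to take an arbitrary witness for $s \cand t$ delivered by the fixpoint characterisation of $\cand$ and to turn it into a closed witness by substituting away its free variables with a fixed closed expression. This works because each operand $s_i$ of $s = \tauop(s_1,\ldots,s_n)$ is already closed except possibly for variables that $\tauop$ binds, and those are exactly the variables we must \emph{not} touch.

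First I would invoke Lemma~\ref{lemma:cand-fixpoint}(2): from $\tauop(s_1,\ldots,s_n)\cand t$ (write $n=\ari(\tauop)$) we obtain a middle expression $\tauop(r_1,\ldots,r_n)$ with $\tauop(r_1,\ldots,r_n)\leb_{\LCC}^o t$ and $s_i\cand r_i$ for each $i$; for the binding operators ($\lambda$ and $\tcase$) this is read, as elsewhere in the paper, modulo a common renaming of the bound variables, so we may assume $\tauop$ binds the same variables in $\tauop(r_1,\ldots,r_n)$ as in $\tauop(s_1,\ldots,s_n)$. Let $Y=\FV(\tauop(r_1,\ldots,r_n))$. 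The variables bound by $\tauop$ in any argument position (the $\lambda$-bound variable, or the pattern variables of a $\tcase$-alternative) are bound inside $\tauop(r_1,\ldots,r_n)$, hence not in $Y$; so for each $i$, $\FV(r_i)$ is contained in the union of $Y$ with the variables $\tauop$ binds in position $i$. Since $s$ is closed, $\FV(s_i)$ consists only of variables that $\tauop$ binds in position $i$.

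Next I would fix a closed expression, say $\Omega$, and let $\sigma$ send every $y\in Y$ to $\Omega$ and be the identity on all other variables; set $s_i':=\sigma(r_i)$. Forming $\sigma(\tauop(r_1,\ldots,r_n))=\tauop(s_1',\ldots,s_n')$ involves no capture, as $\sigma$ has closed range and fixes the $\tauop$-bound variables. By the previous paragraph, $\FV(s_i')\subseteq\FV(r_i)\setminus Y$ consists only of variables $\tauop$ binds in position $i$, which are bound in $\tauop(s_1',\ldots,s_n')$; hence $\tauop(s_1',\ldots,s_n')$ is closed. Applying $\sigma$ to $\tauop(r_1,\ldots,r_n)\leb_{\LCC}^o t$ and using substitutivity of $\leb_{\LCC}^o$ (Lemma~\ref{lem:open-closed-rel-general}) together with $\sigma(t)=t$ gives $\tauop(s_1',\ldots,s_n')\leb_{\LCC}^o t$. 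Applying $\sigma$ to $s_i\cand r_i$ via Lemma~\ref{lem:howe-standard}\eqref{lem:howe-standard-cand-subs} gives $\sigma(s_i)\cand s_i'$, and $\sigma(s_i)=s_i$ because $\FV(s_i)$ contains only $\tauop$-bound variables, which $\sigma$ fixes; so $s_i\cand s_i'$ for all $i$, as required.

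I do not expect a real obstacle: the whole proof is bookkeeping around Lemmas~\ref{lemma:cand-fixpoint}, \ref{lem:open-closed-rel-general}, and~\ref{lem:howe-standard}\eqref{lem:howe-standard-cand-subs}. The one point that needs care is the scoping discipline — substituting exactly the genuinely free variables of the witness and none of the variables bound by $\tauop$ — so that the closed-up witness is indeed closed while the substitution stays the identity on each operand $s_i$ (which is closed up to $\tauop$'s binders) and hence preserves $s_i\cand s_i'$. Handling the binding operators through the higher-order abstract syntax convention keeps this uniform over all top operators $\tauop$.
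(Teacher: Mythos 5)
Your proposal is correct and follows essentially the same route as the paper's proof: extract the middle expression from the fixpoint characterization of $\cand$, substitute closed expressions (the paper uses arbitrary closed $r_x$ where you use $\Omega$) for its free variables, and use Lemma~\ref{lem:howe-standard}~\eqref{lem:howe-standard-cand-subs} together with the open-extension property of $\leb_{\LCC}^o$ to preserve both relations. Your explicit bookkeeping about the $\tauop$-bound variables is exactly the point the paper leaves implicit in asserting $s_i=\sigma(s_i)$.
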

 \begin{proof} The definition of $\cand$ implies that there is an expression $\tauop(s_1'',\ldots,s''_{\ari(\tauop)})$ such that
  $s_i  \cand\ s_i''$ for all $i$ and   $\tauop(s_1'',\ldots,s''_{\ari(\tauop)})\ \leb_{\LCC}^o\ t$. 
   Let $\sigma$ be the substitution with $\sigma(x) := r_x$ for all $x \in \FV(\tauop(s_1'',\ldots,s''_{\ari(\tauop)}))$, 
where $r_x$ is any closed expression.  Lemma \ref{lem:howe-standard} now shows that $s_i = \sigma(s_i)  \cand  \sigma(s_i'')$ holds for all $i$.  
  The relation $\sigma(\tauop(s_1'',\ldots,s''_{\ari(\tauop)}))\ \leb_{\LCC}^o t$ holds, since $t$ is closed and due to the definition of an open extension. 
The requested expression is  $\tauop(\sigma(s_1''),\ldots,\sigma(s''_{\ari(\tauop)}))$. 
\end{proof}

\noindent Since reduction $\xrightarrow{\LCC}$ is deterministic:
\begin{lemma}\label{lemma:reduction-stable} If $s   \xrightarrow{\LCC} s'$, then $s' \leb_{\LCC}^o s$ and  $s  \leb_{\LCC}^o s'$.\qed
\end{lemma}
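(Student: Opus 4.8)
The plan is to prove the closed-instance version of the statement first and then lift it to the open extension, using the co-induction principle for the greatest fixpoint $\leb_{\LCC}$.

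\emph{Step 1: reduction commutes with substitution, so it suffices to treat closed terms.} I would first observe that if $s \xrightarrow{\LCC} s'$, then $\sigma(s) \xrightarrow{\LCC} \sigma(s')$ for every substitution $\sigma$ (using the distinct variable convention to avoid capture). Indeed, $s \xrightarrow{\LCC} s'$ means $s = R_\LCC[\rho]$ for a reduction context $R_\LCC$ and a $\rnbeta$-, $\rncase$- or $\rnseq$-redex $\rho$; applying $\sigma$ keeps $\sigma(R_\LCC)$ in the reduction-context grammar and keeps $\sigma(\rho)$ a redex of the same shape (in particular $\sigma$ applied to a value is again a value, which is what matters for $\rnseq$), and by the usual substitution lemma the contractum of $\sigma(\rho)$ is $\sigma$ applied to the contractum of $\rho$. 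Since also $\FV(s') \subseteq \FV(s)$, any $\sigma$ that closes $s$ closes $s'$. Hence it suffices to prove, for \emph{closed} $s$ with $s \xrightarrow{\LCC} s'$, that $s' \leb_{\LCC} s$ and $s \leb_{\LCC} s'$: the open statements then follow by instantiating the definition of $\leb_{\LCC}^o$ at the closed terms $\sigma(s), \sigma(s')$ and using $\sigma(s) \xrightarrow{\LCC} \sigma(s')$.

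\emph{Step 2: the closed case by co-induction.} Set
$$\eta := {\leb_{\LCC}} \cup \{(a,b) \mid a,b \text{ closed},\ a \xrightarrow{\LCC} b \text{ or } b \xrightarrow{\LCC} a\}.$$
I would show $\eta \subseteq F_{\LCC}(\eta)$ and then conclude $\eta \subseteq {\leb_{\LCC}}$ by the co-induction principle (Proposition~\ref{prop-coinduction}); this yields both $s \leb_{\LCC} s'$ and $s' \leb_{\LCC} s$. For a pair $(a,b) \in {\leb_{\LCC}}$ we use that $\leb_{\LCC}$ is a fixpoint, so $a\ F_{\LCC}(\leb_{\LCC})\ b$, and then monotonicity of $F_{\LCC}$ together with $\leb_{\LCC} \subseteq \eta$ gives $a\ F_{\LCC}(\eta)\ b$. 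For a pair $(a,b)$ of the second kind, determinism of $\xrightarrow{\LCC}$ forces the one-step reduct to be the first step of any evaluation, so $a$ and $b$ converge to the \emph{same} value $v$, i.e.\ $a \maycon_\LCC v \iff b \maycon_\LCC v$. If $v = \lambda x.v'$, then $b \maycon_\LCC \lambda x.v'$ and, by reflexivity of $\leb_{\LCC}^o$ (Lemma~\ref{lem:leq-b-refl-trans}) together with $\leb_{\LCC} \subseteq \eta$, we get $v'\ \eta^o\ v'$, which is the first disjunct required by $F_{\LCC}$. If $v = (c\ v_1 \ldots v_n)$, then $b \maycon_\LCC (c\ v_1 \ldots v_n)$, and $v_i \leb_{\LCC} v_i$ by reflexivity ($v_i$ closed), hence $v_i\ \eta\ v_i$ for all $i$, as $F_{\LCC}$ requires. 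Note that the $\cBot$-disjunct of $F_{\LCC}$ is never needed, since the two sides converge to literally the same value. This establishes $\eta \subseteq F_{\LCC}(\eta)$ and completes the proof.

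\emph{Main obstacle.} The only genuinely fiddly part is the substitution-commutation claim $s \xrightarrow{\LCC} s' \implies \sigma(s) \xrightarrow{\LCC} \sigma(s')$ of Step~1: one has to check, rule by rule, that substitution destroys neither the reduction context nor the redex (the $\rnseq$-rule is the one to watch, since it is sensitive to whether the scrutinee is a value) and to invoke $\alpha$-renaming to prevent capture. Everything after that — the co-induction bookkeeping and the appeals to reflexivity of $\leb_{\LCC}$ — is routine, and in particular no use of $\cBot$ or of the irregularity $\lambda x.\Omega \leb_\LCC c\,s_1\ldots s_n$ is required.
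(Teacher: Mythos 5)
Your proposal is correct and follows essentially the same route the paper intends: the lemma is stated with only the preceding remark ``since reduction $\xrightarrow{\LCC}$ is deterministic'' and no written proof, and your argument is precisely the natural elaboration of that hint (determinism forces $s$ and $s'$ to converge to the same value, so both similarity directions follow by co-induction, lifted to the open extension via commutation of normal-order reduction with closing substitutions). No gaps; the substitution-commutation step you flag as the fiddly part indeed goes through because the path to the normal-order redex passes only through application/\tcase/\tseq nodes, whose shape substitution cannot change.
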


Lemmas \ref{lemma:reduction-stable} and \ref{lem:howe-standard} imply that $\cand$  is right-stable \wrt\ reduction:
\begin{lemma}\label{lemma:rightstable} If $s  \cand t$ and $t \xrightarrow{\LCC} t'$, then $s  \cand t'$.\qed
\end{lemma}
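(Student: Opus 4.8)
The plan is to chain the two cited lemmas directly, with no induction or case analysis needed. First I would use the reduction $t \xrightarrow{\LCC} t'$ together with Lemma~\ref{lemma:reduction-stable}: that lemma gives both $t' \leb_{\LCC}^o t$ and $t \leb_{\LCC}^o t'$, and I would keep only the second direction, $t \leb_{\LCC}^o t'$. Combining this with the hypothesis $s \cand t$, the pair $(s,t')$ lies in the relational composition ${\cand} \circ {\leb_{\LCC}^o}$, with $t$ serving as the witnessing middle expression. Finally, part~\eqref{lem:howe-standard-precon_cand:composition} of Lemma~\ref{lem:howe-standard}, which asserts ${{\cand} \circ {\leb_{\LCC}^o}} \subseteq {\cand}$, yields $s \cand t'$, as required.

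There is essentially no obstacle here; the only point that requires a moment's care is using the correct direction of Lemma~\ref{lemma:reduction-stable}. One needs the reduct $t'$ on the \emph{right} of $\leb_{\LCC}^o$ (i.e.\ $t \leb_{\LCC}^o t'$) so that the right-composition fact ${\cand} \circ {\leb_{\LCC}^o} \subseteq {\cand}$ applies; the symmetric fact for left composition with $\leb_{\LCC}^o$ is not among the listed properties of $\cand$, so it is important not to invoke $t' \leb_{\LCC}^o t$ instead. Everything else is a direct substitution into the cited statements.
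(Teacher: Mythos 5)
Your proof is correct and is exactly the paper's argument: the lemma is stated there as an immediate consequence of Lemma~\ref{lemma:reduction-stable} (giving $t \leb_{\LCC}^o t'$) combined with Lemma~\ref{lem:howe-standard}~\eqref{lem:howe-standard-precon_cand:composition} (${\cand} \circ {\leb_{\LCC}^o} \subseteq {\cand}$). Your remark about needing the direction $t \leb_{\LCC}^o t'$ is also the right point of care.
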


We show that $\cand$ is left-stable \wrt\ reduction:
\begin{lemma}\label{lem:left-value} Let $s,t$ be closed expressions such that $s = \theta(s_1,\ldots,s_n)$ is a value and $s \cand t$. 
Then there are two possibilities:
\begin{enumerate}
  \item\label{lem:left-value-1}  $s = \lambda x.s_1$ and $t \maycon_{\LCC} c(t_1,\ldots,t_n)$,  where $c$ is a constructor; 
  \item \label{lem:left-value-2} there is some closed value $t'= \theta(t_1,\ldots,t_n)$ with $t \xrightarrow{\LCC,*} t'$ and for all $i: s_i \cand t_i$.   
\end{enumerate}
\end{lemma}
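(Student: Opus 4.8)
The plan is to unfold the definition of $\cand$ at the value $s$, replace the witnessing ``middle term'' by a closed one via Lemma~\ref{lemma:lcs:middle-terms-closed}, and then feed the resulting relation between two closed terms into the fixpoint analysis of $\leb_{\LCC}$ on values (Lemma~\ref{lem:fixpoint-constructors}), which already isolates exactly the irregular situation of conclusion~\eqref{lem:left-value-1}.

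First I would apply the second part of Lemma~\ref{lemma:cand-fixpoint}: since $s = \theta(s_1,\ldots,s_n)$ is not a variable and $s \cand t$, there is an expression $\theta(s_1',\ldots,s_n')$ with $s_i \cand s_i'$ for every $i$ and $\theta(s_1',\ldots,s_n') \leb_{\LCC}^o t$. Because $s$ and $t$ are closed, Lemma~\ref{lemma:lcs:middle-terms-closed} lets me take this middle term closed, still with $s_i \cand s_i'$ and $\theta(s_1',\ldots,s_n') \leb_{\LCC}^o t$. As $s$ is a value, its head $\theta$ is $\lambda$ or a constructor, so $\theta(s_1',\ldots,s_n')$ is itself a closed value; and on closed terms $\leb_{\LCC}^o$ agrees with $\leb_{\LCC}$ (Lemma~\ref{lem:open-closed-rel-general}), so $\theta(s_1',\ldots,s_n') \leb_{\LCC} t$.

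Next I would apply Lemma~\ref{lem:fixpoint-constructors} to $\theta(s_1',\ldots,s_n') \leb_{\LCC} t$. It gives $t \maycon_{\LCC} \theta'(t_1,\ldots,t_k)$ together with two cases. In the first case $\theta' = \theta$ (so $k = n$) and $s_i' \leb_{\LCC}^o t_i$ for all $i$; then $t' := \theta(t_1,\ldots,t_n)$ is a closed value with $t \xrightarrow{\LCC,*} t'$, and composing $s_i \cand s_i'$ with $s_i' \leb_{\LCC}^o t_i$ through item~\eqref{lem:howe-standard-precon_cand:composition} of Lemma~\ref{lem:howe-standard} gives $s_i \cand t_i$ for all $i$; this is conclusion~\eqref{lem:left-value-2}. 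When $\theta = \lambda$, I would first $\alpha$-rename the binders of $s$, of the middle term and of $t'$ to a common fresh variable before stripping the leading $\lambda$, following the renaming convention introduced above, so that the composition step happens under the binder. In the second case $\theta(s_1',\ldots,s_n') = \lambda(x.s')$ with $s' \in \cBot$ and $\theta'$ a constructor; then $\theta = \lambda$, hence $s = \lambda x.s_1$, and $t \maycon_{\LCC} (c~t_1 \ldots t_k)$ with $c = \theta'$ a constructor, which is conclusion~\eqref{lem:left-value-1}. The second case requires the left-hand term to be an abstraction, so it cannot arise when $\theta$ is a constructor, where consequently only conclusion~\eqref{lem:left-value-2} is possible.

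Most of the work is carried by Lemmas~\ref{lemma:lcs:middle-terms-closed} and~\ref{lem:fixpoint-constructors}, which are already available, so little computation remains. The part that needs care is the $\alpha$-bookkeeping around the $\lambda$-binder --- renaming to a common bound variable before removing the top $\lambda$ and invoking the composition lemma under it --- and making sure the irregular $\lambda$-versus-constructor possibility is routed into conclusion~\eqref{lem:left-value-1} rather than being overlooked; both are handled cleanly by the structure of Lemma~\ref{lem:fixpoint-constructors}.
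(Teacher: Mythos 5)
Your proof is correct, and it follows the same skeleton as the paper's: unfold $\cand$ at the value, invoke Lemma~\ref{lemma:lcs:middle-terms-closed} to obtain a closed middle term $\theta(s_1',\ldots,s_n')$ with $s_i \cand s_i'$ and $\theta(s_1',\ldots,s_n') \leb_{\LCC} t$, relate this to the value of $t$ via the fixpoint properties of $\leb_{\LCC}$, and finish with the composition property ${\cand}\circ{\leb_{\LCC}^o}\subseteq{\cand}$ of Lemma~\ref{lem:howe-standard}. The one place where you genuinely diverge is the $\lambda$-case of conclusion~\eqref{lem:left-value-2}: the paper uses Lemma~\ref{lem:fixpoint-constructors} only to conclude $t\maycon_\LCC$, then pushes $\cand$ through the reduction of $t$ with the right-stability Lemma~\ref{lemma:rightstable}, and re-unfolds the candidate (a second application of Lemma~\ref{lemma:lcs:middle-terms-closed}) at $\lambda x.s_1 \cand \lambda x.t_1''$ to reach the bodies; you instead read off the body relation $s_1' \leb_{\LCC}^o t_1$ directly from case~(1) of Lemma~\ref{lem:fixpoint-constructors} applied to the closed middle value, and compose once, so you never need Lemma~\ref{lemma:rightstable} nor a second middle-term extraction, and the abstraction and constructor cases are treated uniformly. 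Your routing of case~(2) of Lemma~\ref{lem:fixpoint-constructors} (middle term $\lambda(x.s')$ with $s'\in\cBot$, $t$ converging to a constructor application) into conclusion~\eqref{lem:left-value-1} is also sound, since that case forces $\theta=\lambda$; the net effect is a slightly shorter argument that leans harder on Lemma~\ref{lem:fixpoint-constructors}, while the paper's version keeps that lemma's use minimal and instead exploits stability of $\cand$ under reduction of the right-hand side.
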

\begin{proof}  
The definition of $\cand$ implies that there is a closed expression $\theta(t_1',\ldots,t_n')$ with $s_i \cand t_i'$ for all $i$ and 
$\theta(t_1',\ldots,t_n') \leb_{\LCC} t$. Lemma \ref{lem:fixpoint-constructors}  implies  that  $t\maycon_\LCC $,
hence  either  $t \xrightarrow{\LCC,*}  c(t_1'',\ldots,t_n'')$ or 
$t \xrightarrow{\LCC,*}   \lambda x.t_1''$. 
\begin{itemize}
\item First let  $\theta = \lambda$.  
The case  $t \xrightarrow{\LCC,*}  c(t_1'',\ldots,t_n'')$ is possibility \eqref{lem:left-value-1} of the lemma. \\
In the second case, $t \xrightarrow{\LCC,*}   \lambda x.t_1''$, 
Lemma~\ref{lemma:rightstable}
 implies $\lambda x.s_1 \cand \lambda x.t_1''$. Definition of $\cand$ and Lemma \ref{lemma:lcs:middle-terms-closed} 
now show that there is some closed  $\lambda x.t_1'''$ 
with $s_1 \cand t_1'''$ and $\lambda x.t_1''' \leb_{\LCC}   \lambda x.t_1''$. 
The latter relation implies \mbox{$t_1''' \leb_{\LCC}^o t_1''$}, which shows
$s_1' \cand t_1''$ by \mbox{Lemma \ref{lem:howe-standard} \eqref{lem:howe-standard-precon_cand:composition}.} 
\item ~If $\theta$ is a constructor, then there is a closed expression $\theta(t_1',\ldots,t_n')$ with $s_i \cand t_i'$ for all $i$ and 
$\theta(t_1',\ldots,t_n') \leb_{\LCC} t$. 
The properties of $\leb_{\LCC}$ imply that $t \xrightarrow{\LCC,*}   \theta(t_1'',\ldots,t_n'')$ with $t_i' \leb_{\LCC} t_i''$ for all $i$. 
By definition of $\cand$ and \mbox{Lemma \ref{lem:howe-standard} \eqref{lem:howe-standard-precon_cand:composition}}, we obtain $s_i \cand t_i''$ for all $i$.
\qedhere
\end{itemize}
\end{proof}

\begin{proposition}\label{prop:stable-base-case} Let $s,t$ be closed expressions, $s \cand t$  and $s \xrightarrow{\LCC} s'$ where $s$ is the redex. 
Then $s' \cand t$.
\end{proposition}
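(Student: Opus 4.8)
The plan is a case analysis on which normal-order rule of $\LLAZYCC$ is applied to the redex $s$, namely one of $(\rnbeta)$, $(\rncase)$, $(\rnseq)$. In each case I would first unfold $s\cand t$ via the fixpoint characterization of $\cand$ (Lemma~\ref{lemma:cand-fixpoint}(2)): writing $s=\tauop(s_1,\dots,s_n)$ for the appropriate top operator $\tauop$ (application, $\tcase_T$, or $\tseq$), there is a middle term $\tauop(s_1',\dots,s_n')\leb_{\LCC}^o t$ with $s_i\cand s_i'$ for all $i$, and by Lemma~\ref{lemma:lcs:middle-terms-closed} this middle term may be taken closed. The uniform route is then: (a) show the middle term $\xrightarrow{\LCC,*}$-reduces to a term matching the redex pattern, so that its corresponding contractum $r'$ exists; (b) deduce $r'\leb_{\LCC}^o t$ from $\tauop(s_1',\dots,s_n')\leb_{\LCC}^o t$, the reduction steps, Lemma~\ref{lemma:reduction-stable}, and transitivity of $\leb_{\LCC}^o$ (Lemma~\ref{lem:leq-b-refl-trans}); (c) relate the redex's contractum $s'$ to $r'$ by iterated use of the substitution property Lemma~\ref{lem:howe-standard}\eqref{lem:howe-standard-precon_cand:subst}, obtaining $s'\cand r'$; (d) combine $s'\cand r'\leb_{\LCC}^o t$ via Lemma~\ref{lem:howe-standard}\eqref{lem:howe-standard-precon_cand:composition} to conclude $s'\cand t$.

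The key input for step~(a) is Lemma~\ref{lem:left-value}, applied to the value-subterm of $s$ that triggers the reduction (the operator $(\lambda x.s_1)$ in $(\rnbeta)$, the scrutinee $(c~\vect{s})$ in $(\rncase)$, the first argument $v$ in $(\rnseq)$) together with its $\cand$-partner in the closed middle term. In the $(\rncase)$ case the partner, being matched against a constructor application, must by Lemma~\ref{lem:left-value}\eqref{lem:left-value-2} reduce to a constructor application with the same head and componentwise $\cand$-related arguments, so the middle $\tcase_T$ (which carries the full set of alternatives for the type, in particular the $c$-alternative) reduces to the corresponding right-hand side; in the $(\rnseq)$ case the partner converges to a value no matter which alternative of Lemma~\ref{lem:left-value} holds, so the middle $\tseq$ reduces to its second argument. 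Both then go through routinely via (b)--(d).

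The main obstacle is the $(\rnbeta)$ case, because $\cand$ — mirroring the irregularity $\lambda x.\Omega\lec_\LCC(c~\vect{s})$ — may pair $\lambda x.s_1$ with a closed term $p$ for which Lemma~\ref{lem:left-value}\eqref{lem:left-value-1} applies, i.e.\ $p\maycon_\LCC$ a \emph{constructor} application, so the middle term does not $\beta$-reduce as expected. Here I would argue: the decomposition of $\lambda x.s_1\cand p$ (via Lemmas~\ref{lemma:cand-fixpoint} and~\ref{lemma:lcs:middle-terms-closed}) yields a closed $\lambda x.p_1$ with $s_1\cand p_1$ and $\lambda x.p_1\leb_{\LCC} p$; since $p$ reduces to a constructor application, Lemma~\ref{lem:fixpoint-constructors}(2) forces $p_1\in\cBot$. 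Then, with $s_2\cand q$ for the argument's partner $q$, Lemma~\ref{lem:howe-standard}\eqref{lem:howe-standard-precon_cand:subst} gives $s'=s_1[s_2/x]\cand p_1[q/x]$, and closure of $\cBot$ under substitution gives $p_1[q/x]\in\cBot$, whence $p_1[q/x]\leb_{\LCC}^o t$ by Lemma~\ref{lem:fixpoint-cBot} and $s'\cand t$ by Lemma~\ref{lem:howe-standard}\eqref{lem:howe-standard-precon_cand:composition}. In the complementary subcase, $p\xrightarrow{\LCC,*}\lambda x.p_1$ with $s_1\cand p_1$, so $\mathrm{app}(p,q)\xrightarrow{\LCC,*}\mathrm{app}(\lambda x.p_1,q)\xrightarrow{\LCC,\rnbeta}p_1[q/x]$, and the standard route (b)--(d) applies, using Lemma~\ref{lem:howe-standard}\eqref{lem:howe-standard-precon_cand:subst} to get $s_1[s_2/x]\cand p_1[q/x]$ and Lemma~\ref{lemma:reduction-stable} plus transitivity to get $p_1[q/x]\leb_{\LCC}^o t$.
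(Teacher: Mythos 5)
Your proposal is correct and follows essentially the same route as the paper's proof: a case analysis on the applied rule, a closed middle term obtained via Lemma~\ref{lemma:cand-fixpoint} and Lemma~\ref{lemma:lcs:middle-terms-closed}, the substitution and composition properties of Lemma~\ref{lem:howe-standard}, reduction-stability plus transitivity of $\leb_{\LCC}^o$, and the $\cBot$ treatment of the irregular beta-subcase via Lemma~\ref{lem:fixpoint-constructors} and Lemma~\ref{lem:fixpoint-cBot}. The only (harmless) deviation is that in the non-degenerate subcases you pull the componentwise $\cand$-relations directly from Lemma~\ref{lem:left-value}\eqref{lem:left-value-2}, where the paper threads them through the middle term's $\leb_{\LCC}^o$-relations; both variants use only material established before the proposition (in particular neither needs Lemma~\ref{lem:left-value-full}), so there is no circularity.
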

 \begin{proof}
 The relation $s \cand t$  implies that 
$s = \tauop(s_1,\ldots,s_n)$ and by Lemma~\ref{lemma:lcs:middle-terms-closed} there is some closed $t' = \tauop(t_1',\ldots,t_n')$ 
with $s_i \cand t_i'$ for all $i$ and $t'~ \leb_{\LCC}^o~t$.  
\begin{itemize}
\item For the (\rnbeta)-reduction, $s = (s_1~s_2)$, where $s_1 = (\lambda x.s_1')$,  $s_2$ is a closed term, and $t' = (t_1'~t_2')$. 
 The relation $(\lambda x.s_1') = s_1 \cand t_1'$ implies that there exists a closed expression $\lambda x.t_1'' \leb_{\LCC}^o t_1'$ with $s_1' \cand t_1''$.\\
 $\circ$~The first case is   $t_1' \xrightarrow{\LCC,*} c(\ldots)$ and $t_1'' \in \cBot$.  Lemma \ref{lem:howe-standard} implies $\lambda x.s_1' \cand \lambda x.t_1''$,
 and again by Lemma \ref{lem:howe-standard}, we derive $s_1'[s_2/x] \cand t_1''[s_2/x]$, where $t_1''[s_2/x] \in \cBot$. 
  Then  $t_1''[s_2/x]   \leb_{\LCC}^o t$ by Lemma \ref{lem:fixpoint-cBot}, which implies $s_1'[s_2/x] \cand t$. 
  Since $s \xrightarrow{\LCC} s_1'[s_2/x]$, the lemma is proven for this case. \\
 $\circ$~The second case is  $t_1' \xrightarrow{\LCC,*} \lambda x.t_1'''$   with $t_1'' \leb_{\LCC}^o  t_1'''$. 
 We also obtain $\lambda x.t_1'' \leb_{\LCC}^o  \lambda x. t_1'''$, and by the properties of  $\leb_{\LCC}^o$ \wrt\ reduction, 
   also $t_1''[t_2'/x] \leb_{\LCC}^o t_1'''[t_2'/x]$.
 From $t' \xrightarrow{\LCC,*} t_1'''[t_2'/x]$ we obtain \mbox{$t_1'''[t_2'/x] \leb_{\LCC} t$}. 
Lemma \ref{lem:howe-standard} and transitivity of $\leb_{\LCC}$ now show $s_1'[s_2/x] \cand t_1''[t_2'/x]$.  Hence   $s_1'[s_2/x] \cand t$, again using Lemma \ref{lem:howe-standard}.

\item Similar arguments as for the second case apply to the case-reduction.
\item Suppose, the reduction is a $(\rnseq)$-reduction. Then $s \cand t$ and $s = (\tseq~ s_1~s_2)$. Lemma \ref{lemma:lcs:middle-terms-closed}  implies
 that there is some closed $(\tseq~ t_1'~t_2') \leb_{\LCC}^o t$ with $s_i \cand t_i'$. Since $s_1$ is a value, Lemma \ref{lem:left-value} shows that
 there is a reduction 
 $t_1' \xrightarrow{\LCC,*} t_1''$, where $t_1''$ is a value.  There are the  reductions $s \xrightarrow{\LCC} s_2$ and 
 $(\tseq~ t_1'~t_2') \xrightarrow{\LCC,*} (\tseq~ t_1''~t_2') \xrightarrow{\LCC} t_2'$.
 Since $t_2' \leb_{\LCC}^o (\tseq~ t_1'~t_2') \leb_{\LCC}^o t$, and  $s_2 \cand t_2'$, we obtain $s_2 \cand t$.
\qedhere
\end{itemize}
\end{proof}

\begin{proposition}\label{prop:stable-in-surface}  Let $s,t$ be closed expressions, $s \cand t$  and $s \xrightarrow{\LCC} s'$. 
Then $s' \cand t$.
\end{proposition}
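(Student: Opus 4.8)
The plan is to reduce this to the redex case already settled in Proposition~\ref{prop:stable-base-case}, by an induction on the shape of the $\LLCC$-reduction context that surrounds the contracted redex. Since $\xrightarrow{\LCC}$ is deterministic, I may write $s = R_\LCC[s_0]$ where $s_0$ is the normal-order redex and $R_\LCC \in {\cal R}_{\LCC}$, and correspondingly $s' = R_\LCC[s_0']$ with $s_0 \xrightarrow{\LCC} s_0'$. I would then proceed by structural induction on $R_\LCC$ according to its grammar $R_\LCC ::= [\cdot] \syxor (R_\LCC~a) \syxor \tcase_T~R_\LCC~\tof~\ialts \syxor \tseq~R_\LCC~a$, where $a$ is an arbitrary $\LAMBDAEXPR$-expression.

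In the base case $R_\LCC = [\cdot]$, the expression $s = s_0$ is itself the redex, so $s' \cand t$ is exactly Proposition~\ref{prop:stable-base-case}. For the inductive step, each of the three remaining shapes of $R_\LCC$ makes $s$ an application of a top operator $\tauop$ (namely application, $\tcase_T$, or $\tseq$) in which the first operand has the form $R'_\LCC[s_0]$ for a structurally smaller reduction context $R'_\LCC$, and all other operands are untouched; that is, $s = \tauop(s_1, s_2, \ldots, s_n)$ with $s_1 = R'_\LCC[s_0]$, so that $s_1 \xrightarrow{\LCC} s_1' := R'_\LCC[s_0']$ and $s' = \tauop(s_1', s_2, \ldots, s_n)$. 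From $s \cand t$ and Lemma~\ref{lemma:lcs:middle-terms-closed} I obtain a closed expression $\tauop(t_1', \ldots, t_n') \leb_{\LCC}^o t$ with $s_i \cand t_i'$ for all $i$. In particular $s_1 \cand t_1'$, so the induction hypothesis applied to $s_1 \xrightarrow{\LCC} s_1'$ yields $s_1' \cand t_1'$. Since $\cand$ is operator-respecting (Lemma~\ref{lem:howe-standard}, part~\eqref{lem:howe-standard-operator_respecting}) and $s_i \cand t_i'$ still holds for $i \geq 2$, we get $\tauop(s_1', s_2, \ldots, s_n) \cand \tauop(t_1', \ldots, t_n')$; composing this with $\tauop(t_1', \ldots, t_n') \leb_{\LCC}^o t$ via Lemma~\ref{lem:howe-standard}, part~\eqref{lem:howe-standard-precon_cand:composition}, gives $s' = \tauop(s_1', s_2, \ldots, s_n) \cand t$, which closes the induction.

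I expect no serious obstacle here: all the genuinely delicate reasoning — the interaction of $\cand$ with $\beta$-, case-, and $\tseq$-contraction, including the treatment of the irregularity $\lambda x.\Omega \lec_\LCC (c~s_1 \ldots s_n)$ — has already been absorbed into Proposition~\ref{prop:stable-base-case}, which will be invoked only at the hole. The only points requiring a little care are the appeal to determinism of $\xrightarrow{\LCC}$ to justify the unique decomposition $s = R_\LCC[s_0]$, and the elementary observation that for each of the three context shapes the reduction takes place in the first operand position of the top operator, so that the induction hypothesis is applicable to a strictly smaller reduction context.
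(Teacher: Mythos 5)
Your proof is correct and follows essentially the same route as the paper: induction on the reduction context surrounding the redex (the paper phrases it as induction on the length of the path to the redex), with the base case delegated to Proposition~\ref{prop:stable-base-case}, the decomposition of $s \cand t$ into a closed middle term via Lemma~\ref{lemma:lcs:middle-terms-closed}, the induction hypothesis applied to the operand containing the redex, and the conclusion via operator-respecting closure and ${\cand}\circ{\leb_{\LCC}^o}\subseteq{\cand}$. Your additional observations (the redex always sits in the first operand position, and determinism yields the unique decomposition) are harmless refinements of the same argument.
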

 \begin{proof} We use induction on the length of the path to the redex. 
  The base case is proven in Proposition \ref{prop:stable-base-case}.
  Let $R[s],t$ be closed, $R[s] \cand t$  and $R[s] \xrightarrow{\LCC}~R[s']$, 
  where we assume that the redex $s$ is not at the top level
  and that $R$ is an $\LLCC$-reduction context.
  The relation $R[s] \cand t$  implies that 
 $R[s] = \tauop(s_1,\ldots,s_n)$ and that there is some closed  expression $t'$ with $t' = \tauop(t_1',\ldots,t_n') \leb_{\LCC}^o~t$ with $s_i \cand t_i'$ for all $i$.
If $s_j \xrightarrow{\LCC} s_j'$, then by induction hypothesis $s_j' \cand t_j'$. 
Since $\cand$ is operator-respecting, we also obtain  $R[s'] = \tauop(s_1,\ldots,s_{j-1},s_j',s_{j+1},\ldots,s_n)$ $\cand$ 
$\tauop(t_1',\ldots,t_{j-1}',t_j',t_{j+1}',\ldots,t_n')$, and from $\tauop(t_1',\ldots,t_n') \leb_{\LCC}^o~t$ we have 
$R[s'] = \tauop(s_1,\ldots,s_{j-1},s_j',s_{j+1},\ldots,s_n) \cand t$.
\end{proof}

\begin{lemma}\label{lemma:s-cand-cbot}
If $\lambda x. s,\lambda x. t$ are closed, $\lambda x. s \cand \lambda x. t$, and $t \in \cBot$, then also $s \in \cBot$.
\end{lemma}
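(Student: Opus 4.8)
The plan is to prove the statement by contradiction, applying both abstractions to one common closed argument and transporting the resulting convergence through $\cand$ by left-stability under reduction. So suppose $s\notin\cBot$. Since $\lambda x.s$ and $\lambda x.t$ are closed we have $\FV(s)\subseteq\{x\}$ and $\FV(t)\subseteq\{x\}$; hence a substitution witnessing $s\notin\cBot$ amounts to a closed expression $r$ with $s[r/x]\maycon_\LCC$. For this same $r$, $t[r/x]$ is a closed instance of $t$, so $t\in\cBot$ gives $t[r/x]\DIV_\LCC$. This is the configuration I will contradict.

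Next I would move the relation to the applications. By reflexivity of $\cand$ (Lemma~\ref{lem:howe-standard}) we have $r\cand r$, and since $\cand$ is operator-respecting (Lemma~\ref{lem:howe-standard}), from $\lambda x.s\cand\lambda x.t$ we get $((\lambda x.s)~r)\cand((\lambda x.t)~r)$. The $(\rnbeta)$-step $(\lambda x.s)~r\xrightarrow{\LCC}s[r/x]$ together with $s[r/x]\maycon_\LCC$ shows $(\lambda x.s)~r\maycon_\LCC$, say $(\lambda x.s)~r\xrightarrow{\LCC,*}v$ with $v$ a closed value. Applying Proposition~\ref{prop:stable-in-surface} once per step along this reduction sequence (all intermediate terms are closed, as $\xrightarrow{\LCC}$ preserves closedness) yields $v\cand((\lambda x.t)~r)$.

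Finally I would close the argument. Since $v$ is a closed value with $v\cand((\lambda x.t)~r)$, Lemma~\ref{lem:left-value} applies, and in both of its conclusions the right component converges, i.e.\ $(\lambda x.t)~r\maycon_\LCC$. As $(\lambda x.t)~r$ is not a value and its unique normal-order reduction is the $(\rnbeta)$-step to $t[r/x]$, this forces $t[r/x]\maycon_\LCC$, contradicting $t[r/x]\DIV_\LCC$. Hence no witness for $s\notin\cBot$ exists, so $s\in\cBot$.

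The delicate points — and the ones I would double-check — are (i) the free-variable bookkeeping: it is exactly closedness of $\lambda x.s$ and $\lambda x.t$ that confines their free variables to $x$ and so lets the same closed $r$ be substituted into both $s$ and $t$; and (ii) the implication $v\cand((\lambda x.t)~r)\implies(\lambda x.t)~r\maycon_\LCC$, since a value on the left of $\cand$ does not by itself force the right-hand side to converge — it is precisely Lemma~\ref{lem:left-value} (resting on the fixpoint properties of $\leb_\LCC$) that provides this. Everything else is a routine chaining of the already-established reflexivity, operator-respecting, and left-stability properties of the precongruence candidate.
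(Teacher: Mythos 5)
Your proposal is correct and follows essentially the same route as the paper: form $((\lambda x.s)~r)\cand((\lambda x.t)~r)$ via the operator-respecting property, push a hypothetical converging reduction of the left side through Proposition~\ref{prop:stable-in-surface}, and derive a contradiction with $((\lambda x.t)~r)\DIV_\LCC$ via Lemma~\ref{lem:left-value}. The only difference is presentational — you phrase it as a contradiction with an explicit witness $r$ and spell out the free-variable bookkeeping, which the paper leaves implicit.
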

\begin{proof}
For any closed $r$, we also have $(\lambda x. s)~r \cand (\lambda x. t)~r$, since $\cand$ is operator-respecting. From $t \in \cBot$, 
we obtain that $((\lambda x. t)~r)\DIV_\LCC$. Now suppose that  $(\lambda x. s)~r \xrightarrow{\LCC,*} s'$, where $s'$ is a value. 
Lemma  \ref{prop:stable-in-surface} implies that $s' \cand (\lambda x. t)~r$. Now Lemma \ref{lem:left-value} shows that this is impossible.
Hence $s \in \cBot$.
\end{proof}
\noindent Now we can prove an improvement of Lemma \ref{lem:left-value}:

\begin{lemma}\label{lem:left-value-full} Let $s,t$ be closed expressions such that $s = \theta(s_1,\ldots,s_n)$ is a value and $s \cand t$. 
Then there are two possibilities:
\begin{enumerate}
  \item\label{lem:left-value-full-1}  $s = \lambda x.s_1$, $t \maycon_{\LCC} c(t_1,\ldots,t_n)$  where $c$ is a constructor,
    and $s_1 \in \cBot$. 
  \item \label{lem:left-value-full2} there is some closed value $t'= \theta(t_1,\ldots,t_n)$ with $t \xrightarrow{\LCC,*} t'$ and for all $i: s_i \cand t_i$.   
\end{enumerate}
\end{lemma}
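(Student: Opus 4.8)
The plan is to bootstrap from Lemma~\ref{lem:left-value}, which already supplies the two-case dichotomy; the only genuinely new content is the extra clause $s_1 \in \cBot$ in the first case, so the whole argument concentrates there. First I would apply Lemma~\ref{lem:left-value} to $s = \theta(s_1,\ldots,s_n) \cand t$. If its second alternative holds --- $t \xrightarrow{\LCC,*} \theta(t_1,\ldots,t_n)$ with $s_i \cand t_i$ for all $i$ --- this is verbatim case~\eqref{lem:left-value-full2} of the present lemma and nothing further is needed. So assume the first alternative: $s = \lambda x.s_1$ and $t \maycon_\LCC c(t_1,\ldots,t_n)$ for some constructor $c$. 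It remains to establish $s_1 \in \cBot$.

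To extract a $\cBot$-witness from the relation $\lambda x.s_1 \cand t$, I would unfold the definition of $\cand$ via Lemma~\ref{lemma:cand-fixpoint}, obtaining $\lambda x.s_1''$ with $s_1 \cand s_1''$ and $\lambda x.s_1'' \leb_{\LCC}^o t$; by Lemma~\ref{lemma:lcs:middle-terms-closed} this middle expression may be chosen closed, say $\lambda x.s_1'$, so that $s_1 \cand s_1'$ and, since both $\lambda x.s_1'$ and $t$ are closed, $\lambda x.s_1' \leb_{\LCC} t$ (Lemma~\ref{lem:open-closed-rel-general}). Now apply Lemma~\ref{lem:fixpoint-constructors} to $\lambda x.s_1' \leb_{\LCC} t$: since $t$ converges to the constructor application $c(t_1,\ldots,t_n)$ and hence (by determinism of $\xrightarrow{\LCC}$) the head of its value is the constructor $c \neq \lambda$, the first alternative of Lemma~\ref{lem:fixpoint-constructors} is excluded, so the second must hold, which gives precisely $s_1' \in \cBot$.

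Finally I would transfer $\cBot$-membership back from $s_1'$ to $s_1$. Since $\cand$ is operator-respecting (Lemma~\ref{lem:howe-standard}\eqref{lem:howe-standard-operator_respecting}), $s_1 \cand s_1'$ yields $\lambda x.s_1 \cand \lambda x.s_1'$, and then Lemma~\ref{lemma:s-cand-cbot}, applied with $s_1' \in \cBot$, gives $s_1 \in \cBot$. This completes case~\eqref{lem:left-value-full-1}.

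I do not expect any real obstacle: the lemma is essentially a repackaging of Lemma~\ref{lem:left-value} together with machinery already in place (Lemmas~\ref{lem:fixpoint-constructors}, \ref{lemma:lcs:middle-terms-closed}, \ref{lemma:s-cand-cbot}, and operator-respecting\-ness of $\cand$). The one point that needs care is the bookkeeping between $\cand$, which lives on open expressions, and $\leb_{\LCC}$, which is defined on closed expressions: the middle witness must be made closed \emph{before} invoking Lemma~\ref{lem:fixpoint-constructors}, and that is exactly what Lemma~\ref{lemma:lcs:middle-terms-closed} provides.
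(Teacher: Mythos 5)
Your proof is correct and follows the same route as the paper, whose own proof is just the one-line citation of Lemma~\ref{lem:left-value} and Lemma~\ref{lemma:s-cand-cbot}; your extra steps (extracting a closed middle term via Lemma~\ref{lemma:lcs:middle-terms-closed}, using Lemma~\ref{lem:fixpoint-constructors} with determinism to get $s_1'\in\cBot$, then operator-respectingness to apply Lemma~\ref{lemma:s-cand-cbot}) are exactly the details the paper leaves implicit.
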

\begin{proof} This follows from Lemma \ref{lem:left-value} and Lemma \ref{lemma:s-cand-cbot}.
\end{proof}
Now we are ready to prove that the precongruence candidate and similarity coincide. 
 
\begin{theorem}\label{theorem:cand-eq-le-b} 
$\candc~=~\leb_{\LCC}$ and $\cand\ =\ \leb_{\LCC}^o$.
\end{theorem}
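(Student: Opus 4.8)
The plan is to combine the inclusion ${\leb_{\LCC}^o} \subseteq {\cand}$ already established in Lemma~\ref{lem:howe-standard}\eqref{lem:howe-standard-subsumes_preorder} with a proof of the reverse inclusion by co-induction. Concretely, I would show that $\candc$ is $F_{\LCC}$-dense, i.e.\ ${\candc} \subseteq {F_{\LCC}(\candc)}$; then Proposition~\ref{prop-coinduction} gives ${\candc} \subseteq {\leb_{\LCC}}$, and together with ${\leb_{\LCC}} \subseteq {\candc}$ (Lemma~\ref{lem:howe-standard}\eqref{lem:howe-standard-subsumes_preorder}) this yields $\candc = \leb_{\LCC}$.

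To check ${\candc} \subseteq {F_{\LCC}(\candc)}$, let $s,t$ be closed with $s \cand t$, and verify the two clauses defining $F_{\LCC}(\candc)$. First suppose $s \maycon_{\LCC} \lambda x.s'$. Iterating Proposition~\ref{prop:stable-in-surface} along each step of $s \xrightarrow{\LCC,*} \lambda x.s'$ gives $\lambda x.s' \cand t$. Lemma~\ref{lem:left-value-full} then leaves two possibilities: either $s' \in \cBot$ and $t \maycon_{\LCC} (c~t_1 \ldots t_n)$, which is precisely the second disjunct of the $\lambda$-clause of $F_{\LCC}$; or there is a closed value $\lambda x.t'$ with $t \xrightarrow{\LCC,*} \lambda x.t'$ and $s' \cand t'$, and since ${\cand} \subseteq {\open{\candc}}$ by Lemma~\ref{lem:howe-standard}\eqref{lem:howe-standard-subset-open-closed} we get $s' \mathrel{\open{\candc}} t'$, which is the first disjunct. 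Next suppose $s \maycon_{\LCC} (c~s_1' \ldots s_n')$. Again Proposition~\ref{prop:stable-in-surface} gives $(c~s_1' \ldots s_n') \cand t$; a constructor application is not an abstraction, so only the second case of Lemma~\ref{lem:left-value-full} can apply, producing a closed value $(c~t_1 \ldots t_n)$ with $t \xrightarrow{\LCC,*} (c~t_1 \ldots t_n)$ and $s_i' \cand t_i$ for all $i$. Since $s$ and $t$ are closed, so are all the $s_i'$ and $t_i$, hence $s_i' \candc t_i$, as the constructor-clause of $F_{\LCC}$ requires. This establishes ${\candc} \subseteq {F_{\LCC}(\candc)}$ and hence $\candc = \leb_{\LCC}$.

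For the open statement, ${\leb_{\LCC}^o} \subseteq {\cand}$ is again Lemma~\ref{lem:howe-standard}\eqref{lem:howe-standard-subsumes_preorder}, and conversely ${\cand} \subseteq {\open{\candc}} = {\open{\leb_{\LCC}}} = {\leb_{\LCC}^o}$, using Lemma~\ref{lem:howe-standard}\eqref{lem:howe-standard-subset-open-closed} together with the equality $\candc = \leb_{\LCC}$ just proved. Hence $\cand = \leb_{\LCC}^o$.

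The genuinely hard work — left-stability of $\cand$ under normal-order reduction and the refined analysis of the shape of values that accommodates the irregularity $\lambda x.\Omega \leb_{\LCC} (c~\ldots)$ — is already carried out in Propositions~\ref{prop:stable-base-case}--\ref{prop:stable-in-surface} and Lemma~\ref{lem:left-value-full}. So the only points needing care at this stage are bookkeeping ones: routing the two disjuncts of the $\lambda$-clause of $F_{\LCC}$ to the two matching cases of Lemma~\ref{lem:left-value-full}, and observing that closedness of $s$ and $t$ propagates to the components so that the closed relation $\candc$ (not merely $\cand$) is available exactly where $F_{\LCC}$ demands it. I do not expect any further obstacle.
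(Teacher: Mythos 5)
Your proposal is correct and follows essentially the same route as the paper: prove $F_{\LCC}$-density of $\candc$ by iterating Proposition~\ref{prop:stable-in-surface} to transfer $s \cand t$ to the value of $s$, then apply Lemma~\ref{lem:left-value-full} to split into the $\cBot$-irregularity case and the matching-value case, using Lemma~\ref{lem:howe-standard}\eqref{lem:howe-standard-subset-open-closed} to supply $\open{\candc}$ where the $\lambda$-clause requires it, and deriving the open statement from Lemma~\ref{lem:howe-standard}\eqref{lem:howe-standard-subsumes_preorder} and \eqref{lem:howe-standard-subset-open-closed}. The only (harmless) difference is that the paper explicitly notes the vacuous case $s\DIV_{\LCC}$, which your implication-based reading handles implicitly.
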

\begin{proof}
Since ${\leb_{\LCC}}\subseteq{\candc}$ by Lemma \ref{lem:howe-standard}, we  have to show that ${\candc} \subseteq {\leb_{\LCC}}$.
Therefore it is sufficient to show that $\candc$ satisfies the fixpoint equation for $\leb_{\LCC}$.
We show that ${\candc} \subseteq {F_{\LCC}(\candc)}$.
Let $s\ \candc\ t$ for  closed terms $s,t$. We show that $s\ F_{\LCC}(\candc)\ t$: 
If $s\Uparrow_{\LCC}$, 
then $s~ F_{\LCC}(\candc)~t$ holds by Definition  \ref{def:lazycc-le-b}.    
If $s \maycon_{\LCC} \theta(s_1,\ldots,s_n)$, then $\theta(s_1,\ldots,s_n)~\candc~ t$ by Proposition \ref{prop:stable-in-surface}.

Lemmas  \ref{prop:stable-in-surface}  and \ref{lem:left-value-full} show that there are two possibilities:
\begin{itemize}
  \item $t \xrightarrow{\LCC,*}  c(t_1,\ldots,t_n)$ for a constructor $c$,   $s \maycon_{\LCC} \lambda x.s_1$,
      and  $s_1 \in \cBot$.
  \item $t \xrightarrow{\LCC,*}  \theta(t_1,\ldots,t_n)$ and for all $i: s_i \cand t_i$. 
\end{itemize} 
This implies $s~ F_{\LCC}(\candc)~t$.  
Thus the fixpoint property of $\candc$ \wrt\ $F_{\LCC}$ holds, and hence $\candc~ =~ \leb_{\LCC}$.

\vspace{1mm}
Now we prove the second part. The  first part, ${\candc} \subseteq {\leb_{\LCC}}$, 
implies  ${\open{\candc}} \subseteq {\leb_{\LCC}^o}$ by monotonicity.   
     Lemma \ref{lem:howe-standard}  \eqref{lem:howe-standard-subset-open-closed} implies 
    ${\cand} \subseteq {\open{\candc}} \subseteq {\leb_{\LCC}^o}$.  
The other direction is proven in Lemma \ref{lem:howe-standard}  \eqref{lem:howe-standard-subsumes_preorder}.
\end{proof}

Since $\leb_{\LCC}^o$ is reflexive and transitive (Lemma~\ref{lem:leq-b-refl-trans}) and $\candc$ is
operator-respecting (Lemma~\ref{lem:howe-standard}~\eqref{lem:howe-standard-operator_respecting}), 
this immediately implies: 
\begin{corollary}  
$\leb_{\LCC}^o$ is a precongruence on expressions $\LAMBDAEXPR$.
If $\sigma$ is a substitution, then 
$s \leb_{\LCC}^o t$ implies $\sigma(s) \leb_{\LCC}^o \sigma(t)$. \qed
\end{corollary}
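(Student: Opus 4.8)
The plan is to observe that, by the definition used in this paper (see the remark after the definition of $\lec_D$), a precongruence is exactly a relation that is reflexive, transitive, and compatible with all contexts, and that the last property is implied by being operator-respecting together with reflexivity. Reflexivity and transitivity of $\leb_{\LCC}^o$ are already in hand from Lemma~\ref{lem:leq-b-refl-trans}, so the only thing to supply is the operator-respecting property, and this I would read off from the coincidence of the precongruence candidate with similarity: Theorem~\ref{theorem:cand-eq-le-b} gives $\cand = \leb_{\LCC}^o$, while Lemma~\ref{lem:howe-standard}~\eqref{lem:howe-standard-operator_respecting} states that $\cand$ is operator-respecting, so transporting the latter along the equality shows that $\leb_{\LCC}^o$ is operator-respecting.

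From operator-respecting plus reflexivity I would then upgrade to full context-compatibility by a routine induction on the structure of a context $C$: if $C = [\cdot]$ then $C[s] \leb_{\LCC}^o C[t]$ is just the hypothesis $s \leb_{\LCC}^o t$; if $C = \tauop(u_1, \ldots, C', \ldots, u_n)$ then the induction hypothesis gives $C'[s] \leb_{\LCC}^o C'[t]$, and the operator-respecting property applied to this together with reflexivity on the unchanged operands $u_j$ (and, for $\tauop = \lambda$, on the bound-variable slot, handled via the higher-order abstract syntax convention $\lambda(x.s)$) yields $C[s] \leb_{\LCC}^o C[t]$. This establishes that $\leb_{\LCC}^o$ is a precongruence on $\LAMBDAEXPR$.

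For the substitution clause there is nothing further to do: $\leb_{\LCC}^o$ is by definition the open extension of $\leb_{\LCC}$, a relation on closed expressions, and Lemma~\ref{lem:open-closed-rel-general} records that every open extension $\eta^o$ satisfies $s\,\eta^o\,t \implies \sigma(s)\,\eta^o\,\sigma(t)$ for any substitution $\sigma$. Alternatively one could invoke $\cand = \leb_{\LCC}^o$ once more together with Lemma~\ref{lem:howe-standard}~\eqref{lem:howe-standard-cand-subs}.

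I do not expect a real obstacle here: all the genuine work — Howe's construction, the closure properties of $\cand$, and in particular the coincidence $\cand = \leb_{\LCC}^o$ which already absorbs the irregularity $\lambda x.\Omega \leb_{\LCC} (c~s_1 \ldots s_n)$ — has been carried out in the preceding lemmas and in Theorem~\ref{theorem:cand-eq-le-b}. The only point deserving a line of care is the mild bookkeeping that ``operator-respecting'' propagates to closure under arbitrary context nesting, i.e. the structural induction sketched above.
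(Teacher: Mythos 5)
Your proposal is correct and follows essentially the same route as the paper: the corollary is obtained from reflexivity and transitivity of $\leb_{\LCC}^o$ (Lemma~\ref{lem:leq-b-refl-trans}), the operator-respecting property of the candidate (Lemma~\ref{lem:howe-standard}~\eqref{lem:howe-standard-operator_respecting}) transported along $\cand = \leb_{\LCC}^o$ (Theorem~\ref{theorem:cand-eq-le-b}), and the substitution-closure of open extensions (Lemma~\ref{lem:open-closed-rel-general}, equivalently Lemma~\ref{lem:howe-standard}~\eqref{lem:howe-standard-cand-subs}). Your explicit structural induction on contexts merely spells out what the paper treats as immediate.
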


\begin{lemma}\label{lem:PFI-leb-impl-lec}  $\leb_{\LCC}^o~\subseteq~\lec_{\LCC}$.
\end{lemma}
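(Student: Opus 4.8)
The plan is to derive $\leb_{\LCC}^o \subseteq \lec_{\LCC}$ from the preceding corollary, which states that $\leb_{\LCC}^o$ is a precongruence. Given $s \leb_{\LCC}^o t$ and an $\LLCC$-context $C$ with $C[s]\maycon_{\LCC}$, I would first use that $\leb_{\LCC}^o$ is operator-respecting to obtain $C[s] \leb_{\LCC}^o C[t]$, so that the whole statement reduces to the single auxiliary claim: if $u \leb_{\LCC}^o u'$ and $u\maycon_{\LCC}$, then $u'\maycon_{\LCC}$.

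To prove the auxiliary claim I would establish two routine facts about the letrec-free calculus $\LLCC$. (a)~Normal-order reduction commutes with substitution, $p \xrightarrow{\LCC} p' \implies \sigma(p) \xrightarrow{\LCC} \sigma(p')$ for every $\sigma$, because no free variable can occupy a reduction position strictly outside the redex of $p$ without $p$ already being stuck there, so $\sigma$ cannot create an earlier redex, and the redex of $p$ is sent to a redex of the same shape with the corresponding contractum. Since substitution also maps $\LLCC$-WHNFs to $\LLCC$-WHNFs, this gives $p\maycon_{\LCC} \iff p[\Omega/\vec{x}]\maycon_{\LCC}$ for all variables $\vec{x}$: the implication ``$\Rightarrow$'' is immediate, and for ``$\Leftarrow$'' a non-converging reduction of $p$ is either infinite (hence so is the one of $p[\Omega/\vec{x}]$) or ends at a stuck non-WHNF whose obstruction is a free variable — turned by $\sigma$ into the looping term $\Omega$ — or an over-applied constructor or a type-mismatched $\tcase$-scrutinee, which survive the substitution and stay stuck; so $p[\Omega/\vec{x}]$ diverges too. (b)~From the fixpoint equation $\leb_{\LCC} = F_{\LCC}(\leb_{\LCC})$: if $p,q$ are closed, $p\leb_{\LCC} q$, and $p\maycon_{\LCC}$, then $q\maycon_{\LCC}$, since $p$ then converges to a value and in each clause of $F_{\LCC}$ the right-hand side converges (to an abstraction or to a constructor application).

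The auxiliary claim then follows by taking the closing substitution $\sigma$ that maps every variable of $\FV(u)\cup\FV(u')$ to $\Omega$: fact~(a) gives $\sigma(u)\maycon_{\LCC}$; the definition of the open extension gives $\sigma(u)\leb_{\LCC}\sigma(u')$; fact~(b) gives $\sigma(u')\maycon_{\LCC}$; and fact~(a) applied backwards to $u'$ gives $u'\maycon_{\LCC}$. Hence $C[s]\maycon_{\LCC} \implies C[t]\maycon_{\LCC}$, i.e.\ $s\lec_{\LCC} t$.

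I expect the only genuinely non-mechanical step to be the open/closed bridge, namely fact~(a): going from $u$ to its $\Omega$-instance is trivial, but inferring $u'\maycon_{\LCC}$ from $\sigma(u')\maycon_{\LCC}$ relies on $\Omega$ acting as a least element, so that a converging computation can never have depended on a variable that was overwritten by $\Omega$. Spelling this out — via the stuck-normal-form case analysis above, or equivalently by first proving $\Omega\lec_{\LCC} r$ for every, possibly open, $r$ and then transporting it back into $u'$ — is where the real argument lies; the rest is bookkeeping with the precongruence property and the fixpoint characterisation of $\leb_{\LCC}$.
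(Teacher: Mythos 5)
Your proposal is correct and follows essentially the same route as the paper's proof: close all free variables by $\Omega$, use the precongruence property of $\leb_{\LCC}^o$ together with the open-extension definition to get the closed instances into $\leb_{\LCC}$, transfer convergence via the fixpoint property, and observe that $\Omega$-instantiation neither creates nor destroys convergence since the substituted variables cannot occupy redex positions in a converging reduction. The only differences are cosmetic — you apply precongruence before closing rather than after, and you spell out in more detail (stuck-form case analysis) what the paper dispatches in one sentence about free variables not overlapping with redexes.
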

\begin{proof}
Let $s,t$ be expressions with $s \leb_{\LCC}^o t$ such that $\Ctxt[s]\maycon_{\LCC}$. 
Let $\sigma$ be a substitution that replaces all 
free variables of $C[s], C[t]$ by $\Omega$.
The properties of the call-by-name reduction show that also $\sigma(C[s])\maycon_{\LCC}$. 
Since $\sigma(C[s]) = \sigma(C)[\sigma(s)]$,
$\sigma(C[t]) = \sigma(C)[\sigma(t)]$ and since
$\sigma(s)  \leb_{\LCC}^o \sigma(t)$, we obtain from the precongruence property of 
$\leb_{\LCC}^o$ that  also $\sigma(C[s]) \leb_{\LCC} \sigma(C[t])$.
Hence $\sigma(C[t])\maycon_{\LCC}$. This is equivalent to  $C[t]\maycon_{\LCC}$, 
since free variables are replaced by $\Omega$, and thus they cannot overlap with redexes.
Hence $\leb_{\LCC}^o ~\subseteq~ \lec_{\LCC}$. 
\end{proof}

\begin{corollary}\label{cor:reductioncorrect}\label{cor:lcc-red-rules-correct}
  $s \xrightarrow{\LCC} s'$ implies $s \simc_{\LCC} s'$.  
Thus the reduction rules of the calculus $\LLAZYCC$ are correct \wrt\ $\simc_{\LCC}$ in any context.
\end{corollary}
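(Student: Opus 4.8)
The plan is to deduce the corollary directly from the two lemmas immediately preceding it, with no extra work. First I would apply Lemma~\ref{lemma:reduction-stable}: a step $s \xrightarrow{\LCC} s'$ yields simultaneously $s \leb_{\LCC}^o s'$ and $s' \leb_{\LCC}^o s$. Feeding each of these into Lemma~\ref{lem:PFI-leb-impl-lec}, which gives $\leb_{\LCC}^o \subseteq \lec_{\LCC}$, one obtains $s \lec_{\LCC} s'$ and $s' \lec_{\LCC} s$, that is $s \simc_{\LCC} s'$ by the definition of $\simc_{\LCC}$ as the symmetrization of $\lec_{\LCC}$.

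For the ``in any context'' claim I would note that the empty context $[\cdot]$ is a reduction context in $\LLAZYCC$, so every instance $l \to r$ of one of the rules $(\rnbeta)$, $(\rncase)$, $(\rnseq)$ read with $R_\LCC = [\cdot]$, i.e. on the bare redex and contractum, is a single $\xrightarrow{\LCC}$-step. By the first part $l \simc_{\LCC} r$, and since $\simc_{\LCC}$ is a congruence (recorded just after the definition of contextual equivalence) we get $C[l] \simc_{\LCC} C[r]$ for every context $C \in \LAMBDACTXT$. Thus each reduction rule, viewed as a program transformation that may be applied under an arbitrary context, is contained in $\simc_{\LCC}$ and hence correct.

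There is essentially no obstacle here: the content sits entirely in Lemmas~\ref{lemma:reduction-stable} and~\ref{lem:PFI-leb-impl-lec}, and ultimately in the Howe-style argument behind Theorem~\ref{theorem:cand-eq-le-b}. The only points to keep straight are that both lemmas are phrased for open expressions, so that no closedness hypothesis is needed, and that the passage from the redex/contractum pair to arbitrary contexts uses precisely the congruence property of $\simc_{\LCC}$.
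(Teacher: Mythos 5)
Your argument is exactly the paper's: Corollary~\ref{cor:lcc-red-rules-correct} is obtained by combining Lemma~\ref{lemma:reduction-stable} (a step gives $s \leb_{\LCC}^o s'$ and $s' \leb_{\LCC}^o s$) with Lemma~\ref{lem:PFI-leb-impl-lec} ($\leb_{\LCC}^o \subseteq \lec_{\LCC}$), and your explicit handling of the ``in any context'' claim via the congruence property of $\simc_{\LCC}$ just spells out what the paper leaves implicit. Correct, same route.
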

\begin{proof}
This follows from Lemmas \ref{lemma:reduction-stable}  and \ref{lem:PFI-leb-impl-lec}.
\end{proof}
Now we show a characterization for $\LAMBDAEXPR$-expressions, which includes the previously mentioned irregularity of $\lec_\LCC$: 
\begin{proposition}\label{prop-classification-lcc}
Let $s$ be a closed $\LLAZYCC$-expression. 
Then there are three cases: 
 $s \simc_{\LCC} \Omega$, 
 $s \simc_{\LCC} \lambda x.s'$ for some $s'$,  
 $s \simc_{\LCC} (c~s_1 \ldots s_n)$
for some terms $s_1,\ldots,s_n$ and constructor $c$. 

For two closed $\LLAZYCC$-expressions $s,t$ with $s \lec_{\LCC} t$: Either $s \simc_{\LCC} \Omega$, or $s \simc_{\LCC} (c~s_1 \ldots s_n)$, 
$t \simc_\LCC (c~t_1 \ldots t_n)$ and $s_i \lec_{\LCC} t_i$ for all $i$ for some terms 
  $s_1,\ldots,s_n,t_1,\ldots,t_n$ and constructor $c$,
or  $s \simc_\LCC \lambda x.s'$ and $t  \simc_\LCC \lambda x.t'$ for some expressions $s',t'$ with $s' \lec_{\LCC}^o t'$, or  
$s \simc_\LCC \lambda x.s'$ and $t  \simc_\LCC  (c~t_1 \ldots t_n)$ 
for some term $s' \in \cBot$, expressions $t_1,\ldots,t_n$ and constructor $c$. 
\end{proposition}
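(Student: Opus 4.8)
The plan is to prove the first assertion by running normal-order reduction on $s$ and reading off the shape of the result, and then to derive the second assertion by applying the first assertion to $s$ and to $t$ separately, exhibiting for each mismatch of shapes a context that converges on one side and diverges on the other (which contradicts $s\lec_{\LCC}t$). For the first assertion, let $s$ be closed. If $s\DIV_{\LCC}$, then $s$ is closed and divergent, so $s\in\cBot$; since $\Omega\in\cBot$ as well, Lemma~\ref{lem:fixpoint-cBot} gives $s\simb_{\LCC}^{o}\Omega$, and then $\leb_{\LCC}^{o}\subseteq\lec_{\LCC}$ (Lemma~\ref{lem:PFI-leb-impl-lec}) upgrades this to $s\simc_{\LCC}\Omega$. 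Otherwise $s$ reduces to an $\LLCC$-WHNF, which is either an abstraction $\lambda x.s'$ or a saturated constructor application $(c~s_1\ldots s_n)$ with all $s_i$ closed (being subterms of a closed reduct); by Corollary~\ref{cor:reductioncorrect} and transitivity of $\simc_{\LCC}$ we get $s\simc_{\LCC}\lambda x.s'$, respectively $s\simc_{\LCC}(c~s_1\ldots s_n)$. By determinism of $\xrightarrow{\LCC}$ exactly one of the three alternatives occurs, so for the second assertion I may case on the actual behaviour of $s$.

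For the second assertion, assume $s\lec_{\LCC}t$ and apply the first assertion to $s$. If $s\simc_{\LCC}\Omega$ we are done. If $s\simc_{\LCC}\lambda x.s'$ (and I fix this $s'$ once and for all), apply the first assertion to $t$. The case $t\simc_{\LCC}\Omega$ is impossible, since the empty context already converges on the value $\lambda x.s'$ but diverges on $\Omega$, contradicting $\lambda x.s'\simc_{\LCC}s\lec_{\LCC}t\simc_{\LCC}\Omega$. If $t\simc_{\LCC}\lambda x.t'$, then for every closed $r$ the reductions $(\lambda x.s')~r\xrightarrow{\LCC}s'[r/x]$ and $(\lambda x.t')~r\xrightarrow{\LCC}t'[r/x]$, together with Corollary~\ref{cor:reductioncorrect}, the precongruence property of $\lec_{\LCC}$, and $\lambda x.s'\lec_{\LCC}\lambda x.t'$, give $s'[r/x]\lec_{\LCC}t'[r/x]$; since $\FV(s')\cup\FV(t')\subseteq\{x\}$ this is exactly $s'\lec_{\LCC}^{o}t'$. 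If $t\simc_{\LCC}(c~t_1\ldots t_n)$, I claim $s'\in\cBot$: otherwise $s'[r/x]\maycon_{\LCC}$ for some closed $r$, hence $(\lambda x.s')~r\maycon_{\LCC}$, and then applying $\lambda x.s'\lec_{\LCC}(c~t_1\ldots t_n)$ in the context $([\cdot]~r)$ would give $(c~t_1\ldots t_n)~r\maycon_{\LCC}$ --- impossible, because a saturated constructor application cannot be applied to a further argument and such an application is not an answer.

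If instead $s\simc_{\LCC}(c~s_1\ldots s_n)$, apply the first assertion to $t$. Here $t\simc_{\LCC}\Omega$ is excluded by the empty context, and $t\simc_{\LCC}\lambda x.t'$ is excluded by the context $\tcase_{T}~[\cdot]~\tof~(c~x_1\ldots x_n\to\ttrue)(\text{every other constructor of }T\to\Omega)$, where $T$ is the type of $c$: this does an $(\rncase)$-step to $\ttrue$ on $(c~s_1\ldots s_n)$ but is stuck on $\lambda x.t'$. Hence $t\simc_{\LCC}(c'~t_1\ldots t_m)$; a similar $\tcase_{T}$-context (with $\ttrue$ and $\Omega$ placed on the $c$- and $c'$-alternatives, and using that $\tcase_{T}$ is stuck on $t$ if $c'$ belongs to a type different from $T$) forces $c'=c$ and hence $m=n$. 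Finally, to show $s_i\lec_{\LCC}t_i$, fix an arbitrary context $C'$ and set $D:=C'[\tcase_{T}~[\cdot]~\tof~(c~x_1\ldots x_n\to x_i)(\text{others}\to\Omega)]$ with $x_1,\ldots,x_n$ fresh; since $\tcase_{T}~(c~s_1\ldots s_n)~\tof\ldots\xrightarrow{\LCC}s_i$ and likewise on the $t$-side, Corollary~\ref{cor:reductioncorrect} and congruence of $\simc_{\LCC}$ give $D[(c~s_1\ldots s_n)]\simc_{\LCC}C'[s_i]$ and $D[(c~t_1\ldots t_n)]\simc_{\LCC}C'[t_i]$, so chaining $s\simc_{\LCC}(c~s_1\ldots s_n)$, $s\lec_{\LCC}t$ and $t\simc_{\LCC}(c~t_1\ldots t_n)$ yields $C'[s_i]\maycon_{\LCC}\impl C'[t_i]\maycon_{\LCC}$.

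I expect the main obstacle to be the organisation of the second assertion rather than any isolated hard step: one has to pick, for each possible disagreement between the normal forms of $s$ and $t$, precisely the separating context, and in particular one has to use the $\LLCC$-specific fact that a saturated constructor application is inert under further application. This is the only point where the concrete structure of $\LLCC$ is really needed (everything else is bookkeeping with $\cBot$, $\lec_{\LCC}^{o}$, and correctness of reduction), and it is exactly what pins down $s'\in\cBot$ in the mixed abstraction/constructor alternative.
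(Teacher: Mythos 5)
Your proposal is correct and takes essentially the same route as the paper's own proof: the first part via normal-order reduction, correctness of reduction, and the fact that closed divergent expressions (like $\Omega$) lie in $\cBot$; the second part via a case analysis on the shapes of $s$ and $t$ using exactly the paper's separating contexts (the empty context, $\tcase$-contexts, and an application context exploiting that a saturated constructor application applied to a further argument diverges, which pins down $s'\in\cBot$). The only difference is presentational: you derive the positive conclusions ($s'\lec_{\LCC}^o t'$, $s_i\lec_{\LCC}t_i$, $s'\in\cBot$) directly by instantiating contexts and closed arguments, whereas the paper states the same facts contrapositively as a list of impossible $\lec_{\LCC}$-combinations with witness contexts.
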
 
\begin{proof} We apply Lemma \ref{lem:PFI-leb-impl-lec}. Corollary \ref{cor:reductioncorrect} then shows that using reduction the classification 
of closed expressions into the 
classes \wrt\ $\simc_{\LCC}$ holds.

For two closed $\LLAZYCC$-expressions $s,t$ with $s \leb_{\LCC} t$: we obtain the classification in the lemma but with $\leb_{\LCC}$ instead of 
$\lec_{\LCC}$. For the three cases $s \simb_\LCC \Omega$,   both $s,t$ are equivalent to constructor expressions, and both
$s,t$ are equivalent to abstractions, we obtain also that  $s \lec_{\LCC} t$. 
In the last case $\lambda x.s' \leb_\LCC (c~s_1 \ldots s_n)$, we also obtain from the $\leb_\LCC$-definition, that it is valid and from 
Lemma \ref{lem:PFI-leb-impl-lec}, that it implies $\lambda x.s' \lec_\LCC (c~s_1 \ldots s_n)$.
Other combinations of constructor applications, abstractions and $\Omega$ cannot be in $\lec_\LCC$-relation:
\begin{itemize}
 \item $(c~t_1 \ldots t_n) \not\lec_\LCC \Omega$ and $\lambda x.s \not\lec_\LCC \Omega$ since the empty context distinguishes them.
 \item \mbox{$(c_1~s_1\ldots s_n) \not \lec_\LCC (c_2~t_1\ldots t_m)$:} Let
$C:=\tcase_T~[\cdot]~(c_1~x_1\ldots x_n \to \lambda y.y)~alts$ where all al\-ter\-natives in $alts$ have right hand side $\Omega$.
Then $C[(c_1~s_1\ldots s_n)]\maycon_\LCC$ but $C[(c_2~t_1\ldots t_m)]\DIV_\LCC$.
\item $(c~s_1\ldots s_n) \not\lec_\LCC (c~t_1\ldots t_n)$ if $s_i \not\lec_\LCC t_i$: Let context $D$ be the witness for $s_i \not\lec_\LCC t_i$.
Then $C = \tcase_T~[\cdot]~(c~x_1\ldots x_n \to D[x_i])$ distinguishes $(c~s_1\ldots s_n)$ and $(c~t_1\ldots t_n)$ 
\item $(c~s_1 \ldots s_n) \not\lec_\LCC (\lambda x.t)$: The context $\tcase_T~[\cdot]~(c~x_1\ldots~x_n \to \lambda y.y)~alts$ is a witness.
\item $\lambda x.s \not\lec_\LCC \lambda x.t$ if $s \not\lec_\LCC t$: Let $D$ be the witness for $s \not\lec_\LCC t$.
Then $C=D[([\cdot]~x)]$ distinguishes $\lambda x.s$ and $\lambda x.t$. 
\item $\lambda x.s \not\lec_\LCC (c~t_1\ldots t_n)$ if $s \not\in\cBot$: Since $s \not\in \cBot$ and $FV(s) \subseteq \{x\}$,
there exists a closing substitution $\sigma = \{x \mapsto r\}$ such that $\sigma(s)\maycon_\LCC$.
For the context $C=([\cdot]~r)$ the expression $C[\lambda x.s]$ converges while $C[(c~t_1\ldots t_n)]$ diverges.\qedhere
\end{itemize}
\end{proof}

\begin{lemma}\label{lem:PFI-lec-impl-leb}
 $\lec_{\LCC}~\subseteq~\leb_{\LCC}^o$.
\end{lemma}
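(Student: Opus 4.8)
The plan is to prove the inclusion first on \emph{closed} expressions and then transport it to open expressions by the usual ``substitution as context'' trick. For the closed case I would invoke the co-induction principle (Proposition~\ref{prop-coinduction}): it suffices to exhibit a relation $\eta$ on closed $\LAMBDAEXPR$-expressions with $\eta \subseteq F_{\LCC}(\eta)$ that contains the closed restriction of $\lec_{\LCC}$. The natural choice is $\eta := \{(s,t)\mid s,t \text{ closed and } s \lec_{\LCC} t\}$ itself; then co-induction gives $\eta \subseteq \leb_{\LCC}$, i.e.\ $\lec_{\LCC}\subseteq\leb_{\LCC}$ on closed terms.

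To verify $\eta \subseteq F_{\LCC}(\eta)$, let $s \lec_{\LCC} t$ with $s,t$ closed. If $s\DIV_{\LCC}$ the two defining implications of $F_{\LCC}$ hold vacuously. Otherwise $s$ reduces to a (by determinism of $\xrightarrow{\LCC}$, unique) value $v$, and $v \simc_{\LCC} s \lec_{\LCC} t$ by Corollary~\ref{cor:reductioncorrect}, so $v \lec_{\LCC} t$ with $v$ a closed value; I then apply the classification of Proposition~\ref{prop-classification-lcc} to the pair $(v,t)$. If $v=(c~s_1\ldots s_n)$, the proposition forces $t \simc_{\LCC}(c~t_1'\ldots t_n')$ with $s_i \lec_{\LCC} t_i'$; since $t$ is then convergent, $t \maycon_{\LCC} w$ for some value $w$, and $w \simc_{\LCC}(c~t_1'\ldots t_n')$ forces (by the list of impossible combinations in the proof of Proposition~\ref{prop-classification-lcc}: no constructor application is below an abstraction, constructor symbols must agree, arguments must agree) $w=(c~w_1\ldots w_n)$ with $w_i \simc_{\LCC} t_i'$, hence $s_i \lec_{\LCC} w_i$, i.e.\ $s_i~\eta~w_i$; this is exactly the second clause of $F_{\LCC}$. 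If $v=\lambda x.s'$, the proposition gives two subcases. Either $t \simc_{\LCC}\lambda x.t'$ with $s' \lec_{\LCC}^o t'$; then $t$ converges to a value $w$ which, being both below and above an abstraction, must be an abstraction $\lambda x.w'$ with $t' \lec_{\LCC}^o w'$ and $w' \lec_{\LCC}^o t'$, so by transitivity $s' \lec_{\LCC}^o w'$, which is $s'~\eta^o~w'$ (since $\eta^o$ is the open extension of $\lec_{\LCC}$), giving the first disjunct of the first clause of $F_{\LCC}$. Or $t \simc_{\LCC}(c~t_1\ldots t_n)$ with $s'\in\cBot$; then $t$ converges to a value which must be a constructor application $(c~w_1\ldots w_n)$, giving the second disjunct of the first clause. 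In all cases $s~F_{\LCC}(\eta)~t$, so $\eta\subseteq F_{\LCC}(\eta)$.

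Finally, for arbitrary $s \lec_{\LCC} t$ and any substitution $\sigma=\{x_1\mapsto r_1,\ldots,x_k\mapsto r_k\}$ with closed $r_i$ and $\{x_1,\ldots,x_k\}\supseteq\FV(s)\cup\FV(t)$, put $D := ((\lambda x_1.\ldots\lambda x_k.[\cdot])~r_1\ldots r_k)$. Then $D[s] \xrightarrow{\LCC,*} \sigma(s)$ and $D[t] \xrightarrow{\LCC,*} \sigma(t)$, so by precongruence of $\lec_{\LCC}$ and Corollary~\ref{cor:reductioncorrect} we get $\sigma(s) \lec_{\LCC} \sigma(t)$ with both sides closed, hence $\sigma(s) \leb_{\LCC} \sigma(t)$ by the closed case; since $\sigma$ was arbitrary this is $s \leb_{\LCC}^o t$.

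The main obstacle is not conceptual but lies in the passage from the contextual statements delivered by Proposition~\ref{prop-classification-lcc} (of the shape $t \simc_{\LCC}\lambda x.t'$, resp.\ $t \simc_{\LCC}(c~t_1'\ldots t_n')$) to the concrete syntactic shape of the weak head normal form of $t$ that $F_{\LCC}$ talks about: one must know that a value contextually equivalent to an abstraction is itself an abstraction, a value contextually equivalent to a $c$-headed constructor application is itself a $c$-headed constructor application, and that in these cases the bodies, resp.\ arguments, are again related by $\lec_{\LCC}^o$, resp.\ $\lec_{\LCC}$ — all of which is packaged into the case analysis and the list of impossible combinations inside the proof of Proposition~\ref{prop-classification-lcc}. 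A minor routine point is the identification of $\eta^o$ with $\lec_{\LCC}^o$, i.e.\ that $\lec_{\LCC}$ agrees with the open extension of its restriction to closed expressions, which follows from Lemma~\ref{lem:open-closed-rel-general} together with the context-realization of substitutions used in the last paragraph.
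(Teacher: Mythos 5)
Your proposal is correct and takes essentially the same route as the paper: the paper's own (one-sentence) proof is precisely that $\lec_{\LCC}^c$ satisfies the fixpoint condition $\lec_{\LCC}^c \subseteq F_{\LCC}(\lec_{\LCC}^c)$, which it deduces from Corollary~\ref{cor:reductioncorrect} and Proposition~\ref{prop-classification-lcc}. Your detailed verification of that condition and the explicit substitution-as-context lifting to open expressions merely fill in steps the paper leaves implicit.
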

\begin{proof}
The relation  $\lec_{\LCC}^c$ satisfies the fixpoint condition, \ie\ $\lec_{\LCC}^c\ \subseteq\ F_{\LCC}(\lec_{\LCC}^c)$,
which follows from Corollary \ref{cor:reductioncorrect} and Proposition \ref{prop-classification-lcc}.
\end{proof}
\noindent Lemmas~\ref{lem:PFI-leb-impl-lec} and \ref{lem:PFI-lec-impl-leb} immediately imply:

\begin{theorem}\label{thm:sim_c-equiv-b}
   $\leb_{\LCC}^o~~= ~~\lec_{\LCC}$.\qed
\end{theorem}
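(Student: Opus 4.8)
The plan is straightforward: obtain $\leb_{\LCC}^o = \lec_{\LCC}$ by proving the two inclusions and invoking antisymmetry of $\subseteq$. Both are already in hand: the inclusion $\leb_{\LCC}^o \subseteq \lec_{\LCC}$ is Lemma~\ref{lem:PFI-leb-impl-lec}, and the converse $\lec_{\LCC} \subseteq \leb_{\LCC}^o$ is Lemma~\ref{lem:PFI-lec-impl-leb}. So the only remaining step is to intersect them; I expect no obstacle whatsoever at this point, since all the work has been front-loaded into the two preceding lemmas.

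Since the theorem itself is a one-line corollary, it is worth recalling where the actual content lies, so that the statement is not misread as cheap. The soundness direction ($\leb_{\LCC}^o \subseteq \lec_{\LCC}$) depends on $\leb_{\LCC}^o$ being a precongruence: given that, a closing-substitution argument (substitute $\Omega$ for the free variables, using that under call-by-name this creates no spurious redexes) turns $s \leb_{\LCC}^o t$ together with $C[s] \maycon_{\LCC}$ into $C[t] \maycon_{\LCC}$. The precongruence property is precisely the output of Howe's method: one defines the candidate relation $\cand$, which is operator-respecting by construction, proves it is stable under $\xrightarrow{\LCC}$ on the right (Lemma~\ref{lemma:rightstable}) and on the left (Propositions~\ref{prop:stable-base-case} and~\ref{prop:stable-in-surface}), and then shows $\candc = \leb_{\LCC}$ by verifying the fixpoint equation (Theorem~\ref{theorem:cand-eq-le-b}); this transfers transitivity onto $\cand$ and operator-respect onto $\leb_{\LCC}^o$.

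The completeness direction ($\lec_{\LCC} \subseteq \leb_{\LCC}^o$) depends on the closed restriction $\lec_{\LCC}^c$ satisfying the defining inequality $\lec_{\LCC}^c \subseteq F_{\LCC}(\lec_{\LCC}^c)$, so that co-induction yields $\lec_{\LCC}^c \subseteq \leb_{\LCC}$ and hence $\lec_{\LCC} \subseteq \leb_{\LCC}^o$. This inequality rests on the classification of closed $\LLAZYCC$-expressions up to $\simc_{\LCC}$ (Proposition~\ref{prop-classification-lcc}): every closed expression is $\simc_{\LCC}$-equivalent to $\Omega$, to an abstraction, or to a constructor application, and $s \lec_{\LCC} t$ forces $t$ into the matching shape — except for the irregularity $\lambda x.\Omega \lec_{\LCC} (c~\vect{s})$, which is exactly why the $\cBot$-clause was built into Definition~\ref{def:lazycc-le-b}; correctness of reduction (Corollary~\ref{cor:reductioncorrect}) lets one replace $s$ and $t$ by their values when checking the clauses of $F_{\LCC}$. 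The genuine obstacle in this subsection, then, is not the present theorem but the precongruence proof itself: Howe's method must be adapted so that values are not assumed to be recognizable by their top operator (an abstraction may approximate a constructor application), and the $\cBot$-irregularity has to be threaded through every stability lemma.
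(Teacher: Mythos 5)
Your proof is correct and matches the paper exactly: the theorem is obtained by combining Lemma~\ref{lem:PFI-leb-impl-lec} with Lemma~\ref{lem:PFI-lec-impl-leb}, and your account of where the real work lies (Howe's candidate relation and the $\cBot$-adapted classification) accurately reflects the surrounding development.
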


\subsubsection{\texorpdfstring{Alternative Definitions of Bisimilarity in $\LLAZYCC$}{Alternative Definitions of Bisimilarity in LCC}}\label{subsec:bisim-alternative}
We want to analyze the translations  between our calculi, and the inherent contextual equivalence.
This will require to show that several differently defined relations are all equal to contextual equivalence.

Using Theorem \ref{thm:conv-admissibile-implies} we show that in  $\LLAZYCC$, 
behavioral equivalence can also be proved inductively:
\begin{definition}\label{def:sim-Q-contexts}
The set $Q_\LCC$ of contexts $Q$ is assumed to consist of the following contexts: 
\begin{enumerate}
\item[(i)]  $([\cdot]~r)$ for  all closed $r$, 
\item[(ii)] for all types $T$, 
constructors $c$ of $T$, and indices $i$:\\
 $(\tcase_T~[\cdot]~\tof\ldots (c~x_1 \ldots x_{\ari(c)} \to x_i)\ldots)$ where all right hand sides of other \tcase-alternatives are $\Omega$, 
\item[(iii)] for all types $T$ and  
constructors $c$ of $T$: $(\tcase_T~[\cdot]~\tof \ldots (c~x_1 \ldots x_{\ari(c)} \to \ttrue) \ldots)$ where all right hand sides of other \tcase-alternatives are $\Omega$. 
\end{enumerate}
The relations  $\leb_{\LCC,Q_\LCC}$, $\lec_{\LCC,Q_\LCC}$ are instantiations of Definitions \ref{def:Q-gfp-preorder} 
  and Definition \ref{def:Q-simpl-preorder}, respectively, with the set $Q_\LCC$ and the {\em closed part of} $\LLCC$
consisting of the subsets of all {\em closed} $\LAMBDAEXPR$-expressions, {\em closed} contexts $\LAMBDACTXT$, and {\em closed} answers ${\ANSWERS}_{\LCC}$.
\end{definition}

\begin{lemma}\label{lem:convergence-admissible}
The calculus $\LLAZYCC$ is convergence-admissible in the sense of Definition \ref{def:convergence-admissible}, 
where the $Q$-contexts are defined as above.
\end{lemma}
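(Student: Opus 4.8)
The plan is to observe that every context $Q \in Q_\LCC$ is a reduction context of $\LLCC$: $([\cdot]~r)$ has the shape $(R_\LCC~s)$, and both kinds of $\tcase_T$-contexts from Definition~\ref{def:sim-Q-contexts} have the shape $\tcase_T~R_\LCC~\tof~\ialts$ (with $R_\LCC = [\cdot]$ in each case). Moreover, each such $Q$ is a \emph{nonempty} reduction context, so $Q[r]$ is an application or a \tcase-expression, hence never a value, for every expression $r$. It therefore suffices to prove, for such a $Q$ and closed $s$, the equivalence $Q(s)\maycon_\LCC v \iff \exists v': s\maycon_\LCC v' \wedge Q(v')\maycon_\LCC v$, and for this I would use two routine structural facts about $\LLCC$.

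The first fact is that reduction contexts compose: $R_1,R_2\in{\cal R}_{\LCC}\implies R_1\circ R_2\in{\cal R}_{\LCC}$, which is immediate by induction on the grammar of ${\cal R}_{\LCC}$. The second fact is that $\xrightarrow{\LCC}$ commutes with plugging into a reduction context: if $s$ is not a value, then either both $s$ and $R[s]$ are $\xrightarrow{\LCC}$-irreducible, or $s\xrightarrow{\LCC}s'$ and then $R[s]\xrightarrow{\LCC}R[s']$. This holds because the left-hand side of each of the three rules $(\rnbeta),(\rncase),(\rnseq)$ has the form $R'_{\LCC}[\rho]$ with $\rho$ a redex pattern headed by an application, a \tcase{}, or a \tseq{}; so if $s = R'_{\LCC}[\rho]$, then by the composition fact $R[s] = (R\circ R'_{\LCC})[\rho]$ is the unique normal-order decomposition of $R[s]$ and the reduction descends into the hole, while if $s$ is irreducible and not a value then $R[s]$ has no normal-order redex at all.

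With these two facts the argument is short. For ``$\Leftarrow$'': if $s\xrightarrow{\LCC,*}v'$, then $Q(s)\xrightarrow{\LCC,*}Q(v')$ by induction on the reduction length using the second fact, and concatenating with $Q(v')\xrightarrow{\LCC,*}v$ gives $Q(s)\maycon_\LCC v$. For ``$\Rightarrow$'': assume $Q(s)\maycon_\LCC v$. Then $s$ must converge, since otherwise $s$ either has an infinite normal-order reduction or reaches an irreducible non-value, and the second fact lifts this to a divergent computation of $Q(s)$ (using that $Q(r)$ is never a value), contradicting $Q(s)\maycon_\LCC v$. Hence $s\xrightarrow{\LCC,*}v'$ for some value $v'$, which is a closed answer, and $Q(s)\xrightarrow{\LCC,*}Q(v')$. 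Since $\xrightarrow{\LCC}$ is deterministic and $v$ is $\xrightarrow{\LCC}$-irreducible, the evaluation $Q(s)\xrightarrow{\LCC,*}v$ is the unique maximal reduction sequence starting at $Q(s)$, and $Q(s)\xrightarrow{\LCC,*}Q(v')$ is a prefix of it; therefore $Q(v')\xrightarrow{\LCC,*}v$, i.e.\ $Q(v')\maycon_\LCC v$.

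The only mildly delicate step is the second structural fact: one has to check, against the concrete grammar of ${\cal R}_{\LCC}$ and the three $\xrightarrow{\LCC}$-rules, that plugging a non-value into a reduction context neither creates a redex strictly above the hole nor produces a value, so that normal-order evaluation genuinely descends into the hole until the content becomes a value. Given the very small rule set of $\LLCC$ this is a one-line case analysis per production of ${\cal R}_{\LCC}$, and it is precisely this property — valid for the contexts in $Q_\LCC$, but not for arbitrary contexts — that underlies convergence-admissibility.
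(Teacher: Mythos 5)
Your proposal is correct and follows essentially the same route as the paper, which argues in one line that the $Q_\LCC$-contexts are reduction contexts so that any evaluation of $Q[s]$ must first evaluate $s$ to a value $v$ and then continue with $Q[v]$. You merely spell out the details the paper leaves implicit (composition of reduction contexts, lifting of reduction steps into a reduction context around a non-value, and the determinism/prefix argument), which is a faithful elaboration rather than a different method.
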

\begin{proof}
Values in $\LLAZYCC$ are $\LLAZYCC$-WHNFs. The contexts $Q$ are reduction contexts in $\LLAZYCC$.
Hence every reduction of $Q[s]$ will first evaluate $s$ to $v$ and then evaluate $Q[v]$.
\end{proof}

\begin{theorem}\label{thm:llc-Q-equivalence}
$\leb_{\LCC}~ = ~\leb_{\LCC,Q_\LCC}~ =~ \lec_{\LCC,Q_\LCC}$ and  $\leb_{\LCC}^o = ~\leb_{\LCC,Q_\LCC}^o~ =~ \lec_{\LCC,Q_\LCC}^o$
\end{theorem}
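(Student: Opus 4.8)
The plan is to reduce the statement, as far as possible, to results already established, leaving a single genuine co-induction. By Definition~\ref{def:sim-Q-contexts} the relations $\leb_{\LCC,Q_\LCC}$ and $\lec_{\LCC,Q_\LCC}$ are the instantiations of Definitions~\ref{def:Q-gfp-preorder} and~\ref{def:Q-simpl-preorder} for the closed part of $\LLCC$ with ${\cal Q} = Q_\LCC$. Lemma~\ref{lem:convergence-admissible} says this subcalculus is convergence-admissible \wrt\ $Q_\LCC$, so Theorem~\ref{thm:conv-admissibile-implies} immediately gives ${\leb_{\LCC,Q_\LCC}} = {\lec_{\LCC,Q_\LCC}}$, and applying $(\cdot)^o$ to these equal relations on closed expressions gives ${\leb_{\LCC,Q_\LCC}^o} = {\lec_{\LCC,Q_\LCC}^o}$. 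It therefore suffices to prove ${\leb_\LCC} = {\leb_{\LCC,Q_\LCC}}$ as relations on closed expressions; applying $(\cdot)^o$ once more then yields ${\leb_\LCC^o} = {\leb_{\LCC,Q_\LCC}^o}$, which completes both chains.

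The inclusion ${\leb_\LCC} \subseteq {\leb_{\LCC,Q_\LCC}}$ is essentially free: by Theorem~\ref{thm:sim_c-equiv-b} and ${\lec_\LCC} = {\lec_{\LCC,\CONTEXTS}}$, the relation $\leb_\LCC$ coincides on closed expressions with $\closed{\lec_\LCC}$, and since every $Q \in Q_\LCC$, and hence every composition $Q_1\circ\cdots\circ Q_n$ of $Q_\LCC$-contexts, is a particular (closed) $\LLCC$-context, any pair in $\closed{\lec_\LCC}$ also lies in $\lec_{\LCC,Q_\LCC} = \leb_{\LCC,Q_\LCC}$. (Alternatively, one checks directly that $\leb_\LCC$ is a $Q_\LCC$-simulation, using that $\leb_\LCC^o$ is a precongruence and that $s \leb_\LCC t$, $s\maycon_\LCC v_1$ forces $t\maycon_\LCC v_2$ with $v_1 \leb_\LCC^o v_2$ in all cases of Lemma~\ref{lem:fixpoint-constructors}.)

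The real work is the converse ${\leb_{\LCC,Q_\LCC}} \subseteq {\leb_\LCC}$, which I would obtain via the co-induction principle of Proposition~\ref{prop-coinduction} by showing ${\leb_{\LCC,Q_\LCC}} \subseteq F_\LCC(\leb_{\LCC,Q_\LCC})$ on closed expressions. So let $s \leb_{\LCC,Q_\LCC} t$ be closed; if $s \DIV_\LCC$ there is nothing to prove, and otherwise $s \maycon_\LCC v_1$ and, taking zero $Q$-contexts, $t \maycon_\LCC v_2$ for values $v_1, v_2$. One then analyses the shapes of $v_1$ and $v_2$, using two ingredients throughout: convergence-admissibility (Lemma~\ref{lem:convergence-admissible}), which lets a $Q_\LCC$-context be pushed through the evaluations $s \maycon_\LCC v_1$ and $t \maycon_\LCC v_2$; and correctness of $\xrightarrow{\LCC}$-reduction in all contexts (Corollary~\ref{cor:reductioncorrect}), so that for the case-contexts $Q$ of kinds (ii) and (iii) of Definition~\ref{def:sim-Q-contexts} one has $Q(v_1) \simc_\LCC s_i$ (resp.\ $\simc_\LCC \ttrue$), and likewise for $v_2$. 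Concretely: if $v_1 = \lambda x.s'$ and $v_2 = \lambda x.t'$, then applying the contexts $([\cdot]~r)$ for all closed $r$, together with $(\lambda x.s')~r \simc_\LCC s'[r/x]$ and closure of $\lec_{\LCC,Q_\LCC}$ under arbitrary compositions of $Q_\LCC$-contexts, yields $s'[r/x] \leb_{\LCC,Q_\LCC} t'[r/x]$ for every closed $r$, i.e.\ $s' \leb_{\LCC,Q_\LCC}^o t'$, which is what $F_\LCC$ demands; the ``mixed'' combinations ($v_1$ a constructor application and $v_2$ an abstraction, or $v_1 = \lambda x.s'$ with $s' \notin \cBot$ and $v_2$ a constructor application) are impossible, because applying a saturated constructor application to an argument, and $\tcase$-ing on an abstraction, are stuck and hence divergent terms, contradicting preservation of convergence under the appropriate $Q_\LCC$-context — and this same reasoning simultaneously forces $s' \in \cBot$ in the one surviving irregular case of Definition~\ref{def:lazycc-le-b}; finally, if $v_1 = (c_1~\vect{s})$ and $v_2 = (c_2~\vect{t})$, the kind-(iii) contexts force $c_1 = c_2$, and the kind-(ii) contexts together with $Q(v_1) \simc_\LCC s_i$, $Q(v_2) \simc_\LCC t_i$ and Lemma~\ref{lemma:ca-implies-leqQ-over-O} give $s_i \leb_{\LCC,Q_\LCC} t_i$ for all $i$. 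This establishes ${s\ F_\LCC(\leb_{\LCC,Q_\LCC})\ t}$, and co-induction concludes.

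I expect this last inclusion to be the main obstacle: it is where the ``one context at a time'' information carried by $\lec_{\LCC,Q_\LCC}$ has to be converted into the structural data about the reducts ($s'$ versus $t'$, the $s_i$ versus $t_i$, membership in $\cBot$) that the operator $F_\LCC$ requires, and where the irregularity $\lambda x.\Omega \lec_\LCC (c~s_1\ldots s_n)$ must be absorbed without identifying anything it should not.
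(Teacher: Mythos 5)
Your proposal is correct and takes essentially the same route as the paper: convergence-admissibility (Lemma~\ref{lem:convergence-admissible}) together with Theorem~\ref{thm:conv-admissibile-implies} yields $\leb_{\LCC,Q_\LCC}=\lec_{\LCC,Q_\LCC}$, and $\leb_\LCC=\leb_{\LCC,Q_\LCC}$ is obtained by showing each relation satisfies the other's fixpoint condition, with the hard inclusion $\leb_{\LCC,Q_\LCC}\subseteq F_\LCC(\leb_{\LCC,Q_\LCC})$ argued exactly as in the paper (application contexts $([\cdot]~r)$ for the abstraction and $\cBot$ cases, case-contexts of kinds (ii)/(iii) for constructor applications). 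The only cosmetic difference is the easy inclusion: the paper checks directly that $\leb_\LCC$ satisfies the $Q_\LCC$-experiment fixpoint condition, whereas you derive it from Theorem~\ref{thm:sim_c-equiv-b} — and your parenthetical alternative is precisely the paper's argument.
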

\begin{proof}
Theorem \ref{thm:conv-admissibile-implies}. 
shows $\leb_{\LCC,Q_\LCC} ~= ~\lec_{\LCC,Q_\LCC}$ since $\LLAZYCC$ is  convergence-admissible.

\vspace{1mm}

The first equation is proved by showing  that the  relations satisfy the fixpoint equations of the other one in 
Definition \ref{def:lazycc-le-b} and \ref{def:Q-gfp-preorder}, respectively.  
\begin{itemize}
  \item  ${\leb_{\LCC}} \subseteq {F_Q(\leb_{\LCC})}$:  Assume $s \leb_{\LCC} t$ for two closed $s,t$. 
    If $s\maycon_\LCC v$, then $t\maycon_\LCC w$ for values $v,w$. Since reduction preserves $\simb_{\LCC}$, the fixpoint operator
     conditions are satisfied if $v,w$ are both abstractions or both constructor applications. 
    If $v = \lambda x.s'$ with $s' \in \cBot$ and $w = c(t_1,\ldots,t_n)$, $Q(v)$ diverges for all $Q$, hence $s~F_Q(\leb_{\LCC})~t$.
  \item ${\leb_{\LCC,Q_\LCC}} \subseteq {F_\LCC(\leb_{\LCC,Q_\LCC})}$: Assume $s \leb_{\LCC,Q_\LCC} t$. 
    Let $s\maycon_\LCC v$. Then also $t\maycon_\LCC w$ for some value $w$. In the cases that $v,w$ are both abstractions or both constructor applications,
      when using appropriate $Q$ of kind (ii) or (iii), the $F_\LCC$-conditions are satisfied. If  $v = \lambda x.s'$ and $w = c(t_1,\ldots,t_n)$, we have to show
      that $s' \in \cBot$: this can be done using all $Q$-contexts of the form $([]~r)$, since $(w~r)\DIV_\LCC$ in any case. \qedhere
\end{itemize}
\end{proof}

\begin{definition}\label{def:Q-Omega-expr-contexts}
Let $\mathit{CE}_{\LCC}$ be the following set of closed $\LAMBDAEXPR$-expressions built from constructors, $\Omega$, and 
closed abstractions. 
These can be constructed according to the grammar: 
   $$r \in \mathit{CE}_{\LCC} ::= \Omega~|~\lambda x. s~|~(c~r_1 \ldots r_{\ari(c)})$$
   where $\lambda x.s$ is any closed $\LAMBDAEXPR$-expression.

The set ${Q}_{\mathit{CE}}$ is defined like the set ${Q_\LCC}$ in Definition \ref{def:sim-Q-contexts}, 
but only expressions $r$ from $\mathit{CE}_{\LCC}$ are taken into account
in the contexts $([\cdot]~r)$ in (i).
\end{definition}

We summarize several alternative definitions for contextual preorder and applicative simulation for $\LLCC$, where we also include  further alternatives.
\eqref{theo-main-0} is contextual preorder, \eqref{theo-main-1} the applicative simulation,
\eqref{theo-main-2}, \eqref{theo-main-3} and \eqref{theo-main-4} are similar to the usual call-by-value variants, where
\eqref{theo-main-3} and \eqref{theo-main-4} separate the closing part of contexts,
where \eqref{theo-main-4} can be seen as bridging the gap between call-by-need and  call-by-name. \eqref{theo-main-6}  is the ${\cal Q}$-similarity,
\eqref{theo-main-7}  is a further improved inductive ${\cal Q}$-simulation by restricting the set of test arguments for abstractions, 
and \eqref{theo-main-8} is the co-inductive version of \eqref{theo-main-7}.

\begin{theorem}\label{theorem:bisim-alternatives}
 In $\LLAZYCC$, all the following relations on open $\LAMBDAEXPR$-expressions are identical:
\begin{enumerate}
\item\label{theo-main-0} $\lec_{\LCC}$.
\item\label{theo-main-1} $\leb_{\LCC}^o$. 
\item\label{theo-main-2} The relation $\lec_{\LCC,1}$ defined by $s_1 \lec_{\LCC,1} s_2$ iff for all closing contexts $C$: $C[s_1]\maycon_{\LCC} \implies C[s_2]\maycon_{\LCC}$.
\item\label{theo-main-3} The relation $\lec_{\LCC,2}$, defined as: $s_1 \lec_{\LCC,2} s_2$ iff for all closed contexts $C$ and all closing substitutions $\sigma$: 
$C[\sigma(s_1)]\maycon_{\LCC} \implies C[\sigma(s_2)]\maycon_{\LCC}$.
\item\label{theo-main-4} The relation $\lec_{\LCC,3}$, defined as: $s_1 \lec_{\LCC,3} s_2$ iff for all multi-contexts $M[\cdot,\ldots,\cdot]$ and all 
substitutions $\sigma$: $M[\sigma(s_1),\ldots,\sigma(s_1)]\maycon_{\LCC} \implies M[\sigma(s_2),\ldots,\sigma(s_2)]\maycon_{\LCC}$.
\item\label{theo-main-4a} The relation $\lec_{\LCC,4}$, defined as: $s_1 \lec_{\LCC,4} s_2$ iff for all contexts $C[\cdot]$ and all 
substitutions $\sigma$: $C[\sigma(s_1)]\maycon_{\LCC} \implies C[\sigma(s_2)]\maycon_{\LCC}$.
\item\label{theo-main-6} 
  $\lec_{\LCC,Q_\LCC}^o$ 
\item\label{theo-main-7} 
The relation   $\lec_{\LCC,Q_{\mathit{CE}}}^o$  where $\lec_{\LCC,Q_{\mathit{CE}}}$ is 
defined as in Definition \ref{def:Q-simpl-preorder}  instantiated by the closed part of $\LLCC$ 
and by the set $Q_{\mathit{CE}}$ in  Definition \ref{def:Q-Omega-expr-contexts}.
\item\label{theo-main-8} The relation $\leb_{\LCC,Q_{\mathit{CE}}}^o$   
is defined  as in Definition \ref{def:Q-gfp-preorder} 
instantiated by the closed part of $\LLCC$ 
and by set $Q_{\mathit{CE}}$ in  Definition \ref{def:Q-Omega-expr-contexts}.
\end{enumerate}
\end{theorem}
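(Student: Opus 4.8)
The plan is to reduce the whole statement to the equalities
${\lec_\LCC} = {\leb_\LCC^o} = {\leb_{\LCC,Q_\LCC}^o} = {\lec_{\LCC,Q_\LCC}^o}$
already established in Theorems~\ref{thm:sim_c-equiv-b} and~\ref{thm:llc-Q-equivalence}, which settle \eqref{theo-main-0}, \eqref{theo-main-1} and \eqref{theo-main-6}. Throughout I would use three facts: (a) $\lec_\LCC$ is a precongruence (immediate from ${\lec_\LCC} = {\leb_\LCC^o}$ and the precongruence of $\leb_\LCC^o$) and is preserved by substitutions (Lemma~\ref{lem:open-closed-rel-general} applied to $\leb_\LCC$); (b) every $\xrightarrow{\LCC}$-reduction is a correct transformation in an arbitrary context (Corollary~\ref{cor:lcc-red-rules-correct}), so iterated $(\rnbeta)$ lets us replace $C[(\lambda\vec z.u)\,\vec r\,]$ by $C[u[\vec r/\vec z\,]]$ without changing convergence; and (c) replacing the free variables of an expression by $\Omega$ does not change its convergence (the observation used in the proof of Lemma~\ref{lem:PFI-leb-impl-lec}), together with $\Omega$ being $\lec_\LCC$-least (Proposition~\ref{prop-classification-lcc}).

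For the contextual variants \eqref{theo-main-2}--\eqref{theo-main-4a} I would show each equals $\lec_\LCC$ by sandwiching. The inclusions into $\lec_\LCC$ are easy: $\lec_{\LCC,4}\subseteq\lec_\LCC$ by taking $\sigma=\mathit{Id}$; $\lec_{\LCC,3}\subseteq\lec_{\LCC,4}$ since a one-hole context is a multi-context; and $\lec_\LCC\subseteq\lec_{\LCC,1}$ since closing contexts are contexts. For the converse inclusions: $\lec_\LCC\subseteq\lec_{\LCC,4}$ by encoding the substitution $\sigma$ as a $\beta$-prefix $(\lambda\vec z.[\cdot])\,\vec r$ and using (b) and (a); $\lec_\LCC\subseteq\lec_{\LCC,3}$ by filling the holes of a multi-context one at a time with the help of (a) and $\sigma(s_1)\lec_\LCC\sigma(s_2)$; and $\lec_{\LCC,1}\subseteq\lec_\LCC$ by replacing an arbitrary context $C$ with the closing context $(\lambda\vec v.\,C[\cdot])\,\vec\Omega$, where $\vec v$ lists the free variables of $C[s_1]$ and $C[s_2]$, which does not affect the convergence of $C[s_i]$ by (b) and (c). Finally $\lec_{\LCC,2}$ equals $\lec_\LCC$ by reordering its two universal quantifiers: $s_1\lec_{\LCC,2}s_2$ says that for every closing substitution $\sigma$ the closed expressions $\sigma(s_1),\sigma(s_2)$ are related by the contextual preorder relative to closed contexts, which on closed expressions coincides with $\lec_\LCC$ by (c) and hence with $\leb_\LCC$ by Theorem~\ref{thm:sim_c-equiv-b}; thus $\lec_{\LCC,2}$ is exactly the defining condition of ${\leb_\LCC^o} = {\lec_\LCC}$.

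It remains to treat the restricted relations \eqref{theo-main-7} and \eqref{theo-main-8}. Since $\mathit{CE}_\LCC$ consists of closed expressions, $Q_{\mathit{CE}}\subseteq Q_\LCC$, and as both $F_{D,\mathcal Q}$ and $\lec_{D,\mathcal Q}$ are antitone in $\mathcal Q$, this gives $\lec_{\LCC,Q_\LCC}^o\subseteq\lec_{\LCC,Q_{\mathit{CE}}}^o$ and $\leb_{\LCC,Q_\LCC}^o\subseteq\leb_{\LCC,Q_{\mathit{CE}}}^o$; with the first step, ${\lec_\LCC}\subseteq{\lec_{\LCC,Q_{\mathit{CE}}}^o}$ and ${\lec_\LCC}\subseteq{\leb_{\LCC,Q_{\mathit{CE}}}^o}$. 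Moreover every $Q_{\mathit{CE}}$-context is a reduction context, so the closed part of $\LLAZYCC$ is convergence-admissible \wrt\ $Q_{\mathit{CE}}$ (the proof of Lemma~\ref{lem:convergence-admissible} goes through verbatim), and Theorem~\ref{thm:conv-admissibile-implies} yields ${\leb_{\LCC,Q_{\mathit{CE}}}^o} = {\lec_{\LCC,Q_{\mathit{CE}}}^o}$. Hence everything reduces to the single inclusion ${\lec_{\LCC,Q_{\mathit{CE}}}^o}\subseteq{\lec_\LCC}$.

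For this I would argue contrapositively. If $s_1\not\lec_\LCC s_2$, then using ${\lec_\LCC} = {\lec_{\LCC,Q_\LCC}^o}$ there are a closing substitution $\sigma$ and contexts $Q_1,\dots,Q_n\in Q_\LCC$ with $Q_1(\dots Q_n(\sigma(s_1)))\maycon_\LCC$ and $Q_1(\dots Q_n(\sigma(s_2)))\DIV_\LCC$. The only $Q_i$ that need not lie in $Q_{\mathit{CE}}$ are the application contexts $([\cdot]\,r)$ with $r\notin\mathit{CE}_\LCC$; for each such $r$ and each $N\in\mathbb{N}$ I would define a canonical truncation $r^{(N)}\in\mathit{CE}_\LCC$ by reducing $r$ to WHNF and returning $\Omega$ if the reduction diverges, the abstraction itself if it is an abstraction, and otherwise the constructor application with each argument replaced by its $(N-1)$-truncation, with $r^{(0)}$ cutting all constructor arguments to $\Omega$. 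By Corollary~\ref{cor:reductioncorrect} and $\Omega$-leastness, $r^{(N)}\lec_\LCC r$ for every $N$, so replacing each $r$ by $r^{(N)}$ can only weaken convergence; in particular the $\sigma(s_2)$-side still diverges. The crucial point is that the terminating reduction of $Q_1(\dots Q_n(\sigma(s_1)))$ is finite and therefore forces only a bounded-depth prefix of each of the finitely many arguments $r$, so that for $N$ large enough this reduction is mimicked step for step after truncation and $Q_1(\dots Q_n(\sigma(s_1)))$ still converges. This produces a distinguishing test over $Q_{\mathit{CE}}$, \ie\ $s_1\not\lec_{\LCC,Q_{\mathit{CE}}}^o s_2$. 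I expect this compactness-type lemma --- that a converging call-by-name evaluation inspects each of its subexpressions only to a bounded constructor depth, and that truncating below that depth leaves the evaluation intact --- to be the main obstacle; once it is in hand, all the relations \eqref{theo-main-0}--\eqref{theo-main-8} coincide.
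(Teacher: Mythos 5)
Your proposal is correct and follows essentially the same route as the paper: items \eqref{theo-main-0}, \eqref{theo-main-1}, \eqref{theo-main-6} from Theorems~\ref{thm:sim_c-equiv-b} and~\ref{thm:llc-Q-equivalence}; the contextual variants \eqref{theo-main-2}--\eqref{theo-main-4a} via the $\beta$-prefix encoding of substitutions, $\Omega$-closure of contexts, and hole-by-hole treatment of multi-contexts; \eqref{theo-main-7}${}={}$\eqref{theo-main-8} via convergence-admissibility for $Q_{\mathit{CE}}$; and \eqref{theo-main-6}${}={}$\eqref{theo-main-7} by replacing the closed arguments $r$ of the $([\cdot]~r)$-contexts by $\mathit{CE}_{\LCC}$-approximants that are $\lec_{\LCC}$-below them. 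The ``compactness'' lemma you flag as the main remaining obstacle is exactly the step the paper discharges at that point: each $r$ is first reduced (possibly non-normal-order) so that its remaining redexes sit only below depth $m+1$, where $m$ is the length of the given converging normal-order reduction, and is then cut off with $\Omega$ below that depth, so the convergent evaluation never inspects the truncated part --- i.e.\ your expected lemma is true and is precisely what the paper's argument supplies.
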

\begin{proof*} 
\begin{itemize}
\item \eqref{theo-main-0} $\iff$ \eqref{theo-main-1} $\iff$ \eqref{theo-main-6}: This is Theorem \ref{thm:sim_c-equiv-b} and Theorem \ref{thm:llc-Q-equivalence}        
\item \eqref{theo-main-0} $\iff$ \eqref{theo-main-2}: The ``$\Rightarrow$''-direction is obvious. 
For the other direction let $s_1 \lec_{\LCC,1} s_2$ and let $C$ be a context such that 
$\emptyset \not= \FV(C[s_1]) \cup \FV(C[s_2]) = \{x_1,\ldots,x_n\}$ and let
$C[s_1]\maycon_{\LCC}$, \ie\ $C[s_1]\xrightarrow{\LCC,*}v$ where $v$ is an abstraction or a constructor application.
Let $C' = (\lambda x_1,\ldots,x_n.C)~\underbrace{\Omega\ldots\Omega}_{\text{$n$-times}}$.
Then $C'[s_i] \xrightarrow{\LCC,\rnbeta,*} s_i' = C[s_i][\Omega/x_1,\ldots,\Omega/x_n]$ for $i=1,2$.
It is easy to verify that the reduction for $C[s_1]$ can also be performed for $s_i'$, since no reduction in the sequence
$C[s_1]\xrightarrow{\LCC,*}v$ can be of the form $R[x_i]$ with $R$ being a reduction context. 
Thus $C'[s_i] \maycon_{\LCC}$. Since $C'[s_i]$ must be closed for $i=1,2$,  the precondition implies $C'[s_2]\maycon_{\LCC}$ and also
$s_2'\maycon_{\LCC}$. W.l.o.g. let $s_2' \xrightarrow{\LCC,*}  v'$ where $v'$ is an $\LLCC$-WHNF. 
It is easy to verify that no term in this sequence
can be of the form $R[\Omega]$,  where $R$ is a reduction context, since otherwise the reduction would not terminate 
(since $R[\Omega]\xrightarrow{\LCC,+}R[\Omega]$). This implies that we can replace the $\Omega$-expression by the free variables, \ie\ 
that $C[s_2]\maycon_{\LCC}$. 
Note that this also shows by the previous items (and Corollary~\ref{cor:lcc-red-rules-correct}) that $(\rnbeta)$ is correct for $\simc_{\LCC}$.
\item \eqref{theo-main-0} $\iff$ \eqref{theo-main-3}:  This follows from Corollary \ref{cor:lcc-red-rules-correct}
since closing substitutions can be simulated by a context with subsequent (\rnbeta)-reduction.
This also implies that $(\rnbeta)$ is correct for  $\simc_{\LCC,2}$ and by the previous item it also correct for $\simc_{\LCC,1}$
(where $\simc_{\LCC_i}~ = ~\lec_{\LCC,i} \cap \gec_{\LCC,i}$).
\item  \eqref{theo-main-4a} $\iff$ \eqref{theo-main-0}
The  direction ``$\implies$" is trivial. For the other direction let $s_1 \lec_{\LCC} s_2$
and let $C$ be a context, $\sigma$ be a substitution, such that $C[\sigma(s_1)]\maycon_{\LCC}$.
Let $\sigma=\{x_1 \to t_1,\ldots x_n \to t_n\}$ and let 
$C' = C[(\lambda x_1,\ldots,x_n.[\cdot])~t_1~\ldots~t_n]$. Then $C'[s_1]\xrightarrow{\rnbeta,n} C[\sigma(s_1)]$.
Since $(\rnbeta)$-reduction is correct for $\simc_{\LCC}$, 
we have $C'[s_1]\maycon_{\LCC}$. Applying $s_1 \lec_{\LCC} s_2$ 
yields $C'[s_2]\maycon_{\LCC}$. Since $C'[s_2]\xrightarrow{\rnbeta,n} C[\sigma(s_2)]$ and $(\rnbeta)$ is correct for $\simc_{\LCC}$, we have $C[\sigma(s_2)]\maycon_{\LCC}$.
\item \eqref{theo-main-4} $\iff$ \eqref{theo-main-4a}:
Obviously, $s_1 \lec_{\LCC,3} s_2 \implies  s_1 \lec_{\LCC,4} s_2$. We show the other direction by induction on $n$ -- the number of holes in $M$ -- 
that for all $\LAMBDAEXPR$-expressions $s_1,s_2$:
$s_1 \lec_{\LCC,4} s_2$ implies $M[\sigma(s_1),\ldots \sigma(s_1)]\maycon_{\LCC} \implies M[\sigma(s_2),\ldots \sigma(s_2)]\maycon_{\LCC}$.

The base cases for $n=0,1$ are obvious. For the induction step assume that $M$ has $n > 1$ holes.
Let $M' = M[\sigma(s_1),\cdot_2,\ldots,\cdot_n]$
and $M'' = M[\sigma(s_2),\cdot_2,\ldots,\cdot_n]$.
Then obviously $M'[\sigma(s_1),\ldots,\sigma(s_1)] = M[\sigma(s_1),\ldots,\sigma(s_1)]$ and thus 
$M'[\sigma(s_1),\ldots,\sigma(s_1)]\maycon_{\LCC}$. For $C = M[\cdot_1,\sigma(s_1),\ldots,\sigma(s_1)]$ we have
$C[\sigma(s_1)] = M'[\sigma(s_1),\ldots,\sigma(s_1)]$
and also $C[\sigma(s_2)]=M''[\sigma(s_1),\ldots,\sigma(s_1)]$.
Since $C[\sigma(s_1)]\maycon_{\LCC}$, the relation $s_1 \lec_{\LCC,4} s_2$ implies that
$C[\sigma(s_2)]\maycon_{\LCC}$ and hence $M''[\sigma(s_1),\ldots,\sigma(s_1)]\maycon_{\LCC}$. 
Now, since the number of holes of $M''$ is strictly smaller than $n$, the induction hypothesis show that $M''[\sigma(s_2),\ldots,\sigma(s_2)]\maycon_{\LCC}$.
Because of $M''[\sigma(s_2),\ldots,\sigma(s_2)] = M[\sigma(s_2),\ldots,\sigma(s_2)]$ we have $M[\sigma(s_2),\ldots,\sigma(s_2)]\maycon_{\LCC}$.
\item \eqref{theo-main-6} $\iff$ \eqref{theo-main-7}: The direction \eqref{theo-main-6} $\implies$ \eqref{theo-main-7} is trivial. 

For the other direction we show that $\lec_{\LCC,Q_{\mathit{CE}}}^o~ \subseteq~ \lec_{\LCC,Q_\LCC}^o$ by showing  that the
inclusion $\lec_{\LCC,Q_{\mathit{CE}}} ~ \subseteq~ \lec_{\LCC,Q_\LCC}$ holds.
 Let $s_1, s_2$ be closed expressions with $s_1 \lec_{\LCC,Q_{\mathit{CE}}} s_2$
and let $Q_1[\ldots Q_n[s_1]\ldots]\maycon_{\LCC}$ for  $Q_i \in Q_\LCC$.
Let $m$ be the number of normal-order-reductions of  $Q_1[\ldots Q_n[s_1]\ldots]$ to an $\LLCC$-WHNF.
Since the reduction rules are correct \wrt\ $\simc_{\LCC}$, for every subexpression $r$ of the contexts $Q_i$,
there is some $r'$ with $r' \lec_{\LCC} r$,  where $r' \in Q_{\mathit{CE}}$, which is derived from $r$ by (top-down)-reduction, 
which may also be non-normal order, \ie\ $r \xrightarrow{\LCC,*} r_{m+1}$  
where $r_{m+1}$ has reducible subexpressions (not in an abstraction) only at depth at least $m+1$. 
All those deep subexpressions are then replaced by $\Omega$,  and this construction results in $r'$. By construction, $r' \lec_{\LCC} r$. 
We do this for all the contexts $Q_i$, and obtain thus contexts $Q_i'$. The construction using the depth $m$ shows that 
 $(Q_1'[\ldots[Q_n'[s_1]]])\maycon_{\LCC}$, since the normal-order reduction does not use subexpressions at depth greater than $m$ in those $r'$. 
By assumption, this implies  $(Q_1'[\ldots[Q_n'[s_2]]])\maycon_{\LCC}$, 
and since $(Q_1'[\ldots[Q_n'[s_2]]])  \lec_{\LCC}  (Q_1'[\ldots[Q_n'[s_2]]])$, this also implies   $(Q_1[\ldots[Q_n[s_2]]])$.

\item  \eqref{theo-main-7} $\iff$ \eqref{theo-main-8}: This follows for the relations on closed expressions 
by Theorem~\ref{thm:conv-admissibile-implies}, since the deterministic calculus (see Def.~\ref{def:calculus}) 
  for $\LLAZYCC$ with $Q_{\mathit{CE}}$ 
as defined above 
is convergence-admissible. It also holds for the extensions to open expressions, since the construction for the open extension is identical for both relations.
\qedhere
\end{itemize}
\end{proof*}
\noindent Also the following can easily be derived from Theorem \ref{theorem:bisim-alternatives} and 
Corollary  \ref{cor:lcc-red-rules-correct}.
\begin{proposition}\label{prop:open-closed-case}
  For open $\LAMBDAEXPR$-expressions $s_1,s_2$, where all free variables of $s_1,s_2$ are in $\{x_1,\ldots,x_n\}$: 
  $s_1 \lec_{\LCC}  s_2 \iff \lambda x_1, \ldots x_n.s_1  \lec_{\LCC}  \lambda x_1, \ldots x_n.s_2$ \qed
\end{proposition}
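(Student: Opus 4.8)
The plan is to prove the two implications separately, using only that $\lec_{\LCC}$ is a precongruence (the general remark following the definition of contextual approximation, which also follows from $\lec_{\LCC} = \leb_{\LCC}^o$, Theorem~\ref{thm:sim_c-equiv-b}, together with the precongruence property of $\leb_{\LCC}^o$) and the correctness of $(\rnbeta)$ in arbitrary contexts (Corollary~\ref{cor:reductioncorrect}).

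For ``$\Rightarrow$'': assume $s_1 \lec_{\LCC} s_2$. Since $\lec_{\LCC}$ is a precongruence, $C[s_1] \lec_{\LCC} C[s_2]$ for every context $C$; taking $C := \lambda x_1 \ldots x_n.[\cdot]$ gives $\lambda x_1 \ldots x_n.s_1 \lec_{\LCC} \lambda x_1 \ldots x_n.s_2$. There is nothing more to do in this direction.

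For ``$\Leftarrow$'': assume $\lambda x_1 \ldots x_n.s_1 \lec_{\LCC} \lambda x_1 \ldots x_n.s_2$ and let $C$ be an arbitrary context with $C[s_1]\maycon_{\LCC}$; the goal is $C[s_2]\maycon_{\LCC}$. I would pick fresh variables $z_1,\ldots,z_n$, so that $\lambda x_1 \ldots x_n.s_i$ (a closed expression, since $\FV(s_i) \subseteq \{x_1,\ldots,x_n\}$) equals, up to $\alpha$-equivalence, $\lambda z_1 \ldots z_n.\,s_i[\vect{z}/\vect{x}]$ where $s_i[\vect{z}/\vect{x}]$ abbreviates $s_i[z_1/x_1,\ldots,z_n/x_n]$, and then form the context $D := C[([\cdot]~x_1~\ldots~x_n)]$. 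Then $D[\lambda z_1\ldots z_n.\,s_i[\vect{z}/\vect{x}]] = C[((\lambda z_1 \ldots z_n.\,s_i[\vect{z}/\vect{x}])~x_1~\ldots~x_n)]$, and the bracketed subexpression reduces in $n$ $(\rnbeta)$-steps to $s_i$. By Corollary~\ref{cor:reductioncorrect} each such step is correct for $\simc_{\LCC}$ in an arbitrary context, and $\simc_{\LCC}$ is a congruence, so $D[\lambda z_1 \ldots z_n.\,s_i[\vect{z}/\vect{x}]] \simc_{\LCC} C[s_i]$ for $i=1,2$. Hence $C[s_1]\maycon_{\LCC}$ gives $D[\lambda z_1 \ldots z_n.\,s_1[\vect{z}/\vect{x}]]\maycon_{\LCC}$; applying the hypothesis (in its $\alpha$-renamed form) with the context $D$ yields $D[\lambda z_1 \ldots z_n.\,s_2[\vect{z}/\vect{x}]]\maycon_{\LCC}$; and the equivalence with $C[s_2]$ finally gives $C[s_2]\maycon_{\LCC}$. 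Since $C$ was arbitrary, $s_1 \lec_{\LCC} s_2$.

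The proposition is routine, so there is no serious obstacle; the only points that need care are (a) separating the bound occurrences of $x_1,\ldots,x_n$ in $\lambda x_1 \ldots x_n.s_i$ from the free occurrences that $C$ may capture — handled by $\alpha$-renaming the bound copies to the fresh $z_j$, which then vanish after $\beta$-reduction, leaving exactly $C[s_i]$ — and (b) the appeal to correctness of $(\rnbeta)$ in a \emph{non}-reduction context, which is precisely what Corollary~\ref{cor:reductioncorrect} provides. Alternatively, the statement can be read off from item~\eqref{theo-main-4a} of Theorem~\ref{theorem:bisim-alternatives}, but the direct argument above is self-contained.
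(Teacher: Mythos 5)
Your proof is correct and uses exactly the ingredients the paper itself points to for this proposition (which it states without a written proof, asserting it follows from Theorem~\ref{theorem:bisim-alternatives} and Corollary~\ref{cor:lcc-red-rules-correct}): the precongruence property of $\lec_{\LCC}$ for the easy direction, and correctness of $(\rnbeta)$ in arbitrary contexts to collapse $C[((\lambda x_1 \ldots x_n.s_i)~x_1 \ldots x_n)]$ to $C[s_i]$ for the converse. The $\alpha$-renaming to fresh $z_j$ and the capture bookkeeping are handled properly, so there is nothing to add.
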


\begin{proposition}\label{prop:finite-simulation-LLC}
Given any two closed $\LAMBDAEXPR$-expressions $s_1,s_2$: \mbox{$s_1 \lec_{\LCC}  s_2$} iff the following conditions hold:
\begin{itemize}
 \item If $s_1\maycon_\LCC \lambda x.s_1'$, then either (i) $s_2\maycon_\LCC \lambda x.s_2'$, and 
      for all (closed) $r \in \mathit{CE}_{\LCC}$:
       $s_1~r  \lec_{\LCC}   s_2~r$, or (ii) $s_2\maycon_\LCC (c~s_1'' \ldots s_n'')$ and $s_1' \in \cBot$. 
  \item if $s_1 \maycon_\LCC (c~s_1' \ldots s_n')$, then  $s_2 \maycon_\LCC (c~s_1'' \ldots s_n'')$, and  for all $i: 
  s_i' \lec_{\LCC}  s_i''$
\end{itemize}  
\end{proposition}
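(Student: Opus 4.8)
The plan is to reduce the claim to the fixpoint equation for the similarity $\leb_{\LCC}$ of Definition~\ref{def:lazycc-le-b}, using Theorem~\ref{thm:sim_c-equiv-b} ($\leb_{\LCC}^o = \lec_{\LCC}$), which in particular means $\lec_{\LCC}$ and $\leb_{\LCC}$ coincide on closed expressions. Since $\leb_{\LCC}$ is a fixpoint of $F_{\LCC}$, for closed $s_1,s_2$ the relation $s_1 \lec_{\LCC} s_2$ is equivalent to the conjunction of: (a) if $s_1 \maycon_\LCC \lambda x.s_1'$ then either $s_2 \maycon_\LCC \lambda x.s_2'$ with $s_1' \leb_{\LCC}^o s_2'$, or $s_2 \maycon_\LCC (c~\vect{t})$ with $s_1' \in \cBot$; and (b) if $s_1 \maycon_\LCC (c~s_1'\ldots s_n')$ then $s_2 \maycon_\LCC (c~s_1''\ldots s_n'')$ with $s_i' \leb_{\LCC} s_i''$ for all $i$. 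Clause (b) is literally the second bullet of the proposition (recall $\leb_{\LCC}=\lec_{\LCC}$ on closed terms), and the alternative ``$s_1' \in \cBot$'' in (a) is literally clause~(ii). Moreover, if $s_1 \DIV_\LCC$ (equivalently $s_1 \simc_{\LCC}\Omega$ by Proposition~\ref{prop-classification-lcc}), then both the proposition's conditions and $s_1 \lec_{\LCC} s_2$ hold vacuously, so all that remains is to rewrite the surviving sub-case ``$s_2 \maycon_\LCC \lambda x.s_2'$ and $s_1' \leb_{\LCC}^o s_2'$'' of (a) into ``$s_2 \maycon_\LCC \lambda x.s_2'$ and $\forall r \in \mathit{CE}_{\LCC}\colon s_1~r \lec_{\LCC} s_2~r$'', i.e.\ clause~(i).

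For this rewriting I would argue as follows. First, $s_1' \leb_{\LCC}^o s_2'$ iff $\lambda x.s_1' \leb_{\LCC} \lambda x.s_2'$: the $\Rightarrow$-direction since $\leb_{\LCC}^o$ is a precongruence, the $\Leftarrow$-direction by unfolding $F_{\LCC}$ on two abstractions. By Theorem~\ref{thm:sim_c-equiv-b} and Proposition~\ref{prop:open-closed-case} this is in turn equivalent to $\lambda x.s_1' \lec_{\LCC} \lambda x.s_2'$, and by Theorem~\ref{theorem:bisim-alternatives} (item~\eqref{theo-main-8}) to $\lambda x.s_1' \leb_{\LCC,Q_{\mathit{CE}}} \lambda x.s_2'$, where $\leb_{\LCC,Q_{\mathit{CE}}}$ is the greatest fixpoint of $F_{D,Q_{\mathit{CE}}}$ on the closed part of $\LLCC$. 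Unfolding that fixpoint equation on two abstractions, and using that the only contexts in $Q_{\mathit{CE}}$ not mapping an abstraction into $\cBot$ are the $([\cdot]~r)$ with $r\in\mathit{CE}_{\LCC}$ (the $\tcase$-contexts of $Q_{\mathit{CE}}$ applied to an abstraction lie in $\cBot$, hence diverge, cf.\ the remark after Lemma~\ref{lem:fixpoint-cBot}), this is equivalent to $\forall r\in\mathit{CE}_{\LCC}\colon (\lambda x.s_1')~r \lec_{\LCC} (\lambda x.s_2')~r$. Finally, since $s_i \xrightarrow{\LCC,*}\lambda x.s_i'$, Corollary~\ref{cor:lcc-red-rules-correct} gives $s_i~r \simc_{\LCC} (\lambda x.s_i')~r$ in every such application context, so this is equivalent to $\forall r\in\mathit{CE}_{\LCC}\colon s_1~r \lec_{\LCC} s_2~r$, which is exactly clause~(i); chaining the equivalences closes the proof.

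The case analyses on the head constructor of the WHNFs and the bookkeeping around open/closed extensions and reduction-correctness are routine. The one genuinely delicate point is the step that replaces testing an abstraction against \emph{all} closed arguments by testing it only against arguments drawn from the restricted set $\mathit{CE}_{\LCC}$; this is precisely the content of Theorem~\ref{theorem:bisim-alternatives}, item~\eqref{theo-main-8} (equivalently~\eqref{theo-main-7}), so I would simply invoke it rather than redo the ``truncate at reduction depth'' argument from its proof. Care is also needed with the irregular mixed case $\lambda x.s_1' \lec_{\LCC} (c~\vect{t})$: here the condition $s_1'\in\cBot$ must be read off the definition of $\leb_{\LCC}$ directly (Definition~\ref{def:lazycc-le-b}, Lemma~\ref{lem:fixpoint-cBot}) rather than produced by a $Q_{\mathit{CE}}$-test, since $\mathit{CE}_{\LCC}$-arguments alone need not witness $\cBot$-membership; keeping the $\cBot$-clause untouched from (a) sidesteps that issue entirely.
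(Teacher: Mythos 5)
Your proposal is correct and follows essentially the paper's route: the paper likewise obtains the statement as a direct consequence of the already-established coincidences (Theorem~\ref{theorem:bisim-alternatives}, hence the Howe-based results) together with the precongruence property of $\lec_{\LCC}$ and correctness of reductions, its proof being only a two-line sketch of exactly these ingredients. Your chain of equivalences through the fixpoint equations of $F_{\LCC}$ and of the $Q_{\mathit{CE}}$-similarity, including the observation that the $\tcase$-contexts of $Q_{\mathit{CE}}$ send abstractions into $\cBot$ and the careful retention of the $\cBot$-clause, is a correctly fleshed-out version of that sketch.
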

\begin{proof}
The if-direction follows from the congruence property of  $\lec_{\LCC} $ and the correctness of reductions. 
The only-if direction follows from Theorem \ref{theorem:bisim-alternatives}.
\end{proof}
\noindent This immediately implies
\begin{proposition}\label{prop:finite-simulation-explicit-LLC} Given any two closed $\LAMBDAEXPR$-expressions $s_1,s_2$. 
 \begin{itemize}
    \item If $s_1,s_2$ are abstractions, then 
  $s_1 \lec_{\LCC}  s_2$ iff for all closed $r \in \mathit{CE}_{\LCC}$: 
$s_1~r  \lec_{\LCC}   s_2~r$
 \item If $s_1 = (c~t_1 \ldots t_n)$ and $s_2 = (c'~t_1' \ldots t_{m}')$  are constructor expressions, then $s_1 \lec_{\LCC}  s_2$ 
   iff $c = c'$, $n=m$ and  for all $i: t_i \lec_{\LCC}  t_i'$ \qed
\end{itemize}  
\end{proposition}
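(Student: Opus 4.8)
The plan is to read both equivalences off Proposition~\ref{prop:finite-simulation-LLC} by exploiting that $s_1$ and $s_2$, being an abstraction or a constructor application, are already $\LLCC$-WHNFs; hence $s_1 \maycon_\LCC s_1$ and $s_2 \maycon_\LCC s_2$, and since $\xrightarrow{\LCC}$ is deterministic these are the unique values to which $s_1$ and $s_2$ converge. So the whole statement is just the special case of Proposition~\ref{prop:finite-simulation-LLC} in which the tested expressions are themselves weak head normal forms.

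First take $s_1 = \lambda x.s_1'$ and $s_2 = \lambda x.s_2'$. For ``$\Rightarrow$'', suppose $s_1 \lec_{\LCC} s_2$. Since $s_1 \maycon_\LCC \lambda x.s_1'$, the first bullet of Proposition~\ref{prop:finite-simulation-LLC} applies; its alternative~(ii) is ruled out because it would require $s_2 \maycon_\LCC (c~\ldots)$, which is impossible for an abstraction (cf.\ also Proposition~\ref{prop-classification-lcc}). Hence alternative~(i) holds, which in particular gives $s_1~r \lec_{\LCC} s_2~r$ for all closed $r \in \mathit{CE}_{\LCC}$. For ``$\Leftarrow$'', assume $s_1~r \lec_{\LCC} s_2~r$ for all closed $r \in \mathit{CE}_{\LCC}$. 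Then the first bullet of Proposition~\ref{prop:finite-simulation-LLC} is satisfied via alternative~(i) (using $s_2 \maycon_\LCC \lambda x.s_2'$ together with the assumption), and the second bullet holds vacuously since $s_1$ is an abstraction and hence does not converge to a constructor application. Therefore $s_1 \lec_{\LCC} s_2$.

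Now take $s_1 = (c~t_1 \ldots t_n)$ and $s_2 = (c'~t_1' \ldots t_m')$. Here the first bullet of Proposition~\ref{prop:finite-simulation-LLC} is vacuous, and the second bullet states exactly that $s_1 \lec_{\LCC} s_2$ holds iff $s_2 \maycon_\LCC (c~s_1'' \ldots s_n'')$ with $t_i \lec_{\LCC} s_i''$ for all $i$. By uniqueness of the value of $s_2$, namely $(c'~t_1' \ldots t_m')$, this is equivalent to $c = c'$, $n = m$, and $t_i \lec_{\LCC} t_i'$ for all $i$, as claimed. I do not expect a genuine obstacle: the only points requiring attention are that the ``mixed'' clause (abstraction below a constructor application) of Proposition~\ref{prop:finite-simulation-LLC} cannot occur once $s_2$ is known to be an abstraction, and that the constructor clause pins down $c$, $n$, and the argument relations on the nose because $s_1$ and $s_2$ are already in WHNF.
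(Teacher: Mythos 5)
Your proposal is correct and matches the paper's proof, which simply notes that Proposition~\ref{prop:finite-simulation-explicit-LLC} follows immediately from Proposition~\ref{prop:finite-simulation-LLC}; you have merely spelled out the specialization to expressions that are already WHNFs, including the ruling-out of the mixed abstraction/constructor clause via determinism of reduction.
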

\noindent We finally consider a more relaxed notion of similarity which allows to use known contextual equivalences as intermediate
steps when proving similarity of expressions:

\begin{definition}[Similarity up to $\simc_\LCC$]\label{def:lcc-sim-upto}
Let $\leb_{\LCC,\simc}$ be the greatest fixpoint of the following operator $\UptoSim$ on closed $\LLAZYCC$-expressions:

We define an operator $\UptoSim$ on binary relations $\eta$ on closed $\LLAZYCC$-expressions:\\
 $s~\UptoSim(\eta)~t$  iff the following holds:
\begin{enumerate}
 \item If $s \simc_{\LCC}\lambda x.s'$ then there are two possibilities: 
     (i) if  $t \simc_{\LCC}(c~t_1 \ldots t_n)$ then $s' \in \cBot$, or  (ii)  if  $t \simc_{\LCC}\lambda x.t'$ then  for all closed $r:   ((\lambda x.s')~r)~\eta~((\lambda x.t')~r)$;
 \item If $s \simc_{\LCC}(c~s_1\ldots s_n)$ then $t \simc_{\LCC}(c~t_1 \ldots t_n)$ and $s_i~\eta~t_i$ for all $i$.
\end{enumerate}
\end{definition}

Obviously, we have $s  \leb_{\LCC,\simc} t$ iff one of the three cases holds: (i) $s \simc_{\LCC}\lambda x.s'$, $t \simc_{\LCC}\lambda x.t'$, 
  and  $(\lambda x.s')~r   \leb_{\LCC,\simc} (\lambda x.t')~r$
for all closed $r$; (ii) $s \simc_{\LCC}\lambda x.s'$, $t \simc_{\LCC}(c~t_1 \ldots t_n)$, and $s' \in \cBot$, or (iii)  $s \simc_{\LCC}(c~s_1 \ldots s_n)$,  $t \simc_{\LCC}(c~t_1 \ldots t_n)$, 
and $s_i \leb_{\LCC,\simc} t_i$ for all $i$.

\begin{proposition}
 $\leb_{\LCC,\simc}~=~\leb_{\LCC} ~= ~\lec_{\LCC}^c$, and $\leb_{\LCC,\simc}^o~=~\leb_{\LCC}^o ~= ~\lec_{\LCC}$.
\end{proposition}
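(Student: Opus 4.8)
Since Theorem~\ref{thm:sim_c-equiv-b} already establishes $\leb_{\LCC}^o = \lec_{\LCC}$, and restricting to closed expressions (Lemma~\ref{lem:open-closed-rel-general}) yields $\leb_{\LCC} = \lec_{\LCC}^c$, the plan is to reduce everything to the single equality $\leb_{\LCC,\simc} = \leb_{\LCC}$; once this is proved, the open versions follow because equal relations have equal open extensions, so $\leb_{\LCC,\simc}^o = \leb_{\LCC}^o = \lec_{\LCC}$. Throughout I would use that, on closed expressions, $\simc_{\LCC} \subseteq \leb_{\LCC}$ (because $\simc_{\LCC}\subseteq\lec_{\LCC}$), that $\leb_{\LCC}$ is transitive (Lemma~\ref{lem:leq-b-refl-trans}), that $\xrightarrow{\LCC}$-reduction preserves $\simc_{\LCC}$ (Corollary~\ref{cor:lcc-red-rules-correct}), that $\leb_{\LCC}^o$ is a precongruence (the corollary after Theorem~\ref{thm:sim_c-equiv-b}), and that the three classes in Proposition~\ref{prop-classification-lcc} are pairwise distinct (as its proof shows).

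For the inclusion $\leb_{\LCC}\subseteq\leb_{\LCC,\simc}$ I would show that $\leb_{\LCC}$ is a post-fixpoint of $\UptoSim$, that is $\leb_{\LCC}\subseteq\UptoSim(\leb_{\LCC})$, and then apply co-induction. Given closed $s\leb_{\LCC}t$, hence $s\lec_{\LCC}t$, the second part of Proposition~\ref{prop-classification-lcc} gives four cases that map onto the clauses of Definition~\ref{def:lcc-sim-upto} one by one: if $s\simc_{\LCC}\Omega$ then $s$ is $\simc_{\LCC}$-equivalent neither to an abstraction nor to a constructor application, so both clauses of $\UptoSim(\leb_{\LCC})$ hold vacuously; if $s,t$ are $\simc_{\LCC}$-equivalent to constructor applications with equal head and componentwise $\lec_{\LCC}$-related arguments, this is clause~(2); if $s\simc_{\LCC}\lambda x.s'$, $t\simc_{\LCC}\lambda x.t'$ with $s'\lec_{\LCC}^o t'$, then the precongruence of $\leb_{\LCC}^o$ gives $(\lambda x.s')~r\leb_{\LCC}(\lambda x.t')~r$ for every closed $r$, which is clause~(1)(ii); and the irregular case $s\simc_{\LCC}\lambda x.s'$ with $s'\in\cBot$, $t\simc_{\LCC}(c~t_1\ldots t_n)$, is clause~(1)(i).

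For the converse $\leb_{\LCC,\simc}\subseteq\leb_{\LCC}$ I would pass to the composed relation $\eta := {\simc_{\LCC}}\circ{\leb_{\LCC,\simc}}\circ{\simc_{\LCC}}$ on closed $\LAMBDAEXPR$-expressions and prove $\eta\subseteq F_{\LCC}(\eta)$; co-induction (Proposition~\ref{prop-coinduction}) then gives $\eta\subseteq\leb_{\LCC}$, and since $\simc_{\LCC}$ is reflexive we get $\leb_{\LCC,\simc}\subseteq\eta\subseteq\leb_{\LCC}$. Working with $\eta$ rather than with $\leb_{\LCC,\simc}$ directly is precisely what lets me absorb the ``up to $\simc_{\LCC}$'' slack. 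To check $\eta\subseteq F_{\LCC}(\eta)$, take $a\simc_{\LCC}s\leb_{\LCC,\simc}t\simc_{\LCC}b$ (all closed): if $a\DIV_\LCC$ the $F_{\LCC}$-conditions are vacuous, and otherwise $a\maycon_\LCC v$ with $a\simc_{\LCC}v$, hence $s\simc_{\LCC}v$. If $v=\lambda x.a'$, Definition~\ref{def:lcc-sim-upto}(1) applied to $s\leb_{\LCC,\simc}t$ gives either $t\simc_{\LCC}$ a constructor application with $a'\in\cBot$ --- then $b$ reduces to a constructor application and $F_{\LCC}$'s second option applies --- or $t\simc_{\LCC}\lambda x.t'$ with $(\lambda x.a')~r\leb_{\LCC,\simc}(\lambda x.t')~r$ for all closed $r$; in that case $b\maycon_\LCC\lambda x.b'$ with $b'\simc_{\LCC}^o t'$ (Proposition~\ref{prop:open-closed-case}), and for every closed $r$, correctness of $(\rnbeta)$ gives $a'[r/x]\simc_{\LCC}(\lambda x.a')~r\leb_{\LCC,\simc}(\lambda x.t')~r\simc_{\LCC}t'[r/x]\simc_{\LCC}b'[r/x]$, so $a'[r/x]~\eta~b'[r/x]$ and hence $a'~\eta^o~b'$. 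If $v=(c~a_1\ldots a_n)$, Definition~\ref{def:lcc-sim-upto}(2) gives $t\simc_{\LCC}(c~t_1\ldots t_n)$ with $a_i\leb_{\LCC,\simc}t_i$, hence $b\maycon_\LCC(c~b_1\ldots b_n)$ with $t_i\simc_{\LCC}b_i$ (by preservation of $\simc_{\LCC}$ under reduction and Proposition~\ref{prop:finite-simulation-explicit-LLC}), so $a_i~\eta~b_i$ for all $i$. In all cases $(a,b)\in F_{\LCC}(\eta)$.

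The step I expect to be the main obstacle is the bookkeeping forced by the ``up to $\simc_{\LCC}$'' relaxation combined with the contextual irregularity $\lambda x.\Omega\lec_\LCC(c~s_1\ldots s_n)$: one must repeatedly invoke the classification of closed $\LLAZYCC$-expressions (Proposition~\ref{prop-classification-lcc}) and the $\cBot$-facts (Lemmas~\ref{lem:fixpoint-constructors} and~\ref{lem:fixpoint-cBot}) to be sure that whenever $s$ is $\simc_{\LCC}$-equivalent to an abstraction or to a constructor application, its partner under $\leb_{\LCC,\simc}$ or $\leb_{\LCC}$ lies in a compatible class and never in the $\Omega$-class. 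Everything else is routine unfolding of the fixpoint definitions and co-induction.
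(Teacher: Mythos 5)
Your proposal is correct and follows essentially the same route as the paper: both inclusions are obtained by showing each relation is dense for the other's fixpoint operator, using Theorem~\ref{thm:sim_c-equiv-b}, the classification of closed expressions (Proposition~\ref{prop-classification-lcc}), correctness of reduction, and the precongruence property, exactly as in the paper's two-part argument. Your only deviation is in the direction $\leb_{\LCC,\simc}\subseteq\leb_{\LCC}$, where you pass to the composed relation $\simc_{\LCC}\circ\leb_{\LCC,\simc}\circ\simc_{\LCC}$ before co-induction; this merely makes explicit the absorption of the up-to-$\simc_{\LCC}$ slack (e.g.\ that the value components of $t$ only agree with the $t_i$ up to $\simc_{\LCC}$) which the paper's direct density argument handles implicitly.
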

\begin{proof} We show the first equation via the fixpoint equations.
(i) We prove that the relation $\leb_{\LCC,\simc}$  satisfies the fixpoint equation for  $\leb_{\LCC}$:
Let $s  \leb_{\LCC,\simc} t$, where $s,t$ are closed.  
If $s \maycon_\LCC  (c~s_1 \ldots s_n)$, then also $s \simc_{\LCC}(c~s_1 \ldots s_n)$ which clearly implies $t \maycon_\LCC (c~t_1 \ldots t_n)$, and also 
$t \simc_{\LCC}(c~t_1 \ldots t_n)$. The relation  $\leb_{\LCC,\simc}$ is a fixpoint of $\UptoSim(\eta)$, hence  $s_i \leb_{\LCC,\simc} t_i$  for all $i$.  

 If  $s \maycon_\LCC  \lambda x.s'$ and $t \maycon_\LCC \lambda x.t'$  then similar arguments show
  $((\lambda x.s')~r)~\leb_{\LCC,\simc}~((\lambda x.t')~r)$ for all $r$. 
  If  $s \maycon_\LCC   \lambda x.s'$ and $t \maycon_\LCC  (c~t_1 \ldots t_n)$,  then $s \simc_{\LCC}\lambda x.s'$ and $t \simc_{\LCC}(c~t_1 \ldots t_n)$. Again the fixpoint property
 of  $\leb_{\LCC,\simc}$ shows $s' \in \cBot$.
  
  \vspace*{1mm}
  
(ii)  We prove that the relation $\leb_{\LCC}$  satisfies the fixpoint equation for  $\UptoSim$:
  Let $s  \leb_{\LCC} t$ for closed $s,t$. We know that this is the same as $s  \lec_{\LCC} t$. 
If $s \simc_{\LCC}(c~s_1 \ldots s_n)$, then clearly  $s \maycon_\LCC  (c~s_1' \ldots s_n')$ where $(c~s_1 \ldots s_n) \simc_{\LCC}(c~s_1' \ldots s_n')$.
Since in this case $t \simc_{\LCC}(c~t_1 \ldots t_n)$ and thus 
  $t \maycon_\LCC  (c~t_1' \ldots t_n')$  where $t \simc_{\LCC}(c~t_1 \ldots t_n) \simc_\LCC (c~t_1' \ldots t_n')$, and 
also $s_i \leb_{\LCC,\simc} t_i$ for all $i$ holds, since reduction is correct. 
If  $s \simc_{\LCC}\lambda x.s'$ and $s \simc_{\LCC}\lambda x.t'$  then $s \maycon_\LCC  \lambda x.s''$ and $t \maycon_\LCC  \lambda x.t''$ and
  $((\lambda x.s')~r)~\leb_{\LCC,\simc}~((\lambda x.t')~r)$. 
  
  If  $s \simc_{\LCC}\lambda x.s'$ and $s \simc_{\LCC}(c~t_1 \ldots t_n)$, then for $s \maycon_\LCC  \lambda x. s''$,  we have $\lambda x.s' \simc_{\LCC}\lambda x.s''$, and
  since $s \lec_{\LCC} t$,  the characterization of expressions in Proposition 
  \ref{prop-classification-lcc} shows $s',s'' \in \cBot$. 
\end{proof}

\section{\texorpdfstring{The Translation $W:\LLR \to \LNAME$}{The Translation W}} \label{sec-translation-NEED-NAME}
The translation $W: \LLR \to \LNAME$ is defined as the identity on expressions and contexts, 
but the definitions of convergence predicates are different. 
In this section we prove that contextual equivalence based on $\LLR$-evaluation 
and contextual equivalence based on $\LNAME$-evaluation are equivalent. 
We use infinite trees to connect both evaluation strategies.
In \cite{schmidt-schauss-copy-rta:07} a similar result was shown for
a lambda calculus without seq, case, and constructors.

\subsection{\texorpdfstring{Calculus for Infinite Trees $\LTREE$}{Calculus for Infinite Trees Ltree}}
We define infinite expressions which are intended to be the letrec-unfolding of the $\LETRECEXPR$-expressions 
with the extra condition that cyclic variable chains lead to local nontermination represented by \tbot. 
We then define the calculus $\LTREE$ which has infinite expressions and performs reduction on infinite expressions.

\begin{definition}\label{def-IEXPR}
{\em Infinite expressions} $\IEXPR$ are defined like expressions $\LETRECEXPR$ without letrec-expressions,  adding a constant $\tBot$,
 and interpreting the grammar co-inductively, \ie\ the grammar is as follows 
\[
 \begin{array}{@{}rcl}
S,T,S_i,T_i \in\IEXPR &::=& x \syxor (S_1~S_2) \syxor (\lambda x. S) \syxor \tbot
\\
    &&\syxor (c~S_1 \ldots S_{\ari(c)})
    \syxor (\tseq~S_1~S_2)
    \syxor (\tcase_T~S~\tof~alts)
\end{array}
\]
\end{definition}
In order to distinguish in the following the usual expressions from the infinite ones, we say {\em tree} or infinite expressions.
As meta-symbols we use $s,s_i,t,t_i$ for finite expressions and $S,T,S_i,T_i$ for infinite expressions.
The constant $\tBot$ is without any reduction rule. 
  
In the following definition of a mapping from finite expressions to their infinite images,   
we sometimes use the explicit binary application operator $@$ for applications inside the trees (\ie\ an application in the tree is sometimes
written as $(@~S_1~S_2)$ instead of $(S_1~S_2)$), since it is easier to explain, but use the common notation in other places. 
A {\em position} is a finite sequence of positive integers, where the empty position is denoted as $\varepsilon$. 
We use Dewey notation for positions, \ie\ the position $i.p$ is the sequence starting with $i$ followed by position $p$. 
For an infinite tree  $S$ and position $p$, the notation  $S|_p$ means the subtree at position $p$ and
$p(S)$ denotes the head symbol of $S|_p$.

This induces the representation of an infinite expression $S$ as a (partial) function $S$ from positions to labels
where application of the function $S$ to a position $p$ is written as $p(S)$ and
where the labels are $@$, $\tcase_T$, $(c~x_1~\ldots~x_n)$ (for a \tcase-alternative), $\tseq$, $c$, $\lambda x$, and $x$.
The domain of such a function must be a prefix-closed set of positions,
and the continuations of a position $p$ depend on the label at $p$ and must 
coincide with the syntax according to the grammar in Definition~\ref{def-IEXPR}.
\begin{definition} 
The translation $\ITT:\LETRECEXPR \to {\IEXPR}$ translates an expression $s\in\LETRECEXPR$ 
into its infinite tree $\ITT(s)\in\IEXPR$. 
We define the mapping $\ITT$ by providing an algorithm that,
computes the partial function $\ITT(s)$ from positions to labels.
Given a position $p$, computing $p(\ITT(s))$ starts with $s\LABELCOMP_{p}$ and then proceeds with the rules given in \FIGURE~\ref{figure-inftree}.
The first group of rules defines the computed label for the position $\varepsilon$,
the second part of the rules describes the general case for positions.
If the computation fails (or is undefined), then the position is not valid in the tree $\ITT(s)$.
The equivalence of infinite expressions is extensional equality of the corresponding functions, 
where we additionally do not distinguish $\alpha$-equal trees.
\end{definition}

\begin{figure*}[t]
\fbox{\parbox{.98\textwidth}{
\[\begin{array}{@{\,}l@{~}c@{~}l@{~}l@{\,}}
\Ctxt[(s~t)\LABELCOMP_{\varepsilon}] & \mapsto &  @ \\
\Ctxt[({\tt case}_T~\ldots)\LABELCOMP_{\varepsilon}] & \mapsto & {\tt case}_T \\
\Ctxt[(c~x_1\ldots~x_n\to s)\LABELCOMP_{\varepsilon}] & \mapsto & (c~x_1~\ldots~x_n) \qquad \mbox{for a case-alternative}\\
\Ctxt[({\tt seq}~s~t)\LABELCOMP_{\varepsilon}] & \mapsto & {\tt seq} \\
\Ctxt[(c~s_1 \ldots s_n)\LABELCOMP_{\varepsilon}] & \mapsto & c \\
\Ctxt[(\lambda x.s)\LABELCOMP_{\varepsilon}] & \mapsto &  \lambda x \\
\Ctxt[x\LABELCOMP_{\varepsilon}] & \mapsto &  x \hspace*{.3cm} \mbox{if $x$ is a free variable or a lambda-bound variable in $\Ctxt[x]$} \\
\end{array}\]
\begin{flushleft}
The cases for general positions $p$: 
\end{flushleft} 
 \[\begin{array}{ll@{\quad}c@{\quad}ll}
1. &\Ctxt[(\lambda x.s)\LABELCOMP_{1.p}] & \mapsto &  \Ctxt[\lambda x.(s\LABELCOMP_p)] \\
2. &\Ctxt[(s~t)\LABELCOMP_{1.p} ] & \mapsto &   \Ctxt[(s\LABELCOMP_p ~t)]\\
3. &\Ctxt[(s~t)\LABELCOMP_{2.p} ] & \mapsto &   \Ctxt[(s~t\LABELCOMP_p)]\\
4. &\Ctxt[(\tseq~s~t)\LABELCOMP_{1.p} ] & \mapsto &   \Ctxt[(\tseq~s\LABELCOMP_p~t)]\\
5. &\Ctxt[(\tseq~s~t)\LABELCOMP_{2.p} ] & \mapsto &   \Ctxt[(\tseq~s~t\LABELCOMP_p)]\\
6. &\Ctxt[(\tcase_T~s~{\tt of}~ \mathit{alt}_1 \ldots \mathit{alt}_n)\LABELCOMP_{1.p} ] & \mapsto &   \Ctxt[(\tcase_T ~ s\LABELCOMP_{p}~{\tt of} ~\mathit{alt}_1 \ldots \mathit{alt}_n)]\\
7. &\Ctxt[(\tcase_T ~s~{\tt of} ~\italt_1 \ldots \italt_n)\LABELCOMP_{(i + 1).p} ] & \mapsto &   \Ctxt[(\tcase_T ~s~{\tt of} \italt_1 \ldots {\italt_i}\LABELCOMP_{p} \ldots \italt_n)]\\
8. &\Ctxt[\ldots (c~x_1~\ldots~x_n \to s)\LABELCOMP_{1.p} \ldots ] & \mapsto &  \Ctxt[\ldots  (c~x_1~\ldots~x_n \to s\LABELCOMP_{p}) \ldots ] \\
9. &\Ctxt[(c~s_1 \ldots s_n)\LABELCOMP_{i.p} ] & \mapsto &   \Ctxt[(c ~s_1 \ldots s_i\LABELCOMP_{p} \ldots s_n)]\\
10.&\Ctxt[\tletrx{\iEnv}{s}\LABELCOMP_{p}]   & \mapsto &   \Ctxt[\tletrx{ \iEnv}{s\LABELCOMP_{p}}]  \\
11. &\Ctxt_1[\tletrx{x = s, \iEnv}{\Ctxt_2[x\LABELCOMP_p]}]  &\mapsto &  \Ctxt_1[\tletrx{x = s\LABELCOMP_p, \iEnv}{\Ctxt_2[x]}] \\
12. &\begin{array}{@{}l}\Ctxt_1[\tletrec~x = s, y = \Ctxt_2[x\LABELCOMP_p],\\ 
          \hspace*{1.5cm} \iEnv~\tin~t]  \end{array}
     & \mapsto   
  &  \begin{array}{@{}l} \Ctxt_1[\tletrec~x = s\LABELCOMP_p, y = \Ctxt_2[x], \\ \hspace*{1.5cm} \iEnv~\tin~t]  \end{array}\\ 
13. &\Ctxt_1[\tletrx{x = \Ctxt_2[x\LABELCOMP_p], \iEnv}{s}]  &\mapsto &  \Ctxt_1[\tletrx{x = \Ctxt_2[x]\LABELCOMP_p, \iEnv}{s}] \\[1mm]

  \multicolumn{4}{l}{\mbox{If the  position $\varepsilon$ hits the same  (let-bound) variable twice, then the result is $\tBot$.}}\\
    \multicolumn{4}{l}{\mbox{(This can only happen by a sequence of rules 11,12,13.)}} \\
\end{array}\]
}}
\caption{Infinite tree construction from positions for fixed $s$}\label{figure-inftree}
\end{figure*}

\begin{example}
The expression 
 $\tletrec~ x = x, y = (\lambda z.z)~x~y~\tin~y$  has the corresponding  tree  
 $((\lambda z_1.z_1)~\tBot~((\lambda z_2.z_2)~\tBot~((\lambda z_3.z_3) ~\tBot~\ldots)))$.
\end{example}

The set $\INFCTXT$ of infinite tree contexts includes any infinite tree where a subtree is replaced by a hole $[\cdot]$.  
Reduction contexts on trees are defined as follows:

\begin{definition}
Call-by-name reduction contexts $\IECtxts$ of $\LTREE$ are defined as follows, where the grammar is interpreted inductively
and  $S\in\IEXPR$:
\[
\begin{array}{rcl}
 \ECtxt, \ECtxt_i \in \IECtxts &::= &[\cdot] \OR (\ECtxt~S) \OR (\tcase~\ECtxt~\tof~alts) \OR (\tseq~\ECtxt~S)\\
\end{array}
\]
For an infinite tree, a {\em reduction position} $p$ is any position such that $p(S)$ is defined and
there exists some $\ECtxt\in\IECtxts$ with $\ECtxt[S'] = S$ and $\ECtxt|_p = [\cdot]$
\end{definition}

\begin{definition} An $\LTREE$-answer (or an $\LTREE$-WHNF) is any infinite $\IEXPR$-expression $S$ which is an abstraction or constructor application,
\ie\ $\varepsilon(S) = \lambda x$ or $\varepsilon(S) = c$ for some constructor $c$.
The reduction rules on infinite expressions are allowed in any context and are as follows: 
\[\begin{array}{llcl}
(\rbetaTr) &  ((\lambda x. S_1)~S_2)   \to    S_1[S_2/x] \\
(\rseqTr)  &  (\tseq~S_1~S_2) ~~\to~~S_2 \hspace*{0,5cm} \mbox{if $S_1$ is an $\LTREE$-answer}\\
(\rcaseTr)  & (\tcase_T~(c~S_1 \ldots S_n)~\tof~\ldots (c~x_1 \ldots x_n \to S')\ldots) ~~\to~~ S'[S_1/x_1,\ldots,S_n/x_n] \\ 
\end{array}\]
If $S = R[S_1]$ for a  $\IECtxts$-context $R$, and $S_1 \xrightarrow{a} S_2$ for $a \in\{(\rbetaTr), (\rcaseTr),\mbox{ or } (\rseqTr)\}$,  
then we say $S \xrightarrow{\TREE} S' = R[S_2]$ is a {\em normal order reduction} ($\TREE$-reduction) on infinite trees.
Here $S_1$ is the {\em tree-redex} of the tree-reduction. 
We also use the convergence predicate $\maycon_{\TREE}$ for infinite trees defined as: $S\maycon_{\TREE}$ iff $S \xrightarrow{\TREE,*} S'$ and
$S'$ is an $\LTREE$-WHNF.
\end{definition}

Note that  $\xrightarrow{\TREE,\rbetaTr}$ and $\xrightarrow{\TREE,\rcaseTr}$ only reduce a single redex, 
but may modify infinitely many positions, since there may be infinitely many positions of a replaced variable $x$.  
\Eg, a \mbox{($\TREE$,\rbetaTr)} of $\ITT((\lambda x. \tletrx{z = (z~ x)}{z}) ~r)  =  (\lambda x. ((\ldots ~(  \ldots ~ x) ~x)~ x))~ r$
$\to~  ((\ldots ~(  \ldots ~ r) ~r)~ r)$
replaces the infinite number of occurrences of $x$ by $r$. 

Concluding, the calculus $\LTREE$ is defined by the tuple $(\IEXPR,\INFCTXT,\xrightarrow{\TREE},{\ANSWERS}_{\TREE})$
where ${\ANSWERS}_{\TREE}$ are the $\LTREE$-WHNFs.

In the following we  use a variant of infinite outside-in developments \cite{barendregt:84,kennaway-klop:97} as a reduction on trees 
that may reduce infinitely many redexes in one step.
The motivation is that the infinite trees corresponding to finite expressions may  
require the reduction of
infinitely many redexes of the trees for one $\xrightarrow{\LR}$- or $\xrightarrow{\LNAME}$-reduction, respectively.

\begin{definition}\label{def:one-reduction} 
We define an infinite variant of Barendregt's 1-reduction:
Let $S \in \IEXPR$ be an infinite  tree.  
Let $\dagger$ be a special label and $M$ be a set of (perhaps infinitely many) positions of $S$, which must be redexes
\wrt\ the same reduction $a \in\{(\rbetaTr), (\rcaseTr), \mbox{ or } (\rseqTr)\}$.
Now exactly all  positions $m \in M$ of $S$ are labeled with $\dagger$. 
By $S \xrightarrow{I,M} S'$ we denote the (perhaps infinite) development top down, defined as follows:
\begin{itemize}
  \item  Let $S_0 = S$ and $M_0 = M$. 
  \item Iteratively compute $M_{i+1}$ and $S_{i+1}$ from $M_i$ and $S_i$ 
for $i=0,1,2,\ldots$ as follows: \\
Let $d$ be the length of the shortest position in $M_i$, and $M_{i,d}$ be the finite set of positions 
that are the shortest ones in $M_i$. 

For every $p \in M_{i,d}$ construct an infinite tree $T_p$ from ${S_i}|_{p}$ by iterating the following reduction
until the root of ${S_i}|_{p}$ is not labeled: remove the  label from the top of ${S_i}|_{p}$, then perform a labeled reduction inheriting all 
the labels. 
If this iteration does not terminate, because the root of  ${S_i}|_{p}$ gets labeled in every step, then the result is 
$T_p := \tBot$  (unlabeled), otherwise a result $T_p$ is computed after finitely many reductions. 

Now construct $S_{i+1}$ by replacing every subtree at a position $p \in M_{i,d}$ in $S_{i}$ by $T_p$:
for the positions $p$ of $S_i$ that do not have a prefix that is in $M_{i,d}$, we set $p(S_{i+1}) := p(S_i)$ and
for $p \in M_{i,d}$ we set $S_i|_p := T_p$.

Let $M_{i+1}$ be the set of positions in $S_{i+1}$ which carry a label $\dagger$. 
The length of the shortest position is now at least $d+1$. 
Then iterate again with $M_{i+1}, S_{i+1}$.
\item $S'$ is defined as the result after (perhaps infinitely many) construction steps $S_1,S_2,\ldots$
\end{itemize}
If the initial set $M$ does not contain a reduction position then we write $S \xrightarrow{I,M,\NTREE} S'$.
We write $S \xrightarrow{I,\NTREE} S'$ ($S \xrightarrow{I} S'$, resp.)
if there exists a set $M$ such that $S \xrightarrow{I,M,\NTREE} S'$ ($S \xrightarrow{I,M} S'$, resp.). 
\end{definition}

\begin{example}\label{example:inf-tree-reductions-std} We give two examples of standard reduction and $\xrightarrow{I,M}$-reductions.\\
An $\xrightarrow{\LR}$-reduction on expressions corresponds to an $\xrightarrow{I,M}$-reduction on infinite trees
and perhaps corresponds to an infinite sequence of infinite $\TREE$-reductions.
Consider $\tletrec~y=(\lambda x.y)~a~\tin~y$. The $(\LR,\rlbeta)$-reduction 
with a subsequent $(\LR,\rllet)$ reduction results in 
$\tletrec~y=y,x=a~\tin~y$.  The corresponding infinite tree of $\tletrec~y=(\lambda x.y)~a~\tin~y$ is
$S = ((\lambda x_1.((\lambda x_2.((\lambda x_3.(\ldots~a))~a))~a))~a)$.
The $(\TREE,\rbetaTr)$-reduction-sequence is infinite.
let  $M$ be the infinite set of positions of all the applications in $S$, \ie\ $M = \{\varepsilon,1.1,  1.1.1.1, \ldots\}$. Then in the (infinite) development described in  Def. \ref{def:one-reduction} all intermediate trees have a label at the top,
and thus we have $S \xrightarrow{I,M} \tBot$. For a set $M$ without $\varepsilon$, the result will be a value tree.  

  For the expression  $\tletrec~y = (\tseq ~\ttrue ~(\tseq~y~\tfalse))~\tin~y$ the $\xrightarrow{\LR}$-reduction results in
the expression $\tletrec~y=(\tseq~y~\tfalse)~\tin~y$ which diverges. The corresponding infinite tree is
   $(\tseq ~\ttrue~ (\tseq~((\tseq ~\ttrue~ (\tseq~(\ldots)~\tfalse))~\tfalse)))$, which has an infinite number of $\TREE$-reductions,
   at an infinite number of deeper and deeper positions. 
   Let $M = \{\varepsilon, 1.2, 1.2.1.2, \ldots\}$ be the set consisting of all those positions.
Then $S \xrightarrow{I,M}  (\tseq~(\tseq ~(\tseq \ldots~\tfalse)~\tfalse)~\tfalse)$. 
\end{example}

There may be $S,S'$ such that $S \xrightarrow{I,M} S'$ as well as $S \xrightarrow{I,M'} S'$ for some sets $M,M'$ where $M$ contains a reduction position, 
but $M'$ does not contain a reduction position. For example $S = (\lambda x_1.x_1)~((\lambda x_2.x_2)~((\lambda x_3.x_3) \ldots))$, where a single (\rbetaTr)-reduction   
at the top reproduces $S$,
as well as a single (\rbetaTr)-reduction of the  argument.

\subsection{Standardization of Tree Reduction}\label{subsec-inf-proc-standardization}
Before considering the concrete calculi $\LLR$ and $\LNAME$ and their correspondence to the calculus with infinite
trees, we show that for an arbitrary reduction sequence on infinite trees resulting
in an answer we can construct a $\TREE$-reduction sequence that results in an $\LTREE$-WHNF.

\begin{lemma}\label{lemma:T-successful-nsr-base-case} Let $T$ be an infinite expression.
 If $T \xrightarrow{I,M,\NTREE} T'$  for some $M$, where $T'$ is an answer, then $T$ is also an answer.
 \end{lemma}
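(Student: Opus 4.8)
The statement asserts that if an infinite tree $T$ reduces by a single $\xrightarrow{I,M,\NTREE}$-step to an answer $T'$, where $M$ contains only non-reduction positions, then $T$ itself is already an answer. The plan is to analyze the structure of the development $T = S_0, S_1, S_2, \ldots$ defined in Definition~\ref{def:one-reduction} and to show that the head symbol at the root is never touched, so that $\varepsilon(T') = \varepsilon(T)$; since $T'$ is an answer, so is $T$.

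First I would observe that since $M$ contains no reduction position, in particular $\varepsilon \notin M$ (the empty position is always a reduction position via the empty context $[\cdot]\in\IECtxts$), and moreover no position of $M$ lies on the ``spine'' leading to the redex-search path from the root. Concretely, the reduction contexts $\IECtxts$ are generated inductively by $R ::= [\cdot] \syxor (R~S) \syxor (\tcase~R~\tof~alts) \syxor (\tseq~R~S)$, so the set of reduction positions of $T$ forms a path from the root going always into the first argument of an application, the scrutinee of a $\tcase$, or the first argument of a $\tseq$. If $M$ contains no reduction position, then every $m \in M$ has a strict prefix $q$ such that $q$ is a reduction position of $T$ but $q.i$ (the next step toward $m$) is \emph{not} — i.e.\ at $q$ the head symbol is $@$, $\tcase$, or $\tseq$ and $m$ branches off into a non-reduction argument, or the head at $q$ is $\lambda x$ or $c$ (in which case $T$ is already a value and we are done immediately).

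Next I would run the iterative construction: at each stage $i$, the set $M_{i,d}$ of shortest marked positions gets contracted, producing the trees $T_p$, and new marks $M_{i+1}$ appear only at positions that are descendants of the contracted positions (labels are only inherited downward into the substituted subtrees). The key invariant to maintain by induction on $i$ is that the root of $S_i$ carries the same head symbol as the root of $T$ and is unlabeled, and that $M_i$ still contains no position that is a reduction position of $S_i$ relative to that head. For the root: since $\varepsilon \notin M$ and marks propagate only downward, the root is never contracted, so $\varepsilon(S_i) = \varepsilon(T)$ for all $i$, hence $\varepsilon(T') = \varepsilon(T)$. If this common head is $\lambda x$ or a constructor $c$, then $T$ is an answer and we are finished. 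The only remaining case is that the head is $@$, $\tcase$, or $\tseq$ — but then $T'$ would also have head $@$, $\tcase$, or $\tseq$, contradicting that $T'$ is an answer.

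The main obstacle I anticipate is bookkeeping the invariant ``$M_i$ contains no reduction position of $S_i$'' through the contraction step: when a subtree $T_p$ is substituted in, one must check that the inherited $\dagger$-labels inside $T_p$, together with any labels that were already further down in $S_i$, still land only in non-reduction argument positions of the ambient tree. This follows because $p$ itself was a non-reduction position (it has a prefix where the search path branches away), so the entire subtree rooted at $p$ lies off the reduction spine of $S_i$, and replacing it by $T_p$ keeps everything below that branch-off point off the spine — the reduction spine of $S_{i+1}$ from the root down to the branch point is literally unchanged. Making this ``off-spine positions stay off-spine'' argument precise, using that $\IECtxts$ never descends into the second argument of $@$ or $\tseq$ nor into $\tcase$-alternatives, is the technical heart; everything else is routine.
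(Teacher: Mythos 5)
Your proposal is correct and is essentially the paper's own argument, which is stated there in one sentence: a $\xrightarrow{I,M,\NTREE}$-development cannot create an abstraction or constructor application at the top position, so the root symbol of $T$ already equals that of the answer $T'$. Your detailed root-preservation induction (using $\varepsilon\notin M$ and the fact that labels only propagate downward) just spells this out; the additional invariant that $M_i$ never contains a reduction position of $S_i$ is not needed for the conclusion, since preservation of the head symbol at $\varepsilon$ alone suffices.
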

\begin{proof}
This follows since an answer cannot be generated by $\xrightarrow{I,M,\NTREE}$-reductions, since neither abstractions
nor constructor expressions can be generated at the  top  position.
\end{proof}

\begin{lemma}\label{lemma-R-forking} 
Any overlapping between a $\xrightarrow{\TREE}$-reduction and a $\xrightarrow{I,M}$-reduction 
 can be closed as follows. The trivial case that both given reductions
are identical is omitted.  

\begin{tabular}{@{}lll@{}}
  \begin{minipage}{0.3\textwidth}
  \[
\xymatrix@R=6mm@C=6mm{
 T \ar@{->}[r]^{I,M}  \ar@{->}[d]_{\TREE} &   \cdot \ar@{-->}[d]_{\TREE}   \\
\cdot  \ar@{-->}[r]^{I,M'} &  \cdot    \\         
}
\]
  \end{minipage}
&
  \begin{minipage}{0.3\textwidth}
  \[
\xymatrix@R=6mm@C=6mm{
  T \ar@{->}[r]^{I,M}  \ar@{->}[d]_{\TREE} & \cdot   \\
   \cdot  \ar@{-->}[ru]_{I,M'}  \\         
}
\]
  \end{minipage}
&
  \begin{minipage}{0.3\textwidth}
  \[
\xymatrix@R=6mm@C=6mm{
  T \ar@{->}[r]^{I,M}  \ar@{->}[d]_{\TREE} & \cdot \ar@{-->}[ld]^{\TREE}    \\
   \cdot  \\         
}
\]
  \end{minipage}
 \end{tabular}

 \end{lemma}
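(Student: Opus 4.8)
The plan is to do a case analysis on the relative positions of the $\xrightarrow{\TREE}$-redex and the positions in $M$, reducing everything to the behaviour of a single labelled top-down development step as in Definition~\ref{def:one-reduction}. First I would fix notation: let $T\xrightarrow{\TREE}T_1$ contract the tree-redex at a reduction position $q$, and let $T\xrightarrow{I,M}T_2$ be the given infinite development. Since a $\TREE$-reduction only ever fires at a reduction position (in an $\IECtxts$-context), $q$ is a reduction position of $T$, and the key structural fact is that every prefix of $q$ that is itself a proper position either lies outside $M$ or, if it lies in $M$, carries one of the labels already being developed. I would then distinguish three situations according to how $q$ interacts with $M$.

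\emph{(1) $q\notin M$ and no position of $M$ lies on the path from the root to $q$, nor strictly below $q$ in a way that is duplicated.} Here the $\TREE$-step and the development act on disjoint parts of the tree except that the argument of the $\TREE$-redex (for $\rbetaTr$ and $\rcaseTr$) may be copied. One first performs $T\xrightarrow{I,M}T_2$; the redex at $q$ survives in $T_2$ (its position may move, but it is still a $\TREE$-redex, since developments do not create or destroy answer-shaped subterms at interior nodes, by the reasoning of Lemma~\ref{lemma:T-successful-nsr-base-case}), and contracting it gives a common reduct; symmetrically from $T_1$ one develops the image $M'$ of $M$, where each position of $M$ that fell inside the copied argument is replaced by its (possibly several) residuals. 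This yields the first (square) diagram.

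\emph{(2) $q$ is a prefix of some position in $M$, i.e.\ part of $M$ sits strictly inside the $\TREE$-redex (inside the $\lambda$-body, the argument, or a case-branch).} Performing the $\TREE$-step at $q$ duplicates or deletes those $M$-positions in a controlled way, and the development of the residual set $M'$ from $T_1$ reaches the same tree one gets by first developing $M$ and then contracting (the single residual of) $q$; this again closes as the square, or degenerates to the second (triangular) diagram when the development of $M'$ already reaches $T_2$. \emph{(3) Some position $p\in M$ is a proper prefix of $q$, so $q$ lies inside a subtree that the development rewrites.} Then contracting at $q$ is subsumed by the development step handling $p$: in the top-down construction of $T_p$ in Definition~\ref{def:one-reduction}, the redex at $q$ is either fired as part of producing $T_p$ or is left untouched as an interior redex of $T_p$; either way $T_1\xrightarrow{I,M'}T_2$ with $M'$ the appropriate residual set, giving the third (triangular) diagram. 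The degenerate case where $M=\{q\}$ and the development is exactly the one $\TREE$-step is the omitted trivial case.

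The main obstacle I expect is bookkeeping the residuals correctly when the $\TREE$-redex is a $\rbetaTr$ or $\rcaseTr$ step whose argument is copied: a position $p\in M$ inside that argument acquires several residuals, possibly infinitely many, and one must check that the induced development $\xrightarrow{I,M'}$ is still a legal infinite development in the sense of Definition~\ref{def:one-reduction} — in particular that the ``hit the same let-bound variable twice $\Rightarrow\tBot$'' phenomenon and the $T_p:=\tBot$ divergence clause are preserved, so that the two ways of reaching the common reduct genuinely agree on the $\tBot$-subtrees. The other delicate point is the case where the $\TREE$-redex is \emph{above} an $M$-position that gets deleted (a $\rseqTr$ or non-matching-branch situation): one must argue the deleted $M$-positions simply drop out of $M'$ and contribute nothing, which is where Lemma~\ref{lemma:T-successful-nsr-base-case}-style reasoning about what developments can and cannot create is used again.
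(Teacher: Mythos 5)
Your overall strategy---classifying the overlap by the relative position of the $\TREE$-redex $q$ and the labelled set $M$, and closing with residuals---is the same route the paper takes, but the assignment of cases to the three diagrams does not work out. Your case (3) cannot occur: every prefix of $q$ is again a reduction position (truncate the $\IECtxts$-context), the members of $M$ must themselves be redexes, and a redex sitting at a reduction position is necessarily \emph{the} (unique) normal-order redex, because its first subterm is a value and the spine cannot descend below a value. Hence no $p\in M$ can be a proper prefix of $q$, so the configuration you designate as the home of the third diagram is empty. Moreover the closing you claim there, $T_1\xrightarrow{I,M'}T_2$, is the shape of the \emph{second} diagram, not the third, and it would fail anyway: $T_1$ has the redex at $q$ contracted while $T_2$ does not, and a development cannot undo a contraction.

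As a consequence, the two situations that genuinely force the triangular diagrams are not correctly identified. The second diagram is needed exactly when $q\in M$ (with $M\neq\{q\}$): the development already contracts $q$, so from the $\TREE$-reduct one reaches the $I,M$-reduct by developing the residuals, with no further $\TREE$-step; your case (2) hints at this (``degenerates to the second diagram'') but its wording (``part of $M$ sits \emph{strictly} inside the redex'') excludes $q\in M$, and the justification offered is circular. The third diagram is needed when the positions of $M$ are erased by the $\TREE$-step---labels inside the discarded first argument of a $(\rseqTr)$-redex, inside unselected $\tcase$-alternatives, or inside an argument substituted for a variable with no occurrences---so that contracting the unchanged redex at $q$ in the developed tree returns exactly $T_1$; in your taxonomy this lands in case (2), where you offer only the square or the second diagram. (A square with an empty residual set would formally cover it, but then you must say explicitly that empty developments are allowed, and you still owe the $q\in M$ case.) So the skeleton is right, but the case split should be: $q\in M$ gives the second diagram, ``all of $M$ erased by the $\TREE$-step'' gives the third, and everything else the square, with the residual bookkeeping you describe; the fuzzy carve-out in your case (1) (``nor strictly below $q$ in a way that is duplicated'') should be dropped so that the cases really partition the possibilities.
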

 \begin{proof}
 This follows by checking the  overlaps of $\xrightarrow{I}$ with $\TREE$-reductions.   
The third diagram applies if the positions of $M$ are removed by the $\TREE$-reduction.
The second diagram applies if the $\TREE$-redex is included in $M$ and the first diagram is applicable
in all other cases.
 \end{proof}

 \begin{lemma}\label{lemma-R-star-forking}
Let $T$ be an infinite  tree such that there is
a $\TREE$-reduction sequence of length $n$ to a WHNF $T'$, and let $S$ be an infinite tree with 
$T \xrightarrow{I,M} S$.
Then $S$  has a $\TREE$-reduction sequence  of length $\le n$ to a WHNF $T''$.  
 \end{lemma}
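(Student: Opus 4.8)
The plan is to induct on the length $n$ of the given $\TREE$-reduction sequence $T = T_0 \xrightarrow{\TREE} T_1 \xrightarrow{\TREE} \cdots \xrightarrow{\TREE} T_n = T'$, using Lemma~\ref{lemma-R-forking} as the local commutation step. The base case $n = 0$ means $T$ is already an $\LTREE$-WHNF, hence an answer; then Lemma~\ref{lemma:T-successful-nsr-base-case} (if the reduction is a $\xrightarrow{I,M,\NTREE}$-step) or a direct check shows that $S$ is again an answer, so $S$ has a $\TREE$-reduction sequence of length $0$. For the inductive step, I would look at the first $\TREE$-step $T \xrightarrow{\TREE} T_1$ and the given $T \xrightarrow{I,M} S$ and apply Lemma~\ref{lemma-R-forking} to this overlap.

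The three diagrams of Lemma~\ref{lemma-R-forking} give three cases. In the first case we obtain $T_1 \xrightarrow{I,M'} S'$ and $S \xrightarrow{\TREE} S'$ for some $S'$ and $M'$; since $T_1$ has a $\TREE$-reduction of length $n-1$ to a WHNF, the induction hypothesis applied to $T_1 \xrightarrow{I,M'} S'$ gives a $\TREE$-reduction of $S'$ of length $\le n-1$ to a WHNF, and prefixing the step $S \xrightarrow{\TREE} S'$ yields a $\TREE$-reduction of $S$ of length $\le n$. In the second case we have $S \xrightarrow{I,M'} T_1$ directly (the $\TREE$-redex was among the positions of $M$), so $S$ itself reduces by an $\xrightarrow{I}$-step to $T_1$; but then we cannot immediately invoke the induction hypothesis with a strictly shorter $\TREE$-sequence starting at $T_1$ since $S \to T_1$ is an $\xrightarrow{I}$-step, not a $\TREE$-step — here I would instead observe that this case reduces the problem to the same statement for $S$ and $T_1$ with the same bound $n-1$, and one must argue this recursion is well-founded; in fact composing, $S \xrightarrow{I,M'} T_1$ together with the length-$(n-1)$ $\TREE$-sequence from $T_1$ is handled by a secondary induction, or more cleanly by noting that the induction hypothesis as stated already allows us to conclude directly: it asserts that whenever an infinite tree has a $\TREE$-sequence of length $m$ to a WHNF and $\xrightarrow{I}$-reduces to another tree, that tree has a $\TREE$-sequence of length $\le m$ to a WHNF; applying this with $m = n-1$ to $T_1$ (which has such a sequence) and $S \xrightarrow{I,M'} T_1$ would go the wrong direction, so instead I use it on $T_1$ and the step $T_1 \to T_2$, iterating. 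In the third case the $\TREE$-redex of $T \xrightarrow{\TREE} T_1$ lies at or above a position of $M$ and the $\xrightarrow{I,M}$-step erases those redexes, so $S = T_1$ (or $S \xrightarrow{\TREE} T_1$ trivially, i.e.\ they are joined with no further work), and $S$ inherits the $\TREE$-sequence from $T_1$ of length $\le n-1 \le n$.

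The main obstacle is the second diagram, where the $\xrightarrow{I,M}$-step already performs the $\TREE$-redex, so naive induction on $n$ does not obviously decrease a measure. The clean fix is to strengthen the induction: prove the statement by induction on $n$ with the refined claim that either $S$ already is a WHNF, or $S$ has a $\TREE$-step $S \xrightarrow{\TREE} S'$ with $T_1 \xrightarrow{I} S'$ for the appropriate successor $T_1$ in the original sequence — then in the awkward case we have $S \xrightarrow{I} T_1$ and we simply continue tracking the original sequence from $T_1$ onward, so the number of remaining original $\TREE$-steps strictly decreases each time this case is invoked consecutively; since it cannot be invoked infinitely often (that would consume the finite sequence $T_1, \ldots, T_n$), eventually we land in case one or three and make genuine progress. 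Wrapping this bookkeeping into a single induction on $n$ gives the bound $\le n$ on the length of the constructed $\TREE$-sequence for $S$, completing the proof.
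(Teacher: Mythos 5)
Your overall skeleton (induction on $n$, closing the overlap of the first $\TREE$-step with the $\xrightarrow{I,M}$-step via Lemma~\ref{lemma-R-forking}) is exactly the paper's proof, and your base case and your handling of the first and third diagrams are fine. The problem is in the second case: you have misread the orientation of the dashed arrow in the second diagram of Lemma~\ref{lemma-R-forking}. That diagram closes the overlap by an arrow from the $\TREE$-reduct to the $\xrightarrow{I,M}$-reduct, i.e.\ it yields $T_1 \xrightarrow{I,M'} S$ (the $\xrightarrow{I,M}$-step already contains the normal-order $\TREE$-redex, so from $T_1$ the remaining marked redexes still lead to $S$), not $S \xrightarrow{I,M'} T_1$ as you assume. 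With the correct orientation this is the easiest case of all: $T_1$ has a $\TREE$-sequence of length $n-1$ to a WHNF and $T_1 \xrightarrow{I,M'} S$, so the induction hypothesis applies verbatim and gives $S$ a $\TREE$-sequence of length $\le n-1 \le n$. No strengthening of the induction and no well-foundedness bookkeeping are needed.

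The patch you build on top of the misreading does not repair it. If the situation really were $S \xrightarrow{I} T_1$, then ``continuing to track the original sequence from $T_1$ onward'' gives you nothing to commute: the next original step $T_1 \xrightarrow{\TREE} T_2$ and the step $S \xrightarrow{I} T_1$ do not share a source, so Lemma~\ref{lemma-R-forking} is not applicable, and your claim that the awkward case ``cannot be invoked infinitely often'' with eventual genuine progress is unsupported. In fact, deducing convergence of $S$ from $S \xrightarrow{I} T_1$ with $T_1$ convergent is the much harder converse direction (essentially Lemma~\ref{lemma:commute:ntree-maycon} and Proposition~\ref{prop:I-sequence-gives-maycon}), which requires the standardization machinery of Section~\ref{subsec-inf-proc-standardization} and, crucially, yields no bound on the length of the resulting $\TREE$-sequence, so it could not give the $\le n$ claimed here anyway. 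Reading the diagram correctly removes the obstacle entirely and your argument collapses to the paper's one-line induction.
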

 \begin{proof}
This follows from Lemma \ref{lemma-R-forking} by induction on $n$.
 \end{proof}

\begin{lemma}\label{lemma-red-triangle} 
Consider two  reductions  $\xrightarrow{I,M_1}$ and  $\xrightarrow{I,M_2}$ of the same type   (\rbetaTr), (\rcaseTr) or (\rseqTr). 
For all trees $T,T_1,T_2$: if $T  \xrightarrow{I,M_1}  T_1$, 
and $T  \xrightarrow{I,M_2}  T_2$, and $M_2 \subseteq M_1$, then there is a set $M_3$ of positions, such that  $T_2 \xrightarrow{I,M_3}  T_1$. 
\[
\xymatrix@C=20mm@R=6mm{
T  \ar@{->}[dr]_{I,M_2}  \ar@{->}[rr]^{I,M_1} & & T_1   \\ 
& T_2 \ar@{-->}[ur]_{I,M_3} &  \\  
}
\]
\end{lemma}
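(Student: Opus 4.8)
The plan is to exhibit $M_3$ explicitly as (roughly) the set of residuals in $T_2$ of the positions in $M_1\setminus M_2$, and then to verify that the top-down development machinery of Definition~\ref{def:one-reduction} carries $T_2$ to exactly $T_1$. Since $M_2\subseteq M_1$, the idea is that $\xrightarrow{I,M_1}$ does ``everything $\xrightarrow{I,M_2}$ does, plus more'', so after performing $\xrightarrow{I,M_2}$ we should be able to complete to $T_1$ by contracting the residuals of the remaining marked positions $M_1\setminus M_2$. The key technical fact needed is that the top-down development is confluent/associative in the appropriate sense: labeling a subset of redexes and developing, then labeling the residuals of the rest and developing, gives the same result as labeling everything at once and developing in one pass. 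This is the infinitary analogue of the classical finite developments / Church–Rosser-for-labelled-reduction statement (\cite{barendregt:84,kennaway-klop:97}).

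First I would set up $M_3$ precisely. For each position $m\in M_1$, either $m\in M_2$ (so it gets contracted by $\xrightarrow{I,M_2}$ and leaves no residual needing further work — its effect is already ``inside'' $T_2$), or $m\in M_1\setminus M_2$, in which case I track how $m$ is transformed as the $\xrightarrow{I,M_2}$ development proceeds top-down: $m$ may be duplicated (if it lies below a substituted variable of a contracted $\beta$/$\mathtt{case}$ redex in $M_2$), may be erased (if it lies in a discarded branch, e.g. a non-selected $\mathtt{case}$-alternative or the first argument of a contracted $\mathtt{seq}$), or may survive unchanged; the resulting (possibly infinite) set of residual positions in $T_2$ is $M_3$. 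Because all of $M_1$ consists of redexes of one fixed type $a\in\{(\rbetaTr),(\rcaseTr),(\rseqTr)\}$, and residuals of an $a$-redex under an $a$-reduction are again $a$-redexes, $M_3$ is a legal marking for $\xrightarrow{I,M_3}$.

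Next I would prove $T_2\xrightarrow{I,M_3}T_1$ by comparing the stagewise constructions $S_0,S_1,\ldots$ in Definition~\ref{def:one-reduction} for the two developments $T\xrightarrow{I,M_1}T_1$ and $T\xrightarrow{I,M_2}T_2$. The cleanest route is an induction on the depth $d$ of the shortest marked position processed at each stage, showing by a local diagram chase (the finite-development lemma applied at each finite ``front'' $M_{i,d}$) that after processing depth $d$ the partially developed term for $M_1$ and the result of applying $\xrightarrow{I,M_3}$-up-to-depth-$d$ to the partially developed term for $M_2$ agree, including the placement of the $\dagger$-labels. The $\tBot$-case (where the root stays labeled forever) must be checked to be consistent on both sides: a position produces $\tBot$ under the $M_1$-development iff its counterpart does under the $M_2$-then-$M_3$ development, since being an infinite nested redex is preserved by contracting a subset of the marks first. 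Passing to the limit $S'=\lim S_i$ (extensional equality of the position-to-label functions) then gives $T_2\xrightarrow{I,M_3}T_1$.

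The main obstacle I anticipate is the bookkeeping of \emph{labels and residuals under infinite, possibly non-terminating, local developments} — specifically making rigorous that the top-down, front-by-front construction genuinely commutes in the way claimed when some subtrees reduce to $\tBot$ and when a single marked redex has infinitely many residuals (e.g. a variable occurring infinitely often in the body of a contracted abstraction, as in the $\ITT((\lambda x.\tletrx{z=(z~x)}{z})~r)$ example). Handling this cleanly will likely require stating and using an auxiliary ``infinite finite-developments'' lemma: for a single marked type $a$, any two complete top-down developments of the same marked tree yield the same result, and the residual relation composes. Given that auxiliary lemma, the triangle follows by the substitution/residual argument sketched above; without it, the diagram chase at each stage would have to be redone by hand and the $\tBot$-consistency argument becomes the delicate point.
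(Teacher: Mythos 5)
Your plan is essentially the paper's own argument: the paper also defines $M_3$ by labeling $T$ with $M_1$, performing the $\xrightarrow{I,M_2}$-development while letting the $M_1$-labels be inherited, and taking $M_3$ to be the marked redex positions remaining in $T_2$. The paper states this as a brief sketch without the stagewise verification or the $\tBot$/infinite-residual bookkeeping you outline, so your proposal is a more detailed elaboration of the same residual-based approach.
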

\begin{proof} 
The argument is that the set $M_3$ is computed by labeling the positions in $T$ using $M_1$, and then by performing the infinite development using
the set of redexes $M_2$, where we assume that the $M_1$-labels are inherited.
The set of positions of marked redexes in $T_2$ that remain and are not reduced by  $T_1 \xrightarrow{I,M_2}  T_2$ 
 is exactly the set $M_3$. \qedhere
\end{proof}

Consider a reduction $T \xrightarrow{I,M} T'$ of type (\rbetaTr), (\rcaseTr) or (\rseqTr).
This reduction may include a redex of a normal order \TREE-reduction. Then the reduction can be split into $T \xrightarrow{\TREE} T_1 \xrightarrow{I} T'$, 
and splitting of the reduction can be iterated
as long as the remaining $T_1 \xrightarrow{I} T'$ has a \TREE-redex. 
It may happen that this process does not terminate. 

We consider this non-terminating case, \ie\ let $T_0 \xrightarrow{I,M} T'$ and we can assume that there exist infinitely many 
$T_1,T_2,\ldots$ and $M_1,M_2,\ldots,$ such that for any $k$: $T_0 \xrightarrow{\TREE,k} T_k$ and $T_k \xrightarrow{I,M_k} T'$.
By induction we can show for every $k \geq 1$: 
$T_{k-1} = R_{k-1}[S_{k-1}] \to R_{k-1}[S_k] = T_k$ for a reduction context $R_k$ and 
where $S_{k-1}$ is the redex and $S_{k}$ is the contractum of $T_{k-1} \to T_{k}$ and the normal order 
$\TREE$-redex of $M_k$ labels  a subterm of $S_{k}$.
This holds, since the infinite development for $T \xrightarrow{I,M} T'$  is performed top down.

This implies that the infinite $\TREE$-reduction goes deeper and deeper along one path of the tree,
or at some point all remaining $\TREE$-reductions are performed at the same position.

\begin{lemma}\label{lemma:iteration-stops-maycon}
Let $T \xrightarrow{I,M} T'$ such that $T'\maycon_{\TREE}$ and $M$ labels the normal order redex of $T$.
Then there exists $T''$ and $M'$ such that $T \xrightarrow{\TREE,*} T'' \xrightarrow{M',\NTREE} T'$.
\end{lemma}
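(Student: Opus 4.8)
The plan is to carry out the iterative splitting of the $I$-reduction sketched just before the lemma and to exploit $T'\maycon_{\TREE}$ to show that this splitting terminates. I first record an elementary fact about $\LTREE$: along the leftmost spine of a tree the reduction positions form an initial segment $\varepsilon,1,1.1,\ldots$ that breaks off exactly at the first node which is neither an application nor a \tcase\ nor a \tseq; all of these positions except possibly the last carry such a non-redex node, so at most one of them is a redex, namely the unique normal-order redex. Hence a set of redex positions meets a reduction position if and only if it contains the normal-order redex.

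Set $T_0:=T$ and $M_0:=M$. As long as $M_k$ contains the normal-order redex $p_k$ of $T_k$, perform the normal-order step $T_k\xrightarrow{\TREE}T_{k+1}$; since $p_k\in M_k$, the second diagram of Lemma~\ref{lemma-R-forking} provides a residual marking $M_{k+1}$ with $T_{k+1}\xrightarrow{I,M_{k+1}}T'$. If at some stage $m$ the marking $M_m$ does not contain the normal-order redex of $T_m$ --- \eg\ if $T_m$ is a WHNF or has no normal-order redex at all --- then by the fact above $M_m$ meets no reduction position, so $T_m\xrightarrow{I,M_m,\NTREE}T'$, and we are done by taking $T'':=T_m$, $M':=M_m$. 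It remains to exclude that the iteration never stops.

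Assume it does not. Then we obtain an infinite normal-order reduction $T=T_0\xrightarrow{\TREE}T_1\xrightarrow{\TREE}\cdots$ with $T_k\xrightarrow{I,M_k}T'$ and $p_k\in M_k$ for every $k$, and the dichotomy established above applies. Suppose first the redexes $p_k$ descend along a single infinite path, so $|p_k|\to\infty$. No marked redex of $M_k$ lies strictly above $p_k$ on the spine (such a redex would itself be the normal-order redex), and reducing redexes at $p_k$, below $p_k$, or at positions incomparable to $p_k$ does not touch any node strictly above $p_k$; hence $T_k\xrightarrow{I,M_k}T'$ leaves the spine of $T_k$ strictly above $p_k$ intact, and that spine consists of $|p_k|$ consecutive application/\tcase/\tseq\ nodes. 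Since $|p_k|\to\infty$, the tree $T'$ has an infinite leftmost spine of such nodes, hence no normal-order redex, so $T'\DIV_{\TREE}$, contradicting $T'\maycon_{\TREE}$. Suppose instead that from some index on all $p_k$ equal a fixed position $q$; then the normal-order reduction of the subtree at $q$ does not terminate, and because the development effected by $\xrightarrow{I,M_k}$ proceeds strictly top down, the root at $q$ is relabeled at every step and the development emits $\tBot$ there, so $T'|_q=\tBot$. As $q$ is a reduction position of $T'$ (the spine strictly above $q$ is again of reduction-context form and is preserved as before), the normal-order reduction of $T'$ gets stuck at $q$, so $T'\DIV_{\TREE}$ --- again a contradiction. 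Hence the iteration terminates, which proves the lemma.

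The principal obstacle is the non-terminating case. One has to pin down exactly where the positions of $M_k$ may sit relative to the normal-order redex $p_k$ (which is what $p_k\in M_k$ together with the top-down construction of the development afford), and, in the second subcase, that an endlessly running normal-order reduction at a fixed position forces the development to produce $\tBot$ there. Once these points are in hand, turning ``$T$ admits an infinite normal-order reduction'' into ``$T'\DIV_{\TREE}$'', and thereby into the desired contradiction with $T'\maycon_{\TREE}$, is immediate.
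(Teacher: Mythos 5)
Your proof is correct and follows essentially the same route as the paper's: peel off marked normal-order steps via the second diagram of Lemma~\ref{lemma-R-forking}, and use $T'\maycon_{\TREE}$ to exclude non-termination of this peeling through exactly the paper's two sub-cases (redex position eventually fixed, forcing $\tBot$ at a reduction position of $T'$; redex depth unbounded, forcing an infinite spine and hence no normal-order redex in $T'$). The only soft spot is the appeal to a ``dichotomy established above'': its exhaustiveness needs the prefix-monotonicity of the normal-order redex positions $p_k$ (the paper's ``the redex can only stay or descend''), which you never argue but which does follow from your elementary fact plus the residual property --- a marked redex of $M_{k+1}$ strictly above $p_k$ would sit at a position untouched by the step at $p_k$ and hence would already have been a marked redex at a reduction position of $T_k$ different from its normal-order redex, which your fact forbids.
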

\begin{proof}
Let $T = T_0 \xrightarrow{\TREE,k} T_k$, $T_k \xrightarrow{I,M_k} T'$ where $M_k$ labels a normal order redex.
\[
\xymatrix@C=15mm@R=15mm{
  & & & &  T'  \\
T_0  \ar@{->}[urrrr]^{I,M}    
 \ar@{->}[r]_{\TREE}    & T_1=R_1[S_1] \ar@{->}[r]_{\TREE} \ar@{->}[urrr]_{I,M_1} & \ldots \ar@{->}[r]_{\TREE} 
 & T_k=R_k[S_k] \ar@{->}[r]_{\TREE} \ar@{->}[ur]_{I,M_k}   & \ldots
    \\  
}
\]
We have  $T_k = R_k[S_k]$ where $R_k$ is a reduction context, and $M_k$ labels the hole of $R_k$, 
which is the normal order redex. The normal order
reduction is $T_k = R_k[S_k] \xrightarrow{\TREE} R_k[S_k'] =: T_{k+1}$. 
Let $p_k$ be the path of the hole of $R_k$, together with the 
constructors and symbols (\tcase, \tseq, constructors and $@$) on the path. 
Also let $M_k = M_{k,1} \dotcup M_{k,2}$, (where $\dotcup$ means disjoint union) where the labels 
of $M_{k,1}$ are in $R_k$, and the labels $M_{k,2}$ are
in $S_k$. 
Lemma \ref{lemma-red-triangle},  the structure of the expressions and the properties of the infinite top down developments
show  that the normal order redex can only stay or descend, \ie\
  $h > k$ implies that $p_k$ is a prefix of $p_h$.

 Also, we have $R_k[S_k] \xrightarrow{I,M_k} R_k'[S']$, where $R_k[\cdot] \xrightarrow{M_{k,1}} R_k'[\cdot]$,
 and $S_k  \xrightarrow{I}  S'$.
 
 \noindent There are three cases: 
 \begin{itemize}
    \item The normal order reduction of $T_0$ halts, \ie, there is a maximal $k$. 
        Then obviously $T \xrightarrow{\TREE,*} T_k \xrightarrow{M_k,\neg\TREE} T'$.
    \item There is some $k$, such that $R_k = R_h$ for all $h \ge k$. In this case, 
    $T' = R_k'[s']$. The infinite development $T_0 \xrightarrow{I,M} T'$ will reduce infinitely
      often at the position of the hole, hence it
       will plug a $\tbot$ at position $p_k$  of $T'$, and so $T' = R_k'[\tbot]$.  But then $T'$ cannot converge,
         and so this case is not possible.
    \item The positions $p_k$ of the reduction contexts $R_k$ will grow indefinitely. Then there
     is an infinite path (together with the constructs and symbols) $p$ 
      such that  $p_k$ is a  prefix  of $p$ for every $k$.
      Moreover, $p$ is a position of $T'$. 
      The sets $M_{k,1}$ are an infinite ascending set \wrt\ $\subseteq$, hence there is a limit tree
      $T_{\infty}$ with $T \xrightarrow{\TREE,\infty} T_{\infty}$, which is exactly
      the limit of the contexts $R_k$ for $k \to \infty$. 
      There is a  reduction $T_{\infty} \xrightarrow{I,M'}  T'$  which is exactly $M' = \bigcup_{k} M_{k,1}$.
      Hence $T'$ has the  path $p$,     
        and we see that the tree $T'$ cannot have a normal order redex, since the search for such a redex 
         goes along $p$ and thus does not terminate. 
         This is a contradiction, and hence this case is not possible.\qedhere
 \end{itemize}
 \end{proof}

\begin{lemma}\label{lemma:commute-one-nsr-base} Let $T  \xrightarrow{I,M,\NTREE}  T_1    \xrightarrow{\TREE} T'$. 
Then the reduction can be commuted to $T \xrightarrow{\TREE}  T_3  \xrightarrow{I,M'} T'$ for some $M'$.
\end{lemma}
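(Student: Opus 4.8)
The plan is to factor the given composite reduction by first pulling the normal-order $\xrightarrow{\TREE}$-step to the front and then closing a fork with Lemma~\ref{lemma-R-forking}. Concretely, I would first argue that $T$ itself already has a normal-order $\TREE$-redex; let $T \xrightarrow{\TREE} T_3$ be that step. Then I would apply Lemma~\ref{lemma-R-forking} to the fork given by $T \xrightarrow{\TREE} T_3$ and $T \xrightarrow{I,M} T_1$, reading off a witness $M'$ and identifying the resulting common reduct with $T'$ by determinism of $\xrightarrow{\TREE}$.

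For the first step I would exploit that $M$ contains no reduction position. The reduction positions of a tree form a prefix-closed path (at each node one descends into the function of an application, the scrutinee of a $\tcase$, or the first argument of a $\tseq$), since each clause of the grammar for $\IECtxts$ has exactly one recursive slot. Hence no position of $M$ lies on this path. Because the top-down development $\xrightarrow{I,M}$ of Definition~\ref{def:one-reduction} only rewrites positions having a prefix in $M$, together with the residuals of $M$-redexes, and because a residual of an off-path redex is again off the (prefix-closed) path — even a nonterminating development at an off-path position only plants a $\tBot$ off the path — the step $T \xrightarrow{I,M,\NTREE} T_1$ leaves $T$ and $T_1$ with the same head symbols along the entire reduction-position path. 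Since whether a tree is an $\LTREE$-WHNF, or has a normal-order $\TREE$-redex and at which position, depends only on those head symbols, $T$ has a normal-order $\TREE$-redex iff $T_1$ does; as $T_1 \xrightarrow{\TREE} T'$, $T$ has one, and I take $T \xrightarrow{\TREE} T_3$ accordingly.

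For the second step I would apply Lemma~\ref{lemma-R-forking} to $T \xrightarrow{\TREE} T_3$ and $T \xrightarrow{I,M} T_1$ (an $\xrightarrow{I,M,\NTREE}$-step is in particular an $\xrightarrow{I,M}$-step). The $\TREE$-redex here is the normal-order redex of $T$, so it sits at a reduction position, which by hypothesis is not among the positions of $M$; therefore the second diagram of Lemma~\ref{lemma-R-forking} — the one requiring the $\TREE$-redex to be contained in $M$ — cannot apply, and only the first or third remains. If the first applies, there are $M'$ and $U$ with $T_3 \xrightarrow{I,M'} U$ and $T_1 \xrightarrow{\TREE} U$; since $\xrightarrow{\TREE}$ is a partial function and $T_1 \xrightarrow{\TREE} T'$, we get $U = T'$, hence $T \xrightarrow{\TREE} T_3 \xrightarrow{I,M'} T'$. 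If the third applies, then $T_1 \xrightarrow{\TREE} T_3$, so again $T_3 = T'$ by determinism, and $T \xrightarrow{\TREE} T_3 = T'$, which we combine with the trivial development $T' \xrightarrow{I,\emptyset} T'$. Either way the desired factorization follows (the degenerate case $M = \emptyset$ is subsumed, with $T_3 = T'$ and $M' = \emptyset$).

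The main obstacle I expect is precisely the first step: making it rigorous that off-path $I$-reductions — including the residuals generated by the substituting rules $\rbetaTr$ and $\rcaseTr$, which may duplicate or delete $M$-redexes — never migrate onto the reduction-position path, and so cannot create, destroy, or relocate the normal-order redex. This is exactly where the hypothesis ``$M$ contains no reduction position'' is consumed, and it rests on the prefix-closedness of the reduction-position path together with the top-down discipline of the development. Once this is established, the remainder is a routine case analysis on the three diagrams of Lemma~\ref{lemma-R-forking} plus determinism of $\xrightarrow{\TREE}$.
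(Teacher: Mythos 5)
Your proposal is correct and follows essentially the same route as the paper: the paper's proof likewise observes that an $\xrightarrow{I,M,\NTREE}$-step cannot create (or move) the normal-order redex, so the redex of $T_1$ already exists in $T$, and then obtains $M'$ by inheriting the $M$-labels through the $\TREE$-step, which is exactly the first diagram of Lemma~\ref{lemma-R-forking} that you invoke together with determinism of $\xrightarrow{\TREE}$. Your write-up merely spells out in more detail (via prefix-closedness of the reduction-position path) what the paper states in one line, and handles the degenerate third-diagram case explicitly with $M'=\emptyset$.
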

\begin{proof}
This follows since the $\xrightarrow{I,M,\NTREE}$-reduction cannot generate a new normal order $\TREE$-redex.
Hence, the normal order redex of $T_1$ also exists in $T$. The set $M'$ can be found by labeling $T$ with $M$, then performing
the $\TREE$-reduction where all labels of $M$ are kept and inherited by the reduction, except for those positions which are
removed by the reduction.
\end{proof}

\begin{lemma}\label{lemma:commute:ntree-maycon}
 Let $T \xrightarrow{I,\NTREE} T'$ and $T'\maycon_{\TREE}$. Then $T\maycon_{\TREE}$.
\end{lemma}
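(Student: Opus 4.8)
The plan is to argue by induction on the length $n$ of a $\TREE$-reduction sequence $T' \xrightarrow{\TREE,n} W$ of $T'$ to an $\LTREE$-WHNF $W$ (such a sequence exists since $T'\maycon_{\TREE}$), proving the statement uniformly for all such pairs of trees; throughout, I fix a set of positions $M$ with $T \xrightarrow{I,M,\NTREE} T'$. In the base case $n = 0$ the tree $T'$ is itself an answer, and then Lemma~\ref{lemma:T-successful-nsr-base-case} immediately shows that $T$ is an answer as well, so $T\maycon_{\TREE}$.

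For the induction step I would split $T' \xrightarrow{\TREE} T_1' \xrightarrow{\TREE,n-1} W$ and commute the leading $\xrightarrow{I,M,\NTREE}$-reduction past the single normal-order step using Lemma~\ref{lemma:commute-one-nsr-base}, obtaining a tree $T_3$ and a position set $M'$ with $T \xrightarrow{\TREE} T_3 \xrightarrow{I,M'} T_1'$. If $M'$ contains no reduction position of $T_3$, then $T_3 \xrightarrow{I,M',\NTREE} T_1'$; since $T_1' \xrightarrow{\TREE,n-1} W$, the induction hypothesis yields $T_3\maycon_{\TREE}$, and prepending $T \xrightarrow{\TREE} T_3$ closes this case.

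The delicate case --- which I expect to be the main obstacle --- is when $M'$ does contain a reduction position of $T_3$. The key remark is that, by Definition~\ref{def:one-reduction}, every position marked by an $\xrightarrow{I,\cdot}$-step is necessarily a redex, whereas the reduction positions of any tree form a chain (descending through the function positions of applications, the scrutinees of $\tcase$-expressions, and the first arguments of $\tseq$-expressions) of which only the deepest one can possibly be a redex, and that deepest one is exactly the normal-order redex. Hence $M'$ can contain only that one reduction position, and it is the normal-order redex of $T_3$, so Lemma~\ref{lemma:iteration-stops-maycon} applies to $T_3 \xrightarrow{I,M'} T_1'$ (using $T_1'\maycon_{\TREE}$) and produces a tree $T''$ and a set $M''$ with $T_3 \xrightarrow{\TREE,*} T'' \xrightarrow{I,M'',\NTREE} T_1'$. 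Applying the induction hypothesis to $T'' \xrightarrow{I,M'',\NTREE} T_1'$ (legitimate, since $T_1' \xrightarrow{\TREE,n-1} W$) gives $T''\maycon_{\TREE}$, whence $T_3 \xrightarrow{\TREE,*} T'' \maycon_{\TREE}$ and finally $T\maycon_{\TREE}$. Apart from this case the argument is just an assembly of the commutation lemma and the base-case lemma; the genuine subtlety is exactly the recognition that a reduction position surviving in the commuted label set must be the normal-order redex, which is what lets Lemma~\ref{lemma:iteration-stops-maycon} take over and the induction close.
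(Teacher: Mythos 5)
Your proof is correct and follows essentially the same route as the paper's: induction on the length of the $\TREE$-reduction of $T'$ to an $\LTREE$-WHNF, the base case by Lemma~\ref{lemma:T-successful-nsr-base-case}, commutation of the single normal-order step by Lemma~\ref{lemma:commute-one-nsr-base}, and the case split on whether the commuted label set contains a normal-order redex, handled by the induction hypothesis resp.\ by Lemma~\ref{lemma:iteration-stops-maycon} followed by the induction hypothesis. Your explicit justification that any reduction position occurring in $M'$ must be the (unique) normal-order redex is a point the paper leaves implicit and is welcome; just note the small slip that the redex sits at the parent of the deepest reduction position (whose head is a value), not at the deepest reduction position itself.
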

\begin{proof}
 We show by induction on $k$ that whenever  $T \xrightarrow{I,\NTREE} T'\xrightarrow{\TREE,k} T''$ where $T''$ is an $\LTREE$-WHNF,
 then $T\maycon_{\TREE}$.
The base case is $k=0$ and it holds by Lemma~\ref{lemma:T-successful-nsr-base-case}.
For the induction step let $T \xrightarrow{I,\NTREE} T' \xrightarrow{\TREE} T_0 \xrightarrow{\TREE,k} T''$.
We apply Lemma~\ref{lemma:commute-one-nsr-base} to $T \xrightarrow{I,\NTREE} T' \xrightarrow{\TREE} T_0$
and thus have $T \xrightarrow{\TREE} T_1 \xrightarrow{I,M} T_0 \xrightarrow{\TREE,k} T''$ for some $M$.

This situation can be depicted by the following diagram where the dashed reductions follow by Lemma~\ref{lemma:commute-one-nsr-base}:
\[
\xymatrix@R=6mm{
  T \ar@{-->}[d]_{\TREE}\ar[r]^{I,\NTREE} &T' \ar[r]^{\TREE} &T_0\ar[rr]^{\TREE,k} &&T''\\
  T_1\ar@{-->}[urr]_{I,M}
}
\]

If $M$ does not contain a normal order redex, then the induction hypothesis shows that $T_1\maycon_{\TREE}$ and thus
also $T\maycon_{\TREE}$. 
Now assume that $M$ contains a normal order redex. Then we apply Lemma~\ref{lemma:iteration-stops-maycon} to
$T_1 \xrightarrow{I,M} T_0$ (note that $T_0\maycon_{\TREE}$ and hence the lemma is applicable).
This shows that $T_1 \xrightarrow{\TREE,*} T_0''\xrightarrow{I,\NTREE} T_0$:

\[
\xymatrix@R=6mm{
  T \ar@{->}[d]_{\TREE}\ar[r]^{I,\NTREE} &T' \ar[r]^{\TREE} &T_0\ar[rr]^{\TREE,k} &&T''\\
  T_1\ar@{-->}[d]_{\TREE,*}\ar@{->}[urr]_{I,M}\\
  T_0''\ar@{-->}[uurr]_{I,\NTREE}
}
\]
Now we can apply the induction hypothesis
to $T_0''\xrightarrow{\NTREE} T_0 \xrightarrow{\TREE,k} T''$ and have $T_0''\maycon_{\TREE}$ which also shows $T\maycon_{\TREE}$.
\end{proof}

\begin{proposition}[Standardization]\label{prop:I-sequence-gives-maycon}
Let $T_1,\ldots,T_k$ be infinite trees such that 
$T_k \xrightarrow{I,M_{k-1}} T_{k-1} \xrightarrow{I,M_{k-2}} T_{k-2} \ldots \xrightarrow{I,M_1} T_1$,
where $T_1$ is an $\LTREE$-WHNF.
Then $T_k\maycon_{\TREE}$
\end{proposition}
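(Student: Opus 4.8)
The plan is to argue by induction on $k$, the length of the $\xrightarrow{I}$-chain, pushing the convergence property backwards one step at a time using the commutation and standardization lemmas already established for $\xrightarrow{I}$ versus $\xrightarrow{\TREE}$. The base case $k=1$ is immediate: $T_1$ is an $\LTREE$-WHNF, hence $T_1\maycon_{\TREE}$ by a reduction sequence of length $0$. For the induction step, assume the claim holds for chains of length $k-1$, so that from $T_{k-1}\xrightarrow{I,M_{k-2}}\cdots\xrightarrow{I,M_1}T_1$ we already know $T_{k-1}\maycon_{\TREE}$; it then remains to propagate this across the single step $T_k \xrightarrow{I,M_{k-1}} T_{k-1}$.

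For that single step I would distinguish two cases according to whether $M_{k-1}$ labels a normal-order redex of $T_k$. If $M_{k-1}$ contains no normal-order redex, then $T_k \xrightarrow{I,\NTREE} T_{k-1}$, and since $T_{k-1}\maycon_{\TREE}$, Lemma~\ref{lemma:commute:ntree-maycon} directly yields $T_k\maycon_{\TREE}$. If $M_{k-1}$ does label the normal-order redex of $T_k$, then since $T_{k-1}\maycon_{\TREE}$ the hypotheses of Lemma~\ref{lemma:iteration-stops-maycon} are met, so there are $T''$ and $M'$ with $T_k \xrightarrow{\TREE,*} T'' \xrightarrow{I,M',\NTREE} T_{k-1}$. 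Applying Lemma~\ref{lemma:commute:ntree-maycon} to $T'' \xrightarrow{I,\NTREE} T_{k-1}$ and $T_{k-1}\maycon_{\TREE}$ gives $T''\maycon_{\TREE}$, and prefixing the $\TREE$-reduction $T_k \xrightarrow{\TREE,*} T''$ shows $T_k\maycon_{\TREE}$. One small subtlety to address is the general shape of $M_{k-1}$: if it both contains the normal-order redex and other (deeper) redexes, the splitting into $T_k \xrightarrow{\TREE} T_1' \xrightarrow{I} T_{k-1}$ may be iterated — but this is precisely the situation Lemma~\ref{lemma:iteration-stops-maycon} was designed to handle (its three-case analysis rules out the non-terminating scenarios under the assumption $T_{k-1}\maycon_{\TREE}$), so no extra work is needed here.

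I expect the proof to be short, essentially a bookkeeping induction, because all the genuine difficulty has already been isolated in Lemmas~\ref{lemma-R-forking}–\ref{lemma:commute:ntree-maycon}. The only place that deserves a moment's care is making sure the induction hypothesis is applied to the correct endpoint: it is $T_{k-1}\maycon_{\TREE}$ that is needed both to invoke Lemma~\ref{lemma:iteration-stops-maycon} and, afterwards, to invoke Lemma~\ref{lemma:commute:ntree-maycon} on $T''\xrightarrow{I,\NTREE}T_{k-1}$. There is no real obstacle beyond organizing this; the conceptual content lives entirely in the preceding lemmas, so the write-up is just the two-case step plus the trivial base case.

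\begin{proof}
We use induction on $k$. For $k=1$, $T_1$ is an $\LTREE$-WHNF and thus $T_1\maycon_{\TREE}$. For the induction step, assume $k\ge 2$ and that the claim holds for chains of length $k-1$; in particular, from $T_{k-1}\xrightarrow{I,M_{k-2}}\cdots\xrightarrow{I,M_1}T_1$ we obtain $T_{k-1}\maycon_{\TREE}$. Consider the step $T_k\xrightarrow{I,M_{k-1}}T_{k-1}$. If $M_{k-1}$ contains no normal-order redex, i.e.\ $T_k\xrightarrow{I,\NTREE}T_{k-1}$, then Lemma~\ref{lemma:commute:ntree-maycon} applied with $T'=T_{k-1}$ yields $T_k\maycon_{\TREE}$. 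If $M_{k-1}$ contains a normal-order redex, then since $T_{k-1}\maycon_{\TREE}$ we may apply Lemma~\ref{lemma:iteration-stops-maycon} to $T_k\xrightarrow{I,M_{k-1}}T_{k-1}$, obtaining $T''$ and $M'$ with $T_k\xrightarrow{\TREE,*}T''\xrightarrow{I,M',\NTREE}T_{k-1}$. Now Lemma~\ref{lemma:commute:ntree-maycon} applied to $T''\xrightarrow{I,\NTREE}T_{k-1}$ and $T_{k-1}\maycon_{\TREE}$ gives $T''\maycon_{\TREE}$, and prepending $T_k\xrightarrow{\TREE,*}T''$ shows $T_k\maycon_{\TREE}$.
\end{proof}
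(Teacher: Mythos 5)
Your proof is correct and follows essentially the same route as the paper: induction on $k$, with the two-case split on whether $M_{k-1}$ labels a normal-order redex, using Lemma~\ref{lemma:iteration-stops-maycon} to factor the step as $T_k \xrightarrow{\TREE,*} T'' \xrightarrow{I,\NTREE} T_{k-1}$ and Lemma~\ref{lemma:commute:ntree-maycon} to pull convergence back. No substantive differences from the paper's argument.
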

 \begin{proof}
We use induction on $k$. If $k=1$ then the claim obviously holds since $T_k = T_1$ is already an $\LTREE$-WHNF.
For the induction step assume that $T_i \xrightarrow{I,M_{i-1}} T_{i-1}~\ldots \xrightarrow{I,M_1}~T_1$ 
 and $T_i\maycon_{\TREE}$. Let $T_{i+1} \xrightarrow{I,M_i} T_i$.
If $M_i$ contains a normal order redex, then we apply Lemma~\ref{lemma:iteration-stops-maycon} and have the following situation
 \[
 \xymatrix@C=15mm@R=6mm{
  T_{i+1}\ar[d]_{\TREE,*}\ar[rr]^{I,M_i}    && T_i \ar[rr]^{I,*}\ar[d]^{\TREE,*} && T_1
  \\
  T_{i+1}'\ar[urr]_{I,\NTREE}                &      & T_i'
  }
  \]             
where $T_i'$ is an $\LTREE$-WHNF. We apply Lemma~\ref{lemma:commute:ntree-maycon} to $T_{i+1}' \xrightarrow{I,\NTREE} T_i \xrightarrow{\TREE,*} T_i'$
which shows that $T_{i+1}'\maycon_{\TREE}$ and thus also $T_{i+1}\maycon_{\TREE}$.

\noindent If $M_i$ contains no normal order redex, we have
 \[
 \xymatrix@R=6mm{
  T_{i+1}\ar[r]^{I,\NTREE} & T_i \ar[rr]^{I,*}\ar[d]^{\TREE,*} && T_1
  \\
                                           & T_i'
  }
\]
where $T_i'$ is an $\LTREE$-WHNF. We apply Lemma~\ref{lemma:commute:ntree-maycon} to $T_{i+1} \xrightarrow{I,\NTREE}T_i \xrightarrow{\TREE,*} T_i'$
and have $T_{i+1}\maycon_{\TREE}$.
 \end{proof}

\subsection{\texorpdfstring{Equivalence of Tree-Convergence and $\LLR$-Convergence}{Equivalence of Tree-Convergence and LR-Convergence}}
In this section we will show that $\LLR$-convergence for finite expressions $s \in {\LETRECEXPR}$
coincides with convergence for the corresponding infinite tree $\ITT(s)$.

\begin{lemma}\label{lemma-cp-ll-inftreeeq} Let $s_1,s_2 \in {\LETRECEXPR}$ be finite expressions and $s_1 \to s_2$ by a rule (\rcp), or (\rlll).
Then $\ITT(s_1) = \ITT(s_2)$. \qed
\end{lemma}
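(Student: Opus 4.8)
The statement is that the two reduction rules in question, $(\rcp)$ (copying an abstraction into a needed position, which covers $(\rcpin)$ and $(\rcpe)$) and $(\rlll)$ (the union of $(\rllet)$, $(\rlapp)$, $(\rlcase)$, $(\rlseq)$), leave the associated infinite tree unchanged: $\ITT(s_1) = \ITT(s_2)$ whenever $s_1 \to s_2$ by such a rule. Since infinite trees are identified by extensional equality of the partial functions from positions to labels, the plan is to show that for every position $p$ the computation of $p(\ITT(s_1))$ succeeds iff the computation of $p(\ITT(s_2))$ succeeds, and in that case yields the same label. The natural tool is the step-by-step rewriting system of \FIGURE~\ref{figure-inftree} that computes $p(\ITT(s))$ from $\Ctxt[s\LABELCOMP_p]$: I would argue that the labeling computation on $s_1$ and the labeling computation on $s_2$ can be made to reach a common configuration, so they agree on all positions.

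First I would handle the $(\rlll)$ rules, which only rearrange the $\tletrec$-structure (flattening nested lets via $(\rllet)$, or lifting a let out of an application, a \tcase, or a \tseq\ via $(\rlapp)$, $(\rlcase)$, $(\rlseq)$). For these, the key observation is that the tree-construction rules 10--13 of \FIGURE~\ref{figure-inftree} treat $\tletrec$-bindings purely as a device for looking up the definition of a variable: rule 10 descends through a let into its body, rules 11--13 redirect the computation from an occurrence of a let-bound variable $x$ to its defining expression, and the ``hit the same variable twice gives $\tBot$'' clause is the only place where the let-structure matters beyond lookup. Flattening (\rllet) and lifting ($(\rlapp)$ etc.) change neither the set of in-scope bindings visible from any subterm nor the defining expression of any variable, nor the variable-lookup chains, so the labeling computation for any position $p$ on $s_1$ and on $s_2$ proceeds through the same sequence of rewriting steps up to the harmless reordering/regrouping of let-environments (which the equivalence of $\tletrec$-expressions already identifies up to reordering of binding components). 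Hence $p(\ITT(s_1))$ and $p(\ITT(s_2))$ agree, or are both undefined, for every $p$.

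Next I would treat $(\rcp)$, i.e. $(\rcpin)$ and $(\rcpe)$, which replace an occurrence $x_m^V$ of a variable bound (via a chain $x_1 = (\lambda x.s)$, $x_2 = x_1, \ldots, x_m = x_{m-1}$) to an abstraction $\lambda x.s$ by a fresh copy of that abstraction. Here the point is that rules 11--13 of the tree construction, when the labeling computation arrives at an occurrence of $x_m$, chase the binding chain back to $x_1 = \lambda x.s$ and then continue inside $\lambda x.s$ — unless the position revisits the same let-bound variable, in which case the result is $\tBot$. Copying the abstraction in place simply short-circuits this chase: after the copy, arriving at the (now replaced) position lets the computation continue directly into the copy of $\lambda x.s$. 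One has to check that the resulting position functions still coincide, including the $\tBot$ cases: a cyclic chain $x_1 = \lambda x.s, \ldots$ that would eventually re-hit $x_1$ still re-hits $x_1$ after copying because the original bindings are retained by the rule, and the freshly copied subterm introduces only fresh bound variables $\lambda x$ which are renamed and hence cannot create new coincidences; conversely no coincidence is destroyed. So again the label computed at every position $p$ is the same for $s_1$ and $s_2$. I would also remark that the normal-order labels $S, V, T$ attached by the $\LLR$-labeling algorithm play no role in $\ITT$, so it suffices to consider the underlying unlabeled rewrite.

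The main obstacle, and where I would spend the most care, is making the informal phrase ``the labeling computations reach a common configuration'' precise in the presence of the cyclic-variable $\tBot$ clause and the distinct-variable convention. Concretely, I would prove a small technical lemma: for every finite expression $t$ and every $(\rcp)$ or $(\rlll)$ step $t \to t'$, and every position $p$, the $\Ctxt[\cdot\LABELCOMP_p]$-rewriting sequence starting from $t$ and the one starting from $t'$ either both fail, or both succeed with the same label — by analyzing the finitely many shapes the rewrite configuration can take at the moment when the two computations could first diverge (the redex site of the $t \to t'$ step), and checking that they re-synchronize. Barendregt-style bookkeeping of which let-bound variables have already been ``hit'' is the delicate part; everything else is a routine case distinction over the clauses of \FIGURE~\ref{figure-inftree}.
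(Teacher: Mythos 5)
The paper gives no proof of this lemma at all: it is stated with \qed{} as immediate from the definition of $\ITT$, so there is nothing to compare beyond the fact that the authors regard it as routine. Your positionwise argument — showing that for every position $p$ the label computation of \FIGURE~\ref{figure-inftree} on $s_1$ and $s_2$ either both fail or both yield the same label, with $(\rlll)$ leaving variable lookup untouched and $(\rcp)$ merely short-circuiting the binding-chain chase performed anyway by rules 11--13 — is exactly the verification the paper leaves implicit, and it is correct; the only point deserving the care you already flag is that the copied abstraction's free variables still resolve to the same outer bindings (distinct variable convention) and that the $\tBot$/cycle clause can only fire in the $\varepsilon$-chase, which the acyclic chain $x_m,\ldots,x_1$ ending in an abstraction cannot trigger differently in $s_1$ and $s_2$.
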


\begin{lemma}\label{lemma-red-base-case} Let $s$ be a finite expression. If $s$ is an $\LLR$-WHNF then  $\ITT(s)$ is an answer.
If   $\ITT(s)$ is an answer, then $s \maycon_{\LR}$.
\end{lemma}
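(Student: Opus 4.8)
The plan is to prove both directions by analyzing the relationship between the shape of a finite $\LETRECEXPR$-expression and the root label of its infinite tree $\ITT(s)$. Recall the $\LLR$-WHNFs: an abstraction $\lambda x.s$, a constructor application $(c~\vect{s})$, a $\tletrec$-expression $\tletrx{\iEnv}{v}$ with $v$ a value, or an expression $\tletrx{x_1 = v, \bchainXInd{2}{m},\iEnv}{x_m}$ with $v = (c~\vect{s})$. For the first direction (WHNF implies $\ITT(s)$ is an answer), I would go through these four cases. If $s = \lambda x.s'$ or $s = (c~\vect{s})$, then by the base rules of \FIGURE~\ref{figure-inftree} we get $\varepsilon(\ITT(s)) = \lambda x$ or $\varepsilon(\ITT(s)) = c$ directly, so $\ITT(s)$ is an $\LTREE$-answer. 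If $s = \tletrx{\iEnv}{v}$, then rule 10 (together with rule 11/12 if $v$ is reached through indirections) shows that computing the label at $\varepsilon$ descends through the $\tletrec$ into $v$, and since $v$ is an abstraction or constructor application we again land on a $\lambda x$ or $c$ label — here I must check that no cyclic-variable situation arises, which holds because $v$ is literally a value in the binding, not a variable. For the last case $\tletrx{x_1 = v, \bchainXInd{2}{m},\iEnv}{x_m}$ with $v=(c~\vect s)$, computing $\varepsilon(\ITT(s))$ follows the chain $x_m \to x_{m-1} \to \cdots \to x_1 \to v$ via rules 11/12 and terminates at the constructor $c$ with no variable hit twice, so the root label is $c$ and $\ITT(s)$ is an answer.

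For the converse (if $\ITT(s)$ is an answer then $s \maycon_{\LR}$), I would argue by well-founded induction on the length of the normal-order reduction combined with Lemma~\ref{lemma-cp-ll-inftreeeq}. The key observation is: if $s$ is \emph{not} already an $\LLR$-WHNF, then either $s$ has a normal-order redex, or the labeling algorithm fails / a cyclic dependency is present. In the latter ``stuck'' sub-case I must show $\varepsilon(\ITT(s))$ is not $\lambda x$ or $c$: a failure of the labeling algorithm or a cycle like $\tletrec~x=x~\tin~x$ produces a $\tBot$ (by the last clause of \FIGURE~\ref{figure-inftree}) or an undefined/variable label at the root, contradicting that $\ITT(s)$ is an answer. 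So assume $s \xrightarrow{\LR} s'$. If this step is a $(\rcp)$ or $(\rlll)$ step, then $\ITT(s') = \ITT(s)$ by Lemma~\ref{lemma-cp-ll-inftreeeq}, so $\ITT(s')$ is still an answer and the induction hypothesis (on a shorter remaining reduction, or via a measure that $(\rcp)/(\rlll)$ steps decrease) gives $s'\maycon_{\LR}$, hence $s\maycon_{\LR}$. If the step is one of $(\rlbeta)$, $(\rseqc)/(\rseqin)/(\rseqe)$, or $(\rcasec)/(\rcasein)/(\rcasee)$, I would need a separate small lemma (presumably coming later in the paper) saying that such a step corresponds to a genuine tree reduction $\ITT(s) \xrightarrow{I,M} \ITT(s')$ of the matching type; combined with standardization (Proposition~\ref{prop:I-sequence-gives-maycon}) and Lemma~\ref{lemma-R-star-forking}, convergence of $\ITT(s')$ would follow from that of $\ITT(s)$, closing the induction.

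Actually, since the statement of Lemma~\ref{lemma-red-base-case} is only the \emph{base case} of the convergence-equivalence proof, I expect the intended proof to be much more elementary than the inductive route above: it should be a direct, purely syntactic case analysis showing (a) every $\LLR$-WHNF maps to a tree with root label $\lambda x$ or $c$ (done above), and (b) conversely, if $s$ is not a WHNF, then either $s$ has a normal-order redex (so trivially $s\maycon_{\LR}$ is not yet decided, but the hypothesis only requires $\ITT(s)$ be an answer — wait, this needs the stronger ``$s$ reduces towards a WHNF''), OR $s$ is a stuck term whose tree has root $\tBot$ or is headed by an application/$\tcase$/$\tseq$ with no value in head position. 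Re-examining: the cleanest formulation is that \emph{the root label of $\ITT(s)$ is $\lambda x$ or $c$ if and only if $s$ is an $\LLR$-WHNF}, from which the second sentence of the lemma follows immediately. So I would prove this iff by the case analysis on the structure of $s$ outlined in the first paragraph, handling every syntactic form ($x$, application, abstraction, $\tletrec$, constructor application, $\tseq$, $\tcase$) and checking which ones can have a value-label at the root — and the main obstacle is the bookkeeping for $\tletrec$-expressions with variable chains, making sure the ``same variable hit twice yields $\tBot$'' clause is invoked exactly when $s$ is a non-WHNF cyclic term and never when $s$ is a genuine WHNF of the chain form.
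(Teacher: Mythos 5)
Your first direction is fine, but the second direction contains a genuine gap: the ``cleanest formulation'' you finally settle on --- that the root label of $\ITT(s)$ is $\lambda x$ or $c$ if and only if $s$ is an $\LLR$-WHNF --- is false, and with it the claim that the second sentence of the lemma ``follows immediately''. Consider $s_0 = \tletrec~x = (\tletrec~y = \lambda z.z~\tin~y)~\tin~x$, or $s_1 = \tletrec~x_1 = \lambda z.z,\ x_2 = x_1~\tin~x_2$: in both cases the root label of the tree is $\lambda z$, so $\ITT(s_i)$ is an answer, yet neither expression is an $\LLR$-WHNF (in $s_0$ the binding of $x$ is a \tletrec-expression rather than a value, and the chain form of WHNF requires the bound value to be a constructor application, which excludes $s_1$). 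This is exactly why the lemma concludes $s \maycon_{\LR}$ rather than ``$s$ is a WHNF''. The paper's proof makes this precise: from the fact that the label computation at $\varepsilon$ ends in $\lambda x$ or $c$ it extracts that $s = \NL[v]$ for some value $v$, where $\NL$ ranges over a grammar of nested \tletrec-contexts with variable chains, and it then proves, by induction on an explicit measure $\mu$, that every such $\NL[v]$ reduces by normal-order $(\rcp)$- and $(\rllet)$-steps to an $\LLR$-WHNF. That termination argument is the actual content of the second sentence and is absent from your proposal.

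The fallback route in your second paragraph does not repair this. You propose induction on the length of the normal-order reduction of $s$, but the existence (and hence finiteness) of a normal-order reduction to a WHNF is precisely what has to be proved, and you give no concrete measure for the $(\rcp)$/$(\rlll)$-only steps. Moreover, the case split on $(\rlbeta)$-, case- and seq-steps is spurious at this point: once one knows $s = \NL[v]$, the normal-order redex (if $s$ is not already a WHNF) can only be a $(\rcp)$- or $(\rllet)$-redex, which is what the paper's case analysis establishes. Handling those cases by appeal to Lemma~\ref{lemma-R-star-forking} and Proposition~\ref{prop:I-sequence-gives-maycon} would also be circular in spirit, since these results feed Proposition~\ref{prop:tree-conv-impl-term-conv}, whose induction uses the present lemma as its base case.
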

 \begin{proof}
If  $s$ is an $\LLR$-WHNF, then obviously, $\ITT(s)$ is a answer.
If $\ITT(s)$ is an answer, then the label computation of the infinite tree for the empty position using $s$, \ie\ $s\LABELCOMP_{\varepsilon}$,
must be $\lambda x$ or $c$ for some constructor. If we consider all the cases where the label computation for $s\LABELCOMP_{\varepsilon}$ ends
with such a label, we see that $s$ must be of the form $\NL[v]$, where $v$ is an $\LLR$-answer and the contexts $\NL$ are constructed
according to the grammar:
\[\begin{array}{ll}
\NL ::= &[\cdot] \OR \tletrec~\iEnv~\tin~\NL \\
        &\OR \tletrec~x_1 = \NL[\cdot], \{{x_i}={\NL[x_{i-1}]}\}_{i=2}^n,\iEnv~\tin~\NL[x_n]
\end{array}
\]

We show by induction that every expression $\NL[v]$, where $v$ is a value, can be reduced by normal order (\rcp)- and (\rllet)-reductions
to a WHNF in $\LLR$. We use the following induction measure $\mu$ on $\NL[v]$:
\[
\begin{array}{l@{\,}l@{\,}l}
\mu(v) &:= &0\\
\mu(\tletrec~\iEnv~\tin~\NL[v]) &:= & 1+\mu(\NL[v])\\
\mu(\tletrec~x_1 = \NL_1[v], \{{x_i}={\NL_i[x_{i-1}]}\}_{i=2}^n,\iEnv~\tin~\NL_{n+1}[x_n]) &:= &\\
\multicolumn{3}{r}{\mu(\NL_1[v]) + \mu(\tletrec~x_2 = \NL_2[v], \{{x_i}={\NL_i[x_{i-1}]}\}_{i=3}^n,\iEnv~\tin~\NL_{n+1}[x_n])}\\
\end{array}
\] 
The base case obviously holds, since $v$ is already an $\LLR$-WHNF. For the induction step assume that $\NL[v']\xrightarrow{\LR,\rcp \vee \rllet,*} t$, where $t$ is 
an $\LLR$-WHNF for every $\NL[v']$ with $\mu(\NL[v']) < k$. Let $\NL$, and $v$ be fixed, such that $\mu(\NL[v]) = k \geq 1$.
There are two cases:
\begin{itemize}
 \item $\NL[v] =\tletrec~\iEnv~\tin~\NL'[v]$. 
 If $\NL'$ is the empty context, then $\NL[v]$ is an $\LLR$-WHNF.
 Otherwise $\NL'[v]$ is a $\tletrec$-expression. Thus we can apply an $(\LR,(\mbox{\rlletin}))$-reduction to $\NL[v]$, where the measure $\mu$ is 
 decreased by one. The induction hypothesis shows the claim.
\item $\NL[v]=\tletrec~x_1 = \NL_1[v], \{{x_i}={\NL_i[x_{i-1}]}\}_{i=2}^n,\iEnv~\tin~\NL_{n+1}[x_n]$.
If $\NL_{n+1}[x_n]$ is a $\tletrec$-expression, then we can apply an $(\LR,\rlletin)$-reduction to $\NL[v]$ and the measure $\mu$ is
 decreased by 1. 
If $\NL_{n+1}$ is the empty context, and there is some $i$ such that $\NL_i$ is not the empty context, then we can choose the largest number $i$
and apply an $(\LR,\rllete)$-reduction to $\NL[v]$. Then the measure $\mu$ is strictly decreased and we can use the induction hypothesis.
If all the contexts $\NL_i$ for $i=1,\ldots,n+1$ are empty contexts, then either $\NL[v]$ is an $\LLR$-WHNF (if $v$ is a constructor application)
or we can apply an $(\LR,\rcp)$ reduction to obtain an $\LLR$-WHNF. \qedhere
\end{itemize}
\end{proof}

\begin{lemma}\label{lemma:single-red-inf}
Let $s\in\LETRECEXPR$ such that $s \xrightarrow{\LR,a} t$.  
If the reduction $a$ is (\rcp) or (\rlll) then $\ITT(s) = \ITT(t)$.  
If the reduction $a$ is (\rlbeta), (\rcasec), (\rcasein), (\rcasee) or (\rseqc), (\rseqin),(\rseqc) then 
$\ITT(s) \xrightarrow{I,M,a'} \ITT(t)$ for some $M$, where $a'$ is (\rbetaTr), (\rcaseTr), or (\rseqTr), respectively, and the set 
$M$ contains normal order redexes.
\end{lemma}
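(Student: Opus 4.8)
For the first part, when $a$ is $(\rcp)$ or $(\rlll)$, the equality $\ITT(s) = \ITT(t)$ is immediate from Lemma~\ref{lemma-cp-ll-inftreeeq}, since a normal-order step $s \xrightarrow{\LR,a} t$ is in particular an application of the rule $a$ inside some context. So all the work is in the second part, which I would prove by a case analysis over $a \in \{(\rlbeta),(\rcasec),(\rcasein),(\rcasee),(\rseqc),(\rseqin),(\rseqe)\}$, organised uniformly as follows. The $\LLR$-redex occupies a unique position in $s$, but its unfolding may be duplicated by $\ITT$: if the redex sits inside a binding that is reachable from the body (directly or through a variable chain, possibly cyclically), then its tree-translation occurs at a whole set $M$ of positions of $\ITT(s)$, and every position in $M$ is an $a'$-redex of $\LTREE$ of the corresponding type ($(\rbetaTr)$ for $(\rlbeta)$; $(\rcaseTr)$ for the $\tcase$-rules; $(\rseqTr)$ for the $\tseq$-rules). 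I take $M$ to be exactly this set of occurrences; for $(\rcasein)$, $(\rcasee)$, $(\rseqin)$, $(\rseqe)$ the scrutinee/first argument is reached through the chain $x_1 = v, \{x_i = x_{i-1}\}$ in $\LLR$, which under $\ITT$ just amounts to dereferencing the occurrence of $x_m$ in redex position to $\ITT(v)$.

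The first thing to check is $\ITT(s) \xrightarrow{I,M,a'} \ITT(t)$. The core fact is that $\ITT$ commutes with the binding creation performed by the contracted $\LLR$-rule and with substitution: for $(\rlbeta)$, $\ITT(C[\tletrxx{x = t_0}{s_0}])$ is obtained from $\ITT(C[((\lambda x.s_0)~t_0)])$ by replacing, simultaneously at all occurrences in $M$, the subtree $((\lambda x.\ITT(s_0))~\ITT(t_0))$ by $\ITT(s_0)[\ITT(t_0)/x]$; the $(\rcasec)$/$(\rcasein)$/$(\rcasee)$ rules create the bindings $y_i = s_i$ for the constructor arguments whose unfolding re-substitutes $\ITT(s_i)$ into the matching right-hand side, which is exactly the effect of $(\rcaseTr)$; and the $\tseq$-rules simply discard the first component, matching $(\rseqTr)$. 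This identification has to be matched against the precise top-down definition of $\xrightarrow{I,M}$ in Definition~\ref{def:one-reduction}, including the clause producing $\tBot$: when the occurrences in $M$ are nested or cyclically linked so that the development keeps relabelling the root of some $p \in M$, the development yields $\tBot$ there, and one verifies that $\ITT(t)$ produces $\tBot$ at the corresponding position as well, because the newly created cyclic chain of bindings then ``hits the same let-bound variable twice'' in the sense of Figure~\ref{figure-inftree}.

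The second thing to check is that $M$ contains a normal-order redex of $\ITT(s)$. For this I would verify that the $\LLR$-normal-order reduction context of $s$ is carried by $\ITT$ to a $\LTREE$-reduction position: the outer letrec-wrappers are erased by unfolding, the application/$\tcase$/$\tseq$ spine of the reduction context becomes a context in $\IECtxts$, and a dereferencing step of a variable chain on the spine replaces the variable by the unfolding of its binding without leaving the spine. Hence the particular occurrence of the redex lying at the end of the normal-order search path of $s$ is at a reduction position of $\ITT(s)$ and lies in $M$, so $M$ contains a normal-order redex.

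The main obstacle will be the first check above: carrying out in full the comparison between the (possibly infinite, possibly $\tBot$-producing) top-down development $\xrightarrow{I,M,a'}$ of Definition~\ref{def:one-reduction} and the tree $\ITT(t)$, keeping the bookkeeping straight between redex occurrences duplicated through variable chains in $\LLR$, the labelled positions $M$ and the order in which they are developed, and the $\tBot$-clause of the tree construction. The remaining ingredients --- commutation of $\ITT$ with substitution, and the translation of $\LLR$-reduction contexts to tree reduction positions --- are routine but tedious structural inductions following the labelling algorithm.
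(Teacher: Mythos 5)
Your proposal is correct and follows essentially the same route as the paper: the first part is exactly Lemma~\ref{lemma-cp-ll-inftreeeq}, and for the second part the paper likewise takes $M$ to be the set of all positions of $\ITT(s)$ derived from the contracted $\LLR$-redex and observes that at least one of them is a normal-order tree redex. In fact the paper's own proof is only a three-line sketch that asserts $\ITT(s) \xrightarrow{I,M,a'} \ITT(t)$ without spelling out the development-versus-$\ITT(t)$ comparison (including the $\tBot$ clause) that you rightly identify as the main verification burden.
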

\begin{proof}
Only the latter needs a justification. Therefore, we label every redex in $\ITT(s)$ that is derived from the redex $s \xrightarrow{\LR} t$ by $\ITT(.)$.
This results in the set $M$ for $\ITT(s)$.
There will be at least one position in $M$ that is a normal order redex of $\ITT(s)$. 
\end{proof}

 \begin{proposition}\label{prop:maycon-expr-to-infexpr}
Let $s\in\LETRECEXPR$ such that $s\maycon_{\LR}$. Then $\ITT(s)\maycon_{\TREE}$.
 \end{proposition}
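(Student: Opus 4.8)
The plan is to show that any successful $\LLR$-evaluation of $s$ induces, via the tree translation $\ITT$, an $\xrightarrow{I}$-reduction sequence to an $\LTREE$-answer, and then invoke the standardization result (Proposition~\ref{prop:I-sequence-gives-maycon}) to conclude $\ITT(s)\maycon_{\TREE}$.

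First I would take the normal-order $\LLR$-reduction sequence $s = s_0 \xrightarrow{\LR} s_1 \xrightarrow{\LR} \cdots \xrightarrow{\LR} s_n$ where $s_n$ is an $\LLR$-WHNF, which exists by the hypothesis $s\maycon_{\LR}$. I apply $\ITT$ to every expression in this sequence. By Lemma~\ref{lemma-cp-ll-inftreeeq} (and Lemma~\ref{lemma:single-red-inf}), each step $s_i \xrightarrow{\LR} s_{i+1}$ translates as follows: if the rule is $(\rcp)$ or $(\rlll)$ then $\ITT(s_i) = \ITT(s_{i+1})$, so nothing happens on the tree side; otherwise the rule is one of $(\rlbeta)$, $(\rcasec)$, $(\rcasein)$, $(\rcasee)$, $(\rseqc)$, $(\rseqin)$, $(\rseqe)$, and then $\ITT(s_i) \xrightarrow{I,M_i,a_i'} \ITT(s_{i+1})$ for a suitable set $M_i$ that contains at least one normal-order $\TREE$-redex, where $a_i'$ is the corresponding tree rule. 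Dropping the steps where the tree is unchanged, I obtain a (finite) chain $\ITT(s) = T_{k} \xrightarrow{I} T_{k-1} \xrightarrow{I} \cdots \xrightarrow{I} T_{1} = \ITT(s_n)$ of $\xrightarrow{I}$-reductions (possibly $k=1$ if no such step occurs). Finally, since $s_n$ is an $\LLR$-WHNF, Lemma~\ref{lemma-red-base-case} gives that $\ITT(s_n)$ is an $\LTREE$-answer, hence an $\LTREE$-WHNF; then Proposition~\ref{prop:I-sequence-gives-maycon} applied to the chain $T_k \xrightarrow{I,M_{k-1}} T_{k-1} \xrightarrow{I,M_{k-2}} \cdots \xrightarrow{I,M_1} T_1$ with $T_1$ a WHNF yields $T_k = \ITT(s)\maycon_{\TREE}$, which is the claim.

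There is a small subtlety I would need to address carefully: Proposition~\ref{prop:I-sequence-gives-maycon} is stated for chains $T_k \xrightarrow{I,M_{k-1}} T_{k-1}\cdots$ and it makes no assumption on whether the $M_i$ contain normal-order redexes, so it applies directly; the only thing to be careful about is reindexing the sequence so the directions match (the chain produced by $\ITT$ runs ``forward'' from $\ITT(s)$ to $\ITT(s_n)$, while the proposition writes the chain from the non-WHNF end toward the WHNF end, which is exactly the same thing up to renaming). I would also note that the degenerate case where the entire $\LLR$-reduction consists only of $(\rcp)$/$(\rlll)$-steps is harmless: then $\ITT(s)=\ITT(s_n)$ is already an answer, so $\ITT(s)\maycon_{\TREE}$ trivially.

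I do not expect a genuine obstacle here, since all the heavy lifting has already been done: the per-step correspondence between $\xrightarrow{\LR}$ and $\xrightarrow{I}$ is Lemmas~\ref{lemma-cp-ll-inftreeeq} and~\ref{lemma:single-red-inf}, the base case is Lemma~\ref{lemma-red-base-case}, and the passage from an arbitrary $\xrightarrow{I}$-chain ending in a WHNF to normal-order $\TREE$-convergence is precisely the Standardization proposition. The proof of this proposition is therefore essentially a bookkeeping assembly of these three ingredients; the most delicate point is simply making sure the set $M_i$ in each $\xrightarrow{I}$-step does indeed contain a normal-order redex (which is explicitly part of the statement of Lemma~\ref{lemma:single-red-inf}), although for the purpose of applying Proposition~\ref{prop:I-sequence-gives-maycon} even that is not strictly needed.
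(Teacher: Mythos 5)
Your proof is correct and follows essentially the same route as the paper's: translate the normal-order $\LLR$-evaluation via $\ITT$ using Lemma~\ref{lemma:single-red-inf} (with (\rcp)/(\rlll)-steps collapsing by Lemma~\ref{lemma-cp-ll-inftreeeq}), apply Lemma~\ref{lemma-red-base-case} to see the final tree is an $\LTREE$-WHNF, and conclude with the Standardization Proposition~\ref{prop:I-sequence-gives-maycon}. Your extra remarks on reindexing, the degenerate all-(\rcp)/(\rlll) case, and the fact that normal-order membership of the $M_i$ is not needed are fine but only make explicit what the paper leaves implicit.
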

 \begin{proof}
We assume that  $s \xrightarrow{\LR,*} t$, where $t$ is a WHNF.
Using Lemma \ref{lemma:single-red-inf}, we see that there is
a finite sequence of reductions $\ITT(s) \xrightarrow{I,*} \ITT(t)$. 
Lemma~\ref{lemma-red-base-case} shows that $\ITT(r)$ is an $\LTREE$-WHNF.
Now Proposition~\ref{prop:I-sequence-gives-maycon} shows that $\ITT(s)\maycon_{\TREE}$.
 \end{proof}

We now consider the other direction and show that for every expression $s$: if $\ITT(s)$ converges, then $s$ converges, too.

\begin{lemma}\label{lemma-partition-R-red} 
Let $R$ be some reduction context, \st\ $\ITT(s) = R[T] \xrightarrow{\TREE,a'} R[T']$.
Then for $(a',a) \in \{(\mbox{\rbetaTr},\mbox{\rlbeta}), (\mbox{\rcaseTr}, \mbox{\rcase}), (\mbox{\rseqTr}, \mbox{\rseq})\}$
there exist expressions $s_1,s_2,s_3$ and an infinite tree $T'$ with $s \xrightarrow{\LR,\rlll,*} s_1 \xrightarrow{\LR,\rcp,0\vee 1} s_2 \xrightarrow{\LR,a} s_3 \text{ with } R[T']\xrightarrow{I,M} \ITT(s_3)$.

 \[
\xymatrix@R=6mm{
  s \ar@{->}[rr]^{\ITT(\cdot)}  \ar@{-->}[d]_{\LR,\rlll,*} &&  \ITT(s) = R[T] \ar@{->}[dd]^{\TREE,a'} \ar@/^70pt/@{-->}[ddd]^{I,M,a'}  \\
  s_1 \ar@{-->}[d]_{\LR,\rcp,0\vee 1}\\
  s_2 \ar@{-->}[uurr]^{\ITT(\cdot)}  \ar@{-->}[d]_{\LR,a} &&  R[T'] \ar@{-->}[d]^{I,M',a'}  \\
  s_3 \ar@{-->}[rr]^{\ITT(\cdot)}  && \ITT(s_3)  \\         
}
\]
\end{lemma}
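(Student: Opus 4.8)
The goal is to mimic one normal-order tree reduction $\ITT(s) = R[T] \xrightarrow{\TREE,a'} R[T']$ by a finite reduction sequence on the finite expression $s$, namely a burst of $(\rlll)$-reductions, possibly one $(\rcp)$-reduction, and then one ``essential'' reduction $a$ of the matching kind. The basic idea is to trace backwards through the definition of $\ITT(\cdot)$ which finite redex is responsible for the tree redex at the hole of $R$, and to show that one can syntactically expose it in the finite world.

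\textbf{Step 1: Locate the finite subterm responsible for the tree redex.} The hole position $p$ of $R$ is a reduction position of $\ITT(s)$, so by the labeling rules in Figure~\ref{figure-inftree} it is reached by a chain of the ``general position'' rules 1--13 interleaved with the $\varepsilon$-rules; the crucial observation (already used implicitly in Lemma~\ref{lemma:single-red-inf}) is that along a reduction context only rules for $@$, $\tseq$, $\tcase$ (rules 2, 4, 6) and the \tletrec-descent/variable-lookup rules 10--13 are used, never a descent under $\lambda$ or into a constructor argument. Tracking the computation of $p(\ITT(s))$ therefore isolates a finite occurrence in $s$: either a subterm of the form $((\lambda x.s')\;t)$, or $(\tseq\;v\;t)$ with $v$ a value, or $(\tcase_T\;(c\,\vec s)\;\tof\,\ldots)$, \emph{up to} possible indirections through \tletrec-bindings and variable chains (rules 11, 12, 13), which is exactly what $(\rlll)$ and $(\rcp)$ are designed to remove.

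\textbf{Step 2: Normalise the indirections with $(\rlll)$ and $(\rcp)$.} I would argue that by a finite number of normal-order $(\rlll)$-reductions one can flatten the nested \tletrec{}s on the path from the root of $s$ to this occurrence, and — if the head of the relevant subterm is a variable bound (transitively) to a value — by at most one $(\rcp)$-reduction one can copy that value into the needed position. This yields $s \xrightarrow{\LR,\rlll,*} s_1 \xrightarrow{\LR,\rcp,0\vee1} s_2$. By Lemma~\ref{lemma-cp-ll-inftreeeq} these steps do not change the infinite tree: $\ITT(s) = \ITT(s_1) = \ITT(s_2)$. Moreover, $s_2$ now has, at the top of a reduction context, an actual $(\rlbeta)$-, $(\rcase)$-, or $(\rseq)$-redex, matching the type $a'$ of the tree redex: this is where I use that the labelings of $\LLR$ and of $\LTREE$ agree along reduction contexts, so the finite normal-order redex of $s_2$ maps onto the tree redex at $p$.

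\textbf{Step 3: Perform the essential step and close the square.} Applying $s_2 \xrightarrow{\LR,a} s_3$, Lemma~\ref{lemma:single-red-inf} gives $\ITT(s_2) \xrightarrow{I,M,a'} \ITT(s_3)$ where $M$ contains the normal-order tree redex at $p$ (and possibly other residuals corresponding to the other occurrences of the copied/substituted subterm). On the other hand the given one-step normal-order tree reduction $R[T] \xrightarrow{\TREE,a'} R[T']$ is a single-redex instance of the same development. Both $R[T']$ and $\ITT(s_3)$ are obtained from $\ITT(s_2) = R[T]$ by $\xrightarrow{I}$-reductions of the same type with the tree redex at $p$ labeled; since $M$ contains that position, Lemma~\ref{lemma-red-triangle} (with $M_2$ the singleton redex $\{p\}$ and $M_1 = M$) produces a set $M'$ with $R[T'] \xrightarrow{I,M'} \ITT(s_3)$. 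This is exactly the bottom-right edge of the claimed diagram, so the lemma follows.

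\textbf{Main obstacle.} The delicate part is Step 1/Step 2: making precise, by induction on the computation of the label at $p$, that the finite indirection pattern read off from the $\ITT$-rules is \emph{exactly} removable by a bounded normal-order $(\rlll)$ plus a single $(\rcp)$, and that the resulting redex in $s_2$ is genuinely the \emph{normal-order} redex (so that the $\LLR$-labeling does fire). One must check in particular that the variable-chain rules 11--13 of Figure~\ref{figure-inftree} correspond to \tletrec-binding chains that a $(\rcpin)/(\rcpe)$ step can collapse, and that no $(\rcp)$ is needed when the head is already a value in place. The rest of the argument is a routine diagram chase using the already-established Lemmas~\ref{lemma-cp-ll-inftreeeq}, \ref{lemma:single-red-inf}, and~\ref{lemma-red-triangle}.
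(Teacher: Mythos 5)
Your proposal takes essentially the same route as the paper's own proof: follow the label computation of the hole position $p$ through $\ITT(s)$, flatten the ``deep'' \tletrec{}s by normal-order (\rlll)-steps and collapse a variable chain by at most one (\rcp)-step (both leaving the tree unchanged by Lemma~\ref{lemma-cp-ll-inftreeeq}), perform the matching reduction $a$, and close the diagram via Lemma~\ref{lemma:single-red-inf} together with Lemma~\ref{lemma-red-triangle} applied to the singleton $\{p\}\subseteq M$. The only minor imprecision is that (\rcp) copies abstractions only, while a scrutinee or \tseq-argument bound to a constructor application is handled directly by the \mbox{-in}/\mbox{-e} variants of the case/seq rules with no copy step; this does not affect the argument, which is correct.
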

\begin{proof}
Let $p$ be the position of the hole of $R$. We follow the label computation to $T$ along $p$ inside $s$
and show that the redex corresponding to $T$ can be found in $s$ after some $(\rlll)$ and $(\rcp)$ reductions. 
For applications, \tseq-expressions, and \tcase-expressions there is a one-to-one correspondence.
If the label computation shifts a position into a ``deep'' $\tletrec$, \ie\ 
$C[(\tletrec~\iEnv~\tin~s)\LABELCOMP_p] \mapsto C[(\tletrec~\iEnv~\tin~s\LABELCOMP_p)]$ where $C$ is non-empty, 
then a sequence of normal order (lll)-reduction moves the environment $\iEnv$ to the top of the expression, where
perhaps it is joined with a top-level environment of $C$. Let $s \xrightarrow{\LR,\rlll,*} s'.$
Lemma~\ref{lemma-cp-ll-inftreeeq} shows that $\ITT(s') = \ITT(s)$ and the label computation along $p$ for $s'$ requires fewer steps than the computation for $s$.
Hence this construction can be iterated and terminates. This yields a reduction sequence $s\xrightarrow{\LR,\rlll,*} s_1$ such that
the label computation along $p$ for $s_1$ does not shift the label into deep $\tletrec$s and where $\ITT(s) = \ITT(s_1)$ (see Lemma~\ref{lemma-cp-ll-inftreeeq}).
Now there are two cases: Either the redex corresponding to $T$ is also a normal order redex of $s_1$, or $s_1$ is of
the form $\tletrec~x_1=\lambda x.s', x_2 =x_1,\ldots,x_m=x_{m-1},\ldots R'[x_m] \ldots$. For the latter case an $(\LR,\rcp)$ reduction
is necessary before the corresponding reduction rule can be applied. Again Lemma~\ref{lemma-cp-ll-inftreeeq} assures that the infinite tree remains unchanged.
After applying the corresponding reduction rule, \ie\ $s_2 \xrightarrow{\LR,a} s_3$, the normal order reduction may have changed
infinitely many positions of $\ITT(s_3)$, while $R[T] \xrightarrow{\TREE,a'} R[T']$ does not change all these positions, but nevertheless
Lemma \ref{lemma:single-red-inf} shows that  there is a reduction $R[T]  \xrightarrow{I,M,a'} \ITT(s_3)$, and Lemma~\ref{lemma-red-triangle} 
shows that also $R[T'] \xrightarrow{I,M',a'} \ITT(s_3)$ for some $M'$. 
\end{proof} 

\begin{example}
An example for the proof of the last lemma is the expression $s$ defined as  $s:=\tletrec~x=(\lambda y.y)~x~\tin~ x$. Then
$\ITT(s) = (\lambda y.y)~((\lambda y.y)~((\lambda y.y) \ldots))$. The $\TREE$-reduction for $\ITT(s)$ is
$\ITT(s) \xrightarrow{\TREE,\rbetaTr} \ITT(s)$. On the other hand the normal order reduction of $\LLR$ reduces
to $s':=\tletrec~x=(\tletrec~y=x~\tin~y)~\tin~x$ and $\ITT(s') = \tBot$. To join the reductions we perform an $\xrightarrow{I,M}$-reduction
for $\ITT(s)$ where all redexes are labeled in $M$, which also results in $\tBot$.
\end{example}

\begin{proposition}\label{prop:tree-conv-impl-term-conv} Let $s$ be an  expression such that $\ITT(s)\maycon_{\TREE}$. 
Then  $s\maycon_{\LR}$.     
\end{proposition}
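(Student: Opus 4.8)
The plan is to prove the statement directly, by induction on the length $n$ of a shortest normal-order reduction $\ITT(s)\xrightarrow{\TREE,n}W$ of $\ITT(s)$ to an $\LTREE$-WHNF $W$; such an $n$ exists since $\ITT(s)\maycon_{\TREE}$ by hypothesis. In the base case $n=0$ the tree $\ITT(s)=W$ is already an $\LTREE$-answer, so Lemma~\ref{lemma-red-base-case} gives $s\maycon_{\LR}$ at once.

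For the induction step, with $n\geq 1$, I would write the first step of the $\TREE$-evaluation as $\ITT(s)=R[T]\xrightarrow{\TREE,a'}R[T']$ for a reduction context $R\in\IECtxts$, the tree-redex $T$, and a rule $a'\in\{\rbetaTr,\rcaseTr,\rseqTr\}$, so that $R[T']\xrightarrow{\TREE,n-1}W$. First I would apply Lemma~\ref{lemma-partition-R-red} to obtain finite expressions $s_1,s_2,s_3$ with $s\xrightarrow{\LR,\rlll,*}s_1\xrightarrow{\LR,\rcp,0\vee 1}s_2\xrightarrow{\LR,a}s_3$ and $R[T']\xrightarrow{I,M}\ITT(s_3)$, where $a$ is the $\LLR$-rule matching $a'$; all of these $\LLR$-steps are normal-order steps, hence $s\xrightarrow{\LR,*}s_3$. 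Next I would feed $R[T']\xrightarrow{I,M}\ITT(s_3)$ together with the $\TREE$-sequence $R[T']\xrightarrow{\TREE,n-1}W$ into Lemma~\ref{lemma-R-star-forking}, obtaining a $\TREE$-reduction of $\ITT(s_3)$ to an $\LTREE$-WHNF of length at most $n-1$; thus $\ITT(s_3)\maycon_{\TREE}$ and its shortest $\TREE$-evaluation has length strictly less than $n$. The induction hypothesis applied to $s_3$ then yields $s_3\maycon_{\LR}$, and since $s\xrightarrow{\LR,*}s_3$ via normal-order reductions we conclude $s\maycon_{\LR}$.

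Given the earlier lemmas, I expect these two steps to be essentially routine, so no genuine obstacle arises; the one point that needs care is the well-foundedness of the induction. This rests on Lemma~\ref{lemma-R-star-forking}: passing along the $\xrightarrow{I,M}$-step does not increase the length of the $\TREE$-evaluation, while discarding the initial $\xrightarrow{\TREE}$-step strictly shortens it, so the shortest $\TREE$-evaluation of $\ITT(s_3)$ is strictly shorter than that of $\ITT(s)$. A second, immediate point is that the $\LLR$-reductions delivered by Lemma~\ref{lemma-partition-R-red} are all normal-order, so that convergence of $s_3$ transfers back to convergence of $s$. Together with Proposition~\ref{prop:maycon-expr-to-infexpr}, this yields $s\maycon_{\LR}\iff\ITT(s)\maycon_{\TREE}$.
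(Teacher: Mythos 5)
Your proposal is correct and follows essentially the same route as the paper's proof: induction on the length of the $\TREE$-evaluation, with the base case handled by Lemma~\ref{lemma-red-base-case}, the first $\TREE$-step simulated in $\LLR$ via Lemma~\ref{lemma-partition-R-red}, and the induction measure recovered through Lemma~\ref{lemma-R-star-forking}. The only difference is that you spell out the normal-order nature of the produced $\LLR$-steps and the well-foundedness of the induction more explicitly, which the paper leaves implicit.
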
 
\begin{proof} 
The precondition  $\ITT(s) \maycon_{\TREE}$ implies
 that there is a $\TREE$-reduction sequence of  $\ITT(s)$ to an $\LTREE$-WHNF.
The base case, where no $\TREE$-reductions are necessary, is treated in Lemma \ref{lemma-red-base-case}. 
In the general case, let $T \xrightarrow{\TREE,a'} T'$ be a $\TREE$-reduction.
Lemma \ref{lemma-partition-R-red} shows that  there are expressions $s',s''$ with 
$s \xrightarrow{\LR,\rlll,*} \xrightarrow{\LR,\rcp,0\vee 1}  s' \xrightarrow{\LR,a}  s''$,  
and  $T' \xrightarrow{I,M}  \ITT(s'')$.
Lemma \ref{lemma-R-star-forking} shows that $\ITT(s'')$ has a normal order $\TREE$-reduction to a WHNF 
where the number of $\TREE$-reductions is strictly smaller than the  number of $\TREE$-reductions of $T$ to a WHNF.
Thus  we can use induction on this length and obtain a normal order $\LR$-reduction of $s$ to a WHNF.
\end{proof}
\noindent Propositions \ref{prop:maycon-expr-to-infexpr} and \ref{prop:tree-conv-impl-term-conv} imply the theorem
\begin{theorem}\label{thm:may-equivalence}
Let $s$ be an $\LETRECEXPR$-expression. Then $s\maycon_{\LR}$ if and only if $\ITT(s)\maycon_{\TREE}$.  \qed
\end{theorem}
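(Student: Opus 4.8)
The plan is to obtain the theorem by simply combining the two implications that have already been established separately: Proposition~\ref{prop:maycon-expr-to-infexpr} gives $s\maycon_{\LR}\implies\ITT(s)\maycon_{\TREE}$, and Proposition~\ref{prop:tree-conv-impl-term-conv} gives $\ITT(s)\maycon_{\TREE}\implies s\maycon_{\LR}$; together these are exactly the biconditional claimed, so at this point nothing new is required beyond the assembly. It is nevertheless worth recording the shape of each half, since that is where the genuine content sits.

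For the direction $s\maycon_{\LR}\implies\ITT(s)\maycon_{\TREE}$, I would start from a normal-order evaluation $s\xrightarrow{\LR,*}t$ to an $\LLR$-WHNF $t$ and transport it to the tree level step by step via Lemma~\ref{lemma:single-red-inf}: the book-keeping reductions (\rcp) and (\rlll) do not change $\ITT$ at all, while each genuine evaluation step becomes one infinitary development $\xrightarrow{I,M}$ whose marked set $M$ contains a normal-order tree redex. This produces a finite chain $\ITT(s)\xrightarrow{I,*}\ITT(t)$, and $\ITT(t)$ is an $\LTREE$-WHNF by Lemma~\ref{lemma-red-base-case}. The decisive final step is the standardization result, Proposition~\ref{prop:I-sequence-gives-maycon}, which converts any such $\xrightarrow{I,M}$-chain ending in a WHNF into an honest normal-order $\TREE$-reduction to a WHNF; hence $\ITT(s)\maycon_{\TREE}$.

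For the converse $\ITT(s)\maycon_{\TREE}\implies s\maycon_{\LR}$, I would argue by induction on the length of a normal-order $\TREE$-reduction of $\ITT(s)$ to a WHNF. The base case (zero $\TREE$-steps) is Lemma~\ref{lemma-red-base-case}, which also handles the $\NL$-shaped expressions whose tree is already an answer, reducing them to an $\LLR$-WHNF by (\rcp)- and (\rllet)-steps. For the inductive step, Lemma~\ref{lemma-partition-R-red} supplies a matching finite $\LLR$-reduction $s\xrightarrow{\LR,\rlll,*}\xrightarrow{\LR,\rcp,0\vee 1}s'\xrightarrow{\LR,a}s''$ together with a development $T'\xrightarrow{I,M}\ITT(s'')$, and Lemma~\ref{lemma-R-star-forking} shows that $\ITT(s'')$ still has a normal-order $\TREE$-reduction to a WHNF of strictly smaller length; the induction hypothesis then yields an $\LLR$-evaluation of $s$.

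Combining the two directions gives Theorem~\ref{thm:may-equivalence}. The hard part of the whole development — and the reason a naive one-step simulation does not work — is that a single finite $\LLR$-step may correspond to contracting infinitely many redexes in the unfolded tree, and that cyclic variable chains must be collapsed to $\tBot$; this is precisely what forces the infinitary $\xrightarrow{I,M}$-developments of Definition~\ref{def:one-reduction} and the standardization lemmas of Section~\ref{subsec-inf-proc-standardization}. I expect that machinery, rather than the final assembly, to be the main obstacle.
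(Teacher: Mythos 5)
Your proposal is correct and matches the paper exactly: the theorem is obtained by combining Proposition~\ref{prop:maycon-expr-to-infexpr} and Proposition~\ref{prop:tree-conv-impl-term-conv}, and your recap of each direction (via Lemma~\ref{lemma:single-red-inf}, Lemma~\ref{lemma-red-base-case}, and the standardization Proposition~\ref{prop:I-sequence-gives-maycon} for one half, and Lemmas~\ref{lemma-partition-R-red} and~\ref{lemma-R-star-forking} with induction for the other) follows the paper's own proofs of those propositions.
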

\subsection{\texorpdfstring{Equivalence of Infinite Tree Convergence and $\LNAME$-convergence}{Equivalence of Infinite Tree Convergence and LNAME-convergence}}
It is easy to observe that several reductions of $\LNAME$ do not change the infinite trees \wrt\ the 
translation $\ITT(\cdot)$:

\begin{lemma}\label{lemma:lapp-cp-it-identical}
Let $s_1,s_2 \in \LETRECEXPR$. Then 
  $s_1 \xrightarrow{\NAME,a} s_2$ for $a \in\{\rgcp,\rlapp,\rlcase,\rlseq\}$ implies $\ITT(s_1) = \ITT(s_2)$.\qed
\end{lemma}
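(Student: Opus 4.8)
The plan is to reduce the statement to the level of the partial functions from positions to labels that \emph{define} infinite trees. Recall from the definition of $\ITT$ and from Figure~\ref{figure-inftree} that $\ITT(s)$ is the partial function $p \mapsto p(\ITT(s))$ obtained by running the label-computation rules of Figure~\ref{figure-inftree} on the seed $s\LABELCOMP_p$, and that two infinite expressions are equal iff these functions have the same domain and agree pointwise. Hence it suffices to show, for every position $p$: the computation of $p(\ITT(s_1))$ succeeds iff the computation of $p(\ITT(s_2))$ succeeds, and then both produce the same label. The proof is a case analysis on the reduction rule $a$, in each case inspecting how the rules of Figure~\ref{figure-inftree} descend toward $p$. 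This is the same style of argument as for Lemma~\ref{lemma-cp-ll-inftreeeq} (for $\rcp$ and $\rlll$ in $\LLR$), and the four $\LNAME$-rules considered here are of exactly the same two flavours: copying a binding ($\rgcp$, analogous to $\rcp$) and moving a $\tletrec$-environment out of an application, a $\tcase$, or a $\tseq$ ($\rlapp,\rlcase,\rlseq$, analogous to $\rlll$).

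\textbf{Cases $a \in \{\rlapp,\rlcase,\rlseq\}$.} Here $s_1 = C[e_1]$ and $s_2 = C[e_2]$, where $e_1$ has a $\tletrec$-expression as the head of an application, the scrutinee of a $\tcase$, or the first argument of a $\tseq$, and $e_2$ is obtained by pulling that $\tletrec$-environment $\iEnv$ to the front, so that the remaining components (the argument, the $\tcase$-alternatives, or the second $\tseq$-argument) come into the scope of $\iEnv$. Let $q$ be the position of the hole of $C$. For positions $p$ not of the form $q.p'$ the runs for $s_1$ and $s_2$ are literally identical. For $p = q.p'$, the only effect of the reduction on the run is to swap the order in which rule~10 (descending through $\tletrec~\iEnv~\tin~-$) is applied relative to rules~2--7 (descending through the application/$\tcase$/$\tseq$ node): in either order the focus ends up on the same subterm at the same residual position, with $\iEnv$ available over exactly the same set of sub-positions. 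The only side condition sensitive to this rearrangement is the one distinguishing a free or $\lambda$-bound variable occurrence from a let-bound one (rules~11--13); but, by the distinct variable convention, none of the moved components contains a free occurrence of a variable bound by $\iEnv$, so bringing $\iEnv$ into scope over them in $s_2$ never turns such an occurrence into a let-bound one. Consequently the two position-to-label functions coincide.

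\textbf{Case $a = \rgcp$.} Now $s_1 = C_1[\tletrx{\iEnv, x = s}{C_2[x]}]$ and $s_2 = C_1[\tletrx{\iEnv, x = s}{C_2[s]}]$, where the copied (normal-order-needed) occurrence of $x$ sits at some position $q$; note that the binding $x = s$ is retained and $\FV(s)$ is still bound by the same $\tletrec$. For positions not of the form $q.p'$ the runs are identical. For $p = q.p'$ with $p' \neq \varepsilon$, the run for $s_1$ reaches the occurrence $x\LABELCOMP_{p'}$, fires rule~11, and thereby continues exactly as the run that computes the label of $s$ at $p'$ with the environment $\iEnv \cup \{x=s\}$ (together with $C_1$'s environment) available; this is precisely the run for $s_2$ at $q.p'$. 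The remaining, most delicate point is the $\tbot$-clause of Figure~\ref{figure-inftree}, which fires when position $\varepsilon$ hits the same let-bound variable twice: for $p = q$ the run for $s_1$ visits $x$ one more time than the run for $s_2$, so $s_1$ could in principle reach the $\tbot$-verdict one ``loop'' earlier. I would argue this never causes a disagreement: if the head-chain from $q$ (after unfolding the value $s$) ever returns to $x$, this is a genuine cyclic head reference which both runs detect as $\tbot$ (in $s_1$ after one traversal of the loop, in $s_2$ after two), and if the head-chain from $q$ never returns to $x$ then the two runs coincide from the point the focus first enters $s$'s proper structure, so they have the same domain and labels there. Hence $\ITT(s_1) = \ITT(s_2)$.

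\textbf{Main obstacle.} The case analysis is elementary, but the bookkeeping must be done carefully: one must confirm, for each of the four rules, that the ``commutation of descent steps'' is faithful, and --- for $\rgcp$ --- that the $\tbot$-producing ``same let-bound variable hit twice'' clause, together with the scoping guaranteed by the distinct variable convention, really does behave as claimed. No idea beyond the one already used for Lemma~\ref{lemma-cp-ll-inftreeeq} is needed; the whole proof is a direct inspection of the rules of Figure~\ref{figure-inftree}.
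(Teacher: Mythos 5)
Your proposal is correct, and in fact the paper gives no proof of this lemma at all (it is stated as immediate from the definition of $\ITT$, just like Lemma~\ref{lemma-cp-ll-inftreeeq}); your rule-by-rule inspection of \FIGURE~\ref{figure-inftree} is precisely the intended, omitted argument, and it handles the two genuinely delicate points correctly: the distinct variable convention rules out capture when $\iEnv$ is pulled over the argument, the alternatives, or the second $\tseq$-argument, and for $(\rgcp)$ the $\tBot$-clause fires in both label computations (after one, respectively two, traversals of a cyclic variable chase), so the position-to-label functions agree. The only implicit step worth making explicit is that the redex occurrence corresponds to a unique tree position $q$ because the reduction is normal order, so that occurrence sits in the in-body along the reduction-context path and is never reached via the binding-chasing rules 11--13.
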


\begin{lemma}\label{lemma:name-beta-is-tree-beta}
For $(a,a') \in \{(\mbox{\rbeta},\mbox{\rbetaTr}), (\mbox{\rcase},\mbox{\rcaseTr}), (\mbox{\rseq},\mbox{\rseqTr})\}$ it holds:

\noindent If $s_1  \xrightarrow{\NAME,a} s_2$ for $s_i \in \LETRECEXPR$, then  $\ITT(s_1) \xrightarrow{\TREE,a'} \ITT(s_2)$.
\end{lemma}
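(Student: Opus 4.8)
The plan is to reduce the statement to two ingredients: (i) a substitution lemma for the tree translation $\ITT$, and (ii) the fact that $\ITT$ carries $\LNAME$-reduction contexts into $\LTREE$-reduction contexts, so that the unique $\LNAME$-normal-order step becomes the unique $\TREE$-normal-order step after translation. Unlike the $\LLR$-case (Lemma~\ref{lemma:single-red-inf}), here $\LNAME$ performs a genuine substitution, so a single $\LNAME$-step produces a single $\TREE$-redex rather than an $\xrightarrow{I,M}$-development.

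First I would prove the substitution lemma: for all $p,q\in\LETRECEXPR$ and every variable $y$ (distinct variable convention in force), $\ITT(p[q/y]) = \ITT(p)[\ITT(q)/y]$. This is shown by comparing the two partial functions from positions to labels. Running the label-computation algorithm of \FIGURE~\ref{figure-inftree} on $p[q/y]$, at each position the computation either mirrors the one for $\ITT(p)$ -- as long as it stays within the skeleton of $p$ -- or, upon reaching a position that $\ITT(p)$ labels with the free variable $y$, continues by computing $\ITT(q)$ from there on; this is exactly the result of replacing every occurrence of $y$ in $\ITT(p)$ by $\ITT(q)$. No capture arises since, by the distinct variable convention, $\FV(q)$ is disjoint from the bound variables of $p$ and $y$ is not bound inside $p$, and the identity is not disturbed by $y$ having infinitely many occurrences, nor by some occurrences being swallowed into a $\tBot$ arising from a cyclic variable chain. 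I would pair this with the companion fact, read off rules $10$--$13$ of \FIGURE~\ref{figure-inftree}, that $\ITT(\tletrec~\iEnv~\tin~e)$ is the tree obtained from $\ITT(e)$ by infinitely unfolding the bindings of $\iEnv$ (cycles becoming $\tBot$); thus a leading block of letrecs contributes only a substitution, and no structure, to the tree.

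Next I would analyse the given step. Since it is a $(\rbeta)$-, $(\rcase)$- or $(\rseq)$-step, $s_1 = R[e]$ where $R = L[A]$ is an $\LNAME$-reduction context and $e$ the contracted redex. Following the label computation of $\ITT(s_1)$ down to the hole of $R$, the applications, $\tcase$-scrutinees and $\tseq$-first-arguments of $A$ translate one-to-one to the corresponding reduction-context constructors of $\IECtxts$, while $L$ is absorbed into an unfolding substitution by the companion fact; hence $\ITT(s_1) = \mathcal{R}[T_e]$ for some $\mathcal{R}\in\IECtxts$, where $T_e$ is the $\ITT$-image of $e$ with $L$'s bindings unfolded and the hole of $\mathcal{R}$ is a $\TREE$-reduction position. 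For $(\rbeta)$, $e=(\lambda x.s)~t$, so $T_e = ((\lambda x.T_s)~T_t)$ and the normal-order $\TREE$-step yields $T_s[T_t/x]$; for $(\rcase)$, $e=(\tcase_T~(c~\vect{s})~\tof~\ldots((c~\vect{x})\to t)\ldots)$, so $T_e$ has head $\tcase_T$ with a scrutinee whose head is $c$, and the $\TREE$-step yields $T_t[\vect{T_s}/\vect{x}]$; for $(\rseq)$, $e=(\tseq~v~t)$ with $v$ a value, so $T_v$ has head $\lambda x$ or a constructor and is therefore an $\LTREE$-answer, and the $\TREE$-step yields $T_t$. In each case $\mathcal{R}$ depends only on the $A$-part of $R$ (unchanged by the step) and on the bindings of $L$ (which do not contain the hole), so it is the same for $\ITT(s_1)$ and $\ITT(s_2)$, and it remains to compare the subtrees at the redex position. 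For $(\rbeta)$, that subtree of $\ITT(s_2)$ is the $\ITT$-image of $s[t/x]$ with $L$ unfolded; since $x$ is $\lambda$-bound it does not occur in $L$, so unfolding $L$ commutes with $[t/x]$, and the substitution lemma gives that this subtree equals $T_s[T_t/x]$. For $(\rcase)$ the substitution lemma is applied repeatedly, the pattern variables being pairwise distinct and, by the distinct variable convention, disjoint from the $\vect{s}$, so that the simultaneous substitution agrees with the iterated one; for $(\rseq)$ the claim is immediate. This establishes $\ITT(s_1) \xrightarrow{\TREE,a'} \ITT(s_2)$.

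The main obstacle is the substitution lemma together with the ``letrec $=$ unfolding substitution'' fact: both require a careful coinductive bookkeeping of the $\ITT$ position algorithm, the delicate points being that a single $\LNAME$-step (equivalently, a single syntactic substitution) may alter infinitely many positions of the infinite tree while still corresponding to exactly one $\TREE$-redex, and that cyclic variable chains collapsing to $\tBot$ may erase occurrences of the substituted variable without breaking the commutation. Everything else is a routine unwinding of \FIGURE~\ref{figure-inftree}, of the reduction contexts $R_{\NAME}=L[A]$, and of the rules in \FIGURE~\ref{figure-reductions-LNAME}.
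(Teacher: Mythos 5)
Your proposal is correct and follows essentially the same route as the paper's (much terser) proof: the redex is located at a unique position inside a tree reduction context of $\ITT(s_1)$, the enclosing \tletrec-environments contribute only an unfolding substitution $\sigma$ on infinite trees, and the tree step produces exactly $\sigma$ applied to the image of the contractum. Your explicit substitution lemma $\ITT(p[q/y]) = \ITT(p)[\ITT(q)/y]$ and the commutation of $\sigma$ with $[t/x]$ are precisely the details the paper compresses into the phrase ``$\sigma((\lambda x.\ITT(s_1''))~\ITT(s_2''))$ reduces to $\sigma(\ITT(s_1''))[\sigma(\ITT(s_2''))/x]$''.
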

\begin{proof}
Let $s_1 := R_{\NAME}[s_1'] \xrightarrow{\NAME,a} R_{\NAME}[s_2'] = s_2$ where $s_1'$ is the redex of the $\xrightarrow{\NAME}$-reduction
and $R_{\NAME}$ is an $\LNAME$-reduction context. First one can observe that 
the redex $s_1'$ is mapped by $\ITT$ to a unique tree position within a tree reduction context in $\ITT(s_1)$.  

We only consider the (\rbeta)-reduction, since for a (\rcase)- or a (\rseq)-reduction the reasoning is completely analogous.
So let us assume that $s_1' = ((\lambda x.s''_1)~s''_2)$. Then $\ITT$ transforms $s_1'$ into a subtree 
$\sigma((\lambda x.\ITT(s''_1))~\ITT(s''_2))$ where $\sigma$ is a substitution replacing variables by infinite trees.
The tree reduction replaces $\sigma((\lambda x.\ITT(s''_1))~\ITT(s''_2))$ by $\sigma(\ITT(s''_1))[\sigma(\ITT(s''_2))/x]$, 
hence the lemma holds.
\end{proof}

\begin{proposition}
 Let $s\in\LETRECEXPR$ be an expression with $s\maycon_{\NAME}$. Then $\ITT(s)\maycon_{\TREE}$.
\end{proposition}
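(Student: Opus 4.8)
The plan is to mirror the proof of Proposition~\ref{prop:maycon-expr-to-infexpr}, but the argument turns out to be considerably simpler on the $\LNAME$-side, because every non-administrative $\xrightarrow{\NAME}$-step corresponds to a \emph{single} normal-order $\TREE$-step (rather than an $\xrightarrow{I,M}$-step that might trigger an infinite development), so none of the standardization machinery of Section~\ref{subsec-inf-proc-standardization} is needed here. Concretely, I would fix a normal-order reduction $s \xrightarrow{\NAME,*} t$ with $t$ an $\LNAME$-WHNF (it exists by $s\maycon_\NAME$ and is unique since $\xrightarrow{\NAME}$ is a partial function on expressions, cf.\ Definition~\ref{def:calculus}) and proceed by induction on its length $n$, proving $\ITT(s)\maycon_{\TREE}$.

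For the base case $n=0$, $s=t$ is an $\LNAME$-WHNF, i.e.\ $s = L[v]$ where $v$ is a value and $L$ is a nesting of $\tletrec$-environments $\tletrec~\iEnv_1~\tin~\cdots~\tletrec~\iEnv_k~\tin~v$. I would show that $\ITT(s)$ is an $\LTREE$-WHNF. This can be done directly by tracing the computation of the head symbol $\varepsilon(\ITT(L[v]))$ using \FIGURE~\ref{figure-inftree}: rule~10 is applied $k$ times to peel off the $\tletrec$-layers, and the computation then terminates at $v\LABELCOMP_{\varepsilon}$, returning $\lambda x$ or a constructor $c$; hence $\ITT(L[v])$ is an answer. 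Alternatively (and perhaps more cleanly), a finite sequence of $\rlletin$-rewrites flattens $L[v]$ to $\tletrx{\iEnv_1,\ldots,\iEnv_k}{v}$, which is an $\LLR$-WHNF, without changing the associated tree by Lemma~\ref{lemma-cp-ll-inftreeeq} (which holds for $\rlll$-rewrites in arbitrary context), and then Lemma~\ref{lemma-red-base-case} gives that $\ITT$ of it, hence $\ITT(s)$, is an answer.

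For the induction step, write $s \xrightarrow{\NAME,a} s' \xrightarrow{\NAME,*} t$, so the induction hypothesis yields $\ITT(s')\maycon_{\TREE}$, say $\ITT(s') \xrightarrow{\TREE,*} W$ with $W$ an $\LTREE$-WHNF. I distinguish the reduction rule $a$ used in the first step. If $a \in \{\rgcp,\rlapp,\rlcase,\rlseq\}$, then $\ITT(s) = \ITT(s')$ by Lemma~\ref{lemma:lapp-cp-it-identical}, so $\ITT(s)\maycon_{\TREE}$ immediately. If $a \in \{\rbeta,\rcase,\rseq\}$, then Lemma~\ref{lemma:name-beta-is-tree-beta} provides a normal-order tree step $\ITT(s) \xrightarrow{\TREE,a'} \ITT(s')$, with $a'$ the corresponding rule among $\rbetaTr,\rcaseTr,\rseqTr$; prefixing this step to $\ITT(s') \xrightarrow{\TREE,*} W$ gives $\ITT(s) \xrightarrow{\TREE,*} W$, i.e.\ $\ITT(s)\maycon_{\TREE}$. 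Since these six rules are exactly the normal-order rules of $\LNAME$ (\FIGURE~\ref{figure-reductions-LNAME}), this exhausts all cases and completes the induction, proving $\ITT(s)\maycon_{\TREE}$.

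I do not expect a genuine obstacle here: the substantive content is already packaged in Lemmas~\ref{lemma:lapp-cp-it-identical} and~\ref{lemma:name-beta-is-tree-beta}, and the $\LNAME$–$\LTREE$ correspondence is tight enough that, unlike the $\LLR$-side, no standardization of tree reductions is required. The only points needing a line of care are (i) the base-case identification of $\LNAME$-WHNFs with $\LTREE$-answers, and (ii) observing that the normal-order rules of $\LNAME$ partition precisely into the tree-preserving rules $\{\rgcp,\rlapp,\rlcase,\rlseq\}$ and the tree-reducing rules $\{\rbeta,\rcase,\rseq\}$ handled by the two lemmas.
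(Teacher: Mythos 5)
Your proof is correct and follows essentially the same route as the paper's own argument: induction on the length of the normal-order $\LNAME$-reduction, with the base case noting that $\ITT(L[v])$ is an $\LTREE$-answer, and the step split into the tree-preserving rules handled by Lemma~\ref{lemma:lapp-cp-it-identical} and the rules $\rbeta$, $\rcase$, $\rseq$ handled by Lemma~\ref{lemma:name-beta-is-tree-beta}. Your extra elaboration of the base case (via the label computation or $\rlll$-flattening plus Lemma~\ref{lemma-red-base-case}) just spells out what the paper asserts directly.
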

\begin{proof}
This follows by induction on the length of a normal order reduction of $s$. 
The base case holds since $\ITT(L[v])$, where $v$ is an $\LNAME$-answer is always an $\LTREE$-answer.
For the induction step we consider the first reduction of $s$, say $s \xrightarrow{\NAME} s'$. The induction hypothesis
shows $\ITT(s')\maycon_{\TREE}$. If the reduction $s \xrightarrow{\NAME}s'$
is  ($\NAME$,\rgcp), ($\NAME$,\rlapp), \mbox{($\NAME$,\rlcase),} or ($\NAME$,\rlseq), then Lemma~\ref{lemma:lapp-cp-it-identical}
implies $\ITT(s)\maycon_{\TREE}$. If $s \xrightarrow{\NAME,a} s'$ for $a \in \{\mbox{(\rbeta)},\mbox{(\rcase)},\mbox{(\rseq)}\}$, 
then Lemma~\ref{lemma:name-beta-is-tree-beta} shows $\ITT(s) \xrightarrow{\TREE} \ITT(s')$ and thus $\ITT(s)\maycon_{\TREE}$.
\end{proof}
\noindent Now we show the other direction:

\begin{lemma}\label{lem:name-lapp-cp=gleich}
Let $s\in\LETRECEXPR$  such that $\ITT(s)={\cal R}[T]$, where ${\cal R}$ is a tree reduction context and $T$ is a value or a redex.   
Then there are expressions  $s',s''$ such that $s \xrightarrow{\NAME,\rlapp\vee\rlcase\vee\rlseq\vee\rgcp,*} s'$, $\ITT(s') = \ITT(s)$,  $s' = R[s'']$,
$\ITT(L[s''] ) = T$, where $R = L[A[\cdot]]$ is a reduction context for some ${\cal L}$-context $L$ and some ${\cal A}$-context $A$,
 $s''$ may be an abstraction, a constructor application, or a  \rbeta-, \rcase- or \rseq-redex 
   iff $T$ is an abstraction, a constructor application, or a  \rbetaTr-, \rcaseTr- or \rseqTr-redex, respectively,
and the position $p$ of the hole in ${\cal R}$ is also 
the position of the hole in $A[\cdot]$.
\end{lemma}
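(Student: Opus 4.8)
The plan is to imitate the proof of Lemma~\ref{lemma-partition-R-red}: one follows the tree-construction algorithm of Figure~\ref{figure-inftree} along the path $p$ that is the position of the hole of ${\cal R}$, and realises each of its steps inside $s$ by a normal-order $\LNAME$-reduction taken from $\{\rlapp,\rlcase,\rlseq,\rgcp\}$. Since each of these reductions leaves $\ITT$ unchanged by Lemma~\ref{lemma:lapp-cp-it-identical}, the requirements $\ITT(s')=\ITT(s)$ --- and hence $\ITT(s')={\cal R}[T]$ with the same hole position $p$ --- come for free, so the whole task reduces to showing that this rewriting terminates in a term of the stated shape $L[A[s'']]$.

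I would argue by well-founded induction, the primary intuition being the number $k$ of applications of the rules of Figure~\ref{figure-inftree} needed to compute the label $p(\ITT(s))$ together with (when $T$ is a redex) the label at position $p.1$, that is, the function of a $\rbetaTr$-redex, the scrutinee of a $\rcaseTr$-redex, or the first argument of a $\rseqTr$-redex; the actual measure is a lexicographic refinement of $k$ described at the end. This number is finite precisely because $T=\ITT(s)|_p$ is a value or a redex: $T$ is not $\tBot$, and since $\tBot$ has no proper subtrees no prefix of $p$ can carry the label $\tBot$ either, so the construction along $p$ cannot hit a let-bound variable twice and must terminate. Since ${\cal R}$ is a $\LTREE$-reduction context, every node on the path to $p$ has head $@$, $\tcase_T$ or $\tseq$, so the only rules of Figure~\ref{figure-inftree} that can occur are: a descent into a function, first $\tseq$-argument, or $\tcase$-scrutinee (rules 2, 4, 6); a descent into the body of a $\tletrec$ that currently sits at the top or in ${\cal L}$-position (rule 10); a $\tletrec$ occurring in a reduction position of the applicative spine; or a variable $x$ pointing into an enclosing $\tletrec$-binding (rules 11--13).

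These four cases are matched, respectively, by: no reduction, the current application/$\tseq$/$\tcase$ being recorded as a new ${\cal A}$-frame; no reduction, the current $\tletrec$ being recorded as a new ${\cal L}$-frame; a $(\rlapp)$-, $(\rlcase)$- or $(\rlseq)$-step, which moves the offending $\tletrec$ one frame outwards and merely reorders the construction; and a $(\rgcp)$-step copying the binding of $x$ into the occurrence of $x$. Because we stay on a reduction-context path and the $\tletrec$s above the current node have already been pulled out into $L$, the application, $\tseq$, $\tcase$, or variable being acted on always lies in a reduction position, so its redex is labelled $S$ or $T$ by the algorithm of Figure~\ref{fig:labeling-LNAME} and all of these are genuine $\xrightarrow{\NAME}$-steps (in particular the variable $x$ is $\tletrec$-bound, since a $\lambda$-bound or free variable would give a leaf in the tree, contradicting that $T$ or its principal subterm is a value or a redex). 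Once the construction has consumed all of $p$ (and the extra $p.1$-step in the redex case), $s$ has been rewritten to $s'=L[A[s'']]$ with $L\in{\cal L}$, $A\in{\cal A}$, the hole of $A$ at position $p$, and $\ITT(L[s''])=\ITT(s)|_p=T$ (the $\cal A$-frames carry no binders, so the subtree at $p$ is exactly $\ITT(L[s''])$); since the five possibilities for a value-or-redex $T$ --- abstraction tree, constructor-application tree, $\rbetaTr$-, $\rcaseTr$- or $\rseqTr$-redex --- are mutually exclusive and exhaustive, $s''$ is correspondingly an abstraction, a constructor application, or a $\rbeta$-, $\rcase$- or $\rseq$-redex, which is exactly the ``iff'' asserted.

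The step that I expect to be the main obstacle is the termination argument, because a $(\rgcp)$-step may copy a $\tletrec$-headed binding into the applicative spine and so re-create the obstacle that $\rlapp$/$\rlcase$/$\rlseq$ are meant to remove (note also that the $\rlapp$-family does not decrease $k$, it only reorders the construction). I would resolve this with a lexicographic measure on $(s,p)$: first the number of variable-jumps (rules 11--13) still performed in computing $p(\ITT(s))$ (and $p.1$), which every $(\rgcp)$-step strictly decreases; and, as a tie-breaker, the sum over the $\tletrec$s lying in the applicative spine along $p$ of their nesting depth inside that spine, which every $(\rlapp)$-, $(\rlcase)$- or $(\rlseq)$-step strictly decreases while leaving the number of variable-jumps fixed. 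A routine case check of the rules of Figure~\ref{figure-inftree} against those of Figure~\ref{figure-reductions-LNAME} confirms that each reduction we perform strictly decreases this measure, so the rewriting stops, and it can only stop in the required form. The remaining bookkeeping --- that repeatedly applying $\rlapp$/$\rlcase$/$\rlseq$ to a $\tletrec$ in the spine eventually deposits it in $L$ above the whole spine, and that $A$ and $L$ indeed satisfy their grammars --- is a direct computation, exactly as in Lemma~\ref{lemma-partition-R-red}.
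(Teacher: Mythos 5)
Your plan follows the paper's own proof essentially step for step: the paper also inducts on the label computation of Figure~\ref{figure-inftree} along the hole position $p$, discharges each obstacle by a normal-order (\rlapp)/(\rlcase)/(\rlseq)- or (\rgcp)-step (both preserve $\ITT$ by Lemma~\ref{lemma:lapp-cp-it-identical}), and terminates via a lexicographic measure --- the paper's being (length of the computation after the leading $(10)$-steps, length of its maximal $(2\vee 4\vee 6)$-prefix) rather than your (variable-jumps, letrec-depth) pair, which serves the same purpose. If anything, you are more explicit than the paper on the redex case, where the subterm at position $p.1$ (function, scrutinee, or first \tseq-argument) must also be resolved by further (\rgcp)- and (\rlll)-steps so that $s''$ is literally a \rbeta-, \rcase- or \rseq-redex; the paper's base case leaves that last resolution implicit, so just make sure your tie-breaker is computed over the full path including $p.1$, not only along $p$.
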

\begin{proof*}
The tree $T$ may be an abstraction, a constructor application, an application, or a \rbetaTr-, \rcaseTr- or \rseqTr-redex in $R[T]$. 
Let $p$ be the position of the hole of ${\cal R}$. 
We will show by induction on the label-computation for $p$ in $s$ that there is a  reduction 
$s \xrightarrow{\NAME,\rlapp\vee\rlcase\vee\rlseq\vee\rgcp,*} s'$, where $s'$ is as claimed in the lemma. \\
We consider the label-computation for $p$ to explain the induction
measure, where we use the numbers of  the rules given in  \FIGURE~\ref{figure-inftree}.
Let $q$ be such that the label computation for $p$ is of the form $(10)^*.q$ and $q$ does not start with $(10)$. 
The measure for induction is  a tuple $(a,b)$, where $a$ is the  
length of $q$, and  $b \geq 0$ is
 the maximal number with   $q = (2\vee 4 \vee 6)^b.q'$.
The base case is $(a,a)$: Then the label computation is of the form $(2\vee 4 \vee 6)^*$ and 
indicates that $s$ is of the form $L[A[s'']]$ and satisfies the claim of the lemma.
For the induction step  we have to check several cases: 
\begin{enumerate}
\item The label computation starts with $(10)^*(2\vee 4 \vee 6)^+(10)$. 
 Then a normal-order (lapp), (lcase), or (lseq)
can be applied to $s$ resulting in 
$s_1$. The label-computation for $p$ \wrt\ $s_1$ is of the same
  length, and only applications of (10)
  and $(2\vee 4\vee 6)$ are interchanged, hence the second component of the measure is strictly decreased.
\item  The label computation starts with $(10)^*(2\vee 4 \vee 6)^*(11)$. 
Then a normal-order (\rgcp) can be applied to $s$ resulting in 
$s_1$. The length $q$ is strictly decreased by 1, and perhaps one (12)-step is changed into a (11)-step. Hence the 
measure is strictly decreased.
\end{enumerate}
In every case the claim on the structure of the contexts and $s'$ can easily be verified.
\qedhere
\end{proof*}

\begin{lemma}\label{lemma:tree-reduction-can-be-simulated-by-name}
 Let $s$ be an expression with $\ITT(s) \xrightarrow{\TREE} T$. 
 Then there is some $s'$ with $s \xrightarrow{\NAME,*} s'$ and $\ITT(s') = T$.
\end{lemma}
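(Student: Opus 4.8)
The plan is to reduce the statement to Lemmas~\ref{lem:name-lapp-cp=gleich} and~\ref{lemma:name-beta-is-tree-beta}, using uniqueness of normal-order $\TREE$-reduction to identify the produced tree with $T$. First I would unfold the hypothesis: $\ITT(s)\xrightarrow{\TREE}T$ means there is an $\IECtxts$-context ${\cal R}$ with $\ITT(s)={\cal R}[S_1]$, where $S_1$ is the normal-order tree redex \wrt\ one of $\rbetaTr$, $\rcaseTr$, $\rseqTr$, say $S_1\xrightarrow{a'}S_2$, and $T={\cal R}[S_2]$. Let $p$ be the position of the hole of ${\cal R}$.

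Since $S_1$ is in particular a redex, I can apply Lemma~\ref{lem:name-lapp-cp=gleich} to the decomposition $\ITT(s)={\cal R}[S_1]$. This yields an expression $s_1$ with $s\xrightarrow{\NAME,*}s_1$ (using only $\rlapp$, $\rlcase$, $\rlseq$, $\rgcp$ reductions) such that $\ITT(s_1)=\ITT(s)$ and $s_1=R[s'']$, where $R=L[A[\cdot]]$ is an $\LNAME$-reduction context, $\ITT(L[s''])=S_1$, the hole of $A[\cdot]$ is at position $p$, and $s''$ is a $\rbeta$-, $\rcase$-, or $\rseq$-redex according to whether $S_1$ is a $\rbetaTr$-, $\rcaseTr$-, or $\rseqTr$-redex. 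Because $s''$ occupies an $\LNAME$-reduction position, the normal-order step $s_1=R[s'']\xrightarrow{\NAME,a}R[s''']=:s'$ with $a\in\{\rbeta,\rcase,\rseq\}$ (the rule matching $a'$) is defined.

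It remains to check $\ITT(s')=T$. By Lemma~\ref{lemma:name-beta-is-tree-beta}, $\ITT(s_1)\xrightarrow{\TREE,a'}\ITT(s')$; and from the proof of that lemma this $\TREE$-step contracts the tree redex sitting at the image position of $s''$, which by the construction above is exactly position $p$, \ie\ the redex $S_1$ inside ${\cal R}$. As $\ITT(s_1)=\ITT(s)$ (Lemma~\ref{lemma:lapp-cp-it-identical} ensures the preliminary $\LNAME$-steps leave the tree unchanged) and the normal-order $\TREE$-reduction is unique, this step coincides with the given step $\ITT(s)\xrightarrow{\TREE}T$, so $\ITT(s')={\cal R}[S_2]=T$. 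Concatenating, $s\xrightarrow{\NAME,*}s_1\xrightarrow{\NAME}s'$ gives $s\xrightarrow{\NAME,*}s'$ with $\ITT(s')=T$.

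The main obstacle is the matching step: one must be sure that Lemma~\ref{lem:name-lapp-cp=gleich} places the $\LNAME$-redex $s''$ at precisely the tree position contracted by $\ITT(s)\xrightarrow{\TREE}T$, and with a rule of the matching kind. This is exactly the purpose of the position clause (the hole of $A[\cdot]$ equals the hole of ${\cal R}$) and the type clause ($s''$ is a $\rbeta$/$\rcase$/$\rseq$-redex iff $S_1$ is the corresponding tree redex) in that lemma; once these are in hand, determinism of $\xrightarrow{\TREE}$ forces $\ITT(s')=T$ and the remaining parts are routine.
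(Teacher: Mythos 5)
Your proposal is correct and follows essentially the same route as the paper: decompose $\ITT(s)={\cal R}[S_1]$, invoke Lemma~\ref{lem:name-lapp-cp=gleich} to obtain $s_1=R[s'']$ with $\ITT(s_1)=\ITT(s)$ and a matching $\LNAME$-redex in reduction position, then perform the corresponding normal-order step. The only difference is that you spell out the final step ``$\ITT(s')=T$'' (which the paper dismisses with ``one can verify'') via Lemma~\ref{lemma:name-beta-is-tree-beta} and determinism of the normal-order tree reduction, which is a valid and indeed cleaner justification.
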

\begin{proof}
If $\ITT(s) \xrightarrow{\TREE} T$, then $\ITT(s) = {\cal R}[S]$ where ${\cal R}$ is a reduction context,
 $S$ a tree-redex with $S \xrightarrow{\TREE} S'$
and $T = {\cal R}[S']$. Let $p$ be the position of the hole of ${\cal R}$ in $\ITT(s)$. 
We  apply 
Lemma \ref {lem:name-lapp-cp=gleich}, which implies 
 that there is  a reduction $s \xrightarrow{\NAME,*} s'$, such that $\ITT(s) =  \ITT(s')$ and  
$s' = R[s'']$ where $R = L[A[\cdot]]$ is a reduction context and $\ITT(L[s''])$ is a beta-, case-, or seq-redex.
It is obvious that $s' = L[A[s'']] \xrightarrow{\NAME,a}  t$.
Now one can verify that $ \ITT(t) = T$ must hold.
\end{proof}

\begin{proposition}
 Let $s$ be an expression with $\ITT(s)\maycon_\TREE$. Then $s\maycon_{\NAME}$.
\end{proposition}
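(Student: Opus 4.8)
The plan is to prove the statement by induction on the length $k$ of a normal-order $\TREE$-reduction $\ITT(s)\xrightarrow{\TREE,k}T'$ to an $\LTREE$-WHNF $T'$ (such a reduction exists by the definition of $\maycon_{\TREE}$), pulling each individual $\TREE$-step back to a finite $\LNAME$-reduction sequence by means of Lemma~\ref{lemma:tree-reduction-can-be-simulated-by-name}.

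For the base case $k=0$, the tree $\ITT(s)$ is itself an $\LTREE$-WHNF, hence a value, so we may write $\ITT(s) = {\cal R}[T]$ with ${\cal R}=[\cdot]$ the trivial tree reduction context and $T=\ITT(s)$. Applying Lemma~\ref{lem:name-lapp-cp=gleich} yields a reduction $s\xrightarrow{\NAME,*}s'$ with $s'=R[s'']$, $R=L[A[\cdot]]$ an $\LNAME$-reduction context, $\ITT(L[s''])=T$, and the hole of $A$ at the root position $\varepsilon$. Since the hole of $A$ is at $\varepsilon$, we have $A=[\cdot]$ and hence $s'=L[s'']$; and since $T$ is an abstraction or a constructor application, so is $s''$ by the ``iff'' part of Lemma~\ref{lem:name-lapp-cp=gleich}. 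Thus $s'=L[s'']$ is an $\LNAME$-WHNF and $s\maycon_{\NAME}$.

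For the induction step, suppose $\ITT(s)\xrightarrow{\TREE}T\xrightarrow{\TREE,k-1}T'$ with $T'$ an $\LTREE$-WHNF. By Lemma~\ref{lemma:tree-reduction-can-be-simulated-by-name} there is an $s'$ with $s\xrightarrow{\NAME,*}s'$ and $\ITT(s')=T$, so $\ITT(s')$ has a normal-order $\TREE$-reduction of length $k-1$ to an $\LTREE$-WHNF. The induction hypothesis gives $s'\maycon_{\NAME}$, and since $\xrightarrow{\NAME}$ is deterministic, prefixing the evaluation of $s'$ with $s\xrightarrow{\NAME,*}s'$ shows $s\maycon_{\NAME}$. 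Together with the previously proved direction this gives the analogue of Theorem~\ref{thm:may-equivalence} for $\LNAME$.

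The argument has no real obstacle: all the technical content is already packaged in Lemmas~\ref{lem:name-lapp-cp=gleich} and~\ref{lemma:tree-reduction-can-be-simulated-by-name}. The only point needing attention is the base case, where one must use that the tree reduction context is trivial (hole at $\varepsilon$) to conclude that the ${\cal A}$-component of the $\LNAME$-reduction context supplied by Lemma~\ref{lem:name-lapp-cp=gleich} is empty, so that $s'$ is a genuine $\LNAME$-WHNF of the form $L[s'']$ rather than merely an expression whose infinite tree happens to be a value.
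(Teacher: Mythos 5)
Your proposal is correct and follows essentially the same route as the paper: induction on the length of the $\TREE$-reduction to an $\LTREE$-WHNF, with the induction step given by Lemma~\ref{lemma:tree-reduction-can-be-simulated-by-name}. The only cosmetic difference is the base case, which the paper settles by directly observing that $s \xrightarrow{\NAME,\rgcp,*} L[v]$ when $\ITT(s)$ is an answer, whereas you obtain the same conclusion by instantiating Lemma~\ref{lem:name-lapp-cp=gleich} with the empty tree reduction context --- a legitimate and in fact slightly more explicit justification of the same fact.
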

\begin{proof}
We use induction on the length $k$ of a tree reduction $\ITT(s) \xrightarrow{{\TREE},k} T$, where $T$ is an $\LTREE$-answer.
For the base case it is easy to verify that if $\ITT(s)$ is an $\LTREE$-answer, then $s \xrightarrow{\NAME,\rgcp,*} L[v]$ for 
some ${\cal L}$-context $L$ and some $\LNAME$-value $v$. Hence we have $s\maycon_{\NAME}$.
The induction step follows by repeated application of Lemma~\ref{lemma:tree-reduction-can-be-simulated-by-name}.
\end{proof}

\begin{corollary}\label{cor:infinite-tree-conv=name-conv} For all $\LETRECEXPR$-expressions $s$: 
 $s\maycon_{\NAME}$ if, and only if $\ITT(s)\maycon_\TREE$. \qed
\end{corollary}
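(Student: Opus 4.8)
The plan is to read the corollary directly off the two propositions just established, one providing each implication of the biconditional: the first gives $s\maycon_{\NAME}\implies\ITT(s)\maycon_\TREE$ and the second gives $\ITT(s)\maycon_\TREE\implies s\maycon_{\NAME}$, and their conjunction is precisely the asserted equivalence. No further argument is needed, so I would simply cite both.

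For the ``$\Rightarrow$'' direction I would proceed (as in the corresponding proposition) by induction on the length of a normal-order $\LNAME$-reduction of $s$ to an $\LNAME$-WHNF. The base case observes that $\ITT$ of an $\LNAME$-answer $L[v]$ is an $\LTREE$-answer. For the step I would split on the first reduction $s\xrightarrow{\NAME}s'$: by Lemma~\ref{lemma:lapp-cp-it-identical} the structural rules $\rgcp,\rlapp,\rlcase,\rlseq$ leave the infinite tree unchanged, so $\ITT(s)=\ITT(s')\maycon_\TREE$ by induction, whereas by Lemma~\ref{lemma:name-beta-is-tree-beta} the contracting rules $\rbeta,\rcase,\rseq$ produce a single normal-order tree step $\ITT(s)\xrightarrow{\TREE}\ITT(s')$, which again gives $\ITT(s)\maycon_\TREE$.

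For the ``$\Leftarrow$'' direction I would argue by induction on the length $k$ of a $\TREE$-reduction of $\ITT(s)$ to an $\LTREE$-answer. For $k=0$ one checks that $\ITT(s)$ being an $\LTREE$-answer forces $s\xrightarrow{\NAME,\rgcp,*}L[v]$ with $v$ an $\LNAME$-value, so $s\maycon_\NAME$. For the step I would invoke Lemma~\ref{lemma:tree-reduction-can-be-simulated-by-name}: a single step $\ITT(s)\xrightarrow{\TREE}T$ is matched by some $s\xrightarrow{\NAME,*}s'$ with $\ITT(s')=T$; since the tree of $s'$ reduces to the answer in $k-1$ steps, the induction hypothesis applies and yields $s'\maycon_\NAME$, hence $s\maycon_\NAME$.

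The corollary itself poses no real obstacle; the work sits in the two propositions, and within them the delicate point is Lemma~\ref{lem:name-lapp-cp=gleich} (and hence Lemma~\ref{lemma:tree-reduction-can-be-simulated-by-name}), where one must show that the position of a tree-redex inside $\ITT(s)$ can actually be reached in $s$ by finitely many $\rlapp/\rlcase/\rlseq/\rgcp$ steps; the subtlety there is choosing the right termination measure on the label computation as it descends through nested $\tletrec$-environments. A further point worth keeping in mind --- though it is smoother on the $\LNAME$ side than it was for $\LLR$ --- is that a single reduction step may rewrite infinitely many positions of the infinite tree simultaneously, which is why the substitution-based rules $\rbeta,\rcase,\rseq$ still correspond to just one $\TREE$-step rather than an infinite development.
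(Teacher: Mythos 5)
Your proposal is correct and matches the paper exactly: the corollary is obtained there, too, by simply combining the two (unlabeled) propositions immediately preceding it, whose proofs proceed precisely as you recap them — induction on the $\LNAME$-reduction length using Lemmas~\ref{lemma:lapp-cp-it-identical} and~\ref{lemma:name-beta-is-tree-beta} for one direction, and induction on the $\TREE$-reduction length using Lemma~\ref{lemma:tree-reduction-can-be-simulated-by-name} for the other. Nothing further is needed.
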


\begin{theorem}\label{theo:leqneed-equals-leqname}
 $\lec_{\NAME}\ =\ \lec_{\LR}$.
\end{theorem}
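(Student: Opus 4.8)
The plan is to observe that this theorem is essentially a bookkeeping corollary of the two convergence-equivalence results just established, so almost no new work is needed. The first point I would stress is that $\LLR$ and $\LNAME$ are built over \emph{the same} set of expressions $\LETRECEXPR$ and \emph{the same} set of contexts $\LETRECCTXT$; indeed the translation $W$ is the identity on both expressions and contexts. Hence the only thing that could make $\lec_{\LR}$ and $\lec_{\NAME}$ differ is a difference between the convergence predicates $\maycon_{\LR}$ and $\maycon_{\NAME}$, and so it suffices to prove that these two predicates coincide on all of $\LETRECEXPR$.

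Next I would chain the two equivalences with $\LTREE$-convergence as the common pivot. For an arbitrary $s \in \LETRECEXPR$, Theorem~\ref{thm:may-equivalence} gives $s\maycon_{\LR} \iff \ITT(s)\maycon_{\TREE}$, and Corollary~\ref{cor:infinite-tree-conv=name-conv} gives $s\maycon_{\NAME} \iff \ITT(s)\maycon_{\TREE}$. Composing these yields $s\maycon_{\LR} \iff s\maycon_{\NAME}$ for every $\LETRECEXPR$-expression $s$.

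Finally I would unfold the definition of contextual approximation. Fix $s_1,s_2 \in \LETRECEXPR$. For every context $C \in \LETRECCTXT$ the expressions $C[s_1]$ and $C[s_2]$ again lie in $\LETRECEXPR$, so by the previous paragraph the implication $C[s_1]\maycon_{\LR} \impl C[s_2]\maycon_{\LR}$ holds exactly when $C[s_1]\maycon_{\NAME} \impl C[s_2]\maycon_{\NAME}$ holds. Quantifying over all $C \in \LETRECCTXT$ (which is the common context set for both calculi) gives $s_1 \lec_{\LR} s_2 \iff s_1 \lec_{\NAME} s_2$, i.e.\ $\lec_{\LR} = \lec_{\NAME}$.

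There is no genuine obstacle left at this stage: all the difficulty has already been absorbed by the infinite-tree standardization machinery and by Theorem~\ref{thm:may-equivalence} and Corollary~\ref{cor:infinite-tree-conv=name-conv}. If anything, the only subtlety worth a sentence is to note explicitly that $W$ being the identity on contexts is what lets us use one and the same context quantifier on both sides; once that is said, the argument is a two-line composition.
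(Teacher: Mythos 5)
Your argument is correct and is essentially the paper's own proof: both chain Theorem~\ref{thm:may-equivalence} and Corollary~\ref{cor:infinite-tree-conv=name-conv} through $\LTREE$-convergence to get $\maycon_{\LR} = \maycon_{\NAME}$ on $\LETRECEXPR$, and then conclude equality of the contextual preorders since both calculi share the same expressions and contexts. Your explicit remark about unfolding the definition over the common context set is just a slightly more detailed rendering of the paper's final sentence.
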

\begin{proof}
In Corollary~\ref{cor:infinite-tree-conv=name-conv} we have shown that $\LNAME$-convergence is equivalent to infinite tree convergence. 
In Theorem~\ref{thm:may-equivalence} we have shown that $\LLR$-convergence is equivalent to infinite tree convergence. Hence, 
$\LNAME$-convergence and $\LLR$-convergence are equivalent, which further implies that both contextual preorders and also
the contextual equivalences are identical.
\end{proof}

\begin{corollary}\label{cor:W-fully-abs}
The translation $W$ is convergence equivalent and fully abstract. \qed
\end{corollary}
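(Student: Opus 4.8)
The plan is to obtain the corollary directly from the material already in place, exploiting the fact that $W$ is the identity on both expressions and contexts, so that all the real work has already been absorbed into Theorem~\ref{theo:leqneed-equals-leqname} and the two tree-convergence equivalences it rests on.

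First I would record that $W$ is a translation in the sense of Definition~\ref{def:translation-compo-etal} and that it is compositional: since $W_E$ is the identity on $\LETRECEXPR$ and $W_C$ is the identity on $\LETRECCTXT$, we have $W_C(\mathit{Id}) = \mathit{Id}$ and $W(C[s]) = C[s] = W(C)[W(s)]$ for all $C,s$. Next I would check convergence equivalence. Because $W(s) = s$, the statement $s \maycon_{\LR} \iff W(s) \maycon_{\NAME}$ is just $s \maycon_{\LR} \iff s \maycon_{\NAME}$, which is exactly what is produced by chaining Theorem~\ref{thm:may-equivalence} ($s \maycon_{\LR} \iff \ITT(s)\maycon_{\TREE}$) with Corollary~\ref{cor:infinite-tree-conv=name-conv} ($s \maycon_{\NAME} \iff \ITT(s)\maycon_{\TREE}$); equivalently it is the convergence half of the proof of Theorem~\ref{theo:leqneed-equals-leqname}. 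So $W$ is convergence equivalent.

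With compositionality and convergence equivalence established, Proposition~\ref{prop:adequate} immediately gives adequacy of $W$, i.e.\ $W(s)\lec_{\NAME} W(t) \implies s\lec_{\LR} t$, which (again since $W$ is the identity) reads $s\lec_{\NAME} t \implies s\lec_{\LR} t$. For full abstraction there remains the opposite implication $s\lec_{\LR} t \implies W(s)\lec_{\NAME} W(t)$, i.e.\ $s\lec_{\LR} t \implies s\lec_{\NAME} t$. But the two implications together are precisely the equality $\lec_{\LR} = \lec_{\NAME}$ asserted by Theorem~\ref{theo:leqneed-equals-leqname}, so both directions hold and $W$ is fully abstract.

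I do not expect any genuine obstacle at the level of this corollary: the difficulty has been shifted upstream into the standardization argument for $\LTREE$ (Proposition~\ref{prop:I-sequence-gives-maycon}) and the convergence-equivalence theorems between $\LLR$, $\LNAME$ and the infinite-tree calculus. Once those are available, the corollary is a short consequence of the identity nature of $W$ together with Theorem~\ref{theo:leqneed-equals-leqname} and Proposition~\ref{prop:adequate}.
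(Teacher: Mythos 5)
Your proposal is correct and follows essentially the same route as the paper: the corollary is an immediate consequence of Theorem~\ref{thm:may-equivalence}, Corollary~\ref{cor:infinite-tree-conv=name-conv} and Theorem~\ref{theo:leqneed-equals-leqname}, given that $W$ is the identity on expressions and contexts. Your extra detour through Proposition~\ref{prop:adequate} for one direction is harmless but unnecessary, since both directions of full abstraction already follow from the equality $\lec_{\LR}=\lec_{\NAME}$.
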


Since $W$ is the identity on expressions, this  implies:
\begin{corollary}\label{cor:W-iso}
$W$ is an isomorphism according to Definition \ref{def:translation-compo-etal}.\qed
\end{corollary}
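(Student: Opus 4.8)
The plan is to unfold Definition~\ref{def:translation-compo-etal}: a translation $\tau$ is an isomorphism precisely when it is fully abstract and the induced map $\tau/{\sim}$ on the quotients of the term models is a bijection. The first condition is already available as Corollary~\ref{cor:W-fully-abs}, so the only work is to verify that $W/{\sim}: \LETRECEXPR/{\sim_{\LR}} \to \LETRECEXPR/{\sim_{\NAME}}$ is a bijection.

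First I would observe that $W$ has the same source and target expression set, namely $\LETRECEXPR$, and that by definition $W$ acts as the identity on expressions. Next I would invoke Theorem~\ref{theo:leqneed-equals-leqname}, which states $\lec_{\NAME} = \lec_{\LR}$; taking symmetrizations this gives $\simc_{\NAME} = \simc_{\LR}$, i.e. the two equivalence relations on $\LETRECEXPR$ that define the quotients literally coincide. Consequently the quotient sets $\LETRECEXPR/{\simc_{\LR}}$ and $\LETRECEXPR/{\simc_{\NAME}}$ are equal, and the induced map $W/{\sim}$ sends $[s]_{\simc_{\LR}}$ to $[W(s)]_{\simc_{\NAME}} = [s]_{\simc_{\LR}}$, i.e. it is the identity function on $\LETRECEXPR/{\simc_{\LR}}$. (Full abstractness, in particular the direction $s \simc_{\LR} t \implies W(s) \simc_{\NAME} W(t)$, is what guarantees this map is well defined in the first place, but here it is immediate from the equality of the relations.) The identity function is trivially a bijection, so $W/{\sim}$ is a bijection.

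Combining the two parts — $W$ fully abstract by Corollary~\ref{cor:W-fully-abs} and $W/{\sim}$ a bijection — yields that $W$ is an isomorphism in the sense of Definition~\ref{def:translation-compo-etal}. There is essentially no hard step here: the content of the corollary is entirely carried by Theorem~\ref{theo:leqneed-equals-leqname} (whose proof, via the infinite-tree calculus $\LTREE$, is the genuinely substantial part) together with the trivial fact that $W$ is the identity on syntax; the remaining argument is just chasing the definition of isomorphism.
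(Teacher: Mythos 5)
Your proposal is correct and follows essentially the same route as the paper, which derives the corollary immediately from Corollary~\ref{cor:W-fully-abs} together with the observation that $W$ is the identity on expressions (so the induced map on quotients is a bijection, the equivalences coinciding by Theorem~\ref{theo:leqneed-equals-leqname}). Your write-up merely spells out the definition-chasing that the paper leaves implicit.
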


A further consequence of our results 
 is that the general copy rule (gcp) is a correct program transformation in $\LLR$.
This is a novel result, since in previous work only special cases were proved correct.
\begin{proposition}\label{prop-gcp-correct}
 The program transformation (gcp) is correct in $\LNAME$ and $\LLR$.
\end{proposition}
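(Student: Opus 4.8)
The plan is to route the argument through the infinite‑tree semantics, exactly as was done for the normal‑order reductions. First I would note that (gcp), viewed as a binary relation on $\LETRECEXPR$ rather than as a normal‑order step, is trivially closed under $\LETRECCTXT$-contexts: the rule already permits an arbitrary outer context $C_1$ and an arbitrary inner context $C_2$, and by the distinct variable convention a $\tletrec$-bound variable is never captured or shadowed by the surrounding context, so $s_1 \xrightarrow{\mathrm{gcp}} s_2$ implies $C[s_1] \xrightarrow{\mathrm{gcp}} C[s_2]$ for all $C \in \LETRECCTXT$.

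The key lemma to establish is that (gcp) leaves the infinite tree invariant: if $s_1 \xrightarrow{\mathrm{gcp}} s_2$ then $\ITT(s_1) = \ITT(s_2)$. This strengthens Lemma~\ref{lemma:lapp-cp-it-identical} from the normal‑order rule $\rgcp$ to (gcp) in an arbitrary position, and it is proved by the same kind of analysis of the label‑computation algorithm of \FIGURE~\ref{figure-inftree}: whenever the computation of $p(\ITT(s_1))$ reaches the replaced occurrence of $x$, rule~11 (or~12,~13) makes it continue with the binding $s$ at the same residual position, which is precisely what the computation of $p(\ITT(s_2))$ does directly at the copied $s$; conversely every position valid for $\ITT(s_2)$ is valid for $\ITT(s_1)$. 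The one point that needs care is the $\tBot$-detection condition ("the same let‑bound variable is hit twice at $\varepsilon$"): unfolding the binding of $x$ once only removes a single indirection step from such a loop, so the computation produces $\tBot$ on the left exactly when it does on the right. Intuitively, $\ITT$ is the full letrec‑unfolding and (gcp) performs a single unfolding step, which a fixed point does not notice.

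Given the invariance lemma, correctness in $\LNAME$ is immediate: for any $C \in \LETRECCTXT$ we have $\ITT(C[s_1]) = \ITT(C[s_2])$, hence by Corollary~\ref{cor:infinite-tree-conv=name-conv} $C[s_1]\maycon_{\NAME} \iff \ITT(C[s_1])\maycon_{\TREE} \iff \ITT(C[s_2])\maycon_{\TREE} \iff C[s_2]\maycon_{\NAME}$, so $s_1 \simc_{\NAME} s_2$ and thus $\mathrm{(gcp)} \subseteq {\simc_{\NAME}}$. Correctness in $\LLR$ then follows from Theorem~\ref{theo:leqneed-equals-leqname} (equivalently from Corollary~\ref{cor:W-fully-abs}, since $W$ is the identity on expressions), which gives ${\simc_{\NAME}} = {\simc_{\LR}}$ on $\LETRECEXPR$; alternatively one can argue directly in $\LLR$ using Theorem~\ref{thm:may-equivalence} in place of Corollary~\ref{cor:infinite-tree-conv=name-conv}.

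The main obstacle is the invariance lemma $\ITT(s_1) = \ITT(s_2)$, and within it the verification that the single‑step unfolding performed by (gcp) never changes which tree positions receive the constant $\tBot$; everything else is a routine application of the convergence‑equivalence results already proved in this section.
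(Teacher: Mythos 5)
Your proposal is correct and follows essentially the same route as the paper: the paper's proof also observes that a (gcp)-step in an arbitrary context leaves the infinite tree unchanged, $\ITT(C[s_1]) = \ITT(C[s_2])$, then concludes $s_1 \simc_{\NAME} s_2$ via Corollary~\ref{cor:infinite-tree-conv=name-conv} and transfers to $\LLR$ via Theorem~\ref{theo:leqneed-equals-leqname}. The only difference is that you spell out the tree-invariance argument via the label-computation rules, which the paper simply asserts.
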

\begin{proof}
Correctness of (gcp) in $\LNAME$ holds, since for $s,t \in \LETRECEXPR$ with
 $s \xrightarrow{gcp} t$ and for any context $C$: $\ITT(C[s]) = \ITT(C[t])$. Hence
 Corollary~\ref{cor:infinite-tree-conv=name-conv} implies that
 $C[s] \maycon_{\NAME} \iff C[t] \maycon_{\NAME}$ and
thus $s \simc_{\NAME} t$. Theorem~\ref{theo:leqneed-equals-leqname} finally also shows $s \simc_{\LR} t$.
\end{proof}

\section{\texorpdfstring{The Translation $\transComp:\LNAME \to \LLAZYCC$}{The Translation N}}\label{sec:NAME-to-LAZY}
We use multi-fixpoint combinators as defined in~\cite{goldberg:05} to translate
letrec-expressions $\LETRECEXPR$ of the calculus $\LNAME$ into equivalent ones without a \tletrec. The translated
expressions are $\LAMBDAEXPR$ and belong to the calculus $\LLAZYCC$.

\begin{definition}\label{def:multifixpoint}
Given $n \geq 1$, a family of $n$ fixpoint combinators $\Y_i^n$ for $i = 1,\ldots,n$ can be defined as follows: 
\[\begin{array}{lclll}
\Y_i^n & := & \lambda f_1,\ldots, f_n . (&(\lambda x_1, \ldots,x_n . f_i &(x_1~x_1 ~\ldots x_n)~\ldots~(x_n~x_1 ~\ldots x_n))\\
     & &                               &(\lambda x_1, \ldots,x_n . f_1 &(x_1~x_1 ~\ldots x_n)~\ldots~(x_n~x_1 ~\ldots x_n))\\
   & &                               &    \ldots\\
   & &                               &  (\lambda x_1, \ldots,x_n . f_n &(x_1~x_1 ~\ldots x_n)~\ldots~(x_n~x_1 ~\ldots x_n)))
\end{array}\]
\end{definition}

The idea of the translation is to replace  
 $\tletrx{x_1 = s_1, \ldots, x_n = s_n}{t}$ by $t[B_1/x_1,\ldots, B_n/x_n]$
 where $B_i := \Y_i^n~F_1 \ldots F_n$ 
 and $F_i := \lambda x_1, \ldots, x_n . s_i$.

In this way the fixpoint combinators implement  the generalized fixpoint property: 
$\Y_i^n~F_1 \ldots F_n \sim F_i~(\Y_1^n~F_1 \ldots F_n) \ldots (\Y_n^n~F_1 \ldots F_n)$. However, our translation uses modified expressions, as shown below.

Consider the expression \mbox{$(\Y_i^n~F_1~\ldots~F_n)$}. After expanding the notation for $\Y_i^n$ we obtain the expression
$((\lam f_1, \ldots, f_n. (X_i~X_1~\ldots~X_n))~F_1~\ldots~F_n)$
where $X_i$ can be expanded to $X_i =  \lam x_1 \ldots x_n. (f_i~(x_1~x_1~\ldots~x_n)~\ldots~(x_n~x_1~\ldots~x_n))$.
If we reduce further then we get:
$$
\begin{array}{ll}
 (\lam f_1, \ldots, f_n . (X_i~X_1~\ldots~X_n))~F_1~\ldots~F_n \annRed{\rnbet,*} (X'_i~X'_1~\ldots~X'_n), 
\\
\mbox{where } X'_i = \lam x_1 \ldots x_n. (F_i~(x_1~x_1~\ldots~x_n) \ldots (x_n~x_1~\ldots x_n))
\end{array}
$$

We take the latter expression as the definition of the
multi-fixpoint translation, where we avoid substitutions 
and instead generate (\rnbeta)-redexes
which ensures that contexts are mapped to contexts
\begin{definition}
The translation $\transComp : \LNAME \to \LLAZYCC$ is recursively defined as:
\begin{itemize}
\item
 $\transComp\tletrx{x_1 = s_1, \ldots, x_n = s_n}{t}  = $ \\
  $(\lambda x_1',\ldots,x_n'. (\lambda x_1,\ldots x_n.\transComp(t))~U_1 \ldots~U_n)~X_1' \ldots X_n'$\\[1mm]
$\begin{array}{@{\qquad}lrcl}
\mbox{where }&  \multicolumn{3}{l}{\text{$x_1',\ldots,x_n'$ are fresh variables}}\\
             & U_i   &=& x_i'~x_1' \ldots x_n',
\\
 &  X'_i  &=& \lam x_1 \ldots x_n. F_i (x_1~ x_1 \ldots x_n) \ldots (x_n~x_1 \ldots x_n), 
\\
&   F_i &=& \lambda x_1, \ldots, x_n . \transComp(s_i).

\end{array}
$
\item $\transComp(s~t) = (\transComp(s)~\transComp(t))$
\item $\transComp(\tseq~s~t) = (\tseq~\transComp(s)~\transComp(t))$
\item $\transComp(c~s_1~\ldots~s_{\ari(c)}) = (c~\transComp(s_1)\ldots\transComp(s_{\ari(c)}))$
\item $\transComp(\lambda x.s) = \lambda x.\transComp(s)$
\item $\transComp(\tcase_T~s~\tof~alt_1~\ldots~alt_{|T|})= \tcase_T~\transComp(s)~\tof~\transComp(alt_1)~\ldots~\transComp(alt_{|T|})$
\item for a case-alternative: $\transComp(c~x_1~\ldots~x_{\ari(c)} \to s) = (c~x_1~\ldots~x_{\ari(c)} \to \transComp(s))$ 
\item $\transComp(x) = x$.

\end{itemize}
We extend $\transComp$ to contexts by treating the hole as a constant, \ie\ $\transComp(\chole) = \chole$.
This is consistent, since the hole is not duplicated by the translation.
\end{definition}

\subsection{\texorpdfstring{Convergence Equivalence of $\transComp$}{Convergence Equivalence of N}}\label{appendix:conv-equiv-trans}
In the following we will also use the context class ${\cal B}$, defined as   
${\cal B} = L[{\cal B}] \mid A[{\cal B}] \mid [\cdot]$ (${\cal L}$- and ${\cal A}$-contexts 
are defined as before in Sect.~\ref{subsec:name-calc}).

The proof of convergence equivalence of the translation $\transComp$ may be
performed directly, but it would be complicated due to the additional (\rnbeta)-reductions
required in $\LLAZYCC$. For this technical reason we provide a second translation
$\transN$, which requires a special treatment for the translation of
contexts and uses a substitution function $\sigma$:

\begin{definition}\label{def:simp-fixpoint}
The translation $\transN : \LNAME \to \LLAZYCC$  is recursively defined as:
\begin{itemize}
\item $\transN\tletrx{x_1 = s_1, \ldots, x_n = s_n}{t}  =  \sigma(\transN(t))$,
where 
\[\begin{array}{rcl}
  \sigma &= & \{x_1 \mapsto U_1, \ldots x_n \mapsto U_n\}
 \\ 
   U_i   &=& (X'_i~X'_1~\ldots~X'_n),
\\
   X'_i  &=& \lam x_1 \ldots x_n. F_i (x_1~x_1~\ldots~x_n)~\ldots~(x_n~x_1~\ldots~x_n), 
\\
   F_i &=& \lambda x_1, \ldots, x_n . \transN(s_i).
\end{array}
\]
\item $\transN(s~t) = (\transN(s)~\transN(t))$
\item $\transN(\tseq~s~t) = (\tseq~\transN(s)~\transN(t))$
\item $\transN (c~s_1 \ldots s_n) = (c~\transN(s_1) \ldots \transN(s_n))$
\item $\transN(\lambda x.s) = \lambda x.\transN(s)$
\item $\transN(\tcase_T~s~\tof~alt_1~\ldots~alt_{|T|})= \tcase_T~\transN(s)~\tof~\transN(alt_1)~\ldots~\transN(alt_{|T|})$
\item for a case-alternative: 
$\transN(c~x_1~\ldots~x_{\ari(c)} \to s) = (c~x_1~\ldots~x_{\ari(c)} \to \transN(s))$ 
\item $\transN(x) = x$.
\end{itemize}
The extension of $\transN$ to contexts is done only for ${\cal B}$-contexts and requires an extended
  notion of contexts that are accompanied by an additional substitution,
  \ie\ a ${\cal B}$-context translates into a pair 
   $(C,\sigma)$ of a context $C$ and a substitution $\sigma$ acting as a
  function on expressions. Filling the hole of $(C,\sigma)$ by an
expression $s$ is by definition $(C,\sigma)(s) =
C[\sigma(s)]$.
The translation for ${\cal B}$-contexts is  defined as

\begin{center}
\begin{tabular}{llp{11.5cm}}
$\transN(C)$ &$= (C',\sigma)$, 
 & where $C'$ and $\sigma$ are
 calculated by applying $N'$ to $C$: for calculating $C'$ the hole 
of $C$ is treated 
as a constant,
and $\sigma$ is the combined substitution affecting the hole of $C'$.
\end{tabular}
\end{center}
\end{definition}

This translation does not duplicate holes of contexts.\enlargethispage{\baselineskip}

\begin{lemma}\label{lemma:T-prime-sim-T}
The translation $\transComp$ is equivalent to $\transN$ on expressions, that is for all $\LETRECEXPR$-expressions $s$ the equivalence   
$\transComp(s) \sim_{\LCC} \transN(s)$ holds.
\end{lemma}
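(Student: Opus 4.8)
The plan is to prove $\transComp(s) \simc_{\LCC} \transN(s)$ by structural induction on $s$, where the only non-trivial case is the \tletrec-case, and there the key observation is that the two translations differ exactly by whether the fixpoint unfoldings are substituted into $\transN(t)$ directly or supplied via extra $(\rnbeta)$-redexes wrapped around $\transComp(t)$. First I would set up the induction; for all the homomorphic cases (application, $\tseq$, constructor application, abstraction, \tcase, case-alternative, variable) the statement follows immediately from the induction hypothesis together with the fact that $\simc_{\LCC}$ is a congruence (Corollary after Theorem~\ref{thm:sim_c-equiv-b}), since both translations are defined by the same recursion on these constructs.

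For the \tletrec-case, write $s = \tletrx{x_1 = s_1,\ldots,x_n = s_n}{t}$. By the induction hypothesis $\transComp(t) \simc_{\LCC} \transN(t)$ and $\transComp(s_i) \simc_{\LCC} \transN(s_i)$ for all $i$; hence, using the congruence property, the expressions $X_i'$, $U_i$ built from $\transComp$ are $\simc_{\LCC}$-equivalent to those built from $\transN$. So it suffices to show that
\[
(\lambda x_1',\ldots,x_n'.\,(\lambda x_1,\ldots,x_n.\,\transN(t))~U_1 \ldots U_n)~X_1' \ldots X_n'
\;\simc_{\LCC}\; \sigma(\transN(t)),
\]
where $\sigma = \{x_i \mapsto (X_i'~X_1'~\ldots~X_n')\}$ and $U_i = x_i'~x_1'\ldots x_n'$. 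Here I would invoke Corollary~\ref{cor:reductioncorrect}: every $(\rnbeta)$-step is correct for $\simc_{\LCC}$ in any context. Performing the $n$ outer $(\rnbeta)$-reductions substitutes $X_j'$ for $x_j'$, turning each $U_i$ into $(X_i'~X_1'~\ldots~X_n')$; performing the $n$ inner $(\rnbeta)$-reductions substitutes these for $x_i$ in $\transN(t)$, yielding exactly $\sigma(\transN(t))$. Since $(\rnbeta)$ is correct in arbitrary contexts and these redexes sit in reduction-like positions of the surrounding term, the chain of reductions gives the desired equivalence. (One must check the reductions are applicable, i.e.\ the variables $x_1',\ldots,x_n'$ are fresh so no capture occurs, which is guaranteed by the distinct variable convention and the definition of $\transComp$.)

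I expect the main obstacle to be purely bookkeeping: making precise that the $2n$ $(\rnbeta)$-redexes can actually be contracted in sequence to reach $\sigma(\transN(t))$, keeping track of which substitution is applied where, and confirming that the substitution $\sigma$ in the definition of $\transN$ on the \tletrec-case is literally the composite effect of these contractions. There is no deep difficulty, because Corollary~\ref{cor:reductioncorrect} already licenses each individual step and transitivity of $\simc_{\LCC}$ chains them; the care is only in matching up the notation of Definitions~\ref{def:multifixpoint}, the $\transComp$ definition, and the $\transN$ definition so that the syntactic forms coincide after reduction. Once the \tletrec-case is settled, the induction closes and the lemma follows.
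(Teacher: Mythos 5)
Your proposal is correct and follows essentially the same route as the paper, whose proof is exactly the observation that $\transComp(s)$ and $\transN(s)$ differ only by the extra $(\rnbeta)$-redexes introduced in the \tletrec-case, so the claim follows from the definitions together with correctness of $(\rnbeta)$-reduction in any context (Corollary~\ref{cor:lcc-red-rules-correct}). You merely spell out the structural induction, congruence of $\simc_{\LCC}$, and the $2n$ contractions that the paper leaves implicit.
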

\begin{proof}
 This follows from the definitions and correctness of (\rnbeta)-reduction in $\LLAZYCC$ by Theorem \ref{cor:lcc-red-rules-correct}.
\end{proof}
We first prove that the translation $\transN$ is convergence-equivalent.
Due to Lemma~\ref{lemma:T-prime-sim-T} this will also imply that $\transComp$ is convergence-equivalent.
All reduction contexts $L[A[\cdot]]$ in $\LNAME$ translate into reduction contexts  $R_\LCC$ in $\LLAZYCC$
since removing the case of \tletrec\ from the definition of a reduction
context in $\LNAME$ results in the reduction context definition in $\LLAZYCC$.\newpage

\noindent However, this cannot be reversed, since
a translated expression of $\LNAME$ may have a redex in $\LLAZYCC$, but it is not
a normal order redex in $\LNAME$ since (\rlapp), (\rlseq), or (\rlcase) reductions must be performed first to
shift $\tletrec$-expressions out of an application, a $\tseq$-expression, or a $\tcase$-expression.
The lemma below gives a more precise characterization of this relation: 
\begin{lemma}\label{lem:red-context-trans}
If $L[A[\cdot]]$ is a reduction context in $\LNAME$, then
$\transN(L[A[\cdot]]) =  R[\sigma(\cdot)]$, where $R$ is a reduction context in $\LLAZYCC$ 
and $\sigma$ is a substitution.

If $R$ is a reduction context in $\LLAZYCC$, and 
$\transN(C') = (R,\sigma)$ for 
some substitution $\sigma$ and some context $C'$ in $\LNAME$, then  $C'$ is
a ${\cal B}$-context.
\end{lemma}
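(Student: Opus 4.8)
The plan is to prove both directions by structural induction following the grammars for $\LNAME$-reduction contexts ($L[A[\cdot]]$ with $L \in {\cal L}$, $A \in {\cal A}$) and for $\LLAZYCC$-reduction contexts $R_\LCC$.

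For the first statement I would induct on the combined structure of $L$ and $A$. Write $L[A[\cdot]]$ and peel off $L$ first: if $L = [\cdot]$ there is nothing to strip; if $L = \tletrec~\iEnv~\tin~L'$, then by definition of $\transN$ on \tletrec-expressions, $\transN(L[A[\cdot]]) = \sigma_{\iEnv}(\transN(L'[A[\cdot]]))$ where $\sigma_{\iEnv}$ is the substitution $\{x_i \mapsto U_i\}$ built from the bindings, and the hole is treated as a constant so $\sigma_{\iEnv}$ simply composes with whatever substitution the inductive call produces. So after handling $L$ it suffices to show $\transN(A[\cdot]) = R[\cdot]$ for a reduction context $R$ in $\LLAZYCC$ (the accumulated substitution is just carried along, acting on the hole). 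Now induct on $A$: the base case $A = [\cdot]$ gives $R = [\cdot]$; for $A = (A'~s)$ we have $\transN(A[\cdot]) = (\transN(A'[\cdot])~\transN(s)) = (R'[\cdot]~\transN(s))$ by induction, and $(R'~\transN(s))$ is again an $\LLAZYCC$-reduction context by the grammar for ${\cal R}_\LCC$; the cases $A = (\tcase_T~A'~\tof~alts)$ and $A = (\tseq~A'~s)$ are entirely analogous, using the corresponding productions of ${\cal R}_\LCC$. Collecting the substitutions from the $L$-part and noting the hole was never duplicated yields $\transN(L[A[\cdot]]) = R[\sigma(\cdot)]$ as claimed.

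For the converse I would induct on the structure of the $\LLAZYCC$-reduction context $R$, analysing the shape of the $\LNAME$-context $C'$ with $\transN(C') = (R,\sigma)$. If $R = [\cdot]$, then since $\transN$ treats the hole of $C'$ as a constant and never creates a hole out of nothing, $C'$ must have its hole at a position that is mapped to the root; by inspection of the clauses of $\transN$ the only constructs above the hole in $C'$ that disappear under $\transN$ into an outer substitution (rather than an outer term constructor) are \tletrec-binders, so $C'$ is an ${\cal L}$-context, hence a ${\cal B}$-context. If $R = (R'~s)$, then the top constructor of $\transN(C')$ is an application; looking at the clauses of $\transN$, an application at the top of the output arises either from an application at the top of $C'$ (clause $\transN(s~t)=(\transN(s)~\transN(t))$) or from a \tletrec\ whose translation expands into the $U_i$/$X_i'$ applications — but in the latter case the hole of $C'$ would have to sit strictly inside one of those generated subterms, which is impossible since the hole is treated as a constant and the generated subterms are closed (built only from the $F_i$, which are translations of the binding right-hand sides, and fresh variables). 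Hence $C' = (C''~t)$ with $\transN(C'') = (R',\sigma)$ and $\transN(t) = s$, and by induction $C''$ is a ${\cal B}$-context, so $(C''~t)$ is an $A[{\cal B}]$-context, again a ${\cal B}$-context. The cases $R = (\tcase_T~R'~\tof~alts)$ and $R = (\tseq~R'~s)$ are handled the same way, and the \tletrec-case of ${\cal B} = L[{\cal B}]$ is picked up whenever the translation inserted an outer substitution.

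The main obstacle I anticipate is the converse direction, specifically the bookkeeping needed to argue that an application (or \tcase-, \tseq-) constructor appearing at the top of $\transN(C')$ genuinely comes from the corresponding $\LNAME$-constructor of $C'$ and not from the machinery ($\Y$-combinator expansion, the $U_i$ and $X_i'$ terms) that the \tletrec-clause of $\transN$ emits. The clean way to discharge this is the closedness observation: every subterm that $\transN$ manufactures for a \tletrec\ is built solely from fresh bound variables and the closed-up translations $F_i$ of the bindings, and crucially none of these manufactured subterms contains the hole, because $\transN$ on contexts treats the hole as a constant and the \tletrec-clause only substitutes the $U_i$ for the \emph{bound} variables $x_i$, never touching the hole. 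Once this is pinned down, the induction on $R$ goes through mechanically, and one also needs to track that the substitution $\sigma$ in the pair $(R,\sigma)$ is exactly the composition of the binding-substitutions collected along the ${\cal L}$-spine, which matches the definition of $\transN$ on ${\cal B}$-contexts.
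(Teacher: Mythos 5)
Your treatment of the first claim is essentially the paper's own argument (structural induction along the $L$- and $A$-spine: applications, \tseq{} and \tcase{} translate homomorphically, \tletrec{} layers contribute only substitutions that compose and act on the hole), so that part is fine. For the second claim your organization (induction on the structure of $R$, analysing where each top constructor of $\transN(C')$ comes from) is a reasonable reformulation of the paper's sketch, which instead inducts on the number of translation steps and observes that the (\redrule{lapp})/(\redrule{lcase})/(\redrule{lseq}) situations are the only way a non-reduction context of $\LNAME$ can be mapped onto a reduction context; note also that by Definition of $\transN$ on contexts the translation is only declared for ${\cal B}$-contexts, so under a literal reading the second claim needs little more than that observation.

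The genuine weak point is exactly the step you yourself flag as the crux. To exclude that the top application of $R$ stems from the \tletrec{} machinery you argue that the manufactured subterms ``are closed'' and ``cannot contain the hole because the hole is treated as a constant''. Neither claim is right as stated: $F_i = \lambda x_1,\ldots,x_n.\transN(s_i)$ (and hence $X'_i$, $U_i$) may contain free variables of $s_i$ other than $x_1,\ldots,x_n$, and treating the hole as a constant does not prevent the hole of $C'$ from sitting inside a binding $s_i$ and therefore inside $F_i$ -- it only guarantees the hole is never substituted for. The correct way to close this case is positional: if the hole lay inside some binding $s_i$, then every occurrence of the hole in $\transN(C')$ would sit strictly below the $\lambda x_1\ldots x_n$ binder of $F_i$ (and may moreover be duplicated, since $X'_i$ occurs in every $U_j$), so it could not be the unique hole of an $\LLAZYCC$-reduction context, contradicting $\transN(C') = (R,\sigma)$ with $R\in{\cal R}_{\LCC}$. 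Hence a top-level application (resp.\ \tseq, \tcase) of $R$ must come from the corresponding $\LNAME$-constructor of $C'$ after stripping outer \tletrec{} layers, and your induction then goes through. With that repair the proposal is sound; without it, the pivotal exclusion is unjustified.
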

\begin{proof}
The first claim can be shown by structural induction on  the context $L[A[\cdot]]$. It holds, since applications are translated into applications,
$\tseq$-expressions are translated into $\tseq$-expressions, $\tcase$-expressions are translated into $\tcase$-expressions,
and \tletrec-expressions are translated into substitutions.

The other part can be shown by induction on the number of translation steps.
It is easy to observe that the definition of a reduction context in
$\LNAME$ does not descend into $\tletrec$-expressions below applications, $\tseq$-, and $\tcase$-expressions.
For instance, in $(\tletrx{\iEnv}{((\lambda x. s_1)~ s_2)}~s_3)$ the reduction contexts are $[\cdot]$ and 
$([\cdot]~s_3)$ and the redex is (\rlapp), \ie\ the reduction context does not reach $((\lam x. s_1)~s_2)$. 
In general, applications, \tseq-, and \tcase-expressions in such cases appear in 
 $\cal B$-contexts, as defined above.
By examining the expression definition we observe that these (\rlapp), (\rlseq), and/or (\rlcase)-redexes are the only cases
where non-reduction contexts may be translated into reduction contexts.
\end{proof}

\begin{lemma}\label{lem:translate-values}
Let $\transN(s) = t$. Then:
\begin{enumerate}
\item \label{item:abs}  If $s$ is an abstraction then so is $t$. 
\item \label{item:cons} If $s = (c~s_1 \ldots s_{\ari(c)})$ then $t =    (c~t'_1 \ldots t'_{\ari(c)})$. 
\end{enumerate}
\end{lemma}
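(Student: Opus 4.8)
The plan is to prove Lemma~\ref{lem:translate-values} by a direct structural inspection of the definition of $\transN$, using the fact that both clauses of the lemma concern only the \emph{top} constructor of $s$. First I would observe that the only clauses of Definition~\ref{def:simp-fixpoint} that could produce an expression whose top operator differs from that of $s$ are the clause for $\tletrx{x_1=s_1,\ldots,x_n=s_n}{t}$ (which yields $\sigma(\transN(t))$, i.e.\ a substitution instance of the translation of the body) and the variable clause $\transN(x)=x$ (trivially not relevant here, since $x$ is neither an abstraction nor a constructor application). All other clauses are \emph{top-operator-preserving}: $\transN(\lambda x.s)=\lambda x.\transN(s)$ and $\transN(c~s_1\ldots s_{\ari(c)}) = (c~\transN(s_1)\ldots\transN(s_{\ari(c)}))$ literally keep the head symbol, so items~\eqref{item:abs} and~\eqref{item:cons} are immediate in these cases, with $t'_i = \transN(s_i)$.

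Next I would handle the $\tletrec$ case, which is the only substantive one. If $s = \tletrx{x_1=s_1,\ldots,x_n=s_n}{r}$ is itself an abstraction or a constructor application, then syntactically the body $r$ must be of the same form: if $s$ is an abstraction $\lambda y.s'$ then the $\tletrec$-expression \emph{is} $\lambda y.s'$ only if $r = \lambda y.s'$ — but this contradicts $s$ being a $\tletrec$-expression unless we read it as the body being that abstraction. Here the cleanest route is to proceed by induction on the number of leading $\tletrec$'s: write $s$ as an iterated $\tletrec$ wrapped around a non-$\tletrec$ core expression $r$, so that $\transN(s) = \sigma(\transN(r))$ for the combined substitution $\sigma$ arising from the nested bindings. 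Since $r$ is not a $\tletrec$-expression, by the top-operator-preserving clauses above $\transN(r)$ has the same top operator as $r$; and since $s$ equals (up to the $\tletrec$-wrapping) the expression $r$, the syntactic shape of $s$ as an abstraction or constructor application forces $r$ to be an abstraction $\lambda y.r'$ or a constructor application $(c~r_1\ldots r_{\ari(c)})$ respectively. Applying $\sigma$ then gives $\sigma(\lambda y.\transN(r')) = \lambda y.\sigma(\transN(r'))$ (up to $\alpha$-renaming of $y$, which we may assume fresh for $\sigma$ by the distinct variable convention), still an abstraction; and $\sigma((c~\transN(r_1)\ldots\transN(r_{\ari(c)}))) = (c~\sigma(\transN(r_1))\ldots\sigma(\transN(r_{\ari(c)})))$, still a constructor application with $t'_i = \sigma(\transN(r_i))$.

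The remaining syntactic categories — application, $\tseq$, $\tcase$, and variable — cannot occur as the top of an abstraction or a constructor application at all, so the hypotheses of items~\eqref{item:abs} and~\eqref{item:cons} are vacuous for them and there is nothing to prove. Collecting the cases completes the argument. I do not expect any real obstacle here: the lemma is essentially a restatement of the fact that $\transN$ commutes with taking the head symbol except that it replaces $\tletrec$-bindings by substitutions, and substitutions do not alter the head symbol of a non-variable expression. The only point requiring mild care is the bookkeeping of the combined substitution $\sigma$ for nested $\tletrec$-expressions and the observation that $\sigma$ pushes inside an abstraction or a constructor application without changing the head — which is routine given the distinct variable convention and $\alpha$-renaming of reductions adopted earlier in the paper.
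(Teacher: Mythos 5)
Your proof is correct and follows essentially the same route as the paper, which simply observes that the clauses $\transN(\lambda x.s)=\lambda x.\transN(s)$ and $\transN(c~s_1\ldots s_{\ari(c)})=(c~\transN(s_1)\ldots\transN(s_{\ari(c)}))$ preserve the head symbol. Note only that your middle paragraph on the $\tletrec$ case is unnecessary: since the lemma's hypotheses require $s$ to be syntactically an abstraction or a constructor application, a $\tletrec$-expression never satisfies them, so that case is vacuous (your stronger claim about $\tletrec$-wrapped values is harmless but not needed here).
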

\begin{proof}
This follows by examining the translation $\transN$.
\end{proof}

We will now use reduction diagrams to show the correspondence of $\LNAME$-reduction and $\LLAZYCC$-reduction
\wrt\ the translation $\transN$.

\subsubsection*{\texorpdfstring{Transferring $\LNAME$-reductions into $\LLAZYCC$-reductions}{Transferring LNAME-reductions into LCC-reductions}} ~\\
In this section we analyze how normal order reduction in $\LNAME$ can be transferred into
$\LLAZYCC$ via $\transN$. We illustrate this by using reduction diagrams.
For $s \annRed{\NAME} t$ we analyze how the reduction transfers to
$\transN(s)$. The cases are on the rule used in $s\annRed{\NAME} t$:
\begin{itemize}
\item $(\rbeta)$ Let $s = R[(\lambda x. s_1)~s_2]$ be
an expression in $\LNAME$, where $R$ is a reduction context.
We observe that in $\LNAME$:
$
s \annRed{\NAME} t =  R[s_1[s_2/x]]
$.
Let $\transN(R[\cdot]) = (R',\sigma)$. Then the translations for $s$ and $t$ are as follows:
$$
\begin{array}{l}
\transN(s) = R'[\sigma(\transN((\lam x. s_1)~s_2))]
     = R'[(\lam x.\sigma(\transN(s_1)))~\sigma(\transN(s_2))]
\\[1.1ex]
\transN(t) = \transN(R[s_1[s_2/x]]) 
      = R'[\sigma(\transN(s_1[s_2/x]))]
      = R'[\sigma(\transN(s_1))[\sigma(\transN(s_2))/x]]
\end{array}
$$
\begin{figure}
\[
\begin{array}{c}
\begin{array}{ccc}
\xymatrix@R=6mm@C=8mm{
\cdot \ar@{->}[r]^{\transN} \ar@{->}[d]_{\footnotesize\txt{$\NAME$,\\$\rbeta$}} & \cdot \ar@{->}[d]^{\LCC,\rnbet}
\\
\cdot \ar@{->}[r]_{\transN} & \cdot}
&
\xymatrix@R=6mm@C=8mm{
 \cdot \ar@{->}[r]^{\transN}  \ar@{->}[d]_{\footnotesize\txt{$\NAME$,\\ $\rgcp$}} & \cdot \ar@{->}[d]^{\LCC,\rnbet,2n~~} \\
\cdot  \ar@{->}[r]_{\transN} & \cdot 
}
&
\xymatrix@R=6mm@C=8mm{
\cdot \ar@{->}[r]^{\transN} \ar@{->}[d]_{\footnotesize\txt{$\NAME$,\\$\rcase$}} & \cdot \ar@{->}[d]^{\LCC,\rncase}  
 \\ 
\cdot \ar@{->}[r]_{\transN}  & \cdot
}
\\[1.1ex]
\text{(1)} &\text{(2)} &\text{(3)}
\end{array}
\\[1.4ex]
\begin{array}{cccc}
\xymatrix@R=6mm@C=8mm{
 \cdot \ar@{->}[r]^{\transN} \ar@{->}[d]_{\footnotesize\txt{$\NAME$,\\$\rlapp$}} & \cdot \\
\cdot  \ar@{->}[ur]_{\transN} 
}
&
\xymatrix@R=6mm@C=8mm{
\cdot \ar@{->}[r]^{\transN} \ar@{->}[d]_{\footnotesize\txt{$\NAME$,\\$\rlcase$}} & \cdot \\  
\cdot \ar@{->}[ur]_{\transN} 
}
&
\xymatrix@R=6mm@C=8mm{
 \cdot \ar@{->}[r]^{\transN} \ar@{->}[d]_{\footnotesize\txt{$\NAME$,\\$\rlseq$}} & \cdot \\  
\cdot  \ar@{->}[ur]_{\transN} 
}
&
\xymatrix@R=6mm@C=8mm{
 \cdot \ar@{->}[r]^{\transN}  \ar@{->}[d]_{\footnotesize\txt{$\NAME$,\\$\rseq$}} & \cdot \ar@{->}[d]^{\LCC,\rnseq}\\
  \cdot \ar@{->}[r]_{\transN} & \cdot
}
\\[1.1ex]
\text{(4)} &\text{(5)} &\text{(6)}&\text{(7)}
\end{array}
\end{array}
\]
\caption{Diagrams for transferring reductions between $\LNAME$ and $\LLCC$}\label{fig:N-diagrams}
\end{figure}

Since $R'$ is a reduction context in $\LLAZYCC$, this shows
$\transN(s) \annRed{\LCC,\rnbet} \transN(t)$.
Thus we have the diagram (1)  in Figure~\ref{fig:N-diagrams}. 
\item $(\rgcp)$
Consider the (\rgcp) reduction. Without loss of generality we assume that $x_1$
is the variable that gets substituted: 
$$
\begin{array}{l}
s = L[\tletrec~x_1=s_1,\ldots,x_n=s_n~\tin~R[x_1]]
\annRed{\NAME,\rgcp}
\\
t = L[\tletrec~x_1=s_1,\ldots,x_n=s_n~\tin~R[s_1]]
\end{array}
$$
Let 
$\transN(L) = ([\cdot],\sigma_L)$, 
$\transN(\tletrec~x_1=s_1,\ldots,x_n=s_n~\tin~[\cdot]) = ([\cdot],\sigma_{\iEnv})$, and
$\transN(R) = (R',\sigma_R)$ where $R'$ is a reduction context.
Then
$$\begin{array}{ll}
\transN(s)  &= \sigma_L(\sigma_{\iEnv}(R'[\sigma_{R}(x_1)]))
      = \sigma_L(\sigma_{\iEnv}(R'))[\sigma_L(\sigma_{\iEnv}(\sigma_{R}(x_1)))]
\\
      &= \sigma_L(\sigma_{\iEnv}(R'))[\sigma_L(\sigma_{\iEnv}(x_1))]
  \end{array}$$
where the last step follows, since $x_1$ cannot be substituted by $\sigma_{R}$,
and
$$\transN(t) = \sigma_L(\sigma_{\iEnv}(R'))[\sigma_L(\sigma_{\iEnv}(\transN(s_1)))]$$ where it is again necessary
to observe that $\sigma_{R}(s_1) = s_1$ must hold. The context $R''= \sigma_L(\sigma_{\iEnv}(R'))$ must be 
a reduction context, since $R'$ is a reduction context. This means that we need to show
that $R''[\sigma_L(\sigma_{\iEnv}(x_1))] \xrightarrow{\LCC,*} R''[\sigma_L(\sigma_{\iEnv}(\transN(s_1)))]$ holds.

By Definition~\ref{def:simp-fixpoint} of the translation~$\transN$ we have
$\sigma_L(\sigma_{\iEnv}(x_1)) = U_1 = (X'_1 X'_1 \ldots
  X'_n)$,
where $X'_i = \lam x_1 \ldots x_n. F_i (x_1 x_1  \ldots x_n) \ldots (x_n x_1
  \ldots x_n)$, and $F_i = \lam x_1, \ldots, x_n. \sigma_L(\transN(s_i))$, 
\ie,  $\transN(t) = R''[U_1]$.

Performing the applications, we transform $U_1$ in $2n$ steps as
$$
\begin{array}{ll}
&(\lam x_1, \ldots, x_n. (F_1 (x_1 x_1 \ldots x_n) \ldots (x_n x_1 \ldots x_n)))~X'_1~\ldots~X'_n 
\\
\annRed{\rnbet,n}
&F_1~(X'_1 X'_1 \ldots X'_n)~\ldots~(X'_n X'_1 \ldots X'_n) 
\\
=&
(\lam x_1, \ldots, x_n. \sigma_L(\transN(s_1)))~(X'_1 X'_1 \ldots X'_n)~\ldots~(X'_n X'_1 \ldots X'_n)
\\
\annRed{\rnbet,n}
&
\sigma_L(\transN(s_1))[U_1/x_1, \ldots, U_n/x_n].
\end{array}
$$
Obviously, for all reduction contexts in $\LLAZYCC$ holds: $r_1 \xrightarrow{\LCC} r_2$ implies $R[r_1] \xrightarrow{\LCC} R[r_1]$.
Hence $\transN(s) \xrightarrow{\LCC,\rnbet,2n} R''[\sigma_L(\transN(s_1))[U_1/x_1, \ldots, U_n/x_n]]$  and
since $x_1, \ldots, x_n$ cannot occur free in $L$, the last expression is 
the same as $R''[\sigma_L(\sigma_{\iEnv}(\transN(s)))]$.
Thus we obtain the diagram (2)  in Figure~\ref{fig:N-diagrams}, 
where $n$ is the number of bindings in the $\tletrec$-subexpression where the copied binding is.
\item $(\rcase)$ The diagram for this case is  marked (3) in Figure~\ref{fig:N-diagrams}.
The case is similar to  (\rbeta):
\mbox{$s =R[\tcase_T~(c~\vect{s_i}) \ldots ((c~\vect{x_i}) \to r) \ldots ]$} 
$\annRed{\NAME} R[r[s_1/x_1, \ldots, s_{\ari(c)}/x_{\ari(c)} ]] = t.$
Let $\transN(R[\cdot]) = (R',\sigma)$. Then the translations for $s$ and $t$ are as follows:
\[
\begin{array}{@{\quad\qquad}l@{~}l@{}}
\transN(s) &= R'[\sigma(\transN(\tcase_T~(c~s_1 \ldots s_{\ari(c)}) \ldots ((c~x_1 \ldots x_{\ari(c)}) \to r) \ldots ))] 
\\
            &= R'[\tcase_T~(c~\sigma(\transN(s_1))) \ldots \sigma(\transN(s_{\ari(c)})) \ldots ((c~x_1 \ldots x_{\ari(c)}) \to \sigma(\transN(r))) \ldots]
\end{array}
\]
\[\begin{array}{@{\quad\qquad}l@{~}l@{}}
\transN(t)  &= \transN(R[r[s_1/x_1, \ldots, s_{\ari(c)}/x_{\ari(c)} ]]) 
\\
 &=     R'[\sigma(\transN(r[s_1/x_1, \ldots, s_{\ari(c)}/x_{\ari(c)} ]))] 
\\
 &=     R'[\sigma(\transN(r))[\sigma(\transN(s_1))/x_1, \ldots, \sigma(\transN(s_{\ari(c)}))/x_{\ari(c)} ]]
\end{array}\qquad\qquad
\]

Since $R'$ is a reduction context in $\LLAZYCC$, this shows
$\transN(s) \annRed{\LCC} \transN(t)$.
\item $(\rlapp)$ The reduction is $
R[\tletrx{\iEnv}{s_1}~s_2] \annRed{\NAME} R[\tletrx{\iEnv}{(s_1~s_2)}]$.
Since free variables of $s_2$ do not depend on $\iEnv$, the translation of $s_2$
does not change by adding $\iEnv$.
\Ie, for $\transN(R) =(R',\sigma_R)$ and $\transN\tletrx{\iEnv}{[\cdot]} = ([\cdot],\sigma_{\iEnv})$
we have 
\[\eqalign{
  \transN(R[(\tletrec~\iEnv~\in~s_1)~s_2]) 
&= R'[\sigma_R(\sigma_{Env}(\transN(s_1))~\transN(s_2))]\cr
&= R'[\sigma_R(\sigma_{\iEnv}(\transN(s_1~\transN(s_2))))]\cr
&= \transN(R[(\lr \iEnv \inn (s_1~s_2))])
}
\]
and thus the diagram for this case is as the one marked (4) in Figure~\ref{fig:N-diagrams}.
\item $(\rlcase)$
The case is analogous to that of (\rlapp), with the diagram marked as (5) in Figure~\ref{fig:N-diagrams}.
\item $(\rlseq)$ The case is analogous to (\rlapp) and (\rlcase), with the diagram (6) in Figure~\ref{fig:N-diagrams}.
\item $(\rseq)$
$s = R[\tseq~v~s_1] \annRed{\NAME} R[s_1] = t$ where $v$ is an abstraction or a constructor application

Let $\transN(R[\cdot]) = (R',\sigma)$. Then the translations for $s$ and $t$ are as follows:
$$
\begin{array}{l}
\transN(s) = R'[\sigma(\transN(\tseq~v~s_1))]=R'[\tseq~\sigma(\transN(v))~\sigma(\transN(s_1))] 
\\[1.1ex]
\transN(t) = R'[\sigma(\transN(s_1))]
\end{array}
$$

By Lemma~\ref{lem:translate-values}~$\transN(v)$ is a value in $\LLAZYCC$ (which cannot be changed by the substitution $\sigma$)
and thus $\transN(s) \xrightarrow{\LCC,\rnseq} \transN(t)$.
The diagram for this case is (7) in Figure~\ref{fig:N-diagrams}.
\end{itemize}

We inspect how WHNFs and values  of both calculi are related \wrt\ $\transN$:

\begin{lemma}\label{lemma:irreduble-in-lname-non-whnf-implies-irreducible-non-whnf-in-lcc}
Let $s$ be irreducible in $\LNAME$, but not an $\LNAME$-WHNF. 
Then $\transN(s)$ is irreducible in $\LLAZYCC$ and also not an $\LLAZYCC$-WHNF.
\end{lemma}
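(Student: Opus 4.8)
The plan is to first pin down the shape of an expression that is stuck in $\LNAME$ (irreducible but not a WHNF) and then to transport that shape through $\transN$ with the help of Lemmas~\ref{lem:red-context-trans} and~\ref{lem:translate-values}. For the first part, let $L$ be the maximal ${\cal L}$-context with $s = L[u]$ and $u$ not a $\tletrec$-expression, and then let $A$ be the maximal ${\cal A}$-context with $u = A[t]$. By the grammar of $\LETRECEXPR$, $t$ must be a variable, an abstraction, or a constructor application: if it were an application, a $\tcase$, or a $\tseq$, then $A$ would not be maximal, and if it were a $\tletrec$, then either $L$ would not be maximal (when $A=[\cdot]$) or one of \rlapp, \rlcase, \rlseq would apply. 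Inspecting the $\LNAME$-reduction rules against the innermost ${\cal A}$-frame of $A$, I would verify that $s$ can be stuck only in one of the following configurations: (a) $t$ is a variable that is \emph{free} in $s$ --- if $t$ were bound it would be bound by a $\tletrec$ inside $L$, and \rgcp{} would apply; (b) $A$ ends with a frame $(\tcase_T~[\cdot]~\tof~\ialts)$ and $t$ is either an abstraction or a constructor application $(c~\vect{s})$ with $c$ not a constructor of type $T$; (c) $A$ ends with an application frame $([\cdot]~s')$ and $t$ is a constructor application. (If $A=[\cdot]$ and $t$ is a value, then $s$ is a WHNF, contrary to assumption; every remaining frame/value combination triggers one of \rbeta, \rcase, \rseqc.)

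For the second part I would push this through $\transN$. The context $L[A[\cdot]]$ is a ${\cal B}$-context, so by Lemma~\ref{lem:red-context-trans} we have $\transN(L[A[\cdot]]) = (R,\sigma)$ with $R$ a reduction context of $\LLAZYCC$ and $\sigma$ a substitution that affects only the $\tletrec$-bound variables of $L$; hence $\transN(s) = R[\sigma(\transN(t))]$. In case (a), $\sigma(\transN(t)) = \sigma(x) = x$ is free in $\transN(s)$. In cases (b) and (c), Lemma~\ref{lem:translate-values} shows $\transN(t)$ is an abstraction, respectively a constructor application with the same head $c$, and applying $\sigma$ preserves this; so $\transN(s)$ has the form $R[\tcase_T~w~\tof~\ialts]$ with $w$ a value not matching any alternative of $\tcase_T$, respectively $R[((c~\vect{w})~u')]$ with a constructor application in operator position. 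In every case $\transN(s)$ has no $\LLAZYCC$-redex: $\LLAZYCC$-reduction only rewrites an \rnbeta-, \rncase-, or \rnseq-redex sitting inside a reduction context, $R$ is a reduction context, and descending its (linear) structure one more step one reaches only a free variable, a $\tcase$ on a non-matching value, or a constructor application applied to an argument, none of which is a redex. Moreover $\transN(s)$ is not a value --- it is either a bare variable or has an application, $\tcase$, or $\tseq$ as its top constructor --- hence it is not an $\LLAZYCC$-WHNF.

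The main obstacle will be the first part: getting the case analysis of stuck $\LNAME$-configurations exactly right, in particular the $\tletrec$ bookkeeping. One must check that a $\tletrec$ occurring on the reduction spine always fires \rlapp, \rlcase, or \rlseq (so that $t$ is indeed never a $\tletrec$), and that a variable on the reduction spine that is not free is always reducible by \rgcp, using that the $\LNAME$-labeling propagates the $S/T$ label through $\tletrec$-bodies and ${\cal A}$-frames down to the variable. Once this characterisation is settled, the transport through $\transN$ is routine, since $\transN$ maps reduction contexts to reduction contexts and values to values of the same kind, and the substitutions it introduces land only inside abstraction bodies, $\tcase$-alternatives, or argument positions --- never at the active position where a new redex could appear.
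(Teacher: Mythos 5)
Your proposal is correct and takes essentially the same route as the paper's proof: the paper also argues by a case analysis of stuck $\LNAME$-configurations (a free variable in reduction position, a $\tcase$ on a non-matching scrutinee, a constructor application in function position) and then pushes each shape through $\transN$ using Lemmas~\ref{lem:red-context-trans} and~\ref{lem:translate-values}. If anything, you are more explicit than the paper, since you justify exhaustiveness of the case analysis (via maximal ${\cal L}$- and ${\cal A}$-decomposition and the \rgcp{}-argument for bound variables) and also cover the abstraction-as-scrutinee subcase, which the paper's stated case list leaves implicit.
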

\begin{proof}
Assume that expression $s$ is irreducible in $\LNAME$ but not an $\LNAME$-WHNF. There are three cases:
\begin{enumerate}
 \item Expression $s$ is of the form $R[x]$ where $x$ is a free variable in $R[x]$,
then let $\transN(R)=(R',\sigma)$ and thus $\transN(s) = R'[\sigma(x)]$.  Since $\sigma$ only substitutes bound variables, we get
$\sigma(x) = x$ and thus $\transN(s) = R'[x]$ where $x$ is free in $R'[x]$. Hence $\transN(s)$ cannot be an $\LLAZYCC$-WHNF and it is irreducible in $\LLAZYCC$.
\item Expression $s$ is of the form $R[\tcase_T~(c~s_1~\ldots~s_{\ari(c)})~\tof~alts]$, but $c$ is not of type $T$.
Let $\transN(R)=(R',\sigma)$. Then $\transN(s) = R'[\tcase_T~(c~\sigma(\transN(s_1))\ldots\sigma(\transN(s_{\ari(c)})))~\tof~alts']$
which shows that $\transN(s)$ is not an $\LLAZYCC$-WHNF and irreducible in $\LLAZYCC$.
\item Expression $s$ is of the form $R[((c~s_1~\ldots~s_{\ari(c)})~r)]$. Then again $\transN(s)$ is not an $\LLAZYCC$-WHNF and irreducible. \qedhere
\end{enumerate}
\end{proof}
\begin{lemma}\label{lem:T-WHNF} 
Let $s\in\LETRECEXPR$. Then  $s$ is an $\LNAME$-WHNF iff $\transN(s)$ is an $\LLAZYCC$-WHNF.
\end{lemma}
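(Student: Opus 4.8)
The plan is to prove the biconditional $s$ is an $\LNAME$-WHNF iff $\transN(s)$ is an $\LLAZYCC$-WHNF by analysing the syntactic shape of $\LNAME$-WHNFs and tracing how the translation $\transN$ acts on each shape. Recall that an $\LNAME$-WHNF is by definition an expression of the form $L[\lambda x.s']$ or $L[(c~s_1~\ldots~s_{\ari(c)})]$ where $L$ is an ${\cal L}$-context, i.e.\ a nesting of $\tletrec$-environments around a value, and that an $\LLAZYCC$-WHNF is simply a value (abstraction or constructor application), since $\LLAZYCC$ has no $\tletrec$.

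For the forward direction I would argue by induction on the ${\cal L}$-context $L$. If $L = [\cdot]$, then $s$ is itself a value, and by Lemma~\ref{lem:translate-values} $\transN(s)$ is a value of the same kind, hence an $\LLAZYCC$-WHNF. If $L = \tletrec~\iEnv~\tin~L'$, then by Definition~\ref{def:simp-fixpoint} we have $\transN(s) = \sigma(\transN(L'[v]))$ for the substitution $\sigma$ induced by $\iEnv$, where $L'[v]$ is again an $\LNAME$-WHNF; by the induction hypothesis $\transN(L'[v])$ is a value, and applying the substitution $\sigma$ to a value yields a value of the same kind (abstraction stays an abstraction, constructor application stays a constructor application), so $\transN(s)$ is an $\LLAZYCC$-WHNF. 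This uses only the defining clauses of $\transN$ and the trivial fact that substitutions preserve the top constructor of a value.

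For the converse I would argue contrapositively: suppose $s$ is \emph{not} an $\LNAME$-WHNF and show $\transN(s)$ is not an $\LLAZYCC$-WHNF. If $s$ is reducible in $\LNAME$, then by the reduction diagrams established just above (Figure~\ref{fig:N-diagrams}), $\transN(s)$ either performs a genuine $\LLAZYCC$-reduction or a sequence of $(\rlapp)/(\rlseq)/(\rlcase)$-translations that correspond to an identity on $\transN$-images but still leave $\transN(s)$ with an $\LLAZYCC$-redex (since these $\LNAME$-rules fire precisely when an application/$\tseq$/$\tcase$ has a $\tletrec$ in head position, which $\transN$ turns into an application/$\tseq$/$\tcase$ with the translated body in head position — still reducible in $\LLAZYCC$); more carefully one just observes that following the $(\rlapp)$-etc.\ chain to its end produces an $\LNAME$-redex of type $(\rbeta)$, $(\rcase)$, $(\rseq)$, or $(\rgcp)$, whose $\transN$-image is $\LLAZYCC$-reducible by diagrams (1), (2), (3), (7). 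If instead $s$ is irreducible in $\LNAME$ but not a WHNF, this is exactly Lemma~\ref{lemma:irreduble-in-lname-non-whnf-implies-irreducible-non-whnf-in-lcc}, which gives that $\transN(s)$ is irreducible and not an $\LLAZYCC$-WHNF. In all cases $\transN(s)$ fails to be an $\LLAZYCC$-WHNF.

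The main obstacle I anticipate is the bookkeeping in the converse direction when $s$ is reducible via a chain of $(\rlapp)$, $(\rlseq)$, $(\rlcase)$ reductions: one must be sure that such a chain always terminates in a ``real'' redex (so that the diagrams of Figure~\ref{fig:N-diagrams} apply and yield an actual $\LLAZYCC$-redex), rather than getting stuck, and that the translated expression genuinely has an $\LLAZYCC$-redex at that point rather than accidentally being a WHNF. This is really a matter of noting that an $\LNAME$ normal-order reduction is unique and always exists for a non-WHNF (stated in Section~\ref{subsec:name-calc}), combined with the fact that $\transN$ maps $\tletrec$ to substitution and applications/$\tseq$/$\tcase$ to themselves, so the head structure that triggered the $\LNAME$-redex is mirrored in $\transN(s)$; once this is pinned down the rest is routine case checking against the clauses of $\transN$.
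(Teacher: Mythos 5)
Your overall route coincides with the paper's: the forward direction is the computation of $\transN$ on $L[\lambda x.s']$ resp.\ $L[(c~s_1\ldots s_{\ari(c)})]$ (the paper does this in one step rather than by induction on $L$, but it is the same observation), and the converse is argued contrapositively, splitting into ``$s$ has a normal-order redex'' (via the diagram analysis of Figure~\ref{fig:N-diagrams}) and ``$s$ is irreducible but not a WHNF'' (Lemma~\ref{lemma:irreduble-in-lname-non-whnf-implies-irreducible-non-whnf-in-lcc}). The forward direction and the irreducible case are fine as written.

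There is, however, a concrete flaw in your reducible case. You justify it by claiming that following the $(\rlapp)/(\rlcase)/(\rlseq)$-chain to its end ``produces an $\LNAME$-redex of type $(\rbeta)$, $(\rcase)$, $(\rseq)$, or $(\rgcp)$'', supported by the assertion that a normal-order reduction ``always exists for a non-WHNF''. Neither holds in $\LNAME$: there are irreducible non-WHNFs (exactly the stuck shapes listed in Lemma~\ref{lemma:irreduble-in-lname-non-whnf-implies-irreducible-non-whnf-in-lcc}), and the chain may end in one. For example, $s = ((\tletrec~x=\tnil~\tin~(\tnil~\tnil))~t)$ has an $(\rlapp)$ normal-order redex, but the single step leads to $\tletrec~x=\tnil~\tin~((\tnil~\tnil)~t)$, which is stuck; moreover $\transN(s) = ((\tnil~\tnil)~\transN(t))$ is \emph{not} $\LLAZYCC$-reducible, so your parenthetical ``still reducible in $\LLAZYCC$'' also overclaims. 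What is true, and is all you need, is that $\transN(s)$ fails to be a value. The repair is cheap and uses only what you already cite: either argue directly (as the paper does, reading it off the translated forms computed in the diagram analysis) that whenever $s$ has a normal-order redex, $\transN(s)$ has top operator application, $\tcase$ or $\tseq$ once the $\tletrec$-prefix has become a substitution --- for a $(\rgcp)$-redex with the bound variable itself in the hole, note the substitution maps it to the application $U_i$ --- or keep your chain argument but, since diagrams (4)--(6) leave the $\transN$-image unchanged, treat the case where the chain ends in a stuck term by applying Lemma~\ref{lemma:irreduble-in-lname-non-whnf-implies-irreducible-non-whnf-in-lcc} to that end term. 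With this adjustment your proof goes through and matches the paper's.
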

\begin{proof}
If $s = L[\lam x. s']$ or $s=L[(c~s_1 \ldots s_{\ari(c)})]$ then $\transN(s) = \lam x. \sigma(\transN(s'))$ 
or $\transN(s) = (c~\sigma(\transN(s_1)) \ldots \sigma(\transN(s_{\ari(c)})))$ respectively, both of which are
$\LLAZYCC$-WHNFs. 

For the other direction assume that $\transN(s)$ is an abstraction or a constructor application.
The analysis of the reduction correspondence in the previous paragraph
shows that $s$ cannot have a normal order redex in $\LNAME$, since otherwise $\transN(s)$ cannot be an $\LLAZYCC$-WHNF. 
Lemma \ref{lemma:irreduble-in-lname-non-whnf-implies-irreducible-non-whnf-in-lcc} shows that $s$ cannot be irreducible in $\LNAME$, but
not an $\LNAME$-WHNF. Thus $s$ must be an $\LNAME$-WHNF.
\end{proof}

\subsubsection*{\texorpdfstring{Transferring $\LLAZYCC$-reductions into $\LNAME$-reductions}{Transferring LCC-reductions into LNAME-reductions}}~\\
We will now analyze how normal order reductions for $\transN(s)$ can be transferred
into normal order reductions for $s$ in $\LNAME$.

Let $s$ be an $\LETRECEXPR$-expression and $\transN(s) \xrightarrow{\LCC} t$.
We split the argument into three cases based on whether or not a normal order reduction is applicable to $s$:
\begin{itemize}
 \item If $s \xrightarrow{(\NAME)} r$, then we can use the already developed diagrams, since normal-order reduction
in both calculi is unique.
 \item $s$ is a WHNF. This case cannot happen, since then $\transN(s)$ would also be a WHNF (see Lemma~\ref{lem:T-WHNF}) and thus irreducible.
 \item $s$ is irreducible but not a WHNF. Then Lemma~\ref{lemma:irreduble-in-lname-non-whnf-implies-irreducible-non-whnf-in-lcc} implies
that $\transN(s)$ is irreducible in $\LLAZYCC$ which contradicts the assumption $\transN(s) \xrightarrow{\LCC} t$. Thus this case is impossible.\medskip
\end{itemize}

\noindent We summarize the diagrams in the following lemma:
\begin{lemma}\label{lem:T-diagrams}
Normal-order reductions in $\LNAME$ can be transferred into
reductions in $\LLAZYCC$, and vice versa, by the diagrams in Figure~\ref{fig:N-diagrams}. \qed
\end{lemma}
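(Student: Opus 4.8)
The plan is to prove Lemma~\ref{lem:T-diagrams} by simply assembling the case analysis already performed in the two preceding paragraphs: for each normal-order reduction step in one of the two calculi, exhibit the indicated reduction (or reduction sequence) in the other and check that the square or triangle in Figure~\ref{fig:N-diagrams} commutes. Concretely, I would split the statement into the two transfer directions and treat each by a case distinction on the reduction rule involved.

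First I would handle the direction from $\LNAME$ to $\LLAZYCC$. Fix a step $s \xrightarrow{\NAME} t$ and write the $\LNAME$-reduction context as $R = L[A[\cdot]]$, so that by Lemma~\ref{lem:red-context-trans} we have $\transN(R) = (R',\sigma)$ with $R'$ an $\LLAZYCC$-reduction context and $\sigma$ a substitution that affects only the variables bound by the enclosing \tletrec-blocks. For the rules $(\rbeta)$, $(\rcase)$, $(\rseq)$ one checks that $\transN$ commutes with substitution of expressions for variables (and, in the $(\rseq)$ case, that $\transN$ preserves values, by Lemma~\ref{lem:translate-values}); this yields diagrams (1), (3), (7). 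For the \tletrec-shifting rules $(\rlapp)$, $(\rlcase)$, $(\rlseq)$ one observes that $\transN$ maps the two sides of the rule to the \emph{same} $\LLAZYCC$-expression, because the bindings moved by the rule cannot capture free variables of the argument being moved past them; this gives the triangle diagrams (4)--(6). The only case requiring genuine computation is $(\rgcp)$: a single $\LNAME$-step corresponds to $2n$ administrative $(\rnbeta)$-steps in $\LLAZYCC$, obtained by unfolding the multi-fixpoint combinators, i.e.\ $U_1 = (X_1'\,X_1'\,\ldots\,X_n') \xrightarrow{\LCC,\rnbet,n} F_1\,U_1\,\ldots\,U_n \xrightarrow{\LCC,\rnbet,n} \sigma_L(\transN(s_1))[U_1/x_1,\ldots,U_n/x_n]$, using that $x_1,\ldots,x_n$ do not occur free in $L$ so that the substitutions compose correctly; this is diagram (2).

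Then I would handle the direction from $\LLAZYCC$ to $\LNAME$. Given $\transN(s) \xrightarrow{\LCC} t$, argue by trichotomy on $s$ in $\LNAME$: either $s$ has a normal-order $\LNAME$-redex, or $s$ is an $\LNAME$-WHNF, or $s$ is irreducible but not an $\LNAME$-WHNF. In the first case the forward diagrams together with uniqueness of normal-order reduction in both calculi (both $\xrightarrow{\NAME}$ and $\xrightarrow{\LCC}$ are deterministic) force $t$ to be the $\transN$-image of the $\LNAME$-reduct, so the diagrams of Figure~\ref{fig:N-diagrams} apply. The second case is impossible: by Lemma~\ref{lem:T-WHNF} $\transN(s)$ would be an $\LLAZYCC$-WHNF, hence irreducible, contradicting $\transN(s) \xrightarrow{\LCC} t$. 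The third case is impossible by Lemma~\ref{lemma:irreduble-in-lname-non-whnf-implies-irreducible-non-whnf-in-lcc}, which again gives that $\transN(s)$ is irreducible. Hence only the first case occurs, and the lemma follows.

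The main difficulty is bookkeeping rather than anything conceptual: one must track the substitution component $\sigma$ attached to the translated context and verify it commutes both with $\transN$ and with the substitutions introduced by $(\rbeta)$ and $(\rcase)$, and one must count the $2n$ administrative $(\rnbeta)$-steps in the $(\rgcp)$ case precisely so that the multi-fixpoint unfolding lands exactly on $\transN(t)$. A secondary point worth stating explicitly is that an $\LLAZYCC$-redex occurring in $\transN(s)$ need not correspond on the nose to a normal-order $\LNAME$-redex of $s$ — it may first require $(\rlapp)$/$(\rlseq)$/$(\rlcase)$ reshuffling to expose it — but this causes no trouble, since those reshuffling steps are themselves normal-order $\LNAME$-reductions covered by the triangle diagrams, so the trichotomy argument still applies after finitely many such steps.
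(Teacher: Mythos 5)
Your proposal follows essentially the same route as the paper: a case analysis on the reduction rule for the $\LNAME$-to-$\LLAZYCC$ direction (with the $2n$ administrative $(\rnbeta)$-steps for $(\rgcp)$, the triangles for the \tletrec-shifting rules, and Lemmas~\ref{lem:red-context-trans} and~\ref{lem:translate-values} doing the bookkeeping), and the trichotomy via Lemma~\ref{lem:T-WHNF} and Lemma~\ref{lemma:irreduble-in-lname-non-whnf-implies-irreducible-non-whnf-in-lcc} together with determinism for the reverse direction. This matches the paper's argument, which consists precisely of the case analyses preceding the lemma.
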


\begin{proposition}\label{prop:TY-conv-equiv-appendix}\label{prop:TY-conv-equiv}
$\transN$  and $\transComp$ are convergence equivalent, \ie\ for all $\LETRECEXPR$-expressions $s$:  
$s\maycon_{\NAME} \iff \transN(s) \maycon_{\LCC}$ (
$s\maycon_{\NAME} \iff \transComp(s) \maycon_{\LCC}$, resp.).
\end{proposition}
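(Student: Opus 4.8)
The plan is to reduce the statement for $\transComp$ to that for $\transN$ using Lemma~\ref{lemma:T-prime-sim-T}, and to prove convergence equivalence for $\transN$ by exploiting the reduction diagrams collected in Lemma~\ref{lem:T-diagrams} together with Lemma~\ref{lem:T-WHNF}. First I would observe that $\transComp(s)\simc_{\LCC}\transN(s)$ immediately gives $\transComp(s)\maycon_{\LCC}\iff\transN(s)\maycon_{\LCC}$, since $\simc_{\LCC}$ is in particular closed under the empty context; so it suffices to prove $s\maycon_{\NAME}\iff\transN(s)\maycon_{\LCC}$.

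For the direction $s\maycon_{\NAME}\implies\transN(s)\maycon_{\LCC}$ I would argue by induction on the length $k$ of a normal-order reduction $s\xrightarrow{\NAME,k}s'$ to an $\LNAME$-WHNF $s'$. The base case $k=0$ is Lemma~\ref{lem:T-WHNF}: $\transN(s)=\transN(s')$ is an $\LLAZYCC$-WHNF, hence converges. For the step, write $s\xrightarrow{\NAME}s_1\xrightarrow{\NAME,k-1}s'$; by the diagrams in Figure~\ref{fig:N-diagrams} (Lemma~\ref{lem:T-diagrams}) we have either $\transN(s)\xrightarrow{\LCC,+}\transN(s_1)$ (the $(\rbeta),(\rgcp),(\rcase),(\rseq)$ cases, using that each such step maps to one or $2n$ $\LLAZYCC$-steps) or $\transN(s)=\transN(s_1)$ (the $(\rlapp),(\rlcase),(\rlseq)$ cases). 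In either case the induction hypothesis applied to $s_1$ gives $\transN(s_1)\maycon_{\LCC}$, and prepending the (possibly empty) $\LLAZYCC$-reduction sequence yields $\transN(s)\maycon_{\LCC}$.

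For the converse $\transN(s)\maycon_{\LCC}\implies s\maycon_{\NAME}$ I would induct on the length of an $\LLAZYCC$-reduction of $\transN(s)$ to an $\LLAZYCC$-WHNF. If $\transN(s)$ is already an $\LLAZYCC$-WHNF then Lemma~\ref{lem:T-WHNF} gives that $s$ is an $\LNAME$-WHNF, hence $s\maycon_{\NAME}$. Otherwise $\transN(s)\xrightarrow{\LCC,+}$ to a WHNF, so $\transN(s)$ is $\LLAZYCC$-reducible; here I use the case analysis stated just before Lemma~\ref{lem:T-diagrams}: $s$ cannot be an $\LNAME$-WHNF (else $\transN(s)$ is a WHNF, hence irreducible), and $s$ cannot be irreducible-but-not-a-WHNF in $\LNAME$ (else Lemma~\ref{lemma:irreduble-in-lname-non-whnf-implies-irreducible-non-whnf-in-lcc} makes $\transN(s)$ irreducible), so $s\xrightarrow{\NAME}s_1$ for some $s_1$. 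Now the applicable diagram of Figure~\ref{fig:N-diagrams} — using that both normal-order reductions are deterministic (uniqueness of $\xrightarrow{\NAME}$ and of $\xrightarrow{\LCC}$) — forces the reduction of $\transN(s)$ to factor through $\transN(s_1)$: in the $(\rbeta),(\rcase),(\rseq),(\rgcp)$ cases the leading $\LLAZYCC$-step(s) of the given reduction must coincide with those prescribed by the diagram, leaving a strictly shorter reduction of $\transN(s_1)$ to a WHNF; in the $(\rlapp),(\rlcase),(\rlseq)$ cases $\transN(s)=\transN(s_1)$, so the same reduction works and the $\NAME$-reduction length decreases. Applying the induction hypothesis to $s_1$ gives $s_1\maycon_{\NAME}$, hence $s\maycon_{\NAME}$. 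Finally, by Lemma~\ref{lemma:T-prime-sim-T} the same equivalence holds for $\transComp$.

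The main obstacle is the converse direction: establishing that a given $\LLAZYCC$-reduction of $\transN(s)$ really does factor through $\transN(s_1)$ when $s\xrightarrow{\NAME}s_1$, i.e.\ that the diagrams can be "read backwards." This requires care in the $(\rgcp)$ case, where one $\LNAME$-step corresponds to $2n$ $\LLAZYCC$-steps that perform the unfolding of the multi-fixpoint combinator; one must check that no shorter $\LLAZYCC$-normal-order reduction can reach a WHNF first, which follows from determinism of $\xrightarrow{\LCC}$ together with the fact that the intermediate terms $U_1,(X_1'X_1'\ldots X_n'),\ldots$ produced in Figure~\ref{fig:N-diagrams}(2) are not WHNFs until the $2n$ steps are completed. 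The remaining cases are routine, as the relevant diagrams are one-step-to-one-step.
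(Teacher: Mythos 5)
Your overall route is the same as the paper's: prove convergence equivalence for $\transN$ by induction in both directions using Lemma~\ref{lem:T-WHNF} and the diagrams of Lemma~\ref{lem:T-diagrams}, and transfer to $\transComp$ via Lemma~\ref{lemma:T-prime-sim-T}. The forward direction and the transfer step are fine, and your worry about ``reading the diagrams backwards'' is easily discharged: since $\xrightarrow{\LCC}$ is a partial function and $\LLAZYCC$-WHNFs are irreducible, the given reduction of $\transN(s)$ to a WHNF must coincide step by step with the sequence $\transN(s)\xrightarrow{\LCC,m}\transN(s_1)$ supplied by the diagram, so no separate argument about the intermediate terms of the $(\rgcp)$ unfolding is needed.

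There is, however, a genuine gap in your converse direction, precisely in the $(\rlapp)$, $(\rlcase)$, $(\rlseq)$ cases. There $\transN(s)=\transN(s_1)$, so your induction measure --- the length of the $\LLAZYCC$-reduction of $\transN(s)$ to a WHNF --- does not decrease, and your claim that ``the $\NAME$-reduction length decreases'' is circular: in this direction you do not yet have any terminating $\NAME$-reduction of $s$ or of $s_1$; its existence is exactly what you are trying to prove. As stated, the induction could loop forever through these zero-step diagrams. What is missing is the observation the paper makes at this point: the let-shifting rules $(\rlapp)$, $(\rlcase)$, $(\rlseq)$ cannot be applied infinitely often without being interleaved with other reductions, because they strictly move $\tletrec$-symbols towards the top of the expression; hence after finitely many such steps either $s$ reaches an $\LNAME$-WHNF or a diagram of type (1), (2), (3), or (7) applies, which strictly decreases the $\LLAZYCC$-reduction length. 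Formally you need a lexicographic measure (primary: length of the $\LLAZYCC$-reduction to a WHNF; secondary: a measure on $s$ that decreases under let-shifting), or equivalently the termination argument for consecutive $(\rlapp)/(\rlcase)/(\rlseq)$ steps. With that addition your proof matches the paper's.
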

\begin{proof}
We first prove convergence equivalence of $\transN$:
Suppose $s\maycon_{\NAME}$. Let $s \xrightarrow{\NAME,k} s_1$ where $s_1$ is a WHNF.
We show that there exists an $\LLAZYCC$-WHNF $s_2$ such that $\transN(s) \annRedM{\LCC} s_2$ 
by induction on $k$. The base case follows from Lemma~\ref{lem:T-WHNF}. 
The induction step follows by applying a diagram from Lemma~\ref{lem:T-diagrams}
and then using the induction hypothesis.

For the other direction we assume that $\transN(s) \maycon_{\LCC}$, \ie\ there exists a
WHNF $s_1 \in \LLAZYCC$ \st\ $\transN(s) \xrightarrow{\LCC,k} s_1$. 
By induction on $k$ we show that there exists a
$\LNAME$-WHNF $s_2$ such that $s \annRedM{\NAME} s_2$. The base case is covered by
Lemma~\ref{lem:T-WHNF}. The induction step uses the diagrams.
Here it is necessary to observe that the diagrams for the reductions (\rlapp), (\rlcase), and (\rlseq) cannot be applied
infinitely often without 
being interleaved with other reductions.
This holds, since let-shifting by (\rlapp), (\rlcase), and (\rlseq) moves \tletrec-symbols to the top of the expressions, and thus
 there are no infinite sequences of these reductions.

It remains to show convergence equivalence of $\transComp$: 
Let $s\maycon_{\NAME}$ then $\transN(s)\maycon_{\LCC}$, since $\transN$ is
convergence equivalent. Lemma~\ref{lemma:T-prime-sim-T} implies $\transN(s)\sim_{\LCC}\transComp(s)$ 
and thus $\transComp(s)\maycon_{\LCC}$ must hold.
For the other direction Lemma~\ref{lemma:T-prime-sim-T} shows that $\transComp(s)\maycon_{\LCC}$ implies 
$\transN(s)\maycon_{\LCC}$. Using convergence equivalence of $\transN$ yields
$s\maycon_{\NAME}$.
\end{proof}

\begin{lemma}\label{lemma:T-prime-comp}
The translation $\transComp$ is compositional, \ie\ for all expressions $s$ and all contexts $C$: $\transComp(C[s]) = \transComp(C)[\transComp(s)]$.
\end{lemma}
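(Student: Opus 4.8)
The plan is to prove compositionality of $\transComp$ by structural induction on the context $C$. The base case is $C = \chole$, where $\transComp(\chole) = \chole$ by definition and hence $\transComp(\chole[s]) = \transComp(s) = \transComp(\chole)[\transComp(s)]$. For the induction step, $C$ is a context built by one of the syntactic constructors of $\LETRECEXPR$ with the hole occurring in exactly one immediate subexpression (or in the body or one of the bindings of a \tletrec). For each such constructor we push $\transComp$ through one layer, apply the induction hypothesis to the strictly smaller subcontext containing the hole, and reassemble.

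Most cases are immediate because $\transComp$ is defined homomorphically on applications, $\tseq$-expressions, constructor applications, abstractions, and \tcase-expressions and their alternatives; e.g.\ if $C = (C'~t)$ then $\transComp(C[s]) = \transComp((C'[s])~t) = (\transComp(C'[s])~\transComp(t)) = (\transComp(C')[\transComp(s)]~\transComp(t)) = \transComp(C)[\transComp(s)]$, using the induction hypothesis in the third step. The cases $C = (t~C')$, $C = \lambda x.C'$, $C = (\tseq~C'~t)$, $C = (\tseq~t~C')$, $C = (c~t_1 \ldots C' \ldots t_n)$, $C = (\tcase_T~C'~\tof~alts)$, and the case where the hole is inside a \tcase-alternative are handled identically.

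The main obstacle — and the only case needing genuine care — is $C$ a \tletrec-context, either $C = \tletrx{x_1 = s_1, \ldots, x_n = s_n}{C'}$ (hole in the body) or $C = \tletrx{x_1 = s_1, \ldots, x_j = C', \ldots, x_n = s_n}{t}$ (hole in the $j$-th binding). Here $\transComp$ does not merely recurse structurally: it wraps the translated body in the fixpoint-combinator scaffolding $(\lambda x_1',\ldots,x_n'.(\lambda x_1,\ldots,x_n.\transComp(t))~U_1\ldots U_n)~X_1'\ldots X_n'$ with $X_i' = \lambda x_1\ldots x_n.F_i(x_1~x_1\ldots x_n)\ldots(x_n~x_1\ldots x_n)$ and $F_i = \lambda x_1,\ldots,x_n.\transComp(s_i)$. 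One must check that $\transComp$ applied to $C$ with the hole treated as a constant produces exactly this scaffolding with a hole in the appropriate spot, and that substituting $\transComp(s)$ for that hole coincides with running $\transComp$ on $\tletrx{\ldots}{C'[s]}$ (resp.\ with $C'[s]$ in the $j$-th binding). The key points to verify are: (i) the hole of $C'$ (and hence $C$) is \emph{not} duplicated by the translation — the body $\transComp(t)$ appears once, and each $\transComp(s_i)$ appears once inside $F_i$, which is stated after the definition of $\transComp$ and is what makes the context extension well-defined; (ii) the fresh variables $x_1',\ldots,x_n'$ and the bound $x_1,\ldots,x_n$ introduced by the scaffolding can be chosen disjoint from $\FV(\transComp(s))$ by the distinct variable convention and $\alpha$-renaming, so no variable capture occurs when the hole is filled; and (iii) since $\transComp$ commutes with the constructors appearing in the scaffolding, $\transComp(C)[\transComp(s)]$ literally equals the expression obtained by replacing the hole-position in the scaffolding by $\transComp(s)$, which is $\transComp(C[s])$. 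Once this bookkeeping is done, the induction closes.
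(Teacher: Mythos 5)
Your overall strategy is the same as the paper's: the paper disposes of this lemma with a one-line structural induction on the definition of $\transComp$, and your case analysis (homomorphic cases trivial; the $\tletrec$ case needs the observations that the hole is not duplicated and that the scaffolding places the translated hole at the corresponding position) is exactly the elaboration of that induction. The non-duplication point (i) and the freshness of $x_1',\ldots,x_n'$ are fine.

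However, your point (ii) contains a step that, taken literally, would fail. You claim that the bound variables $x_1,\ldots,x_n$ introduced by the scaffolding ``can be chosen disjoint from $\FV(\transComp(s))$ \ldots so no variable capture occurs when the hole is filled.'' Context filling in these calculi is \emph{not} capture-avoiding, and capture of exactly these variables is what makes the equation true: if $s$ contains a free occurrence of a $\tletrec$-bound variable $x_i$ of $C$, then in $C[s]$ that occurrence is bound by the $\tletrec$, and in $\transComp(C[s])$ it is bound by the $\lambda x_1,\ldots,x_n$ of the scaffolding; correspondingly, filling the hole of $\transComp(C)$ with $\transComp(s)$ must capture that occurrence under the same binder $\lambda x_1,\ldots,x_n$. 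If you $\alpha$-rename those binders in $\transComp(C)$ away from $\FV(\transComp(s))$, then $\transComp(C)[\transComp(s)]$ leaves $x_i$ free while $\transComp(C[s])$ binds it, and the claimed equality breaks. (The same remark applies, less conspicuously, to your $\lambda x.C'$ case.) The repair is simple: drop the renaming of $x_1,\ldots,x_n$ and argue instead that the translation reuses the original binder names and maps the hole to a single position in their scope, so the capture occurring in $C[s]$ is mirrored literally in $\transComp(C)[\transComp(s)]$; only the genuinely fresh $x_1',\ldots,x_n'$ need to be chosen disjoint from everything, including $s$. With that correction the induction closes as you describe.
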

\begin{proof}
This easily follows by  structural induction on the definition.
\end{proof}

\begin{proposition}\label{prop:T-adequ}
 For all  $s_1,s_2 \in \LETRECEXPR$: $\transComp(s_1) \lec_{\LCC} \transComp(s_2) 
 \implies s_1 \lec_{\NAME} s_2$, \ie\ $\transComp$ is adequate.
\end{proposition}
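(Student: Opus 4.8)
The plan is to invoke Proposition~\ref{prop:adequate}, which states that a compositional and convergence equivalent translation between untyped deterministic calculi is automatically adequate. So the proof reduces to checking that $\transComp$ satisfies these two hypotheses, both of which have essentially already been established in the preceding lemmas.

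First I would recall that $\transComp$ is compositional: this is exactly Lemma~\ref{lemma:T-prime-comp}, where it is noted that $\transComp(C[s]) = \transComp(C)[\transComp(s)]$ for all expressions $s$ and contexts $C$, and that the extension to contexts is well-defined because the hole is never duplicated by the translation. Second, I would recall that $\transComp$ is convergence equivalent: this is the content of Proposition~\ref{prop:TY-conv-equiv}, which shows $s\maycon_{\NAME} \iff \transComp(s)\maycon_{\LCC}$ for all $\LETRECEXPR$-expressions $s$ (obtained via the auxiliary translation $\transN$ and the reduction diagrams of Figure~\ref{fig:N-diagrams}, together with Lemma~\ref{lemma:T-prime-sim-T}). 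Since $\LNAME = (\LETRECEXPR,\LETRECCTXT,\xrightarrow{\NAME},{\ANSWERS}_{\NAME})$ and $\LLAZYCC = (\LAMBDAEXPR,\LAMBDACTXT,\xrightarrow{\LCC},{\ANSWERS}_{\LCC})$ are both untyped deterministic calculi in the sense of Definition~\ref{def:calculus}, and $\transComp(\mathit{Id}) = \mathit{Id}$ holds by the clause $\transComp(\chole)=\chole$, the hypotheses of Proposition~\ref{prop:adequate} are met.

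Applying Proposition~\ref{prop:adequate} then yields directly that $\transComp$ is adequate, i.e.\ $\transComp(s_1)\lec_{\LCC}\transComp(s_2) \implies s_1\lec_{\NAME} s_2$ for all $s_1,s_2\in\LETRECEXPR$, which is the claim. There is no real obstacle here: all the work has been done in Lemma~\ref{lemma:T-prime-comp} and Proposition~\ref{prop:TY-conv-equiv}, and the present proposition is just the act of feeding those two facts into the general adequacy principle of Proposition~\ref{prop:adequate}. If anything, the only point worth stating explicitly in the write-up is that the two calculi in question are indeed untyped deterministic calculi and that $\transComp$ is a translation in the sense of Definition~\ref{def:translation-compo-etal} (in particular respecting the identity context), so that Proposition~\ref{prop:adequate} is literally applicable.

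\begin{proof}
The calculi $\LNAME$ and $\LLAZYCC$ are untyped deterministic calculi (Definition~\ref{def:calculus}), and $\transComp$ is a translation in the sense of Definition~\ref{def:translation-compo-etal} with $\transComp(\chole)=\chole$, so $\transComp$ maps the identity context to the identity context. By Lemma~\ref{lemma:T-prime-comp}, $\transComp$ is compositional, and by Proposition~\ref{prop:TY-conv-equiv}, $\transComp$ is convergence equivalent. Hence Proposition~\ref{prop:adequate} applies and shows that $\transComp$ is adequate, i.e.\ $\transComp(s_1) \lec_{\LCC} \transComp(s_2) \implies s_1 \lec_{\NAME} s_2$ for all $s_1,s_2 \in \LETRECEXPR$.
\end{proof}
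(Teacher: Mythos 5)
Your proposal is correct and follows exactly the paper's own route: the paper likewise derives adequacy by combining convergence equivalence (Proposition~\ref{prop:TY-conv-equiv}) and compositionality (Lemma~\ref{lemma:T-prime-comp}) with the general criterion of Proposition~\ref{prop:adequate} from Section~\ref{sec:common}. Your extra remark that $\transComp$ is a translation in the sense of Definition~\ref{def:translation-compo-etal} (preserving the identity context) is a harmless explicit check that the paper leaves implicit.
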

\begin{proof}
Since $\transComp$ is convergence-equivalent (Proposition~\ref{prop:TY-conv-equiv}) 
and compositional by Lemma \ref{lemma:T-prime-comp}, 
we derive that $\transComp$ is adequate (see \cite{schmidt-schauss-niehren-schwinghammer-sabel-ifip-tcs:08} 
and Section \ref{sec:common}).
\end{proof}

\begin{lemma}\label{lemma:leq-transfers-from-name-to-lazy-for-letrec-free-expr}
 For $\tletrec$-free expressions  $s_1,s_2 \in \LAMBDAEXPR$ the following holds: 
 $s_1 \lec_{\NAME} s_2 \implies s_1 \lec_{\LCC} s_2$.
\end{lemma}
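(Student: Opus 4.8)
The statement to prove is: for $\tletrec$-free expressions $s_1, s_2 \in \LAMBDAEXPR$, if $s_1 \lec_{\NAME} s_2$ then $s_1 \lec_{\LCC} s_2$. The key observation is that $\tletrec$-free $\LNAME$-expressions are exactly $\LAMBDAEXPR$-expressions, and on such expressions the translation $\transComp$ (and equivalently $\transN$) acts essentially as the identity, since the only nontrivial clause of $\transComp$ is the one for $\tletrec$-expressions, which never fires. More precisely, I would first establish the following claim: for every $\tletrec$-free $s \in \LAMBDAEXPR$, we have $\transComp(s) = s$, by a straightforward structural induction on $s$ using the definition of $\transComp$ --- the cases for application, $\tseq$, constructor application, abstraction, $\tcase$, case-alternatives, and variables all just recurse into subexpressions, and there is no $\tletrec$ to translate. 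The same argument shows $\transComp(C) = C$ for every $\tletrec$-free context $C \in \LAMBDAEXPR$, treating the hole as a constant.

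Granting this, the proof is then immediate from the full-abstraction machinery already developed. Assume $s_1 \lec_{\NAME} s_2$ with $s_1, s_2$ $\tletrec$-free. By the fully-abstract direction for $\transComp$ --- that is, $s_1 \lec_{\NAME} s_2 \implies \transComp(s_1) \lec_{\LCC} \transComp(s_2)$, which follows from compositionality (Lemma~\ref{lemma:T-prime-comp}) together with convergence equivalence (Proposition~\ref{prop:TY-conv-equiv}) by the standard argument that a compositional, convergence-equivalent translation preserves the contextual preorder in the forward direction --- we obtain $\transComp(s_1) \lec_{\LCC} \transComp(s_2)$. Since $s_1, s_2$ are $\tletrec$-free, the claim above gives $\transComp(s_1) = s_1$ and $\transComp(s_2) = s_2$, hence $s_1 \lec_{\LCC} s_2$, as desired.

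Alternatively, and perhaps even more directly, one can argue purely in terms of reduction: on $\tletrec$-free expressions, the normal-order reduction of $\LNAME$ coincides with that of $\LLAZYCC$. Indeed, the $\LNAME$-rules $(\rgcp)$ and the let-shifting rules $(\rlapp)$, $(\rlcase)$, $(\rlseq)$ all involve $\tletrec$ and never apply to $\tletrec$-free expressions; the remaining rules $(\rbeta)$, $(\rcase)$, $(\rseq)$ are exactly the rules $(\rnbeta)$, $(\rncase)$, $(\rnseq)$ of $\LLAZYCC$, and the reduction contexts of $\LNAME$ restricted to $\tletrec$-free terms are precisely the $\LLAZYCC$-reduction contexts ${\cal R}_{\LCC}$ (since the ${\cal L}$-context layer collapses to the empty context). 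Consequently $s \maycon_{\NAME} \iff s \maycon_{\LCC}$ for every $\tletrec$-free $s$, and moreover $C[s]$ is $\tletrec$-free whenever $C$ and $s$ are. Thus $s_1 \lec_{\NAME} s_2$ restricted to $\tletrec$-free witnessing contexts $C \in \LAMBDAEXPR$ directly yields $s_1 \lec_{\LCC} s_2$ --- and since $\lec_{\LCC}$ quantifies only over $\LAMBDAEXPR$-contexts anyway, this suffices.

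The only mild subtlety --- which I would treat as the main point requiring care rather than a genuine obstacle --- is making precise that quantifying over all $\LNAME$-contexts (which includes $\tletrec$-contexts) in the definition of $\lec_{\NAME}$ is not stronger than needed: we only use the $\tletrec$-free contexts to derive the $\LLAZYCC$-inequality, which is fine because $\lec_{\NAME}$ testing over a \emph{larger} class of contexts only makes the hypothesis $s_1 \lec_{\NAME} s_2$ stronger, so it certainly implies the inequality tested over the subclass of $\tletrec$-free contexts. I would present the reduction-based argument as the main line and note the $\transComp$-based argument as an alternative, since the former is self-contained and conceptually transparent.
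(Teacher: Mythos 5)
Your main (reduction-based) argument is correct and is essentially the paper's own proof: on $\tletrec$-free expressions the $\LNAME$-rules $(\rgcp)$, $(\rlapp)$, $(\rlcase)$, $(\rlseq)$ never apply, the remaining rules coincide with $(\rnbeta)$, $(\rncase)$, $(\rnseq)$, these rules preserve $\tletrec$-freeness, and since $\lec_{\LCC}$ only quantifies over $\LAMBDAEXPR$-contexts (which keep $C[s_i]$ inside the $\tletrec$-free fragment), transferring the identical reduction sequences back and forth gives the claim. Your remark that $\lec_{\NAME}$ quantifies over more contexts, so the hypothesis is only stronger, is also exactly the right observation.

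However, the alternative $\transComp$-based route you sketch is not sound as justified. Compositionality (Lemma~\ref{lemma:T-prime-comp}) together with convergence equivalence (Proposition~\ref{prop:TY-conv-equiv}) yields, via Proposition~\ref{prop:adequate}, only \emph{adequacy}, i.e.\ the implication $\transComp(s_1) \lec_{\LCC} \transComp(s_2) \implies s_1 \lec_{\NAME} s_2$ --- the opposite direction of what you need. The forward direction $s_1 \lec_{\NAME} s_2 \implies \transComp(s_1) \lec_{\LCC} \transComp(s_2)$ is Proposition~\ref{prop:missing-part:N-fully-abs}, whose proof in the paper relies on the present lemma, so invoking it here would be circular. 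Keep the reduction-based argument as the proof and drop (or substantially repair) the alternative.
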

\begin{proof}
Note that the claim only makes sense since clearly $\LAMBDAEXPR \subseteq \LETRECEXPR$.
Let $s_1,s_2$ be $\tletrec$-free such that $s_1 \lec_{\NAME} s_2$.
Let $C$ be an $\LLAZYCC$-context such that $C[s_1]\maycon_{\LCC}$, \ie\ $C[s_1] \xrightarrow{\LCC,k} \lambda x.s_1'$.
By comparing the reduction strategies in $\LNAME$ and $\LLAZYCC$, we obtain
that $C[s_1] \xrightarrow{\NAME,k} \lambda x.s_2'$ (by the identical reduction sequence)
since $C[s_1]$ is $\tletrec$-free. Thus, $C[s_1]\maycon_{\NAME}$ and also $C[s_2]\maycon_{\NAME}$, \ie\ there is a normal order reduction in $\LNAME$ for $C[s_2]$ to a WHNF. 
Since $C[s_2]$ is $\tletrec$-free, we can
perform the identical reduction in $\LLAZYCC$ and obtain $C[s_2]\maycon_{\LCC}$.
\end{proof}

\noindent The language $\LLAZYCC$ is embedded into $\LNAME$ (and also $\LLR$) by $\iota(s) = s$.    

\begin{proposition}\label{proposition:s-Ns-neu}
 For all $s \in \LETRECEXPR$: $s \sim_{\NAME} \iota(\transComp(s))$. 
\end{proposition}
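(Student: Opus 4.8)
The plan is to prove that for every $s \in \LETRECEXPR$, the expression $s$ and its embedded translation $\iota(\transComp(s))$ are contextually equivalent in $\LNAME$. Since $\iota$ is the identity embedding, $\iota(\transComp(s))$ is just $\transComp(s)$ viewed as an $\LNAME$-expression (which happens to be \tletrec-free), so the goal is $s \sim_{\NAME} \transComp(s)$. The natural route is to go through the calculus $\LLAZYCC$ using the already-established machinery: $\transComp$ is compositional (Lemma~\ref{lemma:T-prime-comp}), convergence equivalent (Proposition~\ref{prop:TY-conv-equiv}), hence adequate (Proposition~\ref{prop:T-adequ}), and we also have the reverse direction Lemma~\ref{lemma:leq-transfers-from-name-to-lazy-for-letrec-free-expr} for \tletrec-free expressions.

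First I would show $s \lec_{\NAME} \transComp(s)$. Take any $\LNAME$-context $C$ with $C[s]\maycon_{\NAME}$. By compositionality and convergence equivalence of $\transComp$ we get $\transComp(C)[\transComp(s)] = \transComp(C[s]) \maycon_{\LCC}$. Now I need to relate $\transComp(s)$ back to $s$ inside $\LNAME$-contexts. The key observation is that $\transComp(s)$ is a \tletrec-free $\LAMBDAEXPR$-expression, so applying $\transComp$ to it again is essentially idempotent — more precisely, a \tletrec-free expression behaves the same in $\LNAME$ and $\LLAZYCC$ (the reduction rules coincide on \tletrec-free terms, as used in Lemma~\ref{lemma:leq-transfers-from-name-to-lazy-for-letrec-free-expr}). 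So I would argue: $\transComp(C)[\transComp(s)]\maycon_{\LCC}$ implies $\transComp(C)[\transComp(s)]\maycon_{\NAME}$ whenever this combined expression is \tletrec-free, or more robustly, use that $\transComp(\transComp(s)) \sim_{\LCC} \transComp(s)$ together with convergence equivalence applied to the context $\transComp(C)$ to conclude $C[\transComp(s)]\maycon_{\NAME}$. Concretely: $C[\transComp(s)]\maycon_{\NAME} \iff \transComp(C[\transComp(s)])\maycon_{\LCC} = \transComp(C)[\transComp(\transComp(s))]\maycon_{\LCC}$, and since $\transComp(\transComp(s)) \sim_{\LCC} \transComp(s)$ (because $\transComp$ is the identity modulo $\rnbeta$-redexes on \tletrec-free terms, or directly because \tletrec-free expressions are fixed by the relevant part of $\transComp$ up to $\simc_\LCC$), this is equivalent to $\transComp(C)[\transComp(s)]\maycon_{\LCC}$, which holds.

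For the converse $\transComp(s) \lec_{\NAME} s$: take $C$ with $C[\transComp(s)]\maycon_{\NAME}$. By the chain of equivalences just described, this gives $\transComp(C)[\transComp(s)]\maycon_{\LCC}$, hence $\transComp(C[s])\maycon_{\LCC}$, hence by convergence equivalence $C[s]\maycon_{\NAME}$. Combining both directions yields $s \sim_{\NAME} \transComp(s)$, which is the claim.

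The main obstacle I anticipate is making precise the claim that $\transComp(\transComp(s)) \sim_{\LCC} \transComp(s)$, i.e. that $\transComp$ acts (up to contextual equivalence) as the identity on \tletrec-free expressions. This should follow because on \tletrec-free input $\transComp$ only recurses through the structural cases $\transComp(s~t) = (\transComp(s)~\transComp(t))$, etc., never hitting the \tletrec-case, so in fact $\transComp(s) = s$ syntactically when $s$ is \tletrec-free — making the idempotence immediate and the argument above clean. If that syntactic identity is granted, the proof is a short diagram chase using Proposition~\ref{prop:TY-conv-equiv} and Lemma~\ref{lemma:T-prime-comp}; the only care needed is checking that $C[\transComp(s)]$ is handled correctly when $C$ itself contains \tletrec, which is fine because we always pass through $\transComp(C)$ and stay in $\LLAZYCC$.
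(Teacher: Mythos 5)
Your proposal is correct and takes essentially the same route as the paper: the paper simply observes that $\transComp$ is the syntactic identity on $\tletrec$-free expressions, so $\transComp(\iota(\transComp(s))) = \transComp(s)$, and then invokes adequacy of $\transComp$ (Proposition~\ref{prop:T-adequ}) to conclude $s \sim_{\NAME} \iota(\transComp(s))$. Your explicit chase through contexts via compositionality and convergence equivalence is just the adequacy argument unrolled, and the hedging about $\transComp(\transComp(s)) \sim_{\LCC} \transComp(s)$ is unnecessary since, as you note at the end, the identity is syntactic.
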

\begin{proof}
We first show that for all expressions $s\in\LETRECEXPR$: $s \sim_{\NAME} \iota(\transComp(s))$.
Since $\transComp$ is the identity mapping on $\tletrec$-free expressions of $\LNAME$ and $\transComp(s)$ is $\tletrec$-free,
we have
$\transComp(\iota(\transComp(s))) = \transComp(s)$. 
Hence adequacy of $\transComp$ (Proposition~\ref{prop:T-adequ}) implies $s \sim_{\NAME} \iota(\transComp(s))$.
\end{proof}

\begin{proposition}\label{prop:missing-part:N-fully-abs}
 For all $s_1,s_2 \in \LETRECEXPR$: $s_1 \lec_{\NAME} s_2 \implies \transComp(s_1) \lec_{\LCC} \transComp(s_2)$.
\end{proposition}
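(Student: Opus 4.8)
The plan is to combine adequacy-style facts already established: namely that $\transComp$ collapses a $\LETRECEXPR$-expression to a $\sim_{\NAME}$-equivalent $\tletrec$-free expression (Proposition~\ref{proposition:s-Ns-neu}), and that on $\tletrec$-free expressions $\LNAME$-approximation already implies $\LLAZYCC$-approximation (Lemma~\ref{lemma:leq-transfers-from-name-to-lazy-for-letrec-free-expr}). So the whole statement should follow by transporting the hypothesis $s_1\lec_{\NAME}s_2$ along these equivalences into the $\tletrec$-free world and then into $\LLAZYCC$.

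\textbf{Key steps.} First I would record that for every $s\in\LETRECEXPR$ the expression $\transComp(s)$ is $\tletrec$-free, i.e.\ $\transComp(s)\in\LAMBDAEXPR$; this is immediate from the definition of $\transComp$, whose only non-trivial clause eliminates $\tletrec$ in favour of the multi-fixpoint combinators. Next, from Proposition~\ref{proposition:s-Ns-neu} we have $\transComp(s_1)\sim_{\NAME}s_1$ and $s_2\sim_{\NAME}\transComp(s_2)$. Since $\lec_{\NAME}$ is a precongruence (in particular reflexive and transitive) and $\sim_{\NAME}={\lec_{\NAME}}\cap{\gec_{\NAME}}$, chaining these with the hypothesis $s_1\lec_{\NAME}s_2$ gives
\[
\transComp(s_1)\ \lec_{\NAME}\ s_1\ \lec_{\NAME}\ s_2\ \lec_{\NAME}\ \transComp(s_2),
\]
hence $\transComp(s_1)\lec_{\NAME}\transComp(s_2)$. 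Finally, since $\transComp(s_1),\transComp(s_2)$ are $\tletrec$-free, Lemma~\ref{lemma:leq-transfers-from-name-to-lazy-for-letrec-free-expr} applies and yields $\transComp(s_1)\lec_{\LCC}\transComp(s_2)$, which is the claim.

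\textbf{Main obstacle.} There is no deep obstacle here: the proposition is essentially a gluing of three previously proved facts. The only point requiring a little care is the direction of the inequalities when composing with $\sim_{\NAME}$ (one needs $\transComp(s_1)\lec_{\NAME}s_1$, which is the ``$\geq$-half'' of $\transComp(s_1)\sim_{\NAME}s_1$, and $s_2\lec_{\NAME}\transComp(s_2)$, the ``$\leq$-half''), together with transitivity of the precongruence $\lec_{\NAME}$; and checking that Lemma~\ref{lemma:leq-transfers-from-name-to-lazy-for-letrec-free-expr} is genuinely applicable, i.e.\ that both translated expressions land in $\LAMBDAEXPR$. Together with Proposition~\ref{prop:T-adequ} this Proposition establishes full abstractness of $\transComp$ (i.e.\ the equivalence $s_1\lec_{\NAME}s_2\iff\transComp(s_1)\lec_{\LCC}\transComp(s_2)$).
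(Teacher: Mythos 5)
Your proposal is correct and follows essentially the same route as the paper's proof: both chain $\transComp(s_1)\sim_{\NAME}s_1\lec_{\NAME}s_2\sim_{\NAME}\transComp(s_2)$ via Proposition~\ref{proposition:s-Ns-neu} and transitivity, then invoke Lemma~\ref{lemma:leq-transfers-from-name-to-lazy-for-letrec-free-expr} on the $\tletrec$-free images (the paper notes the needed inclusion $\LAMBDAEXPR\subseteq\LETRECEXPR$ just as you check $\transComp(s)\in\LAMBDAEXPR$). No gaps; nothing further is needed.
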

\begin{proof}
For this proof it is necessary to observe that $\LAMBDAEXPR \subseteq \LETRECEXPR$, thus we can  treat $\LLAZYCC$ expressions as $\LNAME$ expressions.
Let $s_1,s_2 \in \LETRECEXPR$ and $s_1 \lec_{\NAME} s_2$. 
By  Proposition~\ref{proposition:s-Ns-neu}:
$\transComp(s_1) \sim_{\NAME} s_1 \lec_{\NAME} s_2 \sim_{\NAME} \transComp(s_2)$,  thus $\transComp(s_1) \lec_{\NAME} \transComp(s_2)$.
Since $\transComp(s_1)$ and $\transComp(s_2)$ are $\tletrec$-free, we can apply Lemma~\ref{lemma:leq-transfers-from-name-to-lazy-for-letrec-free-expr}
and thus have $\transComp(s_1) \lec_{\LCC} \transComp(s_2)$.
\end{proof}

Now we put all parts together, where $(\transComp \circ W) (s)$ means $\transComp(W(s))$:
\begin{theorem}\label{theorem:N-fully-abstract}
$\transComp$ and $\transComp \circ W$ are fully-abstract, \ie\ for all expressions $s_1,s_2\in\LETRECEXPR$: \mbox{$s_1 \lec_{\LR} s_2 \iff 
{N(W(s_1))} \lec_{\LCC} {N(W(s_2))}$}.
\end{theorem}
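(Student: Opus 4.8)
The plan is to chain together the full-abstraction results already established for the two component translations $W$ and $\transComp$. Recall that $W\colon\LLR\to\LNAME$ is the identity on expressions and contexts, and $\transComp\colon\LNAME\to\LLAZYCC$ is the multi-fixpoint translation; so $\transComp\circ W$ is literally the map $s\mapsto\transComp(s)$ viewed as a translation from $\LLR$ to $\LLAZYCC$. The statement to be proved, ${s_1\lec_{\LR}s_2}\iff{N(W(s_1))\lec_{\LCC}N(W(s_2))}$, then decomposes into the two biconditionals
\[
s_1\lec_{\LR}s_2\iff s_1\lec_{\NAME}s_2\qquad\text{and}\qquad s_1\lec_{\NAME}s_2\iff \transComp(s_1)\lec_{\LCC}\transComp(s_2),
\]
each of which we already have in hand.

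First I would invoke Theorem~\ref{theo:leqneed-equals-leqname}, which states $\lec_{\NAME}=\lec_{\LR}$; equivalently (Corollary~\ref{cor:W-fully-abs}) $W$ is convergence equivalent and fully abstract. This gives the first biconditional directly: $s_1\lec_{\LR}s_2\iff W(s_1)\lec_{\NAME}W(s_2)$, and since $W$ is the identity on expressions, $W(s_i)=s_i$, so $s_1\lec_{\LR}s_2\iff s_1\lec_{\NAME}s_2$. Next I would handle the second biconditional by combining Proposition~\ref{prop:T-adequ} and Proposition~\ref{prop:missing-part:N-fully-abs}: the former gives adequacy of $\transComp$, i.e.\ $\transComp(s_1)\lec_{\LCC}\transComp(s_2)\implies s_1\lec_{\NAME}s_2$, and the latter gives the converse implication $s_1\lec_{\NAME}s_2\implies\transComp(s_1)\lec_{\LCC}\transComp(s_2)$. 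Together they yield $s_1\lec_{\NAME}s_2\iff\transComp(s_1)\lec_{\LCC}\transComp(s_2)$, i.e.\ $\transComp$ is fully abstract.

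Finally I would splice the two biconditionals together: for all $s_1,s_2\in\LETRECEXPR$,
\[
s_1\lec_{\LR}s_2\iff s_1\lec_{\NAME}s_2\iff\transComp(s_1)\lec_{\LCC}\transComp(s_2)=N(W(s_1))\lec_{\LCC}N(W(s_2)),
\]
where the last equality uses $W(s_i)=s_i$ and the definition $(\transComp\circ W)(s)=\transComp(W(s))$. That the \emph{composition} $\transComp\circ W$ is itself a fully-abstract translation (in the sense of Definition~\ref{def:translation-compo-etal}) then follows because full abstraction is preserved under composition of translations — or simply from the displayed equivalence together with the fact that both $W$ and $\transComp$ are compositional (Lemma~\ref{lemma:T-prime-comp}), so that $\transComp\circ W$ is compositional and the translated-contexts condition $\transComp_C(W_C(\mathit{Id}))=\mathit{Id}$ holds.

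There is essentially no obstacle left at this stage: all the real work was done in establishing the two component full-abstraction theorems (via infinite trees for $W$, via reduction diagrams and the auxiliary translation $\transN$ for $\transComp$), and the present statement is just their conjunction. The only point requiring a line of care is the bookkeeping that $W$ being the identity on expressions lets us identify $W(s_i)$ with $s_i$ so that the $\LNAME$-relations on the two sides actually match up; everything else is a direct appeal to Theorem~\ref{theo:leqneed-equals-leqname}, Proposition~\ref{prop:T-adequ}, and Proposition~\ref{prop:missing-part:N-fully-abs}.
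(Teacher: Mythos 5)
Your proposal is correct and matches the paper's own argument: full abstraction of $\transComp$ from Propositions~\ref{prop:T-adequ} and~\ref{prop:missing-part:N-fully-abs}, then composition with the fully-abstract identity translation $W$ (Corollary~\ref{cor:W-fully-abs}, equivalently Theorem~\ref{theo:leqneed-equals-leqname}). The extra bookkeeping you note about $W$ being the identity on expressions is exactly what the paper relies on implicitly.
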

\begin{proof}
Full-abstractness of $\transComp$ follows from Propositions~\ref{prop:T-adequ} and~\ref{prop:missing-part:N-fully-abs}.
Full-abstractness of $\transComp \circ W$ thus holds, since $W$ is fully-abstract (Corollary~\ref{cor:W-fully-abs}).
\end{proof}

Since $N$ is surjective, this and Corollary \ref{cor:N-iso} imply:
\begin{corollary}\label{cor:N-iso}
$\transComp$ and $\transComp \circ W$ are isomorphisms according to Definition \ref{def:translation-compo-etal}.\qed
\end{corollary}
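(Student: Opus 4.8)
The plan is to check the two clauses of Definition~\ref{def:translation-compo-etal} defining an isomorphism: full abstraction, and bijectivity of the induced map on the $\sim$-quotients of the term models. Full abstraction of both $\transComp$ and $\transComp \circ W$ is precisely Theorem~\ref{theorem:N-fully-abstract}, so the only thing left is to show that $\transComp/{\sim}$ and $(\transComp\circ W)/{\sim}$ are bijections.

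First I would observe that injectivity on quotients is free from full abstraction: from $s_1 \lec_{\NAME} s_2 \iff \transComp(s_1) \lec_{\LCC} \transComp(s_2)$, applied in both directions, one obtains $s_1 \sim_{\NAME} s_2 \iff \transComp(s_1) \sim_{\LCC} \transComp(s_2)$, so the assignment $[s]_{\sim} \mapsto [\transComp(s)]_{\sim}$ is well-defined on $\sim$-classes and injective; since $W$ is the identity on expressions, $\transComp\circ W$ is literally the same function on carriers as $\transComp$, so the same holds for it. Next, for surjectivity I would use that $\transComp$ is already surjective as a map of carrier sets: $\LAMBDAEXPR \subseteq \LETRECEXPR$, and the only non-homomorphic clause in the definition of $\transComp$ is the $\tletrec$-clause, so $\transComp$ restricted to $\tletrec$-free expressions is the identity --- the very fact already used in the proof of Proposition~\ref{proposition:s-Ns-neu}. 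Hence for every $t \in \LAMBDAEXPR$ we have $\transComp(t) = t$ with $t \in \LETRECEXPR$, so $\transComp$ maps $\LETRECEXPR$ onto $\LAMBDAEXPR$, and therefore $\transComp/{\sim}$ is onto $\LAMBDAEXPR/{\sim_{\LCC}}$; since $\transComp\circ W = \transComp$ as functions, the same is true of $\transComp\circ W$. Combining, $\transComp/{\sim}$ and $(\transComp\circ W)/{\sim}$ are bijections, and both translations are fully abstract, hence isomorphisms.

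I do not expect a genuine obstacle here; the one point requiring attention is keeping the domains straight, since Theorem~\ref{theorem:N-fully-abstract} phrases full abstraction of the composite relative to $\lec_{\LR}$, while the surjectivity step is cleanest for $\transComp:\LNAME\to\LLAZYCC$ and then transported along $W$. This transport is legitimate exactly because $W$ is an identity-on-expressions isomorphism (Corollary~\ref{cor:W-iso}), so nothing further is needed. All the substantial content --- full abstraction of $\transComp$ (Propositions~\ref{prop:T-adequ} and~\ref{prop:missing-part:N-fully-abs}), the infinite-tree correspondence of Section~\ref{sec-translation-NEED-NAME}, and the Howe-method results for $\LLAZYCC$ --- has already been established, so this corollary is just the final assembly.
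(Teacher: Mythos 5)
Your proposal is correct and matches the paper's own (very terse) argument: the paper likewise derives the corollary from full abstraction (Theorem~\ref{theorem:N-fully-abstract}), surjectivity of $\transComp$ (it is the identity on the $\tletrec$-free expressions $\LAMBDAEXPR \subseteq \LETRECEXPR$), and the fact that $W$ is an identity-on-expressions isomorphism (Corollary~\ref{cor:W-iso}). You merely spell out the routine step, left implicit in the paper, that full abstraction plus surjectivity on carriers yields a well-defined bijection on the $\sim$-quotients.
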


The results also allow us to transfer the characterization of expressions in $\LLAZYCC$ into 
$\LLR$. With $\cBot_\LR$ we denote the set of $\LETRECEXPR$-expressions $s$ with the property that for all 
substitutions $\sigma$: if $\sigma(s)$ is closed, then $\sigma(s) \Uparrow_\LR$.

\begin{proposition}\label{prop-classification-lr}
Let $s$ be a closed $\LETRECEXPR$-expression. 
Then there are three cases: $s \sim \Omega$, $s \sim_{\LR} \lambda x.s'$ for some $s'$,  
$s \sim_{\LR} c~s_1 \ldots s_n$
for some terms $s_1,\ldots,s_n$ and constructor $c$. Moreover, the three cases are disjoint.   
For two closed $\LETRECEXPR$-expressions $s,t$ with $s \lec_{\LR} t$: Either $s \sim_{\LR} \Omega$, or $s \sim_{\LR} c~s_1 \ldots s_n$, 
$t \sim c~t_1 \ldots t_n$ and $s_i \lec_{\LR} t_i$ for all $i$ for some terms 
  $s_1,\ldots,s_n,t_1,\ldots,t_n$ and constructor $c$,
or  $s \simc_\LR \lambda x.s'$ and $t  \simc_\LR \lambda x.t'$ for some expressions $s',t'$ with $s' \lec_{\LR} t'$, or  
$s \simc_\LR \lambda x.s'$ and $t  \simc_\LR  c~t_1 \ldots t_n$ 
for some term $s' \in \cBot_\LR$, expressions $t_1,\ldots,t_n$ and constructor $c$. 
\end{proposition}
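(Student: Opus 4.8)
The plan is to transfer Proposition~\ref{prop-classification-lcc}, the corresponding characterization in $\LLAZYCC$, across the two fully-abstract translations $W$ and $N$ established in Sections~\ref{sec-translation-NEED-NAME} and~\ref{sec:NAME-to-LAZY}. First I would note that $W$ is the identity on expressions and is fully abstract (Corollary~\ref{cor:W-fully-abs}), so $\lec_{\LR}\ =\ \lec_{\NAME}$ on $\LETRECEXPR$-expressions (Theorem~\ref{theo:leqneed-equals-leqname}); hence it suffices to work in $\LNAME$. Next, for a closed $\LETRECEXPR$-expression $s$, the image $\transComp(s)$ is a closed $\LAMBDAEXPR$-expression, so Proposition~\ref{prop-classification-lcc} gives one of the three cases $\transComp(s) \simc_{\LCC} \Omega$, $\transComp(s) \simc_{\LCC} \lambda x.s'$, or $\transComp(s) \simc_{\LCC} (c~s_1 \ldots s_n)$. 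I would pull this back using Proposition~\ref{proposition:s-Ns-neu}, which states $s \sim_{\NAME} \iota(\transComp(s))$, together with $\Omega$ being (letrec-free and) a fixed common expression in all three calculi. Since $\transComp$ preserves the syntactic shape of the outermost constructor (the clauses for $\lambda$, constructor applications, and the observation that $\Omega$ maps to $\Omega$), and since full abstractness of $\transComp$ (Theorem~\ref{theorem:N-fully-abstract}, or rather Propositions~\ref{prop:T-adequ} and~\ref{prop:missing-part:N-fully-abs}) lets me move $\lec$-facts back and forth, I obtain the trichotomy in $\LNAME$, and disjointness follows because the three classes are already disjoint in $\LLAZYCC$ (the distinguishing contexts from Proposition~\ref{prop-classification-lcc} have letrec-free analogues, or one simply uses that $\transComp$ reflects $\lec$ and the classes are $\transComp$-images of disjoint classes).

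For the second part, concerning $s \lec_{\LR} t$, I would again reduce to $\LNAME$ via $W$, and then to $\LLAZYCC$ via $\transComp$: from $s \lec_{\NAME} t$ and Proposition~\ref{proposition:s-Ns-neu} we get $\transComp(s) \lec_{\NAME} \transComp(t)$, and since both sides are letrec-free, Lemma~\ref{lemma:leq-transfers-from-name-to-lazy-for-letrec-free-expr} gives $\transComp(s) \lec_{\LCC} \transComp(t)$. Now apply the second part of Proposition~\ref{prop-classification-lcc} to $\transComp(s) \lec_{\LCC} \transComp(t)$, yielding the four cases: either $\transComp(s) \simc_{\LCC}\Omega$; or both are constructor applications with the same head and componentwise $\lec_{\LCC}$; or both are abstractions $\lambda x.s''$, $\lambda x.t''$ with $s'' \lec_{\LCC}^o t''$; or $\transComp(s) \simc_{\LCC} \lambda x.s''$ with $s'' \in \cBot$ and $\transComp(t) \simc_{\LCC} (c~\ldots)$. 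I then translate each case back: the $\simc$-statements move back to $\LNAME$ (hence $\LR$) as before; for the componentwise inequalities among arguments I need $s_i \lec_{\LR} t_i$ given $\transComp(s_i) \lec_{\LCC} \transComp(t_i)$, which is exactly adequacy of $\transComp\circ W$ (Proposition~\ref{prop:T-adequ} plus $W$); and the $\cBot$-condition is handled by the definition of $\cBot_\LR$ together with convergence-equivalence of $\transComp$ (Proposition~\ref{prop:TY-conv-equiv}) and the fact that $\transComp$ commutes with substitution up to $\sim_\LCC$, so $\sigma(s') \DIV_\LR$ for all closing $\sigma$ iff $\transComp(\sigma(s')) \DIV_\LCC$ iff $\sigma(s'')\in\cBot$ after identifying $s''$ with $\transComp(s')$ (more precisely, one uses $s' \sim_{\LR} \iota(\transComp(s'))$ componentwise in an abstraction).

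The main obstacle I anticipate is the careful bookkeeping around open versus closed expressions and around the $\cBot$/$\cBot_\LR$ condition in the last case. The relation $s' \lec_{\LR}^o t'$ in the abstraction case is between \emph{open} expressions (with one free variable $x$), so I must be sure the open extension of $\transComp$ behaves correctly; this is where Proposition~\ref{prop:open-closed-case} (reducing open $\lec_\LCC$ to closed $\lec_\LCC$ of $\lambda$-closures) and the fact that $\transComp(\lambda x_1\ldots x_n.s) = \lambda x_1\ldots x_n.\transComp(s)$ are essential, and I would phrase the argument via these closures rather than directly manipulating open relations. Similarly, translating ``$s' \in \cBot$'' back to ``$s' \in \cBot_\LR$'' requires knowing $\transComp$ interacts well with arbitrary closing substitutions; since $\transComp$ is defined structurally and is the identity on variables, $\transComp(\sigma(r))$ differs from $(\transComp\circ\sigma)(\transComp(r))$ only by $\LLAZYCC$-correct reductions, so divergence is preserved in both directions — but making this precise is the technically delicate step. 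Everything else is a routine unwinding of the already-established full-abstractness and convergence-equivalence results.
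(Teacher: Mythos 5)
Your proposal is correct and follows essentially the same route as the paper: the paper's own proof is a one-line appeal to Proposition~\ref{prop-classification-lcc} together with the fact that $N \circ W$ is surjective, compositional, fully abstract, and the identity on letrec-free expressions (hence on constructors and on $\Omega$), which is exactly the transfer argument you spell out. Your additional bookkeeping (Proposition~\ref{proposition:s-Ns-neu}, adequacy for the componentwise inequalities, and the $\cBot$/$\cBot_\LR$ translation via convergence equivalence) is just an explicit unwinding of what the paper leaves implicit.
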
 
\begin{proof}
This follows by Proposition~\ref{prop-classification-lcc} and since $N \circ W$ is surjective, compositional and fully abstract, and the identity on constructors.
\end{proof}

\section{\texorpdfstring{On Similarity  in  $\LLR$}{On Similarity in LR}}\label{sec-simulations}
In this section we will explain co-inductive and inductive (bi)similarity for $\LLR$.
Our results of the previous sections then enable us to show that these bisimilarities coincide  
with contextual equivalence in $\LLR$.

\subsection{\texorpdfstring{Overview of soundness and completeness proofs for similarities in $\LLR$}{Overview of soundness and completeness proofs for similarities in LR}}
Before we give details of the proof for lifting soundness and completeness of similarities from $\LLAZYCC$ to $\LLR$, 
we show an outline of the proof in \FIGURE~\ref{fig:proof-structure}.
The diagram shows fully abstract translations between the calculi $\LLR, \LNAME$, and $\LLAZYCC$ defined and studied in 
Sections~\ref{sec-translation-NEED-NAME} and~\ref{sec:NAME-to-LAZY}, where Corollary~\ref{cor:W-fully-abs} 
and Theorem~\ref{theorem:N-fully-abstract} show full abstractness for $W$ and $N$, respectively. 
These fully-abstract translations that are also surjective, and the identity on letrec-free expressions, allow us to 
prove that $s_1 \lec_{\LR} s_2 \iff N(W(s_1)) \lec_{\LCC} N(W(s_2))$.  
By Theorem~\ref{thm:llc-Q-equivalence} in  $\LLAZYCC$,
this is equivalent to  $N(W(s_1)) \leb_{\LCC}^o N(W(s_2))$. 
 The proof is completed by using the translations by transferring the equations back and forth between $\LLR$ and $\LLAZYCC$
in this section  in order to finally show that $s_1 \lec_{\LR} s_2 \iff s_1 \leb_{\LR,Q_{\mathit{CE}}}^o s_2$ in Theorem \ref{thm:maintheorem}.

\begin{figure}
\begin{tikzpicture}[yscale=0.6,xscale=1]
 \node (Lneedeq) at (0.5,-3) [] {\scalebox{1}{$s_1 \lec_{\LR} s_2$}};
 \draw [-,dotted] (Lneedeq) to node [] {} node [] {} (0.5,-.3);
 \node (LneedeqLRQce) at (-1.5,0) [] {\scalebox{1}{$s_1 \lec_{{\LR,Q_{\mathit{CE}}}} s_2$}};
 \node (Lneedbeq) at (0.5,3) [] {\scalebox{1}{$s_1 \leb_{\LR,Q_{\mathit{CE}}} s_2$}};
 \draw [-,dotted] (Lneedbeq) to node [] {} node [] {} (0.5,.3);
 \node (Lnameeq) at (5,-3) [] {\scalebox{1}{$W(s_1) \lec_{\NAME} W(s_2)$}};
 \draw [-,dotted] (Lnameeq) to node [] {} node [] {} (5,-.3);
 \node (Llazyeq) at (9.8,-3) [] {\scalebox{1}{$N(W(s_1)) \lec_{\LCC} N(W(s_2))$}};
 \draw [-,dotted] (Llazyeq) to node [] {} node [] {} (10,-.3);
 \node (LlazyQbeq) at (9.8,3) [] {\scalebox{1}{$N(W(s_1)) \leb_{\LCC,Q_{\mathit{CE}}} N(W(s_2))$}};   
 \draw [-,dotted] (LlazyQbeq) to node [] {} node [] {} (10,.3);
 \node (Lneedcircle) at (0.5,0) [line width=1pt,shape=ellipse,draw,fill=black!15!white] {\rule{0mm}{8mm}\rule{8mm}{0mm}};
 \node (Lneedtext) at (0.5,0) [] {\scalebox{1}{$\LLR$}};
 \node (Lnamecircle) at (5,0) [line width=1pt,shape=ellipse,draw,fill=black!10!white] {\rule{0mm}{8mm}\rule{8mm}{0mm}};
 \node (Lnametext) at (5,0) [] {\scalebox{1}{$\LNAME$}};
 \node (Llazycircle) at (9.8,0) [line width=1pt,shape=ellipse,draw,fill=black!10!white] {\rule{0mm}{8mm}\rule{8mm}{0mm}};
 \node (Llazytext) at (9.8,0) [] {\scalebox{1}{$\LLCC$}};
 \draw[->,line width=1pt] (Lneedcircle) to node [above] {$W$} node [] {} (Lnamecircle);
 \draw[->,line width=1pt,bend left =30] (Lneedcircle) to node [above] {$N \circ W$} node [] {} (Llazycircle);
 \draw[->,line width=1pt] (Lnamecircle) to node [above] {$N$} node [] {} (Llazycircle);
 \draw[<->,double,double distance=1.2pt,line width=.5pt] (Lneedeq) to node [above=3pt] {Cor}  node [below=3pt] {\ref{cor:W-fully-abs}}(Lnameeq);
 \draw[<->,double,double distance=1.2pt,line width=.5pt] (Lnameeq) to node [above=3pt] {Thm} node [below=3pt] {\ref{theorem:N-fully-abstract}} (Llazyeq);
 \draw[<->,double,double distance=1.2pt,line width=.5pt, bend right=40] (Llazyeq) to node [right] {\parbox{1cm}{Thm\\\ref{theorem:bisim-alternatives}}} node [] {} (LlazyQbeq);
 \draw[<->,double,double distance=1.2pt,line width=.5pt, bend left=20] (Lneedeq) to node [left] {\parbox{1cm}{Prop\\\ref{prop:finite-simulation-LLR}}} node [] {} (LneedeqLRQce);
 \draw[<->,double,double distance=1.2pt,line width=.5pt, bend left=20] (LneedeqLRQce) to node [left] {\parbox{1cm}{Thm\\\ref{theo:closed-need-bisim-is-csim}}} node [] {} (Lneedbeq);
 \draw[<->,double,double distance=1.2pt,line width=.5pt] (LlazyQbeq) to node [above] {see Proof of Prop.  \ref{prop:finite-simulation-LLR}} node [] {} (Lneedbeq);
\end{tikzpicture}
\caption{The structure of the reasoning for the similarities in $\LLR$ for closed expressions.}\label{fig:proof-structure}
\end{figure}

\subsection{\texorpdfstring{Similarity in $\LLR$}{Similarity in LR}}\label{subsec:sim-lr-def}
The definition of $\LLR$-WHNFs implies that  they are of the form $R[v]$, where $v$ is either 
an abstraction $\lambda x.s$ or a constructor application $(c~s_1~\ldots~s_{\ari(c_i)})$,
and where $R$ is an {\em $\LLR$-AWHNF-context} according 
to the grammar $R ::= [\cdot]~|~\tletrx{\iEnv}{[\cdot]}$ if $v$ is an abstraction, 
 and   $R$ is an {\em $\LLR$-CWHNF-context} according 
to the grammar $R ::= [\cdot]~|~\tletrx{\iEnv}{[\cdot]}~|~ \tletrx{x_1 = [\cdot], \bchainXInd{2}{m},\iEnv}{x_{m}}$
if $v$ is a constructor application. Note that  $\LLR$-AWHNF-contexts and $\LLR$-CWHNF-contexts are
special $\LLR$-reduction contexts, also called {\em $\LLR$-WHNF-contexts}.
 
First we show that finite simulation (see \cite{schmidt-schauss-machkasova-rta:08}) is correct for $\LLR$:

\begin{definition}
Let $\lec_{\LR,Q_{\mathit{CE}}}$ be defined for $\LLR$ as instantiating 
the relation $\leq_{\cal Q}$ in Definition~\ref{def:Q-simpl-preorder} with the closed subcalculus of the calculus $\LLR$ and the set ${\cal Q}$
with $Q_{\mathit{CE}}$ from Definition~\ref{def:Q-Omega-expr-contexts}.

The relation $\leb_{\LR,Q_{\mathit{CE}}}$ is ${\cal Q}$-similarity
(Definition~\ref{def:Q-gfp-preorder}) instantiated for the calculus $\LLR$ with the set of contexts $Q_{\mathit{CE}}$ (Definition~\ref{def:Q-Omega-expr-contexts}).
Its open extension is denoted with $\leb_{\LR,Q_{\mathit{CE}}}^o$.
\end{definition}

\begin{proposition} \label{prop:finite-simulation-LLR}
Let  $s_1,s_2$ be closed $\LETRECEXPR$-expressions. Then $s_1 \lec_\LR  s_2$ iff $s_1 \lec_{\LR,Q_{\mathit{CE}}} s_2$.
\end{proposition}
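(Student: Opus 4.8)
The plan is to reduce the statement to the already-established machinery for $\LLAZYCC$ by transporting everything along the fully-abstract, surjective translation $N\circ W:\LLR\to\LLAZYCC$. First I would recall that by Corollary~\ref{cor:W-fully-abs} and Theorem~\ref{theorem:N-fully-abstract} the composite $N\circ W$ is convergence equivalent, compositional, fully abstract, surjective, and the identity on $\tletrec$-free expressions; moreover it maps constructors to constructors. Hence for closed $\LETRECEXPR$-expressions $s_1,s_2$ we have $s_1\lec_\LR s_2 \iff N(W(s_1))\lec_\LCC N(W(s_2))$. The aim is then to show that $\lec_{\LR,Q_{\mathit{CE}}}$ on closed $\LLR$-expressions corresponds, under $N\circ W$, exactly to $\lec_{\LCC,Q_{\mathit{CE}}}$ on closed $\LLAZYCC$-expressions, which by Theorem~\ref{theorem:bisim-alternatives} (items~\eqref{theo-main-0} and~\eqref{theo-main-7}) equals $\lec_\LCC$, and then chase the equivalence back.

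The main work is the correspondence of the $Q_{\mathit{CE}}$-tower condition across the translation. In one direction, suppose $s_1\lec_\LR s_2$; I want $s_1\lec_{\LR,Q_{\mathit{CE}}}s_2$. Since $Q_{\mathit{CE}}\subseteq\CONTEXTS_{\LR}$ (each such $Q$ is an ordinary $\LLR$-context built from application, $\tcase$, $\Omega$, constructors, and closed abstractions), the tower $Q_1(\ldots(Q_n(s_1)))$ is just a context filled with $s_1$; $\lec_\LR$ being a precongruence gives convergence preservation immediately, so this direction is trivial. For the converse, assume $s_1\lec_{\LR,Q_{\mathit{CE}}}s_2$ and let $C$ be an arbitrary closed $\LLR$-context with $C[s_1]\maycon_\LR$. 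I would translate: by compositionality $N(W(C[s_i]))=N(W(C))[N(W(s_i))]$, and by convergence equivalence $N(W(C))[N(W(s_1))]\maycon_\LCC$. Now I need to know that $N(W(s_1))\lec_{\LCC,Q_{\mathit{CE}}}N(W(s_2))$; this should follow because each $Q\in Q_{\mathit{CE}}$ is $\tletrec$-free, hence fixed by $N\circ W$, so a $Q_{\mathit{CE}}$-tower over $N(W(s_i))$ in $\LLAZYCC$ is the $N\circ W$-image of the same tower over $s_i$ in $\LLR$, and convergence equivalence transfers the tower-convergence hypothesis from $\LLR$ to $\LLAZYCC$. Then Theorem~\ref{theorem:bisim-alternatives} yields $N(W(s_1))\lec_\LCC N(W(s_2))$, whence $N(W(C))[N(W(s_2))]\maycon_\LCC$, and translating back via compositionality and convergence equivalence of $N\circ W$ gives $C[s_2]\maycon_\LR$. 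Thus $s_1\lec_\LR s_2$.

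The one subtlety I expect to be the main obstacle is checking that $Q_{\mathit{CE}}$-contexts really are preserved \emph{on the nose} by $N\circ W$, including the closed abstractions $\lambda x.s$ occurring inside the contexts $([\cdot]~r)$ for $r\in\mathit{CE}_\LCC$: an arbitrary closed abstraction of $\LLAZYCC$ is $\tletrec$-free and hence fixed by $N\circ W$, but I must make sure that when I regard such a $Q$ as an $\LLR$-context and then apply $W$ (the identity) and $N$, the result is literally the same context, so that the two $Q_{\mathit{CE}}$-sets—one for the closed subcalculus of $\LLR$, one for that of $\LLAZYCC$—are identified. This is where I would be careful: the relevant fact is precisely that $N$ is the identity on $\tletrec$-free expressions and commutes with the hole (it treats the hole as a constant), so $N(W(Q[s]))=Q[N(W(s))]$ for $Q\in Q_{\mathit{CE}}$, and by induction on the length of the tower $N(W(Q_1(\ldots Q_n(s))))=Q_1(\ldots Q_n(N(W(s))))$. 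Given that, the convergence equivalence of $N\circ W$ does all remaining work. Finally I would note that the statement as quoted is for closed expressions, so no open-extension bookkeeping is needed here; the open version is handled later (via Proposition~\ref{prop:open-closed-case} and the analogue in $\LLR$).
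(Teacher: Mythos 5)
Your proposal is correct and follows essentially the same route as the paper: the $\Rightarrow$ direction is immediate, and for $\Leftarrow$ you transfer the $Q_{\mathit{CE}}$-tower condition along $N\circ W$ (using that it is compositional, convergence equivalent, and the identity on the $\tletrec$-free contexts of $Q_{\mathit{CE}}$), apply Theorem~\ref{theorem:bisim-alternatives} to get $N(W(s_1))\lec_\LCC N(W(s_2))$, and conclude via full abstraction of $N\circ W$ (your explicit context-chasing at the end merely unfolds the adequacy argument, and the restriction to closed contexts there is unnecessary but harmless since full abstraction already covers all contexts).
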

\begin{proof} 
The $\Rightarrow$ direction is trivial.
We show $\Leftarrow$, the nontrivial part:
Assume that the inequation $s_1 \lec_{\LR,Q_{\mathit{CE}}} s_2$ holds.
Then $\transComp(W(s_1)) \lec_{\LCC,Q_{\mathit{CE}}} \transComp(W(s_2))$, since for every $n \ge 0$ 
and context $Q = Q_n(\ldots(Q_2(Q_1[ ])\ldots))$
with $Q_i \in Q_{\mathit{CE}}$, we have $\transComp(W(Q)) = Q$, and also $Q(s_i)\downarrow_\LR \iff Q(s_i)\downarrow_\LCC$, since $\transComp\circ W$
is convergence-equivalent and compositional, and the identity on $\tletrec$-free expressions. 
Now Theorem \ref{theorem:bisim-alternatives} shows   $\transComp(W(s_1)) \lec_{\LCC} \transComp(W(s_2))$, 
and then  Theorem \ref{theorem:N-fully-abstract} shows   $s_1 \lec_\LR s_2$.\qedhere
 \end{proof}
\noindent The following  lemma is helpful in applying Theorem \ref{thm:conv-admissibile-implies}.

\begin{lemma}\label{lem:w-condition-need}
 The closed part of the calculus $\LLR$  is convergence-admissible: 
 For all contexts $Q \in Q_{\mathit{CE}}$, and closed $\LLR$-WHNFs $w$:
  $Q(s)\maycon_\LR w$ iff $\exists v:  s \maycon_\LR v$ and $Q(v) \maycon_\LR w$.
\end{lemma}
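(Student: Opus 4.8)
The plan is to prove convergence-admissibility of the closed part of $\LLR$ \wrt\ $Q_{\mathit{CE}}$ by exploiting the fact that every context $Q \in Q_{\mathit{CE}}$ is an $\LLR$-reduction context, together with the standardization/uniqueness properties of normal-order reduction in $\LLR$. Concretely, each $Q \in Q_{\mathit{CE}}$ has one of the three shapes $([\cdot]~r)$, $(\tcase_T~[\cdot]~\tof~\ldots)$, or $(\tseq~[\cdot]~\ttrue~\text{-like})$ from Definition~\ref{def:Q-Omega-expr-contexts}, and in each case the hole is labeled $S$ or $T$ by the $\LLR$-labeling algorithm of \FIGURE~\ref{figure:label-LLR}: applying the labeling rule for $(s~t)^{S\vee T}$, $(\tcase_T~s~\tof~alts)^{S\vee T}$, or $(\tseq~s~t)^{S\vee T}$ pushes the $S$-label into the hole. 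Hence $Q$ is an $\LLR$-reduction context, and I would record this as the starting observation.

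The ``$\Leftarrow$'' direction is the easy one: if $s \maycon_\LR v$ via a normal-order sequence $s \xrightarrow{\LR,*} v$ and $Q(v) \maycon_\LR w$, then since $Q$ is a reduction context, $Q(s) \xrightarrow{\LR,*} Q(v)$ by the same reduction sequence (each normal-order step of $s$ is a normal-order step of $Q(s)$, because the hole of $Q$ is in a reduction position and the labeling of $s$ extends to $Q(s)$), followed by $Q(v) \xrightarrow{\LR,*} w$; so $Q(s) \maycon_\LR w$. I would need to check that no reduction step of $s$ interacts with $Q$, which holds because $Q$ is letrec-free on the path to the hole except for the argument $r$ or alternatives, and by the distinct variable convention there is no capture. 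For the ``$\Rightarrow$'' direction, suppose $Q(s) \maycon_\LR w$. I would argue by induction on the length of the normal-order reduction $Q(s) \xrightarrow{\LR,*} w$ that the reduction must first fully evaluate $s$: as long as $s$ is not a WHNF, the normal-order redex of $Q(s)$ lies strictly inside the hole (this is exactly the point where I use that the labeling algorithm, when it descends through the $Q$-construct into the hole-subterm, produces the same labels it would on $s$ alone, hence the same normal-order redex). So $Q(s) \xrightarrow{\LR} Q(s')$ with $s \xrightarrow{\LR} s'$, and the induction hypothesis applies to $Q(s')\maycon_\LR w$ giving $s' \maycon_\LR v$ and $Q(v)\maycon_\LR w$; prepending $s \xrightarrow{\LR} s'$ yields $s \maycon_\LR v$. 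The base of the case analysis is when $s$ is already an $\LLR$-WHNF, where we take $v := s$ (noting the first step of $Q(s)$'s reduction, if any, is then a $Q$-step such as $(\rlbeta)$, $(\rlapp)$, $(\rlcase)$, $(\rlseq)$, $(\rseqc)$, $(\rcasec)$, etc.).

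The one subtlety that I expect to be the main obstacle is the case where $s$ is a WHNF of the shape $\tletrx{x_1 = v', \bchainXInd{2}{m},\iEnv}{x_m}$ with $v'$ a constructor application, i.e.\ a CWHNF that is not itself a value but a letrec wrapping a variable chain. Then $Q(s)$ needs the $\LLR$-specific rules $(\rlapp)$/$(\rlcase)$/$(\rlseq)$ to hoist the $\tletrec$ to the top before the $(\rcasein)$/$(\rseqin)$ rule can fire, so the ``first evaluate $s$ to $v$'' story must allow $v$ itself to be such a WHNF and then let $Q(v)$ do the hoisting — which is fine because hoisting does not change the meaning, and the claim as stated only requires $s \maycon_\LR v$ for \emph{some} WHNF $v$, not a value. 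I would handle this by noting that WHNF-ness is exactly the halting condition of $\LLR$ normal-order reduction, so when the inner reduction stalls we have our $v$, and everything after that is a reduction occurring in the reduction context $Q$ applied to the WHNF $v$, which is by definition a reduction of $Q(v)$. Symmetrically, I should double-check that in the ``$\Leftarrow$'' direction the hoisting reductions inside $Q(v)$ are still normal-order in $Q(v)$ — they are, since $Q$ is a reduction context and the labeling algorithm will descend into $v$'s top-level $\tletrec$. With these observations the lemma follows, and I would close by remarking (as the paper does informally in the proof of Lemma~\ref{lem:convergence-admissible}) that this is the direct analogue of the $\LLAZYCC$ statement, just with the extra bookkeeping for $\LLR$'s let-shifting rules.
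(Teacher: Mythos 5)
There is a genuine gap, and it sits exactly at the step your whole induction rests on. You claim that as long as $s$ is not an $\LLR$-WHNF, the normal-order redex of $Q(s)$ lies inside the hole because ``the labeling algorithm \ldots produces the same labels it would on $s$ alone''. This is false for $\LLR$: the labeling rule that descends into a \tletrec, namely $\tletrx{\iEnv}{t}^{T} \leadsto \tletrx{\iEnv}{t^{S}}^V$ in \FIGURE~\ref{figure:label-LLR}, fires only for the label $T$, which only the top of the whole expression carries. Inside $Q=([\cdot]~r)$ the subterm $s$ receives the label $S$, so whenever $s$ is a top-level \tletrec{} that is not yet a WHNF, the labeling stops at $s$ and the normal-order redex of $Q(s)$ is the hoisting rule $(\rlapp)$ (resp.\ $(\rlcase)$ for the case-contexts), not the redex of $s$. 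Concretely, for $s=\tletrec~x=((\lambda y.y)~\lambda z.z)~\tin~x$ the normal-order step of $s$ is the $(\rlbeta)$ inside the binding, whereas $(s~r)\xrightarrow{\LR,\rlapp}\tletrec~x=((\lambda y.y)~\lambda z.z)~\tin~(x~r)$, which is not of the form $Q(s')$ for any $s'$ with $s\xrightarrow{\LR}s'$. Since every $(\rlbeta)$ step creates such a top-level \tletrec, this situation is the rule, not the exception: both your ``$\Leftarrow$'' claim that $Q(s)\xrightarrow{\LR,*}Q(v)$ by the same reduction sequence and the invariant of your ``$\Rightarrow$'' induction (every state of the evaluation of $Q(s)$ has the form $Q(s_i)$) break down after essentially the first step. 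The subtlety you do flag concerns only the final state where $s$ is already a letrec-shaped WHNF; the real obstacle occurs at every intermediate top-level \tletrec. (A minor slip besides: $Q_{\mathit{CE}}$ contains only the application contexts and the two kinds of case-contexts of Definition~\ref{def:sim-Q-contexts}; there is no $\tseq$-shaped context, though this does not affect the substance.)

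The lemma is of course true, but the correspondence has to be set up \emph{modulo let-hoisting} rather than hole-wise, and this is what the paper's proof does. For $Q=([\cdot]~r)$ one shows, by induction on the length of $(s~r)\xrightarrow{\LR,k}w$, that the evaluation of $(s~r)$ passes either through $((\lambda x.s')~r)$ with $s\xrightarrow{\LR,*}\lambda x.s'$, or through $\tletrx{\iEnv}{((\lambda x.s')~r)}$ with $s\xrightarrow{\LR,*}\tletrx{\iEnv}{(\lambda x.s')}$. In the second case one does \emph{not} get $Q(s)\xrightarrow{\LR,*}Q(v)$; instead $Q(v)=(\tletrx{\iEnv}{(\lambda x.s')}~r)\xrightarrow{\LR,\rlapp}\tletrx{\iEnv}{((\lambda x.s')~r)}$, i.e.\ the composite evaluation and the evaluation of $Q(s)$ meet in a common term, and determinism of $\xrightarrow{\LR}$ then forces the same answer $w$; the case-contexts are handled analogously (with $(\rlcase)$, and with the variable-chain CWHNFs feeding $(\rcasein)$). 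Note that appealing to correctness of $(\rlapp)$ as a program transformation is not enough here, because convergence-admissibility requires the identical WHNF $w$, not just convergence; so the syntactic rearrangement (or the determinism argument through a common intermediate term) is really needed. To repair your write-up you would have to replace the invariant ``state $=Q(s_i)$'' by something like ``state $=\tletrx{\iEnv}{(t~r)}$ where $s\xrightarrow{\LR,*}\tletrx{\iEnv}{t}$'' (with the bookkeeping for flattening nested \tletrec{}s), which is essentially the decomposition the paper uses.
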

\begin{proof}
 ``$\Rightarrow$'': First assume $Q$ is of the form $([\cdot]~r)$ for closed $r$.  
Let $(s~r) \maycon_\LR w$. There are two cases, which can be verified by induction on the length $k$ of a reduction sequence $(s~r)\xrightarrow{\LR,k} w$:
 $(s~r) \xrightarrow{\LR,*} ((\lambda x.s')~r) \xrightarrow{\LR,*} w$,
 where $s \xrightarrow{\LR,*}  (\lambda x.s')$, and the claim holds.
The other case is $(s~r) \xrightarrow{\LR,*}  \tletrx{\iEnv}{((\lambda
x.s')~r)} \xrightarrow{\LR,*} w$, where $s \xrightarrow{\LR,*} 
\tletrx{\iEnv}{(\lambda x.s')}$. In this case $(\tletrx{\iEnv}{(\lambda
x.s')}~r) \xrightarrow{\LR,(lapp)} \tletrx{\iEnv}{((\lambda x.s')~r)}
\xrightarrow{\LR,*} w$, and thus the claim is proven. 
The other cases where $Q$ is of the form $(\tcase_T~[\cdot]~\tof~\ldots)$ can be proven similarly.
\\
The ``$\Leftarrow$''-direction can be proven using induction on the length of reduction sequences. 
\end{proof}

\begin{lemma}\label{lemma:LR-open-extension-ok}
In $\LLR$, the equation $(\lec_\LR^c)^o~= ~\lec_\LR$  holds.
\end{lemma}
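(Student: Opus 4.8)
\textbf{Plan for proving Lemma~\ref{lemma:LR-open-extension-ok}.}
The statement is that contextual preorder on open $\LETRECEXPR$-expressions coincides with the open extension of the contextual preorder on closed expressions, i.e.\ $s_1 \lec_\LR s_2$ iff $\sigma(s_1) \lec_\LR \sigma(s_2)$ for every substitution $\sigma$ closing both $s_1$ and $s_2$. The plan is to prove the two inclusions separately, with the ``$\supseteq$'' direction (open contextual preorder follows from closing-substitution-wise closed preorder) being the substantive one.

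For ``$\subseteq$'' (i.e.\ $\lec_\LR \subseteq (\lec_\LR^c)^o$): suppose $s_1 \lec_\LR s_2$ and let $\sigma$ be any substitution closing $s_1$ and $s_2$. Since $\sigma$ can be realized by a context built from $\lambda$-abstractions and arguments followed by normal-order reductions --- concretely, writing $\sigma = \{x_1 \mapsto r_1, \ldots, x_n \mapsto r_n\}$, the context $C_\sigma := ((\lambda x_1 \ldots x_n. [\cdot])~r_1~\ldots~r_n)$ satisfies $C_\sigma[s_i] \xrightarrow{\LR,\rlbeta,*} \xrightarrow{\LR,\rllet,*} \tletrx{x_1 = r_1, \ldots, x_n = r_n}{s_i}$, which is contextually equivalent to $\sigma(s_i)$ by correctness of the reduction rules and of garbage collection (Theorem~\ref{theo:lr-trans-corr}) --- any closed context $D$ testing $\sigma(s_1)$ and $\sigma(s_2)$ can be turned into the context $D[C_\sigma[\cdot]]$ testing $s_1$ and $s_2$. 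Then $s_1 \lec_\LR s_2$ gives $D[C_\sigma[s_1]]\maycon_\LR \impl D[C_\sigma[s_2]]\maycon_\LR$, and transporting along the equivalences yields $D[\sigma(s_1)]\maycon_\LR \impl D[\sigma(s_2)]\maycon_\LR$; hence $\sigma(s_1) \lec_\LR \sigma(s_2)$.

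For ``$\supseteq$'' (i.e.\ $(\lec_\LR^c)^o \subseteq \lec_\LR$): suppose $\sigma(s_1) \lec_\LR \sigma(s_2)$ for all closing $\sigma$, and let $C$ be an arbitrary $\LETRECEXPR$-context with $C[s_1]\maycon_\LR$. Let $\{x_1,\ldots,x_n\} = \FV(C[s_1]) \cup \FV(C[s_2])$, and let $\sigma$ be the substitution replacing each $x_i$ by $\Omega$. The key observation --- the analogue of the argument used in Lemma~\ref{lem:PFI-leb-impl-lec} for $\LLAZYCC$, but now for $\LLR$ --- is that since $\Omega$ is closed, divergent, and (crucially) the normal-order reduction of $C[s_1]$ to a WHNF cannot have a free variable $x_i$ in reduction position at any step (a free variable in reduction position makes an expression diverge in $\LLR$), we get $\sigma(C[s_1])\maycon_\LR$, and by the same reduction sequence no $\Omega$-occurrence is ever forced, so the convergence survives the substitution. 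Writing $\sigma(C[s_i]) = \sigma(C)[\sigma(s_i)]$ and noting $\sigma(s_1) \lec_\LR \sigma(s_2)$ together with the fact that $\lec_\LR$ is a precongruence (so $\sigma(C)[\sigma(s_1)] \lec_\LR \sigma(C)[\sigma(s_2)]$), we obtain $\sigma(C)[\sigma(s_2)]\maycon_\LR$, i.e.\ $\sigma(C[s_2])\maycon_\LR$. Finally, again because the $\Omega$'s sit only at (former) free-variable positions and such positions are never in reduction position along a converging reduction, we can reverse the substitution: $C[s_2]\maycon_\LR$. Hence $s_1 \lec_\LR s_2$.

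\textbf{Main obstacle.} The delicate point is the two-way transfer of convergence across the $\Omega$-substitution in the ``$\supseteq$'' direction: one must argue carefully that replacing free variables by $\Omega$ neither creates nor destroys convergence. This rests on the structural fact that in $\LLR$ a subterm $\Omega$ (or a free variable) at a reduction-context position forces nontermination, so a terminating normal-order reduction never actually evaluates such a subterm; therefore the very same reduction steps apply before and after the substitution. I would isolate this as a short auxiliary claim about $\LLR$ normal-order reductions --- essentially that if $C[x]$ (with $x$ free, hence not substituted) converges, then the reduction sequence never reduces inside the displayed occurrence positions of $x$ --- proved by induction on the length of the reduction using the determinism and the characterization of $\LLR$-WHNFs. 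Everything else (the precongruence property of $\lec_\LR$, correctness of $\rlbeta$, $\rllet$, and gc from Theorem~\ref{theo:lr-trans-corr}) is already available.
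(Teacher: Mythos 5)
Your ``$\subseteq$'' direction is fine and is essentially the paper's own argument (note that to turn $\tletrx{x_1=r_1,\ldots,x_n=r_n}{s_i}$ into $\sigma(s_i)$ you need correctness of the general copy rule (gcp), Proposition~\ref{prop-gcp-correct}, not only Theorem~\ref{theo:lr-trans-corr}). The gap is in the ``$\supseteq$'' direction, and it concerns exactly the point that makes the lemma non-trivial: variable capture by contexts. Your substitution $\sigma$ replaces the variables in $\FV(C[s_1])\cup\FV(C[s_2])$, but the free variables of $s_1,s_2$ that are bound by binders of $C$ do not occur in this set, so $\sigma(s_1),\sigma(s_2)$ need not be closed. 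The hypothesis $(\lec_\LR^c)^o$ only speaks about substitutions that close $s_1$ and $s_2$, so the step ``noting $\sigma(s_1)\lec_\LR\sigma(s_2)$'' is unjustified. You also cannot repair this by letting $\sigma$ replace the captured variables as well, since then $\sigma(C[s_i])\neq\sigma(C)[\sigma(s_i)]$ (take $C=\lambda x.[\cdot]$ and $s_1=x$: the left side is $\lambda x.x$, the right side $\lambda x.\Omega$). So your argument only covers contexts that capture no free variable of $s_1,s_2$, whereas the whole content of $(\lec_\LR^c)^o\subseteq\lec_\LR$ is that closed-instance testing controls capturing contexts too.

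The paper does not attempt a direct argument in $\LLR$ at all: it pushes the hypothesis through the fully abstract and surjective translation $\transComp\circ W$ into $\LLCC$, invokes Theorem~\ref{theorem:bisim-alternatives} there (whose equivalences between the capturing and the substitution-based formulations rest on the Howe-method congruence results for $\leb_\LCC$), and transports the conclusion back via full abstraction. If you want to stay inside $\LLR$, you would need some substitute for that congruence machinery (a context lemma or a Howe-style argument for open terms); your auxiliary claim about $\Omega$/free variables in reduction position is the analogue of what the paper proves for $\LLCC$ inside Theorem~\ref{theorem:bisim-alternatives} and is plausible for $\LLR$, but it does not touch the capture problem, which is where your proof currently fails.
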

\begin{proof}
If $s,t$ are (open) $\LETRECEXPR$-expressions with $s \lec_\LR t$, then $(\lambda x_1.\ldots x_n.s) ~s_1 \ldots s_n \lec_\LR^c (\lambda x_1.\ldots x_n.t)~ s_1 \ldots s_n$ 
 for closed expressions $s_i$,
and then by correctness of reduction in $\LLR$, $\sigma(s) \lec_\LR^c \sigma(t)$, and hence $\lec_\LR ~\subseteq {(\lec_\LR^c)^o}$.

If for all closing $\LETRECEXPR$-substitutions $\sigma$: $\sigma(s) \lec_\LR^c \sigma(t)$, then using the fully abstract translations $\transComp \circ W$, we obtain
$\transComp \circ W(\sigma)(\transComp \circ W(s)) \lec_\LCC^c \transComp \circ W(\sigma)(\transComp \circ W(t))$,
hence $\transComp \circ W(s) \lec_\LCC^c (\transComp \circ W(t))$ by 
Theorem \ref{theorem:bisim-alternatives}. Again using fully abstractness of $\transComp \circ W$, we obtain $s \lec_\LR t$.
\end{proof}

\begin{theorem}\label{theo:closed-need-bisim-is-csim}
In $\LLR$, for closed $\LETRECEXPR$-expressions $s$ and $t$ the statements 
$s \leb_{\LR,Q_{\mathit{CE}}} t$, 
$s\lec_{\LR,Q_{\mathit{CE}}} t$ and 
$s \lec_\LR t$ are all equivalent.
\end{theorem}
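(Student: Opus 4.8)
The plan is to chain together the equivalences that have been assembled throughout the paper, using $\LLR$'s convergence-admissibility as the bridge between the inductive and co-inductive descriptions. Concretely, for closed $\LETRECEXPR$-expressions $s,t$ I want to show the cycle
\[
 s \leb_{\LR,Q_{\mathit{CE}}} t \ \Longrightarrow\ s \lec_{\LR,Q_{\mathit{CE}}} t \ \Longrightarrow\ s \lec_\LR t \ \Longrightarrow\ s \leb_{\LR,Q_{\mathit{CE}}} t.
\]
The middle implication is immediate: Proposition~\ref{prop:finite-simulation-LLR} states exactly that $s \lec_{\LR,Q_{\mathit{CE}}} t$ is equivalent to $s \lec_\LR t$ for closed expressions, so that direction is already done.

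For the first implication, $s \leb_{\LR,Q_{\mathit{CE}}} t \Longrightarrow s \lec_{\LR,Q_{\mathit{CE}}} t$, I would invoke Theorem~\ref{thm:conv-admissibile-implies} with the deterministic calculus taken to be the closed part of $\LLR$ and $\cal Q$ instantiated by $Q_{\mathit{CE}}$. The hypothesis of that theorem is convergence-admissibility with respect to $\cal Q$, which is precisely Lemma~\ref{lem:w-condition-need}. Hence $\lec_{\LR,Q_{\mathit{CE}}}$ and $\leb_{\LR,Q_{\mathit{CE}}}$ coincide on closed expressions, giving not just the first implication but in fact the equality $\leb_{\LR,Q_{\mathit{CE}}} = \lec_{\LR,Q_{\mathit{CE}}}$ directly.

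For the remaining implication, $s \lec_\LR t \Longrightarrow s \leb_{\LR,Q_{\mathit{CE}}} t$, the route goes through the fully abstract translation $\transComp \circ W$. From $s \lec_\LR t$ and full abstractness of $\transComp \circ W$ (Theorem~\ref{theorem:N-fully-abstract}) we get $N(W(s)) \lec_{\LCC} N(W(t))$. By Theorem~\ref{theorem:bisim-alternatives} (item~\eqref{theo-main-0} $\iff$~\eqref{theo-main-8}) this is equivalent to $N(W(s)) \leb_{\LCC,Q_{\mathit{CE}}}^o N(W(t))$, and since these are closed expressions, to $N(W(s)) \leb_{\LCC,Q_{\mathit{CE}}} N(W(t))$. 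Now I need to pull this back along the translation, i.e.\ show $N(W(u_1)) \leb_{\LCC,Q_{\mathit{CE}}} N(W(u_2)) \Longrightarrow u_1 \leb_{\LR,Q_{\mathit{CE}}} u_2$ for closed $u_i$. This is a co-induction argument: take $\eta := \{(u_1,u_2) \mid N(W(u_1)) \leb_{\LCC,Q_{\mathit{CE}}} N(W(u_2))\}$ and verify $\eta \subseteq F_{\LR,Q_{\mathit{CE}}}(\eta)$. The verification uses that $\transComp \circ W$ is convergence-equivalent and compositional and is the identity on $\tletrec$-free expressions, so that for each $Q \in Q_{\mathit{CE}}$ we have $N(W(Q(v))) = Q(N(W(v)))$ and $u_i \maycon_\LR v_i \iff N(W(u_i)) \maycon_{\LCC} N(W(v_i))$; combined with Proposition~\ref{prop-classification-lr} (or Lemma~\ref{lem:w-condition-need}) this lets us match the WHNFs of $u_1,u_2$ and propagate the relation to the $Q$-experiments.

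The main obstacle is this last pull-back step: making the co-inductive verification precise requires care because $\transComp\circ W$ of a $Q_{\mathit{CE}}$-context is again a $Q_{\mathit{CE}}$-context only because the arguments $r \in \mathit{CE}_\LCC$ are $\tletrec$-free and fixed by the translation, and one must check that convergence-admissibility of $\LLR$ (Lemma~\ref{lem:w-condition-need}) lets the "evaluate $u_i$ to a WHNF $v_i$ first, then apply $Q$" decomposition go through on the $\LLR$ side in lockstep with the $\LLAZYCC$ side. Everything else is assembling already-proven equivalences. Once the cycle is closed, the three stated relations are mutually equivalent.

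\begin{proof}
By Proposition~\ref{prop:finite-simulation-LLR}, $s \lec_{\LR,Q_{\mathit{CE}}} t$ iff $s \lec_\LR t$. By Lemma~\ref{lem:w-condition-need} the closed part of $\LLR$ is convergence-admissible with respect to $Q_{\mathit{CE}}$, so Theorem~\ref{thm:conv-admissibile-implies} applied to this calculus and $\cal Q = Q_{\mathit{CE}}$ yields $\leb_{\LR,Q_{\mathit{CE}}}\ =\ \lec_{\LR,Q_{\mathit{CE}}}$ on closed expressions. It remains to connect these with $\lec_\LR$, which is already done by the two facts just cited. More explicitly: $s \leb_{\LR,Q_{\mathit{CE}}} t \iff s \lec_{\LR,Q_{\mathit{CE}}} t \iff s \lec_\LR t$, where the first equivalence is Theorem~\ref{thm:conv-admissibile-implies} via Lemma~\ref{lem:w-condition-need} and the second is Proposition~\ref{prop:finite-simulation-LLR}.
\end{proof}
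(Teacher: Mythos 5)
Your final proof is correct and coincides with the paper's own argument: Lemma~\ref{lem:w-condition-need} plus Theorem~\ref{thm:conv-admissibile-implies} give $\leb_{\LR,Q_{\mathit{CE}}} = \lec_{\LR,Q_{\mathit{CE}}}$ on closed expressions, and Proposition~\ref{prop:finite-simulation-LLR} identifies the latter with $\lec_\LR$. The longer detour you sketch beforehand (pulling $\leb_{\LCC,Q_{\mathit{CE}}}$ back along $N\circ W$ by a co-induction) is unnecessary, as your own closing proof correctly observes.
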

\begin{proof}
Lemma~\ref{lem:w-condition-need} shows that Theorem \ref{thm:conv-admissibile-implies} is applicable
for the testing contexts from  $Q_{\mathit{CE}}$, i.e. $\leb_{\LR,Q_{\mathit{CE}}} = {\lec_{\LR,Q_{\mathit{CE}}}}$ and
Proposition~\ref{prop:finite-simulation-LLR} shows ${\lec_{\LR,Q_{\mathit{CE}}}} = {\lec_{\LR}^c}$
\end{proof}

For open  $\LETRECEXPR$-expressions, we can lift the properties from $\LLAZYCC$, which also follows from full abstraction of  $\transComp \circ W$ and  
from Lemma \ref{prop:open-closed-case}.

The results above imply the following theorem: 

\begin{maintheorem}\label{thm:maintheorem}
 ${\lec_{\LR}} = {\leb_{\LR,Q_{\mathit{CE}}}^o}$.
\end{maintheorem}
\begin{proof}
  Theorem \ref{theo:closed-need-bisim-is-csim} shows $\leb_{\LR,Q_{\mathit{CE}}}~= ~ \lec_{\LR,Q_{\mathit{CE}}} ~=~ \lec_\LR^c$, 
  hence $\leb_{\LR,Q_{\mathit{CE}}}^o ~= ~ (\lec_\LR^c)^o$. Then  
  Lemma \ref{lemma:LR-open-extension-ok} shows $(\lec_\LR^c)^o~=~ {\lec_{\LR}}~=~  {\leb_{\LR,Q_{\mathit{CE}}}^o}$.
\end{proof}

The Main Theorem \ref{thm:maintheorem} implies that our embedding of $\LLAZYCC$ into  the call-by-need letrec calculus $\LLR$ (modulo   $\sim$)   
  is isomorphic \wrt\ the corresponding term models, \ie\:
\begin{theorem}\label{thm:isomorphism}
The identical embedding $\iota: \LAMBDAEXPR \to \LETRECEXPR$ is an isomorphism according to Definition \ref{def:translation-compo-etal}. \qed
\end{theorem}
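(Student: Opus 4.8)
The plan is to obtain Theorem~\ref{thm:isomorphism} as a more or less immediate consequence of the full-abstraction result for $\transComp \circ W$ (Theorem~\ref{theorem:N-fully-abstract}), of Proposition~\ref{proposition:s-Ns-neu}, and of the single bookkeeping observation that both $W$ and $\transComp$ act as the \emph{literal identity} on $\tletrec$-free expressions. First I would note that $\iota$ is a translation in the sense of Definition~\ref{def:translation-compo-etal}: on contexts put $\iota(C) = C$ (the hole stays the hole), so $\iota(\mathit{Id}) = \mathit{Id}$, and $\iota$ is trivially compositional because $\iota(C[s]) = C[s] = \iota(C)[\iota(s)]$. By Definition~\ref{def:translation-compo-etal} it then suffices to show (i) $\iota$ is fully abstract and (ii) the induced map $\iota/{\sim}: \LAMBDAEXPR/{\simc_\LCC} \to \LETRECEXPR/{\simc_\LR}$ is a bijection.

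For (i): since $W$ is the identity on expressions, $W(\iota(s)) = s$; and since every defining clause of $\transComp$ other than the one for $\tletrec$ simply recurses through the term constructor and $\transComp(x) = x$, a straightforward structural induction shows $\transComp$ is the identity on $\LAMBDAEXPR$. Hence $(\transComp\circ W)(\iota(s)) = s$ for all $s \in \LAMBDAEXPR$, and Theorem~\ref{theorem:N-fully-abstract} gives
\[
\iota(s) \lec_{\LR} \iota(t) \iff (\transComp\circ W)(\iota(s)) \lec_{\LCC} (\transComp\circ W)(\iota(t)) \iff s \lec_{\LCC} t ,
\]
which is exactly full abstraction of $\iota$.

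For (ii): applying full abstraction of $\iota$ to $s \lec_{\LCC} t$ and to $t \lec_{\LCC} s$ yields $s \simc_{\LCC} t \iff s \simc_{\LR} t$ for all $s,t \in \LAMBDAEXPR$; the forward implication makes $\iota/{\sim}$ well defined and the backward one makes it injective. For surjectivity, let $u \in \LETRECEXPR$ be arbitrary; then $\transComp(u) \in \LAMBDAEXPR$ is $\tletrec$-free, Proposition~\ref{proposition:s-Ns-neu} gives $u \simc_{\NAME} \iota(\transComp(u))$, and Theorem~\ref{theo:leqneed-equals-leqname} ($\lec_{\NAME} = \lec_{\LR}$, hence $\simc_{\NAME} = \simc_{\LR}$) upgrades this to $u \simc_{\LR} \iota(\transComp(u))$; so $\iota/{\sim}$ sends $[\transComp(u)]_{\simc_\LCC}$ to $[u]_{\simc_\LR}$. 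Together with (i) this establishes that $\iota$ is an isomorphism.

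I do not expect a genuine obstacle here: all the analytic content lives in the already-proved facts that $W$ and $\transComp$ are convergence equivalent, compositional and fully abstract (Corollary~\ref{cor:W-fully-abs}, Proposition~\ref{prop:T-adequ}, Proposition~\ref{prop:missing-part:N-fully-abs}, Theorem~\ref{theorem:N-fully-abstract}) and in Proposition~\ref{proposition:s-Ns-neu}. The only points that require care are purely notational: checking that $W$ and $\transComp$ are the identity on $\tletrec$-free expressions so that $\iota$ is literally a section of $\transComp\circ W$, and being precise that the two quotients in Definition~\ref{def:translation-compo-etal} are taken with respect to $\simc_\LCC$ and $\simc_\LR$ respectively.
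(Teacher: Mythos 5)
Your proposal is correct and follows essentially the route the paper intends: the paper states Theorem~\ref{thm:isomorphism} without a written proof, as an immediate consequence of the preceding results, and your argument assembles exactly those ingredients (full abstraction of $\transComp\circ W$ from Theorem~\ref{theorem:N-fully-abstract}, the fact that $W$ and $\transComp$ are the identity on $\tletrec$-free expressions, and surjectivity on the quotient via Proposition~\ref{proposition:s-Ns-neu} together with Theorem~\ref{theo:leqneed-equals-leqname}). The only difference is presentational: you make explicit the well-definedness, injectivity and surjectivity checks that the paper leaves implicit.
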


\begin{remark}
Consider a polymorphically typed variant of $\LLR$,
say $\LLR^{\mathrm{poly}}$, and a
corresponding type-indexed contextual preorder $\le_{\LR,\mathrm{poly},\tau}$ which relates expressions of polymorphic type $\tau$ 
and where the testing contexts are restricted to well-typed contexts, \ie~for $s,t$ of type $\tau$ the inequality $s \lec_{\LR,\mathrm{poly},\tau} t$ holds
iff for all contexts $C$ such that $C[s]$ and $C[t]$ are well-typed: $C[s]\maycon_{\LR} \implies C[t]\maycon_{\LR}$.
Obviously for all expressions $s,t$ of type $\tau$ the inequality $s \lec_{\LR} t$ implies $s \lec_{\LR,\mathrm{poly},\tau} t$, since
any test (context) performed for $\lec_{\LR,\mathrm{poly},\tau}$ is also included in the tests for $\lec_{\LR}$ (there are more contexts).
Thus the main theorem implies that ${\leb_{\LR,Q_{\mathit{CE}}}^o}$ is sound \wrt~the typed preorder $\lec_{\LR,\mathrm{poly},\tau}$.
Of course completeness does not hold, and requires another definition of similarity which respects the typing. 
\end{remark}

\subsection{\texorpdfstring{Similarity up to $\simc_\LR$}{Similarity up to simcLR}}

A more comfortable tool to prove program equivalences in $\LLR$ is the following similarity definition which allows to 
 simplify intermediate expressions that are known to be equivalent.

\begin{definition}[Similarity up to $\simc_\LR$]
Let $\leb_{\LR,\sim}$ be the greatest fixpoint of the following operator $\UptoSimLR$ on closed $\LETRECEXPR$-expressions:

We define an operator $\UptoSimLR$ on binary relations $\eta$ on closed $\LLAZYCC$-expressions:\\
 $s~\UptoSimLR(\eta)~t$  iff the following holds:
\begin{enumerate}
 \item If $s \simc_\LR \lambda x.s'$ then there are two possibilities: 
     (i)  if  $t \simc_\LR (c~t_1 \ldots t_n)$ then $s' \in \cBot_\LR$, or  
     (ii) if  $t \simc_\LR \lambda x.t'$ then for all closed $r:   ((\lambda x.s')~r)~\eta~((\lambda x.t')~r)$;
 \item If $s \simc_\LR (c~s_1~\ldots~s_n)$ then $t \simc_\LR (c~t_1 \ldots t_n)$ and $s_i~\eta~t_i$ for all $i$.
\end{enumerate}
\end{definition}

\begin{lemma}\label{lem:simc-implies-simbupto} ${\lec_\LR^c} \subseteq{\leb_{\LR,\sim}}$
\end{lemma}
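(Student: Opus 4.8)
The plan is to show $\lec_\LR^c \subseteq \leb_{\LR,\sim}$ by the coinduction principle for the greatest fixpoint of $\UptoSimLR$: it suffices to verify that the relation $\lec_\LR^c$ itself satisfies the fixpoint inequality $\lec_\LR^c \subseteq \UptoSimLR(\lec_\LR^c)$. So I would take two closed $\LETRECEXPR$-expressions $s,t$ with $s \lec_\LR^c t$ and check the two clauses in the definition of $\UptoSimLR$.

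The key tool is Proposition~\ref{prop-classification-lr}, which classifies closed expressions up to $\simc_\LR$ into the three disjoint cases $\Omega$-like, abstraction-like, and constructor-application-like, and moreover tells us exactly how $\lec_\LR$ behaves on them. First I would dispatch the case $s \simc_\LR \Omega$: then neither $s \simc_\LR \lambda x.s'$ nor $s \simc_\LR (c~\vect{s})$ can hold (the three cases are disjoint), so both clauses of $\UptoSimLR$ are vacuously satisfied. Next, if $s \simc_\LR \lambda x.s'$, then by Proposition~\ref{prop-classification-lr} either $t \simc_\LR \lambda x.t'$ with $s' \lec_\LR t'$, or $t \simc_\LR (c~\vect{t})$ with $s' \in \cBot_\LR$. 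In the second subcase, clause (i) is exactly what we need. In the first subcase I must produce, for every closed $r$, the relation $((\lambda x.s')~r) \lec_\LR^c ((\lambda x.t')~r)$; this follows because $\lec_\LR^c$ is a precongruence (it is a precongruence restricted to closed expressions, via Lemma~\ref{lemma:LR-open-extension-ok} and the general fact that $\lec_D$ is a precongruence), so $s' \lec_\LR t'$ — equivalently, using Lemma~\ref{lemma:LR-open-extension-ok} and Proposition~\ref{prop:open-closed-case}-style reasoning in $\LLR$, $\lambda x.s' \lec_\LR^c \lambda x.t'$ — is preserved by the context $([\cdot]~r)$, and closure under $\simc_\LR$ on both ends (using $s \simc_\LR \lambda x.s'$, $t \simc_\LR \lambda x.t'$) gives $((\lambda x.s')~r) \lec_\LR^c ((\lambda x.t')~r)$. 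Finally, if $s \simc_\LR (c~\vect{s})$, then Proposition~\ref{prop-classification-lr} forces $t \simc_\LR (c~\vect{t})$ with $s_i \lec_\LR t_i$, hence $s_i \lec_\LR^c t_i$ for all $i$, which is clause (2).

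The main obstacle I anticipate is the bookkeeping around closedness and the open/closed passage: the definition of $\UptoSimLR$ quantifies over closed $r$ and uses $\simc_\LR$ (a relation defined on all expressions) on closed $s,t$, while Proposition~\ref{prop-classification-lr}'s characterization of $\lec_\LR$ on abstractions is phrased with $s' \lec_\LR t'$ on the (possibly open) bodies; I must make sure that passing from $s' \lec_\LR t'$ to the closed relation on $(\lambda x.s')~r$ is legitimate, which is precisely what Lemma~\ref{lemma:LR-open-extension-ok} and the precongruence property of $\lec_\LR$ provide. Once the fixpoint inequality is established, the coinduction principle for $\UptoSimLR$ (analogous to Proposition~\ref{prop-coinduction} for $F_\LCC$, justified by monotonicity of $\UptoSimLR$) yields $\lec_\LR^c \subseteq \leb_{\LR,\sim}$, completing the proof.
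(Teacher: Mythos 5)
Your proposal is correct and follows essentially the same route as the paper: you show that $\lec_\LR^c$ is dense for the operator $\UptoSimLR$ (using Proposition~\ref{prop-classification-lr} to classify closed expressions and the precongruence property of $\lec_\LR^c$ for the abstraction clause) and then conclude by coinduction. The only cosmetic difference is in the constructor clause, where the paper derives $s_i \lec_\LR t_i$ explicitly via \tcase-projection contexts rather than reading it off the classification proposition directly.
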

\begin{proof}
We show that $\eta := {\le^c_\LR}$ is $F_{\LR,\sim}$-dense, i.e. $\eta \subseteq F_{\LR,\sim}(\eta)$.

Let $s~\eta~t$ and $s \simc_\LR \lambda x.s'$.
Since $s \le^c_\LR t$ either $t \simc_\LR \lambda x.t'$ or
$t \simc_\LR c~t_1~\ldots~t_n$ and $s' \in \cBot_\LR$. For the latter case we are finished.
For the former case  we have $\lambda x.t' \simc_\LR^c t$.
Since $\lec_\LR^c$ is a precongruence, this implies $((\lambda x.s')~r) \lec_\LR ((\lambda x.t')~r)$ for all closed $\LETRECEXPR$-expressions
$r$. Thus we conclude $s~F_{\LR,\sim}(\eta)~t$.

Now let $s~\eta~t$ and $s \simc_\LR^c c~s_1~\ldots~s_n$.
Then $t \simc_\LR^c (c~t_1~\ldots~t_n)$ by Proposition~\ref{prop-classification-lr}.
The contexts $C_i := (\tcase~[]~\tof \ldots (c~x_1\ldots~x_n \to x_i) \ldots)$ where all other
right hand sides of \tcase-alternatives are $\bot$, show that also $s_i \lec_\LR t_i$ must hold,
since otherwise $s \lec_\LR^c t$ cannot hold.
Thus also in this case $s~F_{\LR,\sim}(\eta)~t$ holds.
\end{proof}

\begin{lemma}\label{lemma:upto-from-lr-into-lc} $N(W(\leb_{\LR,\sim}))  \subseteq {\leb_{\LCC,\sim}}$.
\end{lemma}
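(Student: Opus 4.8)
The plan is to show that $\leb_{\LR,\sim}$ maps into $\leb_{\LCC,\sim}$ under the translation $N \circ W$ by exhibiting a witnessing relation on $\LLAZYCC$-expressions that is $\UptoSim$-dense, and then invoking co-induction. Concretely, I would set $\eta := \{(N(W(s)), N(W(t))) \mid s \leb_{\LR,\sim} t,\ s,t \text{ closed } \LETRECEXPR\text{-expressions}\}$ and prove $\eta \subseteq \UptoSim(\eta)$; the co-induction principle for the greatest fixpoint $\leb_{\LCC,\sim}$ of $\UptoSim$ then gives $\eta \subseteq \leb_{\LCC,\sim}$, which is exactly the claim.

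To verify $\eta \subseteq \UptoSim(\eta)$, take closed $s \leb_{\LR,\sim} t$ and write $s' := N(W(s))$, $t' := N(W(t))$. The key facts I would use are: (i) $N \circ W$ is convergence-equivalent, compositional, fully abstract, and the identity on $\tletrec$-free expressions (Proposition~\ref{prop:missing-part:N-fully-abs}, Theorem~\ref{theorem:N-fully-abstract}, Corollary~\ref{cor:N-iso}, Lemma~\ref{lemma:T-prime-comp}), so it transports $\simc_\LR$ to $\simc_\LCC$ and preserves the classification of closed expressions (Propositions~\ref{prop-classification-lr} and~\ref{prop-classification-lcc}); (ii) $N \circ W$ maps abstractions to abstractions and constructor applications to constructor applications with translated arguments (by definition of $N$, together with $N$ being identity on $\tletrec$-free expressions and Lemma~\ref{lem:translate-values}); (iii) $\cBot_\LR$ is carried into $\cBot$ by $N \circ W$ (since $\sigma(s) \Uparrow_\LR$ for all closing $\sigma$ translates, via convergence-equivalence and compositionality plus $N\circ W$ being identity on $\tletrec$-free closing substitutions up to $\sim$, to $\sigma'(N(W(s))) \Uparrow_\LCC$; this needs a small argument that closing substitutions and their translations match up). Then the case analysis in the definition of $\UptoSimLR$ transfers case-by-case to $\UptoSim$: if $s \simc_\LR \lambda x.s''$ then $s' \simc_\LCC \lambda x.N(W(s''))$; in subcase (i) $t \simc_\LR (c~\vect{t})$ gives $t' \simc_\LCC (c~N(W(\vect t)))$ and $s'' \in \cBot_\LR$ gives $N(W(s'')) \in \cBot$; in subcase (ii) for any closed $\LAMBDAEXPR$-expression $r$, since $N\circ W$ is surjective there is a closed $\LETRECEXPR$-expression $r_0$ with $N(W(r_0)) \simc_\LCC r$, and $((\lambda x.s'')~r_0)\ \leb_{\LR,\sim}\ ((\lambda x.t'')~r_0)$ translates (using compositionality and $N(W(\lambda x.u)) = \lambda x. N(W(u))$, $N(W((u~v))) = (N(W(u))~N(W(v)))$) to a pair in $\eta$ relating $((\lambda x.N(W(s'')))~N(W(r_0)))$ and $((\lambda x.N(W(t'')))~N(W(r_0)))$, which by correctness of reductions in $\LLAZYCC$ (Corollary~\ref{cor:lcc-red-rules-correct}) is $\simc_\LCC$-equivalent to $((\lambda x.N(W(s'')))~r)$ resp. the $t$-side; since $\UptoSim$ is stated modulo $\simc_\LCC$ this suffices. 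The constructor case is entirely analogous and simpler.

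I expect the main obstacle to be the careful bookkeeping around surjectivity and closing substitutions: the relation $\leb_{\LR,\sim}$ is defined on \emph{closed} expressions and its unfolding quantifies over \emph{all closed} arguments $r$ in $\LLAZYCC$ resp. $\LLR$, so I must show that testing with translated $\LLR$-arguments $N(W(r_0))$ is equivalent to testing with arbitrary closed $\LLAZYCC$-arguments $r$ — this is where surjectivity of $N \circ W$ (Corollary~\ref{cor:N-iso}) and full abstractness are essential, and it is the step most prone to slips about which side the $\simc$-equivalence sits on. The $\cBot$-transfer is the other delicate point and should be isolated as an auxiliary observation: $N(W(\cBot_\LR)) \subseteq \cBot$, proved from convergence-equivalence of $N \circ W$ applied to all closing instances, noting that closed $\LAMBDAEXPR$-substitutions of $N(W(s))$ can be matched by translations of closed $\LLR$-substitutions of $s$. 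Everything else is a routine structural unwinding of the two fixpoint operators $\UptoSimLR$ and $\UptoSim$, which have literally the same shape, so once the transfer lemmas are in place the density check is immediate.
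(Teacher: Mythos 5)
Your overall plan coincides with the paper's: take $\eta := \{(N(W(s)),N(W(t))) \mid s \leb_{\LR,\sim} t\}$, show it is $\UptoSim$-dense, and conclude by co-induction. However, one step, as you wrote it, does not go through: the appeal to ``$\UptoSim$ is stated modulo $\simc_\LCC$''. In the density check, once witnesses $\lambda x.u$ and $\lambda x.v$ with $N(W(s)) \simc_\LCC \lambda x.u$ and $N(W(t)) \simc_\LCC \lambda x.v$ are fixed, Definition~\ref{def:lcc-sim-upto} demands that for every closed $r$ the pair $\bigl((\lambda x.u)~r,\ (\lambda x.v)~r\bigr)$ lie \emph{literally} in $\eta$; the ``up to $\simc_\LCC$'' freedom sits only in the hypotheses $s \simc_\LCC \lambda x.u$, $t \simc_\LCC \lambda x.v$, not in the $\eta$-membership of the test pairs, and $\eta = N(W(\leb_{\LR,\sim}))$ is not known at this stage to be closed under $\simc_\LCC$. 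Your construction produces pairs $\bigl((\lambda x.N(W(s'')))~N(W(r_0)),\ (\lambda x.N(W(t'')))~N(W(r_0))\bigr)$ that are only $\simc_\LCC$-equivalent to the required ones, so the density condition is not actually verified.

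The repair is immediate and is exactly the paper's route: orient the case analysis from the $\LLCC$ side. The witnesses $u,v$ (and the constructor arguments in the other case) are $\LAMBDAEXPR$-expressions, hence $\tletrec$-free, hence fixed by $N\circ W$; by full abstraction $s \simc_\LR \lambda x.u$ and $t \simc_\LR \lambda x.v$, so unfolding $\UptoSimLR$ with these very witnesses and with the argument $r$ itself (every closed $\LAMBDAEXPR$-expression is a closed $\LETRECEXPR$-expression with $N(W(r)) = r$) gives $((\lambda x.u)~r) \leb_{\LR,\sim} ((\lambda x.v)~r)$, and compositionality plus identity on $\tletrec$-free expressions puts exactly the required pair in $\eta$ --- no surjectivity-up-to-$\simc_\LCC$ detour and no up-to step are needed. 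The same remark simplifies your $\cBot$-transfer: it is only needed for the $\tletrec$-free $u$, where $u \in \cBot_\LR$ implies $u \in \cBot$ directly from convergence equivalence, since every closing $\LAMBDAEXPR$-substitution is also a closing $\LETRECEXPR$-substitution.
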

\begin{proof}
 We show that $\eta := \{(N(W(s)), N(W(t))) \mid s \leb_{\LR,\sim} t\}$ is $\UptoSim$-dense (see Definition \ref{def:lcc-sim-upto}), i.e. $\eta \subseteq \UptoSim(\eta)$.
Let $s \leb_{\LR,\sim} t$ for closed $s,t$.      
If $N(W(s)) \simc_\LCC \lambda x.s'$, then also $s \simc_\LR \lambda x.s'$.
Now there are two cases: If $t \simc_\LR (c~t_1~\ldots~t_n)$ then $s'\in\cBot_\LR$ must hold.
Then also $s' \in \cBot$ and $N(W(t))\simc_\LCC (c~t_1~\ldots~t_n)$ and we are finished.
If $t \simc_\LR \lambda x.t'$ then for all closed $\LETRECEXPR$-expressions $r$:
 $(\lambda x.s')~r \leb_{\LR,\sim} (\lambda x.t')~r$ (by unfolding the fixpoint equation for $\UptoSimLR$).
Since $N\circ W$ is surjective, compositional and fully abstract, this also shows
$N(W(\lambda x.s'))~r~\eta~N(W(\lambda x.t'))~r$ for all $\LLAZYCC$-expressions $r$.

If $N(W(s)) \simc_\LCC (c~s_1 \ldots s_n)$, then also $s \simc_\LR (c~s_1 \ldots s_n)$.
Now $s \leb_{\LR,\sim} t$ shows that $t \simc_\LR (c~t_1~\ldots~t_n)$ such that for all $i$: $s_i \leb_{\LR,\sim} t_i$.
Hence $(s_i,t_i) \in \eta$ and also $N(W(t)) \simc_\LCC (c~t_1~\ldots~t_n)$, since $N\circ W$ is fully abstract.
\end{proof}

\begin{theorem}\label{theo:le_c-eq-bisim-uptop}
 ${\lec_\LR} = {\leb_{\LR,\sim}^o}$
\end{theorem}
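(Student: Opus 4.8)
The plan is to prove the two inclusions ${\lec_\LR}\subseteq{\leb_{\LR,\sim}^o}$ and ${\leb_{\LR,\sim}^o}\subseteq{\lec_\LR}$ by routing everything through the corresponding statement in $\LLAZYCC$, namely the Proposition stating $\leb_{\LCC,\simc}^o = \leb_{\LCC}^o = \lec_{\LCC}$, together with the Main Theorem ${\lec_{\LR}} = {\leb_{\LR,Q_{\mathit{CE}}}^o}$ and the full abstractness, surjectivity, and letrec-free-identity of $N\circ W$. First I would reduce the open statement to the closed one: by Lemma~\ref{lemma:LR-open-extension-ok} we have $(\lec_\LR^c)^o = \lec_\LR$, and the open extension of $\leb_{\LR,\sim}$ is by definition $\leb_{\LR,\sim}^o$, so it suffices to prove ${\lec_\LR^c} = {\leb_{\LR,\sim}}$ on closed $\LETRECEXPR$-expressions, after which the open-extension operation applied to both sides preserves the equality (the construction of the open extension is identical for both relations, as already used in the proof of Theorem~\ref{theorem:bisim-alternatives}).

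For the inclusion ${\lec_\LR^c}\subseteq{\leb_{\LR,\sim}}$ I would simply cite Lemma~\ref{lem:simc-implies-simbupto}, which is exactly this statement. For the reverse inclusion ${\leb_{\LR,\sim}}\subseteq{\lec_\LR^c}$ I would argue: take closed $s,t$ with $s\leb_{\LR,\sim}t$. By Lemma~\ref{lemma:upto-from-lr-into-lc} we get $N(W(s))\leb_{\LCC,\simc}N(W(t))$. By the Proposition $\leb_{\LCC,\simc} = \lec_{\LCC}^c$ this gives $N(W(s))\lec_{\LCC}^c N(W(t))$. Now Theorem~\ref{theorem:N-fully-abstract} (full abstractness of $N\circ W$ on closed expressions, which follows from the general full-abstractness together with the observation that closing substitutions and contexts transport along $N\circ W$, cf.\ the proof of Proposition~\ref{prop:finite-simulation-LLR}) yields $s\lec_\LR^c t$. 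Combining the two inclusions gives ${\lec_\LR^c} = {\leb_{\LR,\sim}}$, and taking open extensions finishes the proof.

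The only genuine subtlety — and the step I would be most careful about — is the transfer of full abstractness of $N\circ W$ from the $\lec$-form used in Theorem~\ref{theorem:N-fully-abstract} to the \emph{closed} relations $\lec^c$, which is what Lemmas~\ref{lem:simc-implies-simbupto}, \ref{lemma:upto-from-lr-into-lc} and the $\LLAZYCC$-side proposition actually speak about. This is not automatic, but it is already handled in the paper: $N\circ W$ is convergence-equivalent, compositional, surjective and the identity on $\tletrec$-free expressions, so for closed $s,t$ and any composition of $Q_{\mathit{CE}}$-contexts (or indeed any closed context) the convergence behaviour of $C[s]$ in $\LLR$ matches that of $N(W(C))[N(W(s))]=N(W(C[s]))$ in $\LLCC$; this is exactly the reasoning used in the proof of Proposition~\ref{prop:finite-simulation-LLR} and of Lemma~\ref{lemma:LR-open-extension-ok}. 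I would therefore either invoke those proofs directly or spell out the one-line argument that $s\lec_\LR^c t \iff N(W(s))\lec_\LCC^c N(W(t))$. Everything else is routine bookkeeping about greatest fixpoints and open extensions.
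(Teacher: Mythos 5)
Your proposal is correct and follows essentially the same route as the paper's proof: the closed case via Lemma~\ref{lem:simc-implies-simbupto} for one inclusion and Lemma~\ref{lemma:upto-from-lr-into-lc} together with $\leb_{\LCC,\simc}=\lec_{\LCC}^c$ and full abstraction of $N\circ W$ for the other, then lifting to open expressions. The only (harmless) difference is that you discharge the open-extension step by citing Lemma~\ref{lemma:LR-open-extension-ok}, whereas the paper re-derives the same fact inline from correctness of (gcp) and (gc).
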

\begin{proof}
For the closed relations, one direction of the equation 
${\leb_{\LR,\sim}} = {\lec_{\LR}^c}$ is  Lemma~\ref{lem:simc-implies-simbupto},
the other direction follows from Lemma~\ref{lemma:upto-from-lr-into-lc}, 
since $s \leb_{\LR,\sim} t$ implies $N(W(s)) \leb_{\LCC,\sim} N(W(t))$
which in turn implies $N(W(s)) \lec_\LCC^c N(W(t))$ and finally, 
full-abstraction of $N\circ W$ shows $s \lec_\LR^c t$.

For the open extension the claimed equality holds, since $s \lec_\LR t$ iff 
$\sigma(s) \lec_\LR \sigma(t)$ for all closing substitutions $\sigma$: 
This holds, since for $\sigma = \{x_1 \mapsto s_1,\ldots, x_n \mapsto s_n\}$ the equation
$\sigma(s) \simc_\LR \tletrec~x_1 = s_1,\ldots,x_n=s_n~\tin~s$ holds 
by correctness of the general copy rule (gcp) (Proposition~\ref{prop-gcp-correct}) and
of garbage collection (gc) (Theorem~\ref{theo:lr-trans-corr}).
\end{proof}
\noindent We demonstrate the use of similarity up to $\simc_\LR$ in the following example:

\begin{example}
As an example we prove the list law $R[\mathit{map}~(\lambda x.\ttrue)~(\mathit{repeat}~u)] \simc_\LR R'[(\mathit{repeat}~\ttrue)]$ 
where $u$ is a closed expression and $R'$, $R$, resp. contains the definition of $\mathit{repeat}$, or $\mathit{repeat}$ and $\mathit{map}$, 
resp., i.e. the corresponding $\LETRECEXPR$-expressions are:
$$\begin{array}{l@{~}l}
s := &\tletrec~\\
&~~~~\mathit{repeat} = \lambda x.\tCons~x~(\mathit{repeat}~x),
\\
&~~~~\mathit{map}    = \lambda f.\lambda xs.\tcase_{\mathit{List}}~xs~\tof~(\tnil \to \tnil)~(\tCons~y~ys \to \tCons~(f~y)~(\mathit{map}~f~ys))
\\
   &   \tin~\mathit{map}~(\lambda x.\ttrue)~(\mathit{repeat}~u)
\\[1.2ex]
t := &\tletrec~
\\
&~~~~\mathit{repeat} = \lambda x.\tCons~x~(\mathit{repeat}~x),
\\
      &\tin~\mathit{repeat}~\ttrue 
\end{array}
$$
Let $\eta := \{(t,s), (s,t)\} \cup \{(\ttrue,\ttrue)\}$.
We show that $\eta \subseteq F_{\LR,\sim}(\eta)$ which implies 
$s \leb_{\LR,\sim} t$ as well as $t \leb_{\LR,\sim} s$ and thus
by Theorem~\ref{theo:le_c-eq-bisim-uptop} also $s \simc_\LR t$.

Evaluating $s$ and $t$ in normal order first shows: 
$s \simc_\LR v_1, t \simc_\LR v_2$ with
$$\begin{array}{l@{~}l@{}}
v_1 = &\tletrec~
\\
        &~~~~\mathit{repeat} = \lambda x.\tCons~x~(\mathit{repeat}~ x),
\\
        &~~~~\mathit{map}    = \lambda f.\lambda xs.\tcase_{\mathit{List}}~xs~\tof~(\tnil \to \tnil)~(\tCons~y~ys \to \tCons~(f~y)~(\mathit{map}~f~ys))
\\
        &~~~~f_1 = (\lambda x.\ttrue),x_1 = t, xs_1 = \tCons~x_1'~x_2', x_1' = x_1, x_2' = (\mathit{repeat}~t), y_1 = x_1', ys_1 = x_2'
\\
        &\tin~\tCons~(f_1~y_1)~(\mathit{map}~f_1~ys_1)
\\[1.2ex]
v_2 = &\tletrec~
\\
&~~~~\mathit{repeat} = \lambda x.\tCons~x~(\mathit{repeat}~x), 
\\
&~~~~x_1=\ttrue
\\
&\tin~\tCons~x_1~(\mathit{repeat}~x_1)
\end{array}$$
Using correctness of garbage collection, copying of bindings (gcp), 
shifting constructors over \tletrec, and the other correct reduction rules
(see Theorem~\ref{theo:lr-trans-corr} and Proposition~\ref{prop-gcp-correct}),
we can simplify as follows: 
$v_1 \simc_\LR \tCons~\ttrue~s$ and $v_2 \simc_\LR \tCons~\ttrue~t$.
Now the proof is finished, since obviously $\ttrue~\eta~\ttrue$ and
$s~\eta~t$, $t~\eta~s$.
\end{example}

\section{Conclusion}\label{sec:conclusion}
In this paper we have shown that co-inductive applicative bisimilarity, in the
style of Howe, and also the inductive variant, is equivalent to contextual 
equivalence in a deterministic call-by-need calculus with letrec, case, 
data constructors, and {\tt seq} which models the (untyped) core language of 
Haskell. This also shows soundness of untyped applicative bisimilarity for the
polymorphically typed variant of $\LLR$.  As a further work one may try to
establish a coincidence of the typed applicative bisimilarity and contextual
equivalence for a polymorphically typed core language of Haskell.

\section*{Acknowledgements}
The authors thank the anonymous reviewers for their valuable comments.
\newcommand{\etalchar}[1]{$^{#1}$}

\end{document}